\title{Fast Preprocessing for Optimal Orthogonal Range Reporting and Range Successor with Applications to Text Indexing} 
\titlerunning{Fast Preprocessing for Orthogonal Range Reporting and Range Successor} 
\author{Younan Gao}{Faculty of Computer Science, Dalhousie University, Canada}{yn803382@dal.ca}{}{}
\author{Meng He}{Faculty of Computer Science, Dalhousie University, Canada}{mhe@cs.dal.ca}{}{}
\author{Yakov Nekrich}{Department of Computer Science, Michigan Technological University, USA}{yakov.nekrich@googlemail.com}{}{}
\authorrunning{Y. Gao, M. He and Y. Nekrich} 
\keywords{orthogonal Range search, geometric data structures, orthogonal range reporting, orthogonal range successor, sorted range reporting, text indexing, word RAM} 
\DeclareMathOperator{\polylog}{polylog}
\def\idtt#1{\ensuremath{\mathtt{#1}}}
\def\pred{\idtt{pred}}
\def\succ{\idtt{succ}}
\def\rmq{\idtt{rmq}}
\def\rMq{\idtt{rMq}}
\def\rankop{\idtt{rank}}
\def\selop{\idtt{select}}
\def\prank{\idtt{rank'}}
\def\countop{\idtt{count}}
\def\noderange{\idtt{noderange}}
\def\point{\idtt{point}}
\def\lca{\idtt{lca}}
\def\occ{\idtt{occ}}
\newcommand\Sp{\mathit{S\!p}}
\begin{document}
	
	\maketitle
	
	\begin{abstract}
		
		Under the word RAM model, we design three data structures that can be constructed in $O(n\sqrt{\lg n})$ time over $n$ points in an $n \times n$ grid. 
		The first data structure is an $O(n\lg^{\epsilon} n)$-word structure supporting orthogonal range reporting in $O(\lg\lg n+k)$ time, where $k$ denotes output size and $\epsilon$ is an arbitrarily small constant.
		The second is an $O(n\lg\lg n)$-word structure supporting orthogonal range successor in $O(\lg\lg n)$ time, while the third is an $O(n\lg^{\epsilon} n)$-word structure supporting sorted range reporting in $O(\lg\lg n+k)$ time. 
		The query times of these data structures are optimal when the space costs must be within $O(n\polylog n)$ words.
		Their exact space bounds match those of the best known results achieving the same query times, and the $O(n\sqrt{\lg n})$ construction time beats the previous bounds on preprocessing.
		Previously, among 2d range search structures, only the orthogonal range counting structure of Chan and P\v{a}tra\c{s}cu (SODA 2010) and the linear space, $O(\lg^{\epsilon} n)$ query time structure for orthogonal range successor by Belazzougui and Puglisi (SODA 2016) can be built in the same $O(n\sqrt{\lg n})$ time.
		Hence our work is the first that achieve the same preprocessing time for optimal orthogonal range reporting and range successor. 
		We also apply our results to improve the construction time of text indexes. 
	\end{abstract}
	
	\newpage

\section{Introduction}
\label{sec:introduction}

Two dimensional orthogonal range search problems have been studied intensively in the communities of computational geometry, data structures and databases. 
The goal of these problems is to maintain a set, $N$, of points on the plane in a data structure 
such that one can efficiently compute aggregate information about the points contained in an axis-aligned query
rectangle $Q$.
Among these problems, {\em orthogonal range counting} and {\em orthogonal range reporting} are perhaps the most fundamental; the former counts the number of points contained in $N \cap Q$ while the latter reports them.
Another well-known problem is {\em orthogonal range successor}, which asks for the point in $N\cap Q$ with the smallest $x$- or $y$-coordinate.
Range counting, reporting and successor have many applications including text indexing~\cite{makinen2006position,bhmm2009,DBLP:journals/corr/abs-1108-3683,abs-1712-07431}, Lempel-Ziv decomposition~\cite{belazzougui2016range} and consensus trees in phylogenetics~\cite{jls2017}, to name a few. 
See \cite{Lewenstein13} for a survey on the connection between text indexing and various range searching techniques.

Most work on orthogonal range search~\cite{c1988,jms2004,chan2011orthogonal,nekrich2012sorted,zhou2016two} focuses on achieving the best tradeoffs between query time and space, and preprocessing time is often neglected.
However, the preprocessing time of a data structure matters when it is used as a building block of an algorithm processing plain data, as the total running time includes that needed to build the structure.
Furthermore, an orthogonal range search structures with fast construction time are preferred when preprocessing huge amounts of data, e.g., when used as  components of text indexes built upon large data sets from search engines and bioinformatics applications. 
The work of Chan and P\v{a}tra\c{s}cu~\cite{chan2010counting} is the first that breaks the $O(n \lg n)$ bound on the construction time of 2d orthogonal range counting structures; they designed an $O(n)$-word structure with $O({\lg n}/{\lg \lg n})$ query time that can be built in $O(n\sqrt{\lg n})$ time.
Their ideas were further extended to design an $O(n{\lg \sigma}/{\sqrt{\lg n}})$-time algorithm to build an binary wavelet trees over a string of length $n$ drawn from $[\sigma]$~\cite{munro2016fast,babenko2015wavelet}\footnote{In this paper, $\sigma$ denotes $\{0, 1, \ldots, \sigma-1\}$.}, which is a key data structure used in succinct text indexes. 
More recently, Belazzougui and Puglisi~\cite{belazzougui2016range} showed how to construct an $O(n)$-word data structure in $O(n\sqrt{\lg n})$ time to support range successor in $O(\lg^{\epsilon} n)$ time, and applied it to achieve new results on Lempel-Ziv parsing.

The previous work on constructing orthogonal range search structures in $O(n\sqrt{\lg n})$ time focuses on linear space data structure.
To achieve optimal query time for 2d orthogonal range reporting and range successor using near-linear space, however, the best tradeoffs under the word RAM model requires superlinear space~\cite{chan2011orthogonal,zhou2016two}.
The increased space costs are needed to encode more information, posing new challenges to fast construction.
We thus investigate the problem of designing data structures with optimal query times for range reporting and range successor that can be built in $O(n\sqrt{\lg n})$ time, while matching the space costs of the best known solutions.
We also consider a closely related problem called {\em sorted range reporting}~\cite{nekrich2012sorted} to achieve similar goals.
In this problem, we report all points in $N\cap Q$ in a sorted order along either $x$- or $y$-axis.
The query time should depend on the number of points actually reported even if the procedure is ended early by user.

\subparagraph*{Previous Work.} The research on 2d orthogonal range reporting has a long history~\cite{c1988,abr2000,jms2004,chan2011orthogonal}.
Researchers have achieved three best tradeoffs between query time and space costs under the word RAM model; we follow the state of the art and assume that the input points are in rank space.
The solution with optimal query time of $O(\lg\lg n +k )$ and space cost of $O(n \lg^{\epsilon} n)$ words is due to Alstrup et al.~\cite{abr2000}, while the best linear-space solution is designed by Chan et al~\cite{chan2011orthogonal} which answers a query in $O((1+k) \lg^{\epsilon} n )$ time, where $k$ is the output size and $\epsilon$ is an arbitrarily small constant.
Chan et al. also proposed an $O( \lg \lg n)$-word structure with $O((1+k) \lg \lg n )$ query time and another tradeoff matching that of Alstrup et al.~\cite{abr2000}.

The 2d orthogonal range successor problem has also been well studied. After a series of work~\cite{ls1994,kkl2007,ckwir2010,cikrtw2012,yhw2011}, Nekrich and Navarro~\cite{nekrich2012sorted} gave two solutions to this problem; the first uses $O(n)$ words and answers a query in $O(\lg^{\epsilon} n)$ time, while the second uses $O(n\lg\lg n)$ words to answer a query in $O((\lg\lg n)^2)$ time.
Zhou~\cite{zhou2016two} decreased the query time of the latter to $O(\lg\lg n)$ without increasing space costs. 
By definition, a solution to orthogonal range successor implies that to sorted range reporting. 
Furthermore, Nekrich and Navarro~\cite{nekrich2012sorted} also designed a data structure using $O(n \lg^{\epsilon} n)$ words to support sorted range reporting in $O(\lg\lg n +k )$ time.
Hence, the best three time-space tradeoffs for the original 2d orthogonal range reporting problem has also been achieved for the sorted version. 
The optimality of the $O(\lg\lg n +k)$ query time for orthogonal range reporting and the $O(\lg\lg n)$ query time for orthogonal range successor when no more than $O(n\polylog n)$ space can be used is established by a lower bound on range emptiness~\cite{pt2006}.

Alstrup et al.~\cite{abr2000} claimed that their structure for optimal orthogonal range reporting can be constructed in $O(n\lg n)$ expected time.
Even though preprocessing times are not given in \cite{chan2011orthogonal, nekrich2012sorted, zhou2016two}, straightforward analyses reveal that the other data structures we surveyed here can be built in $O(n \lg n)$ worst-case time (Bille and G{\o}rtz~\cite{DBLP:journals/corr/abs-1108-3683} also claimed that the preprocessing time of the $O(n\lg\lg n)$-word structure of Chan et al.~\cite{chan2011orthogonal} is $O(n \lg n)$). 
Hence, when faster preprocessing time is needed in their solution to Lempel-Ziv decomposition, Belazzougui and Puglisi~\cite{belazzougui2016range} had to design a new linear-space data structure for orthogonal range successor with $O(n\sqrt{\lg n})$ preprocessing time and $O(\lg^{\epsilon} n)$ query time. 
No attempts have been published to achieve similar preprocessing times for other tradeoffs.

\subparagraph*{Our Results.} Under the word RAM model, we design the following three data structures that can be constructed in $O(n\sqrt{\lg n})$ time over $n$ points in an $n \times n$ grid: 
\begin{itemize}
	\item An $O(n\lg^{\epsilon} n)$-word structure supporting orthogonal range reporting in $O(\lg\lg n+k)$ time, where $k$ denotes output size and $\epsilon$ is an arbitrarily small constant;
	\item An $O(n\lg\lg n)$-word structure supporting orthogonal range successor in $O(\lg\lg n)$ time;
	\item An $O(n\lg^{\epsilon} n)$-word structure supporting sorted range reporting in $O(\lg\lg n+k)$ time.
\end{itemize}

The query times of these structures are optimal when space costs must be within $O(n\polylog n)$ words.
Their exact space bounds match those of the best known results achieving the same query times, and the $O(n\sqrt{\lg n})$ construction time beats the previous bounds on preprocessing.
Note that even though our third result implies the first, our data structure for the first is much simpler.
In addition, our results can be used to improve the construction time of text indexes. For a text string $T$ of length $n$ over alphabet $[\sigma]$, we design 
\begin{itemize}
	\item A text index of $O(n\lg \sigma\lg^{\epsilon} n)$ bits that can be constructed in $O(n\lg \sigma/\sqrt{\lg n})$ time and can report the $\occ$ occurrences of a pattern of length $p$ in time $O({p}/{\log_{\sigma} n}+\log_{\sigma} n\lg \lg n+\occ)$, where $\epsilon$ is any small positive constant.
	This improves one result of Munro et al.~\cite{abs-1712-07431} who designed the first text indexes with both sublinear construction time and query time for small $\sigma$;
	for the same time-space tradeoff, their preprocessing time is $O(n\lg \sigma \lg^{\epsilon} n)$. 
	\item A text index of $O(n\lg^{1+\epsilon} n)$ bits for any constant $\epsilon>0$ built in $O(n\sqrt{\lg n})$ time that supports position-restricted substring search~\cite{makinen2006position} in $O({p}/{\log_{\sigma} n}+\lg p+\lg \lg \sigma+\occ)$ time. Previous indexes with similar query performance require $O(n\lg n)$ construction time. 
\end{itemize}

\subparagraph*{Overview of Our Approach.} We first discuss why some obvious approaches will not work.
The modern approach of Chan et al~\cite{chan2011orthogonal} for orthogonal range reporting is based on a problem called ball inheritance which they defined over range trees. 
This solution is well-known for its simplicity, and by choosing different parameters in their approach to ball inheritance, they obtain all three best known tradeoffs. 
One natural idea is to redesign the structures stored at range tree nodes to use bit packing to speed up construction.
However, 
even though we have achieved construction time matching the state of the art for these structures, it is still not enough to construct the data structures for the tradeoffs of ball inheritance that we need quickly enough. 
Another idea is to tune the parameters in the approach of Belazzougui and Puglisi~\cite{belazzougui2016range}, hoping to obtain the tradeoffs that we aim for, as they already showed how to construct in $O(n\sqrt{\lg n})$ time a linear space, $O((k+1)\lg^{\epsilon} n)$ query time structure for orthogonal range reporting.
Their solution uses many trees grouped into $O(\lg^{\epsilon} n)$ levels of granularity. 
If we borrow ideas from \cite{chan2011orthogonal} to set parameters to achieve different tradeoffs,
we would use $O(1/\epsilon)$ or $O(\lg\lg n)$ levels of granularity. 
However, to return a point in the answer, their query algorithm would perform operations requiring $O(\lg\lg n)$ time at each level of granularity.
Thus, at best, the former would give an $O(n\lg^{\epsilon} n)$-word structure with $O((k+1)\lg\lg n)$ query time and the latter an $O(n\lg\lg n)$-word structure with $O((k+1)(\lg\lg n)^2)$ query time.
Either solution is inferior to best known tradeoffs.
This however is fine in the original solution, as the total cost of spending $O(\lg\lg n)$ time at each of the $O(\lg^{\epsilon} n)$ levels is bounded by $O(\lg^{\epsilon'} n)$ for any $\epsilon' > \epsilon$. 

We thus design new approaches. 
For optimal orthogonal range reporting, our overall strategy is to perform two levels of reductions, making it sufficient to solve ball inheritance in special cases with fast preprocessing time.
More specifically, we first use a generalized wavelet tree and range minimum/maximum structures to reduce the problem in the general case to the special case in which the points are from a $2^{\sqrt{\lg n}}\times n'$ (narrow) grid, where $n' \le n$. 
In this reduction, we need only support ball inheritance over a wavelet tree with high fanout. 
We further reduce the problem over points in a narrow grid to that over a (small) grid of size at most $2^{\sqrt{\lg n}} \times 2^{2\sqrt{\lg n}}$.
This is done by grouping points and selecting representatives from each group, so that previous results with slower preprocessing can be used over the smaller set of representatives. 
Finally, over the small grid, we solve ball inheritance when the coordinates of each point can be encoded in $O(\sqrt{\lg n})$ bits.
The ball inheritance structures in both special cases can be built quickly by redesigning components with fast preprocessing, though the second case requires a twist to the approach of Chan et al~\cite{chan2011orthogonal}.
Our solutions to optimal range successor and sorted range reporting are based on similar strategies, though we preform more levels of reductions. 

In the main body of this paper, we describe our data structures for optimal range reporting and successor, while leaving those for optimal sorted range reporting in Appendix~\ref{app:sorted}. 

\section{Preliminaries}
\label{sect:preliminaries}
In this section, we describe and sometimes extend the previous results used in this paper. 
The proofs omitted from this section can be found in Appendix \ref{app: preliminary}.

\subparagraph*{Notation.}
We adopt the word RAM model with word size $w = \Theta(\lg n)$ bits, where $n$ often denotes the size of the given data. Our complete solutions use several sets of homogeneous components. We present a lemma to bound the costs of each different type of components, which is then applied over the entire set of these components to calculate the total cost. The size, $n'$, of the data that each component represents may be less than $n$ which is the input size of the entire problem, so when the cost of constructing the component is bounded by a function of the form $f(n')/\polylog(n)$ to take advantage of the word size, we keep both $n'$ and $n$ in the lemma statement, as commonly done in previous work on similar topics. In this case, the construction algorithm usually uses a universal table of $o(n)$ bits, whose content solely depends on the value of $n$, and hence can be constructed once in $o(n)$ time and used for all data structure components of the same type. Thus unless otherwise stated, these lemmas assume the existence of such a table without stating so explicitly in the lemma statements, and we define and analyze the table in the proof. This also applies to algorithms that manipulates sequences of size $n'$. Occasionally the query algorithms of a data structure may need a universal table as well, and we explicitly state it if this is the case.


We say a sequence $A \in [\sigma]^n$ is in $packed$ form if the bits of its elements are concatenated and stored in as few words as possible.
Thus, when packed, $A$ occupies $\lceil n\lceil{\lg \sigma}\rceil/w\rceil$ words.


\begin{REMOVED}
Under the word RAM model with word size $w = \Theta(\lg n)$, $O(\lg n / \lg \sigma)$ integers from $[\sigma]$ can be packed in one word.
Using the well-known approach of simulating external memory algorithms under word RAM~\cite{w2014}, the following algorithm for sorting packed sequences follows directly from external sorting:

\begin{lemma}
	A packed sequence $A[0..n^{\prime}-1]$ from alphabet $[\sigma]$, where $max(\sigma, n') \leq n$, can be sorted in $O(n^{\prime}\lg n'{\lg \sigma}/{\lg n})$ time with the help of a universal tables of $o(n)$ bits.
\end{lemma}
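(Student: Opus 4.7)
The plan is to run standard merge sort on the packed representation of $A$, which occupies $W = O(n'\lg\sigma/\lg n)$ words. Since merge sort performs $O(\lg n')$ passes and each pass can be made to touch only $O(W)$ words, the total time is $O(W \lg n') = O(n'\lg n'\lg\sigma/\lg n)$, matching the claim. The only nontrivial step is to implement a single merge pass in time proportional to $W$ rather than to $n'$.

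To merge two sorted runs in packed form, I would precompute a universal merge table. Fix a chunk length $b$ with $b\lceil\lg\sigma\rceil \le (\lg n)/4$. A table entry is indexed by a pair of sorted $b$-element chunks; there are at most $\sigma^{2b}\le\sqrt{n}$ such pairs, and each entry stores the sorted merge of the two chunks (at most $2b$ elements, hence one word) together with two small counters indicating how many elements were consumed from either side. The whole table therefore occupies $O(\sqrt{n}\lg n)=o(n)$ bits and can be built in $o(n)$ time by an explicit comparison-based merge of every indexed pair. A merge of two sorted runs then repeatedly loads the next $b$ elements from each input, looks up the table, appends the returned chunk to the output, and advances the two input cursors by the returned counts, performing $O(1)$ work per $b$ elements read or written.

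The main obstacle I foresee is purely bookkeeping: the $b$-bit chunks do not in general align with word boundaries, so the routine must extract aligned chunks with shifts and masks on the fly, and when it writes an output chunk that straddles two destination words it must splice the two pieces together with additional shift/OR operations. All of these are constant-time word-RAM operations, so a merge of two runs of total length $\ell$ elements costs $O(\lceil\ell\lg\sigma/\lg n\rceil)$ time; summing over all merges in one pass telescopes to $O(W)$, as desired. An alternative, as suggested in the preceding text, is to appeal directly to the external-memory simulation of \cite{w2014}: choosing block size $B=\Theta(\lg n/\lg\sigma)$, classical external merge sort performs $O((n'/B)\log(n'/B))$ I/Os, which the simulation converts into $O(n'\lg n'\lg\sigma/\lg n)$ word-RAM time, with the universal table of the previous paragraph playing the role of the constant-size in-memory merge buffer.
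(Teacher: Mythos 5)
The paper does not actually prove this lemma; it cites it (see Lemma~\ref{packed_sort} in the appendix, attributed to~\cite{ah1997}, and the excluded preamble points to the external-memory simulation of~\cite{w2014}). Your reconstruction via packed mergesort with a universal two-chunk merge table is exactly the technique in that literature, so the approach is the right one. However, your time accounting has a gap in the early passes. Summing your per-merge cost $O(\lceil\ell\lg\sigma/\lg n\rceil)$ over the $m$ merges of a single pass gives $O(W+m)$, not $O(W)$; the $+m$ is the unavoidable constant overhead of initialising each merge, and it does not ``telescope'' away. In the first $\Theta(\lg b)$ passes of a mergesort that starts from singleton runs you have $m = n'/2^k \gg W$, and summing those contributions yields an additive $O(n')$ term. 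That term exceeds the target bound $O(W\lg n') = O(n'\lg n'\lg\sigma/\lg n)$ whenever $\lg n'\lg\sigma = o(\lg n)$ (for instance $\sigma = O(1)$ and $n' = \polylog n$), so the argument as written does not establish the claimed running time.

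The fix is small and you already have all the pieces: add a second $o(n)$-bit universal table that sorts any single $b$-element chunk in $O(1)$ time, use it to produce the $O(n'/b) = O(W)$ initial sorted runs of length $b$ in $O(W)$ time, and only then begin merging. With $O(W)$ starting runs you have $m = O(W)$ in every pass and only $O(\lg(n'/b)) = O(\lg n')$ passes, so the total is $O(W\lg n')$ as required. Your alternative route via external-memory simulation already handles this implicitly, since classical external mergesort begins by sorting each $B$-sized block in memory before the merge phases; that version of your argument is the one closest to being complete as stated, and it is also the one the paper's excluded remark (and~\cite{ah1997}) points to.
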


Next, we show a fast and stable sorting algorithm over a packed sequence of small integers: 
\begin{lemma}
	Let $A[0..n^{\prime}-1]$ be a packed sequence drawn from alphabet $[\sigma]$, where $max(\sigma, n') \leq n$. 
	Let $\psi$ denote the number of distinct symbols in $A$ where $\psi\leq n^{\prime}$. 
	There is an algorithm to stably sort $n^{\prime}$ elements in $A$ in $O(n^{\prime}({\lg \sigma})(\lg \psi)/{\lg n})$  time with the help of a universal tables $o(n)$ bits.
	The sorting algorithm uses $n^{\prime}\lg \sigma$ bits of working space.
\end{lemma}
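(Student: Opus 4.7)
The strategy is packed LSD radix sort driven by a rank representation of $A$. Since $A$ has only $\psi$ distinct symbols, stably sorting $A$ is equivalent to stably sorting by a rank sequence $R[0..n^{\prime}-1]$ in which $R[i]$ is the sorted rank (in $[\psi]$) of $A[i]$. This representation needs only $\lceil \lg \psi \rceil$ bits per entry, so one can separate elements along a single bit of $R$ with packed word-level operations, and the radix sort terminates after $\lceil \lg \psi \rceil$ passes rather than the $\lceil \lg n^{\prime} \rceil$ passes of standard packed merge sort.

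The first phase produces the sorted list of distinct values $v_0 < v_1 < \cdots < v_{\psi - 1}$ together with the packed rank sequence $R$ satisfying $R[i] = j$ iff $A[i] = v_j$. I would do this by first sorting and deduplicating each word-sized chunk of $A$ locally using a universal table of $o(n)$ bits, then invoking the first (external-memory style) lemma on the resulting shorter packed sequence, which has length $O(n^{\prime} \lg \sigma / \lg n)$ and still contains every distinct value of $A$. Translating $A$ into $R$ is then a single packed scan: a universal table, indexed by a word of $A$-values together with a short summary of the current segment of the sorted distinct-value list, returns the matching word of ranks. Because the expensive call to the first lemma operates on a sequence whose length has already been reduced to the packed size of $A$, a careful accounting of both steps fits within $O(n^{\prime} \lg \sigma \lg \psi / \lg n)$ time.

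The second phase is the radix sort itself. In pass $b = 0, 1, \ldots, \lceil \lg \psi \rceil - 1$, I stably partition $A$ and $R$ in parallel by bit $b$ of $R$: for each pair of corresponding words of $A$ and $R$, a universal table produces a bit mask holding bit $b$ of each $R$-entry, and then a packed stable-split table separates both words into the subsequence of entries whose selected bit is zero and the subsequence whose selected bit is one, appending each to its respective output buffer. Because one pass scans each packed sequence once, pass $b$ costs $O(n^{\prime} \lg \sigma / \lg n + n^{\prime} \lg \psi / \lg n) = O(n^{\prime} \lg \sigma / \lg n)$ time (using $\psi \le \sigma$), and the $\lceil \lg \psi \rceil$ passes give the claimed $O(n^{\prime} \lg \sigma \lg \psi / \lg n)$ total. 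Two scratch buffers for the packed sequences are enough, using $O(n^{\prime} \lg \sigma)$ bits of working space, and stability follows from the stability of each binary partition.

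The main obstacle will be the first phase: bounding the construction of the rank sequence $R$ uniformly in $\psi$, in particular avoiding the $\lg n^{\prime}$ factor of the first lemma when $\psi$ is much smaller than $n^{\prime}$. This is where the word-level local deduplication preprocessing is essential, together with choosing the universal tables (for local sorting, rank translation, and packed stable split) so that their combined sizes fit in $o(n)$ bits while still handling arbitrary $\sigma \le n$. Once these tables are designed, each radix pass is a routine application of packed stable partitioning.
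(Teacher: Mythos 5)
Your Phase~2 (packed LSD radix sort on the rank sequence $R$, stably partitioning $A$ and $R$ in parallel one bit of $R$ at a time) is sound and, given $R$, does give $O(n'\lg\sigma\lg\psi/\lg n)$ with the space bound; the per-pass accounting and the universal tables for mask extraction and stable split are routine. The gap is entirely in Phase~1, and you flag it yourself, but the fix you sketch does not work. Deduplicating within a single word-sized chunk of $b=\Theta(\lg n/\lg\sigma)$ entries does not shorten the sequence to $O(n'\lg\sigma/\lg n)$ \emph{elements}: if no chunk happens to contain a repeat, you are left with all $n'$ entries, which is $\Theta(n'\lg\sigma/\lg n)$ \emph{words}, not elements. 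And even if some reduction to $m$ elements were achieved, invoking the packed-mergesort lemma costs $O(m\lg m\cdot\lg\sigma/\lg n)$, where the $\lg m$ factor has no reason to become $\lg\psi$ unless you can relate $m$ to $\psi$; one round of word-local deduplication establishes no such relation. To make your line of attack work you would have to \emph{iterate}: merge and deduplicate blocks of geometrically growing length, so that once a block reaches length $\psi$ deduplication caps it there, giving a geometric tail and a total of $O(n'\lg\sigma\lg\psi/\lg n)$ for computing the distinct-value list and ranks. That is a genuinely different algorithm from ``dedup once locally, then call the mergesort lemma,'' and it is the crux of the whole lemma, so it cannot be deferred as a design detail.

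For comparison, the paper's (removed) proof does not go through a rank sequence at all. It builds a binary Patricia-style trie over the bit strings of $A$: at each node it skips the longest common prefix, splits on the first disagreeing bit, and declares a node a leaf as soon as its sequence has at most $b$ elements; leaf sequences are then sorted by table lookup and concatenated. It charges the cost to the trie depth, which it asserts is $O(\lg\psi)$. Your decomposition into rank reduction followed by bit-by-bit radix sort is a different route, and Phase~2 is the cleaner half, but the $\lg\psi$ factor has just been pushed into Phase~1, where the argument you give for building $R$ within budget is incorrect as written.
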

\begin{proof}
	The sorting algorithm is the same as creating a balanced binary tree $T$. 
	We sort part of elements in $A$ at each tree level. 
	Eventually, all elements are sorted at the leaf level. 
	Let $r$ denote the root node. We create the sequence $A(r)=A$. 
	Among the $n^{\prime}$ elements encoded with $\lg \sigma$ bits each, we find the length $\gamma$ of the common most significant bits. 
	If $\gamma$ is equal to $\lg \sigma$, it means all elements in $A(r)$ are the same, we stop classifying elements in the next level and make it as a leaf node. 
	Otherwise, we create the left child, $r_0$, and the right child, $r_1$, of $r$, and perform a linear scan of $A(r)$.
	During the scan, for each $i\in[0, |A(r)|-1]$,  if the $(\gamma+1)$-th bit of the binary representation of $A(r)[i]$ is 0 or 1, $A(r)[i]$ is appended to into $A(r_0)$ or $A(r_1)$, respectively. 
	Afterwards, we discard the sequence $A(r)$ and recursively process the child node $r_0$ and $r_1$ in the same manner.
	In general, when generating the sequences for the child nodes of an internal node $u$, we find the length $\gamma$ of the common most significant bits among the binary expression of elements of $A(u)$.
	If $\gamma$ is equal to $\lg \sigma$, we make $u$ as a leaf node. 
	Otherwise, for each $i\in[0, |A(u)|-1]$,  if the $(\gamma+1)$-th bit of the binary representation of $A(u)[i]$ is 0 or 1, $A(u)[i]$ is appended to $A(u_0)$ or $A(u_1)$, respectively. 
	
	To speed up this process, we use a universal table $U$.
	Let $b=\lfloor \frac{\lg n}{2\lg \sigma} \rfloor$ denoting the block size. 
	For a packed sequence $\hat{S}$ of any possible $b$-element drawn from $[\sigma]$, $U$ stores an answer encoded with $\lg \lg \sigma$ bits denoting the length of the common significant bits among elements of $\hat{S}$. 
	As there are $2^{b\times \lg \sigma}\le\sqrt{n}$ entries in $U$ and each entry stores a result of $\lg \lg \sigma$ bits, $U$ occupies $O(\sqrt{n}\times\lg \lg \sigma)=o(n)$ bits of space. 
	By performing table lookups with $U$, we can retrieve the length of the common significant bits among elements of a sequence $A(u)$ in $O(|A(u)|/b+1)=O(|A(u)|\lg \sigma/\lg n+1)$ time.
	In addition, we need another universal table $U'$ of $o(n)$ bits (similar to the table $U$ in the proof of Lemma \ref{lemma:wavelet_construct_d_packed}) to divide a packed sequence of any possible $b$ elements into two packed subsequences according to the bit value of each element at a given position $\gamma$, where $\gamma\in[0, \lg \sigma-1]$. 
	With $U$ and $U'$, generating the sequences for the child nodes of an internal node $u$ can be processed in $O(|A(u)|\lg \sigma/\lg n+1)$ time.
	As $A[0, n'-1]$ has $\psi$ distinct elements, the tree $T$ would have as many as $\psi$ nodes and there are at least $\psi$ sequences to be generated.
	Hence, the sorting algorithm would take $\Omega(\psi)$ time. 
	However, $\psi$ could be as large as $n'$, which is too expensive to afford.
	Thus, we modify the structure of $T$ to decrease the $\Omega(\psi)$ term.
	We add one more condition when making a node as a leaf: If $|A(u)|$ at node $u$ are $\le b$, we make $u$ as a leaf node. 
	The modification would bring two following properties.
	First, if a leaf sequence $|A(l)|$ satisfies $|A(l)| > b$, then all elements of $A(l)$ share the same symbol. 
	In this case, we do not need to sort elements in $A(l)$.
	Second, as there are at most $\lceil n'/b \rceil$ nodes at each level, the sorting algorithm generates in total $O(n'/b\times \lg \psi)=O(n'\lg \sigma/\lg n\times \lg \psi)$ leaf sequences.
	For each of $O(\frac{n^{\prime}}{b}\times \lg \psi)$ leaf sequences $A(l)$, if $|A(l)|\le b$, we can apply a universal table $U''$ of $o(n)$ bits to sort them in constant time.
	Otherwise, all elements in $A(l)$ are the same and sorting is unnecessary.
	$U''$ has an entry for each possible pair  $(E, c)$, where $E$ is a
	sequence of length $b$ drawn from universe $[\sigma]$ and $c$ is an integer in $[0, b]$.
	This entry stores a packed sequence of the leftmost $c$ elements of $E$ in a sorted order occupying $c\lg \sigma$ bits of space.
	Similar to $U$, $U''$ uses $o(n)$ bits of space.
	After generating all leaf sequences, we linearly scan each of them.
	If a leaf sequence $A(l)$ satisfies $|A(l)|\le b$, we sort them in constant time with $U''$.
	At last, we merge all the sorted leaf sequences into one packed sequence.
	
	At each node $u$, we spend $O(|A(u)|\lg \sigma/\lg n+1)$ time on processing.
	The sum of the lengths of all the elements in the sequences at the same level  is at most $n'$.
	As at most $\lg \psi$ levels and at most $O(n'\lg \sigma/\lg n\times \lg \psi)$ sequences are constructed, the time required to construct the leaf sequences is $\sum_u O(|A(u)|\lg \sigma/\lg n+1) = O(n'\lg \sigma/\lg n\times \lg \psi)$.
	Finally, sorting each leaf sequence and merge all of them into one sequence takes another $O(n'\lg \sigma/\lg n\times \lg \psi)$ time.
	Overall, the query algorithm takes $O(n^{\prime}({\lg \sigma})(\lg \psi)/{\lg n})$ time.
\end{proof}
\end{REMOVED}

\label{sec:wavelet}

\subparagraph*{Generalized Wavelet Trees.} 
Given a sequence $A[0..n-1]$ drawn from alphabet $[\sigma]$, a $d$-ary generalized wavelet tree~\cite{FerraginaMMN07} $T_d$ over $A$ is a balanced tree in which each internal node has $d$ children, where $2\leq d \leq \sigma$. For simplicity, assume that $\sigma$ is a power of $d$.
Each node of $T_d$ then represents a range of alphabet symbols defined as follows: At the leaf level, the $i$-th leaf from left represents the integer range $[i, i]$ for each $i \in [0..\sigma-1]$. 
The range represented by an internal node is the union of the ranges represented by its children. 
Hence the root represents $[0, \sigma-1]$, and $T_d$ is a complete tree having $\log_d \sigma+1$ levels. 
Each node $u$ is further associated with a subsequence, $A(u)$, of $A$, in which $A(u)[i]$ stores the $i$-th entry in $A$ that is in the range represented by $u$. 
Thus the root is associated with the entire sequence $A$.
To save storage, $A[u]$ is not stored explicitly in \cite{FerraginaMMN07}.
Instead, each internal node $u$ stores a sequence $S(u)$ of integers in $[d]$, where $S(u)[i] = j$ if $A(u)[i]$ is within the range represented by the $j$th child of $u$.
All the $S(u)$'s built for internal nodes occupy $O(n \lg \sigma)$ bits in total.

Generalized wavelet trees share fundamental ideas with range trees but are more suitable for compact data structures over sequences which may contain duplicate values.
When we use them in this paper, we sometimes explicitly store $A(u)$ for each node $u$, and may even associate with $u$ an additional array $I(u)$ in which $I(u)[i]$ stores the index of $A(u)[i]$ in the original sequence $A$.
We call $A(u)$ the {\em value array} of $u$, and $I(u)$ the {\em index array}.
In this paper, if we construct value and/or index arrays for each node, we explicitly state so.
If not, it implies that we build a wavelet tree in which each node $u$ is associated with $S(u)$ only.
Furthermore, unless otherwise specified, we apply the standard pointer-based implementation to represent the tree structure of a wavelet tree, which is preprocessed in time linear to the number of tree nodes such that the lowest common ancestor of any two nodes can be located in $O(1)$ time \cite{bender2004level}.
We also number the levels of the tree incrementally starting from the root level, which is level $0$.
We have the following two lemmas on constructing wavelet trees:



\begin{lemma}
	\label{lemma:wavelet_construct_d_packed}
	Let $A[0..n^{\prime}-1]$ be a packed sequence drawn from alphabet $[\sigma]$ and $I[0..n^{\prime}-1]$ be a packed sequence in which $I[i]=i$ for each $i\in[0..n^{\prime}-1]$, where $n' \le n$ and $\sigma\leq 2^{O(\sqrt{\lg n})}$.
	Given $A$ and $I$ as input, a $d$-ary wavelet tree over $A$ with value and index arrays in packed form can be constructed in $O(n^{\prime}\lg \sigma(\lg n^{\prime}+\lg \sigma)/{\lg n}+ \sigma)$ time, where $d$ is an arbitrary power of $2$ with $2 \le d \le \sigma$.
        If index arrays are not constructed, the construction time can be lowered to $O(n^{\prime}\lg^2\sigma/{\lg n}+ \sigma)$; this bound still applies when neither value nor index arrays are built. 
\end{lemma}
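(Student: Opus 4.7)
The plan is to construct $T_d$ top-down, level by level, starting from $A(r)=A$ and $I(r)=I$ at the root. At an internal node $u$ on level $\ell$, the integer $S(u)[i]$ is exactly the $\lg d$-bit window of bits $\ell\lg d,\ldots,(\ell+1)\lg d-1$ (from the most significant bit) of the $\lg\sigma$-bit binary expansion of $A(u)[i]$. So for each internal $u$ I need to emit $S(u)$ in packed form and partition the packed arrays $A(u)$ and $I(u)$ in stable order into $d$ packed child arrays, then recurse on its children.

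I perform each such $d$-way partition as $\lg d$ successive binary splits on consecutive bit positions. A binary split reads a packed array once and produces the two packed sub-arrays of entries whose selected bit is $0$ and $1$. With a universal table $U_1$ indexed by (a packed block of $b=\Theta(\lg n/\lg\sigma)$ values, a bit position in $[0,\lg\sigma)$) returning the $b$-bit ``zero-mask'' together with the two corresponding packed sub-blocks, and a companion table $U_2$ indexed by (a packed block of $b$ indices, a $b$-bit mask) returning the two resulting index sub-blocks, each binary split runs in $O(|A(u)|(\lg\sigma+\lg n')/\lg n+1)$ time; appending the sub-blocks onto the growing outputs uses standard shift-and-mask. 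Since the hypothesis $\sigma\le 2^{O(\sqrt{\lg n})}$ yields $b\lg\sigma\le(\lg n)/2$, the table $U_1$ has at most $2^{(\lg n)/2}\cdot\lg\sigma=O(\sqrt n\,\lg n)$ entries of $O(\lg n)$ bits, and analogously for $U_2$; both fit in $o(n)$ bits and can be precomputed in $o(n)$ time. The bits of $S(u)$ can be extracted block-by-block by the same $U_1$-style lookup at no extra asymptotic cost.

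Summing across a level the total element count is at most $n'$, so its $\lg d$ rounds of binary splits cost $O(\lg d\cdot n'(\lg\sigma+\lg n')/\lg n)$ plus $O(d^\ell\cdot\lg d)$ block-boundary overhead; over all $\log_d\sigma$ levels the latter telescopes to $O(\sigma)$, which also absorbs node allocation and the $O(1)$-time lowest-common-ancestor preprocessing of~\cite{bender2004level}. Summing the main term across the $\log_d\sigma$ levels gives the claimed $O(n'\lg\sigma(\lg\sigma+\lg n')/\lg n+\sigma)$. When index arrays are not stored, $U_2$ and the index splits are dropped and the $\lg n'$ summand vanishes, giving $O(n'\lg^2\sigma/\lg n+\sigma)$; the same bound covers the case where neither value nor index arrays are kept, since $A(u)$ is only needed transiently within a level to compute $S(u)$ and can be discarded after the level is processed.

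The main obstacle is keeping the universal tables within $o(n)$ bits while ensuring that both the bit-extraction from a value block and the mask-driven partition of the co-indexed index block cost $O(1)$ per block. This is exactly what the restriction $\sigma\le 2^{O(\sqrt{\lg n})}$ is used for: it keeps $b\lg\sigma\le(\lg n)/2$, bounding the number of table entries by $O(\sqrt n\,\lg n)$ and making it safe to align the value and index block boundaries so that one value-block lookup drives the corresponding index-block split.
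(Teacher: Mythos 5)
Your high-level plan — constructing the tree top-down, partitioning packed $A(u)$, $I(u)$ by successive bit positions via universal-table lookups, emitting $S(u)$ as a $\lg d$-bit window, and summing $O(|A(u)|(\lg\sigma+\lg n')/\lg n+1)$ across nodes and levels — is essentially the paper's argument (the paper builds $T_2$ first and then contracts it into $T_d$, which is cosmetically different from your $\lg d$-way staged split per level but equivalent). The final time accounting, including the $O(\sigma)$ overhead telescoping, matches.

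However, there is a concrete gap in your table-size argument. You set the block size $b=\Theta(\lg n/\lg\sigma)$ so that a block of $b$ \emph{values} occupies at most $(\lg n)/2$ bits, and then you assert ``analogously for $U_2$.'' This is not analogous: an entry of $I(u)$ costs $\lg n'$ bits, not $\lg\sigma$ bits, so a block of $b$ \emph{indices} occupies $b\lg n'=\Theta\big((\lg n)(\lg n')/\lg\sigma\big)$ bits. When $\lg\sigma$ is small relative to $\lg n'$ (e.g., $n'=n$ and $\sigma=2^{\sqrt{\lg n}}$, which is explicitly permitted), this is $\Theta((\lg n)^{3/2})=\omega(\lg n)$, so the index block does not even fit in $O(1)$ words and your $U_2$ would require $2^{\omega(\lg n)}$ entries — superpolynomial, nowhere near $o(n)$ bits. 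The claimed per-split cost $O(|A(u)|(\lg\sigma+\lg n')/\lg n+1)$ is the right answer, but it is inconsistent with the block size you declared, which would instead give $O(|A(u)|\lg\sigma/\lg n+1)$ were the lookup feasible.

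The fix is exactly what the paper does: choose the block size so that the \emph{combined} contents of a value block and the co-indexed index block fit in half a word, namely $b=\Theta\big(\lg n/(\lg\sigma+\lg n')\big)$, and use a single table indexed by (packed value block, packed index block, bit position). This makes the table have $O(\sqrt n\cdot\polylog n)$ entries and yields the claimed per-split cost directly. Alternatively, you could keep your asymmetric design but use a second, finer block size $b'=\Theta(\lg n/\lg n')$ on the index side and split the $b$-bit mask from $U_1$ into chunks of $b'$ bits, performing $b/b'$ lookups against a correctly sized $U_2$; this also yields the right bound, but it is not what you wrote. As stated, the ``analogously for $U_2$'' step is false, and the safety claim in your final paragraph (that $b\lg\sigma\le(\lg n)/2$ ``makes it safe to align the value and index block boundaries'') does not hold.
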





\begin{lemma}
	\label{lemma:wavelet_construct_d_unpacked}
	Let $A[0..n-1]$ be a sequence drawn from alphabet $[\sigma]$. 
	A $d$-ary wavelet tree over $A$ with value and index arrays can be built in $O({n\lg \sigma}/{\lg d})$ time where $2\leq d\leq \sigma$.
\end{lemma}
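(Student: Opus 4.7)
The plan is to construct the tree in a straightforward level-by-level fashion, distributing each parent's value and index arrays into those of its $d$ children via a linear scan, without relying on any bit-packing tricks (since $A$ is given in unpacked form and the target bound $O(n\lg\sigma/\lg d)$ is exactly $n$ times the tree depth).

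First I would set up the root: its value array is $A$ itself (or a copy), and its index array is $0,1,\ldots,n-1$, built in $O(n)$ time. Numbering the levels $0,1,\ldots,\log_d \sigma - 1$ from the root downward, I would process the nodes one full level at a time. At level $\ell$, for each node $u$ encountered, I know that the $d$ child-ranges of $u$ partition its alphabet range into $d$ equal pieces, so the child index into which an element $A(u)[i]$ falls is determined by the $\lg d$ bits at positions $[\ell\lg d,\, (\ell+1)\lg d - 1]$ (MSB-first) of its binary encoding; this can be extracted in $O(1)$ time per element. I would process $u$ in two passes: one pass over $A(u)$ to count, for each $j \in [d]$, how many elements are destined for the $j$th child $u_j$, and a second pass that allocates the arrays $A(u_j)$ and $I(u_j)$ of the correct sizes and copies each $(A(u)[i], I(u)[i])$ pair into its child's arrays (and records the child index into $S(u)[i]$ if it is needed). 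After the pass, $A(u)$ and $I(u)$ at non-root internal levels can be discarded or retained depending on what the caller requires, but this does not affect the asymptotic time.

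For the analysis, the per-node cost is $O(|A(u)| + d)$ — linear in the sequence length plus a term for initializing the $d$ child pointers and counters. Summing over all nodes at a single level, $\sum_u |A(u)| \le n$, and the number of internal nodes at level $\ell$ is at most $d^{\ell}$, so the total per-level cost is $O(n + d^{\ell+1})$. Summing over the $\log_d \sigma$ levels gives
\[
\sum_{\ell=0}^{\log_d \sigma - 1} O\!\left(n + d^{\ell+1}\right) \;=\; O\!\left(\frac{n\lg \sigma}{\lg d} + \sigma\right),
\]
which is $O(n\lg \sigma/\lg d)$ in the regime where $n \ge \sigma$; if $\sigma > n$, we simply skip creating nodes whose value array would be empty (recognized because no element of the parent maps to that child), so the total number of created nodes is $O(n\lg\sigma/\lg d)$ and the overhead of the $O(d)$ term collapses. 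The tree skeleton with lca preprocessing \cite{bender2004level} is built in time linear in the number of created nodes and hence absorbed.

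The only mild subtlety is avoiding an extra additive $O(\sigma)$ from allocating child slots we never use, but the two-pass (count-then-place) scheme confines allocation to children that actually receive elements, so no such term appears. Everything else is routine linear scanning, and no single element contributes more than $O(\log_d \sigma)$ total work across the construction, delivering the claimed bound.
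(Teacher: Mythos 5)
Your construction is correct and is essentially the same level-by-level distribution scheme the paper uses, with the same $O(n\log_d\sigma+\sigma)$ accounting and the same use of $\sigma\le n$ to absorb the additive term; the two-pass ``count-then-place'' variant versus the paper's single-pass append into growable child arrays is an immaterial implementation choice. Your extra remark about the $\sigma>n$ regime is a bit hand-wavy---skipping empty children does not by itself eliminate the $O(d)$ counter-initialization cost you charge to each non-empty internal node, so the claim that ``the overhead of the $O(d)$ term collapses'' is not actually established---but the paper's own proof simply assumes $\sigma\le n$ and does not treat that case, so nothing is lost.
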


A sequence $A[0..n-1]$ drawn from $[\sigma]$ can be viewed as a point set $N = \{(A[i], i)| 0\le i \le n-1\}$.
Let $T$ be a $d$-ary wavelet tree constructed over $A$.
Then {\em ball inheritance}~\cite{chan2011orthogonal} can be defined over $T$ which asks for the support of these operations: 
i) $\point(v, i)$, which returns the point $(A(v)[i], I(v)[i])$ in $N$ for an arbitrary node $v$ in $T$ and an integer $i$; and 
ii) $\noderange(c, d, v)$, which, given a range $[c, d]$ and a node $v$ of $T$, finds the range $[c_v, d_v]$ such that $I(v)[i] \in [c, d]$ iff $i \in [c_v, d_v]$.
If we store the value and index arrays explicitly, it is trivial to support these operations, but the space cost is high. 
To save space, we only store $S(v)$ for each node $v$ and design auxiliary structures. The following lemma presents previous results:

\begin{lemma}[{\cite[Theorem 2.1]{chan2011orthogonal}}, {\cite[Lemma 2.3]{chan2017succinct}}]
	\label{lemma:ball_intro}
	A generalized wavelet tree over a sequence $A[0..n-1]$ drawn from $[\sigma]$ can be augmented with ball inheritance data structure in $O(n\lg n f(\sigma))$ bits to support $\point$ in $O(g(\sigma))$ time and $\noderange$ in $O(g(\sigma)+\lg \lg n)$ time, where (a) $f(\sigma)=O(1) $ and $g(\sigma)=O(\lg^{\epsilon} \sigma)$; (b) $f(\sigma)=O(\lg \lg \sigma) $ and $g(\sigma)=O(\lg \lg \sigma)$; or (c) $f(\sigma)=O(\lg^{\epsilon} \sigma) $ and $g(\sigma)=O(1)$.
\end{lemma}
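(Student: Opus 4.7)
My plan is to follow the classical ball-inheritance framework over a binary wavelet tree $T$ (the generalized tree in the statement can be simulated by a binary refinement of height $\lceil\lg\sigma\rceil$), augmenting it with index information stored only on a carefully chosen subset of levels. At every internal node $v$ I would retain just the bit vector $S(v)$ equipped with $O(1)$-time $\rankop$ and $\selop$ support; this already lets me move between $v$ and either of its children in constant time, sending position $i$ at $v$ to position $\rankop_b(S(v),i)$ at the appropriate child, and conversely via $\selop$ when walking upward. A subset $L$ of levels is designated \emph{sampled}, and at every node $u$ on a sampled level the index array $I(u)$ is stored explicitly, contributing exactly $n\lg n$ bits per sampled level. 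To answer $\point(v,i)$, I would trace downward from $v$ using the bits of the $S(\cdot)$'s to identify $A(v)[i]$, then trace upward from $(v,i)$ using $\selop$ on the encountered bit vectors until reaching the nearest sampled ancestor $(u,i_u)$, and return $(A(v)[i], I(u)[i_u])$.

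The three tradeoffs arise by varying the sampling density $|L|$ and, where necessary, by replacing binary branching with $d$-ary branching so that individual traversal steps still cost $O(1)$ via precomputed universal tables on the packed $S(v)$ bits. Evenly spacing $|L|$ sampled levels uses $O(|L|\cdot n\lg n)$ bits and reduces an upward walk to $O((\lg\sigma)/|L|)$ steps. I would then set $|L|=\Theta(1)$, $\Theta(\lg\lg\sigma)$, and $\Theta(\lg^{\epsilon}\sigma)$ for cases (a), (b), and (c) respectively, and match the claimed query times by additionally accelerating the walks: for case (a) the length-$\Theta(\lg\sigma)$ walk is sped up by moving with stride $\lg d = \Theta(\lg^{1-\epsilon}\sigma)$ in a $d$-ary refinement, giving $O(\lg^{\epsilon}\sigma)$ effective steps; analogous but more aggressive tuning yields the $O(\lg\lg\sigma)$ and $O(1)$ bounds of cases (b) and (c).

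For $\noderange(c,d,v)$ I would exploit the fact that the desired interval $[c_v,d_v]$ is determined by the positions in $A(v)$ whose original indices are the predecessor/successor of $c$ and $d$ in the restriction of $I(\cdot)$ to the subtree rooted at $v$. Locating these endpoints at the nearest sampled ancestor via a $y$-fast trie built over the sampled indices contributes the additive $O(\lg\lg n)$ term, after which a $\point$-style projection down to $v$ adds only $g(\sigma)$, giving the stated $O(g(\sigma)+\lg\lg n)$ total.

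The main obstacle I expect is realizing case~(c) in genuine $O(1)$ time. Even after driving the branching factor $d$ up, the traversal still crosses several sampled and non-sampled levels, and the per-step rank/select must be amortized to $O(1)$ in the aggregate rather than per level. This forces packing the $S(v)$ bits across consecutive levels into machine words and precomputing a universal lookup table that resolves an entire chain of $\Theta((\lg\sigma)/\lg d)$ traversal steps in a single probe. Verifying that this table, together with the sampled index arrays, fits within the declared $O(n\lg n\lg^{\epsilon}\sigma)$-bit budget, and that its probes can be properly aligned during a query without forcing extra logarithmic factors, is the delicate part of the plan and the most likely place to require a technical twist beyond the naive sampling argument.
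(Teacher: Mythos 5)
The proposal hinges on uniformly spaced sampled levels, and that choice cannot reach any of the three stated tradeoffs. For a wavelet tree of height $h=\lceil\lg\sigma/\lg d\rceil$, sampling every $s$ levels stores $|L|=h/s$ index arrays at $\Theta(n\lg n)$ bits each while the walk to the nearest sampled level costs $\Theta(s)$ steps, so $f(\sigma)\cdot g(\sigma)=\Theta\!\left(\frac{h}{s}\right)\cdot\Theta(s)=\Theta(h)$. With the given tree binary ($d=2$) this forces $f\cdot g=\Theta(\lg\sigma)$, whereas cases (a), (b), (c) of the lemma demand $f\cdot g$ of order $\lg^{\epsilon}\sigma$, $(\lg\lg\sigma)^{2}$, and $\lg^{\epsilon}\sigma$ respectively, all $o(\lg\sigma)$. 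Switching to a coarser $d$-ary refinement does not repair this for a tree whose nodes you are given: a node $v$ at a binary depth that is not a multiple of $\lg d$ still needs to be connected to the $d$-ary grid, and that either costs $\Theta(\lg d)$ extra traversal steps (blowing the query bound) or $\Theta(n\lg\sigma\lg d)$ extra bits of direct pointers (blowing the space bound). The universal-table batching you flag as the delicate part is not a fixable technicality: each $\prank$/$\selop$ step depends adaptively on the position produced by the previous step and on an entire prefix of that level's sequence, not on $O(\lg n)$ contiguous bits, so a chain of such steps cannot be resolved in one probe without first materializing the composed map, which is exactly the too-expensive direct pointer above.

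The idea missing from the plan is the geometric ``rank''/coloring of levels that makes the Chan--Larsen--P\v{a}tra\c{s}cu construction go through (and which the present paper reuses in its Lemma~\ref{ball_inheritance_point_para}). There, skip pointers are stored at \emph{every} level, but level $l$ is assigned rank $r=\max\{c:\tau^{c}\mid l\}$, and a rank-$r$ pointer jumps only to the next level that is a multiple of $\tau^{r+1}$, hence is encodable in at most $\tau^{r+1}\lg d$ bits. The contribution of a fixed rank $r$ to the per-point space is $(h/\tau^{r})\cdot\tau^{r+1}\lg d=\tau\,h\lg d$ -- independent of $r$ -- so summing over the $\log_{\tau}h$ ranks gives $O(\tau\lg\sigma\log_{\tau}h)$ bits per point, while the walk strictly increases the rank each step and therefore stops after $O(\log_{\tau}h)$ steps. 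This balancing is precisely what drives $f\cdot g$ below $h$; taking $\tau=2$ and $\tau=\lg^{\epsilon}\sigma$ yields cases (b) and (c), and case (a) is obtained from a further layered version of the same idea rather than from equispaced levels. A second, smaller divergence: the canonical construction walks only downward with $\prank$ toward a level where both coordinates are stored explicitly, whereas you propose separate downward and upward walks with $\selop$; the downward-only variant both simplifies the argument and avoids the cost of select structures.
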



\subparagraph*{Data Structures for $\rankop$ and $\selop$.}
Given a sequence $A$ drawn from alphabet $[\sigma]$, a $\rankop_c(A, i)$ operation computes the number of elements equal to $c$ in $A[0..i]$, where $c\in[\sigma]$,  while a $\selop_c(A,i)$ returns the index of the entry of $A$ containing the $i$-th occurrence of $c$. We have the following two lemmas on building $\rankop$/$\selop$ structures.

\begin{lemma}
	\label{lemma_rank_prime_small}
	Let $A[0..n^{\prime}-1]$ be a packed sequence drawn from alphabet $[\sigma]$, where $n' \le n$ and $\sigma = O(\polylog n)$. A data structure of $n^{\prime}\lceil\lg \sigma\rceil+o(n^{\prime}\lg \sigma)$ bits supporting $\rankop$ in $O(1)$ time can be constructed in $O(n^{\prime}\lg^2 \sigma/\lg n+\sigma)$ time. 
\end{lemma}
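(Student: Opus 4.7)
The plan is to build a constant-depth multi-ary wavelet tree over $A$ and equip each of its internal nodes with a small-alphabet, constant-time rank structure. Fix a small constant $\epsilon<1$ (say $\epsilon=1/2$) and set $d=2^{\lfloor \epsilon \lg \lg n\rfloor}$. Since $\sigma=O(\polylog n)\leq 2^{O(\sqrt{\lg n})}$, Lemma \ref{lemma:wavelet_construct_d_packed} applied without value or index arrays builds the $d$-ary wavelet tree over $A$ in $O(n'\lg^2\sigma/\lg n+\sigma)$ time; its depth is $t=\lceil \log_d \sigma\rceil=O(1)$, and the $S(u)$'s at all $O(\sigma)$ nodes together occupy $n'\lceil \lg \sigma\rceil$ bits when packed.

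At each internal node $u$, let $n_v=|S(u)|$ and build a two-level sampled rank structure on $S(u)$ with superblocks of $s=\lg^2 n$ positions and blocks of $b=\lfloor \lg n/(2\lg d)\rfloor$ positions. At every superblock boundary I store an array of $d$ one-word cumulative counts. At every block boundary I store a packed vector of $d$ relative counts of $O(\lg \lg n)$ bits each; since $d\cdot O(\lg \lg n)=o(\lg n)$ for $d\leq \lg^\epsilon n$, this vector fits in a single word. A universal table $U$ of $o(n)$ bits, indexed by a $b$-character block (at most $\lg n/2$ bits), returns either (i) the count of any prescribed symbol in any prefix or (ii) the packed count vector of the whole block. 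With these pieces, $\rankop_{c'}(S(u),j)$ reads the superblock's cumulative entry for $c'$, masks the relevant field out of the block's packed vector, and adds the in-block contribution from $U$, all in $O(1)$ word operations.

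The standard wavelet-tree descent answers $\rankop_c(A,i)$ via one such rank on $S(u)$ at each of the $t=O(1)$ ancestors of the leaf for $c$, giving $O(1)$ total query time. The per-node overhead is $O(n_v d/\lg n)$ bits from superblock samples plus $O(n_v d\lg \lg n/\lg n)$ bits from block samples; both are $o(n_v\lg d)$ for $d=O(\lg^\epsilon n)$ with $\epsilon<1$, so summed across all nodes they contribute $o(n'\lg \sigma)$ bits on top of the $n'\lceil \lg \sigma\rceil$ bits of wavelet-tree payload. Construction of the base structure at $u$ is a single left-to-right scan using $U$ once per block, costing $O(n_v\lg d/\lg n)$ time; summed over all nodes this is $O(n'\lg \sigma/\lg n)$, strictly dominated by the wavelet-tree construction, and allocating the $O(\sigma)$ node structures accounts for the additive $\sigma$. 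The table $U$ is precomputed once in $o(n)$ time.

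The main delicate point is the choice of $d$: making $d$ large keeps the wavelet-tree depth $O(1)$ so that queries are $O(1)$, but inflates the per-block packed vector; the constraint $d\cdot \lg \lg n=o(\lg n)$ forces $d=O(\lg^\epsilon n)$ with $\epsilon<1$, which still keeps the tree depth constant under $\sigma=O(\polylog n)$. Verifying that all the overheads sum to $o(n'\lg \sigma)$ bits, that each per-block count vector fits in a single word and can therefore be updated and queried by $O(1)$ shift/mask operations, and that the construction time stays within $O(n'\lg^2\sigma/\lg n+\sigma)$ once both the wavelet-tree cost and the per-node rank building are added up, is where the bookkeeping is concentrated.
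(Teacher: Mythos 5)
Your proposal is correct and follows the same high-level strategy as the paper's proof: pick a branching factor $d$ that is polylogarithmic in $n$ (so that a $d$-ary generalized wavelet tree over $A$ has $O(1)$ depth when $\sigma=O(\polylog n)$), build that tree via Lemma~\ref{lemma:wavelet_construct_d_packed}, and equip its internal structure with constant-time rank support over the small-alphabet sequences $S(u)$. Where you diverge is in how that last ingredient is obtained. The paper sets $d=2^{\lceil\frac14\lg\lg n\rceil}\approx\lg^{1/4}n$ precisely so that the per-level alphabet fits the hypothesis $\sigma<\lg^{1/3}n$ of the Babenko~et~al.\ $\countop$ structure (Lemma~\ref{lemma_rank_small}), concatenates the $S(u)$'s at each level into one sequence $S_l$, invokes that lemma once per level, and then observes that the result is exactly the Bose~et~al.\ range-counting structure. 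You instead permit any $d\le\lg^{\epsilon}n$ with $\epsilon<1$ and rebuild the small-alphabet rank structure from scratch with superblock/block sampling plus a universal table, giving per-node structures rather than per-level structures. Both routes work; the paper's is more modular (black-boxing a known lemma), yours is self-contained at the cost of re-deriving a standard construction, and buys a slightly looser constraint on $d$.

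Two small bookkeeping points you glossed over but which do resolve: (i) your block-level overhead is actually $\Theta(n_v d\lg d\lg\lg n/\lg n)$ bits rather than the stated $O(n_v d\lg\lg n/\lg n)$, since the number of blocks is $\Theta(n_v\lg d/\lg n)$; this is still $o(n_v\lg d)$ under $d=O(\lg^{\epsilon}n)$, so your conclusion survives. (ii) Populating the $d$-word cumulative array at each superblock boundary costs $\Theta(d)$ per superblock, i.e.\ $\Theta(n_v d/\lg^2 n)$ per node; to keep the overall construction within $O(n'\lg^2\sigma/\lg n+\sigma)$ you need to accumulate within a superblock in a packed word and only unpack to the $d$-word array at boundaries, which works but should be said explicitly.
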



\begin{lemma}[{\cite[Lemma 2.1]{babenko2015wavelet}}]
	\label{bit_sequence}
	Given a packed bit sequence $B[0..n-1]$, a systematic data structure occupying $o(n)$ extra bits can be constructed in $O(n/\lg n)$ time, which supports $\rankop$ and $\selop$ in constant time.
\end{lemma}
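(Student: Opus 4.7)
The result is cited from \cite{babenko2015wavelet}; my plan is to describe how the classical constant-time rank/select index of Jacobson and Clark can be assembled in $O(n/\lg n)$ time when $B$ is given in packed form. The structure is \emph{systematic} in the sense that it does not rewrite $B$ itself; all the new space is in $o(n)$-bit auxiliary tables built on the side, and every query consults $B$ plus $O(1)$ words of these tables.

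For $\rankop$, I would partition $B$ into superblocks of $\lg^2 n$ bits and sub-blocks of $(\lg n)/2$ bits. I store the absolute rank at each superblock boundary in $O(\lg n)$ bits, and the rank relative to the enclosing superblock at each sub-block boundary in $O(\lg\lg n)$ bits, for totals $O(n/\lg n)$ and $O(n\lg\lg n/\lg n)=o(n)$ bits, respectively. A universal table indexed by a $(\lg n)/2$-bit pattern and an in-block position returns the rank inside a sub-block in $O(1)$ time and fits in $o(n)$ bits. All counters are then filled by a single scan of $B$ word by word, using a constant-time popcount on each word (either via the universal table or a standard word-level bit trick) to advance the running count and write it out at each sub-block and superblock boundary, which takes $O(n/\lg n)$ time in total.

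For $\selop$, I would use the standard hierarchical scheme of Clark: group the positions of the $1$-bits into batches of $\lg^2 n$ consecutive ones and store the position of the first $1$ of each batch. A batch spanning more than $\lg^4 n$ bits is \emph{sparse}, and its $\lg^2 n$ answers are recorded explicitly in $O(\lg n)$ bits each; totalled over all sparse batches this is $O(n/\lg n)$ bits. A \emph{dense} batch sits in a window of at most $\lg^4 n$ bits, inside which I apply a second level of sub-batches of $\sqrt{\lg n}$ ones, finishing with a universal table that answers select in a $\polylog n$-bit window. The structure for selecting $0$-bits is symmetric. The batch boundaries, classifications, and second-level tables are all built during the same packed pass over $B$, advancing by whole words via word-level popcount and then handing each dense batch's $O(\lg^4 n)$-bit window to its own construction.

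The main obstacle is the accounting for $\selop$: one has to verify that the construction of each dense batch only touches its own stretch of $B$ so that the work across the $O(n/\lg^2 n)$ batches telescopes to $O(n/\lg n)$, and that the universal tables shared across batches have total size $o(n)$ bits. Both are standard in the Jacobson/Clark/Munro analysis, which is why the lemma can be quoted directly from \cite{babenko2015wavelet}; the only new element for our purposes is the emphasis that every pass can be run on the packed representation in $O(n/\lg n)$ time, matching the $\Theta(n/\lg n)$ machine words of input.
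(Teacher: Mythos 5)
The paper supplies no proof of this lemma --- it is quoted verbatim from Babenko et al.\ --- so there is no paper argument to compare against, but your reconstruction via a packed Jacobson/Clark index is exactly the intended one. Your $\rankop$ half is fine as written: two-level block decomposition, counters filled by a single word-at-a-time popcount pass, $o(n)$ auxiliary bits, and a universal half-word table.

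For $\selop$ there is a concrete slip. After batching the 1-positions into groups of $\lg^2 n$, handling sparse batches (span $>\lg^4 n$ bits) explicitly, and sub-batching the dense ones into groups of $\sqrt{\lg n}$ ones, you say you finish ``with a universal table that answers select in a $\polylog n$-bit window.'' A table indexed by the raw contents of a $w$-bit window has $2^w$ entries, so $w$ cannot exceed a constant fraction of $\lg n$; a genuinely $\polylog n$-bit window makes the table superpolynomial. But a sub-batch of $\sqrt{\lg n}$ ones inside a dense batch can stretch across anything up to the full $\lg^4 n$-bit window, so you cannot hand it to the table directly. You need a second sparse/dense split: call a sub-batch sparse if it spans more than, say, $(\lg n)/2$ bits and record its $\sqrt{\lg n}$ answers explicitly as $O(\lg\lg n)$-bit offsets into the enclosing dense batch; reserve the table for sub-batches fitting inside half a word. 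Because the total span of all dense batches is at most $n$, there are at most $2n/\lg n$ sparse sub-batches, whose explicit answers cost $O((n/\lg n)\cdot\sqrt{\lg n}\cdot\lg\lg n)=o(n)$ bits, and the table costs $O(\sqrt{n}\cdot\polylog n)=o(n)$ bits. With that threshold stated the space and $O(1)$ query bounds go through, and your packed $O(n/\lg n)$ construction pass is then exactly the observation that makes this lemma usable here.
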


In the above lemma, a data structure is {\em systematic} if it requires the input data to be stored verbatim along with the additional information for answering queries. 
A restricted version of $\rankop$ is called {\em partial rank}; a partial rank operation, $\prank(A, i)$, computes the number of elements equal to $A[j]$ in $A[0..j]$. The following lemma presents a solution to supporting $\prank$, which is an easy extension of \cite[Lemma 3.5]{belazzougui2020linear}. 

\begin{lemma}
	\label{rank_select_0}
	Given a sequence $A[0..n-1]$ drawn from alphabet $[\sigma]$, a data structure of $O(n\lg \sigma)$ bits can be constructed in $O(n+\sigma)$ time, which supports $\prank$ in constant time. 
\end{lemma}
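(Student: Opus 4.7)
The plan is to reduce partial rank on $A$ to a collection of per-symbol rank oracles and stitch them together via a length-$\sigma$ offset directory. Observe that $\prank(A,i)$ equals the rank of position $i$ in the sorted list $L_c$ of positions at which the symbol $c=A[i]$ appears in $A$. Storing $A$ verbatim in packed form already costs $n\lceil\lg\sigma\rceil=O(n\lg\sigma)$ bits and returns $A[i]$ in $O(1)$ time, so it suffices to support the following query in $O(1)$: given any symbol $c$ and a position $i$ known to belong to $L_c$, return the rank of $i$ in $L_c$.

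First I would compute the partial-rank values themselves and, along the way, the lists $L_c$, by a single left-to-right scan of $A$ maintaining $\sigma$ counters initialized to zero. At step $i$ the current value of the counter for $A[i]$ (after incrementing) is exactly $\prank(A,i)$, and appending $i$ to the list for $c=A[i]$ yields $L_c$ in sorted order. Both the counter initialization and the scan take $O(n+\sigma)$ time in total.

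Next, for each symbol $c$, I would feed $L_c$ to the binary partial-rank primitive supplied by Lemma 3.5 of Belazzougui and Navarro. Applied to the characteristic bit vector $B_c$ of $L_c$, that primitive yields a structure of $O(n_c\lg(n/n_c))$ bits, buildable in $O(n_c)$ time, that answers rank-at-a-1 in $O(1)$ time without needing $B_c$ stored verbatim: its $1$-positions are precisely those at which $A$ equals $c$, and so are implicit in the already stored $A$. Concatenating the $\sigma$ per-symbol substructures and adding a length-$\sigma$ directory of pointers so that the one for $c$ can be located in $O(1)$ yields total space $\sum_c O(n_c\lg(n/n_c))+O(\sigma\lg n) = O(nH_0(A))+O(\sigma\lg n)$; since $nH_0(A)\le n\lg\sigma$ and $\sigma\lg n\le n\lg\sigma$ holds for $2\le\sigma\le n$ (as $(\lg x)/x$ is monotonically decreasing on that range), this totals $O(n\lg\sigma)$ bits. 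The construction cost sums to $\sum_c O(n_c)+O(\sigma)=O(n+\sigma)$, and each query reads $c=A[i]$, follows the directory entry for $c$, and invokes $c$'s oracle on $i$, all in $O(1)$ time.

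The main obstacle is confirming that the binary partial-rank structure cited as Lemma 3.5 is modular enough to be instantiated once per symbol without reintroducing duplicate copies of the $B_c$'s, and that bolting a length-$\sigma$ directory onto the concatenation preserves both $O(1)$ query and the stated space and time bounds. Once $B_c$ is recognized as redundant because it is implicit in the stored $A$, the per-symbol costs telescope cleanly, which is precisely the "easy extension" flagged by the lemma's attribution.
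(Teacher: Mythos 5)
Your proposal misreads what the cited Lemma~3.5 of Belazzougui et al.\ already delivers, and, as a consequence, it omits exactly the case the paper's own proof exists to handle. Lemma~3.5 of that reference is not a per-bit-vector ``rank-at-a-1'' primitive; it is already a complete partial-rank structure over a \emph{sequence}: $O(n\lg\sigma)$ bits, $O(n)$ construction, $O(1)$ $\prank$ --- valid precisely when $\sigma\le n$. The paper's proof therefore consists of one sentence citing that lemma for $\sigma\le n$, and the ``easy extension'' referred to in the lemma's attribution is the remaining case $\sigma>n$, which the paper dispatches by simply storing the $n$ answers explicitly in $O(n\lg n)=O(n\lg\sigma)$ bits; the counter array $C[0..\sigma-1]$ supplies the $O(\sigma)$ term in the construction bound.

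Your reconstruction via per-symbol lists $L_c$ and a length-$\sigma$ directory never touches $\sigma>n$, and it cannot: the directory alone costs $\Theta(\sigma\lg n)$ bits, which exceeds the $O(n\lg\sigma)$ budget once $\sigma=\omega(n)$. The inequality you invoke, $\sigma\lg n\le n\lg\sigma$, holds only when $\sigma\le n$ (and even there ``$(\lg x)/x$ is monotonically decreasing on $[2,n]$'' is false near $x=2$, though that is a harmless edge nuisance). So your argument, even granting the bit-vector primitive you posit, silently assumes the restriction $\sigma\le n$ --- the very case already fully covered by the cited lemma. The missing ingredient is the trivial $\sigma>n$ fallback (explicit answer array), which is the entire content of the paper's proof. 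If you do want a from-scratch per-symbol construction for $\sigma\le n$, you would further need to pin down a concrete primitive that achieves $O(n_c\lg(n/n_c))$ bits, deterministic $O(n_c)$ construction, and constant-time rank-of-a-member without storing $B_c$; as stated, this is an appeal to folklore rather than to a lemma available in the paper.
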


\subparagraph*{Range Minimum/Maximum.} Given a sequence $A$ of $n$ integers, a range minimum/maximum query $\rmq(i, j)/\rMq(i, j)$ with $i\leq j$ returns the position of a minimum/maximum element in the subsequence $A[i..j]$. 
Fischer and Heun~\cite{fischer2011space} considered this problem: 

\begin{lemma}[\cite{fischer2011space}]
	\label{fisher_rmq}
	Given an array $A$ of $n$ integers, a data structure of $O(n)$ bits can be constructed in $O(n)$ time, which answers $\rmq/\rMq$ in $O(1)$ time without accessing $A$.
\end{lemma}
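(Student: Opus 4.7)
The plan is to reduce $\rmq$ on $A$ to lowest common ancestor (LCA) queries on a tree derived from $A$, and then to represent this tree succinctly so that queries never need to access $A$ itself. Specifically, I would use the $2$d-Min-Heap of $A$ (a variant of the Cartesian tree that breaks ties by preferring leftmost positions), for which the position of the minimum in $A[i..j]$ is encoded by the LCA of the nodes corresponding to $i$ and $j$. Since the LCA depends only on tree topology, a succinct representation of the tree suffices to answer queries without inspecting $A$.

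The steps I would carry out, in order, are the following. First, build the $2$d-Min-Heap in $O(n)$ time by a left-to-right scan of $A$ that maintains the rightmost path in a stack and attaches the next element at the appropriate place along that path; each element is pushed and popped at most once. Second, during the same scan, emit a balanced-parentheses (BPS) encoding of the resulting tree, which occupies $2n+o(n)$ bits. Third, build the standard $o(n)$-bit auxiliary structures over this BPS supporting the navigational operations needed for constant-time LCA (for instance, the succinct tree toolkit based on $\pm 1$-RMQ on the excess sequence, solved by the Four Russians method with microblocks of size $\tfrac{1}{2}\lg n$ and a universal lookup table of $o(n)$ bits). Fourth, to answer $\rmq(i,j)$, locate the BPS positions for $i$ and $j$, compute their LCA in $O(1)$ time, and translate the LCA back to a position in $A$; all of this is done purely on the encoded tree. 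Finally, repeat the construction with the symmetric $2$d-Max-Heap to obtain $\rMq$, doubling the space only by a constant.

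The main obstacle is to ensure that queries truly never access $A$. A naive Cartesian tree does not suffice because ties in $A$ force one to compare values at the LCA to identify the leftmost minimum. Choosing the $2$d-Min-Heap definition, in which the tree is uniquely determined and the minimum position of any range is always the LCA itself (no further comparison needed), circumvents this issue. The remaining technical points, namely the $\pm 1$-RMQ over the BPS excess sequence and the Four-Russians table, are standard once one commits to encoding the tree via BPS and using the corresponding succinct LCA machinery; these fit within $O(n)$ bits and $O(n)$ construction time.
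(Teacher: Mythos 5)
Your high-level plan is the right one and essentially follows Fischer and Heun: build a tree of $O(n)$ nodes from $A$ by a single scan, encode it with $2n+o(n)$ bits of balanced parentheses, add $o(n)$-bit rank/select and $\pm1$-RMQ machinery built via a Four-Russians table, and answer queries purely on the encoding so $A$ is never touched afterwards. That framework is exactly what underlies the cited result.

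However, one step as written would produce wrong answers. You describe the $2$d-Min-Heap as ``a variant of the Cartesian tree that breaks ties by preferring leftmost positions'' and assert that $\rmq(i,j)$ is the LCA of the nodes $i$ and $j$. Neither claim holds for the $2$d-Min-Heap. In that tree there are $n{+}1$ nodes $v_0,\dots,v_n$ (with $v_0$ a dummy root standing for $A[0]=-\infty$), node $v_i$'s parent is $v_j$ for the largest $j<i$ with $A[j]<A[i]$, and the node labels coincide with \emph{preorder} ranks --- it is a genuinely different object from the Cartesian tree (whose nodes sit at \emph{inorder} ranks). Crucially, the answer to $\rmq(i,j)$ is \emph{not} $\mathrm{LCA}(v_i,v_j)$. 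The correct characterisation (Fischer--Heun's key lemma) is: if $\mathrm{LCA}(v_i,v_j)=v_i$ then $\rmq(i,j)=i$; otherwise $\rmq(i,j)$ is the label of the child of the LCA that lies on the path to $v_j$. A concrete counterexample to your version: for $A=[3,1,4,1,5]$ the $2$d-Min-Heap has root $v_0$ with children $v_1,v_2,v_4$, $v_2$ has child $v_3$, and $v_4$ has child $v_5$; then $\mathrm{LCA}(v_2,v_5)=v_0$, which is not even an index of $A$, whereas the rule above returns $4$, a valid minimum of $A[2..5]$. If instead you insist on literally using LCA as the answer you must switch to the canonical Cartesian tree, but then the nodes are indexed by inorder rank, and translating between array positions and parenthesis positions is no longer a plain open-parenthesis select; you would have to argue separately that this translation and its inverse are supported in $O(1)$ time within $o(n)$ extra bits.

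So the gap is concrete: either keep the $2$d-Min-Heap (with its preorder-friendly labelling, which is precisely why Fischer--Heun chose it over the Cartesian tree) and replace ``return the LCA'' with ``return the child of the LCA toward $j$, or $i$ if the LCA is $v_i$'' (this requires a constant-time child/level-ancestor primitive on the succinct tree, which DFUDS provides particularly cleanly --- Fischer--Heun in fact use DFUDS rather than BPS and phrase the whole query as one $\pm1$-RMQ plus rank/select, sidestepping an explicit LCA call); or keep the LCA-equals-answer property by using the canonical Cartesian tree and then supply the missing inorder-to-parenthesis translation. Everything else in your proposal --- the stack-based linear-time construction, the $2n+o(n)$-bit encoding, the microblocks of size $\tfrac12\lg n$, the universal table, the symmetric max-heap for $\rMq$ --- is correct and matches the intended construction.
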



We further build an auxiliary structure upon a packed sequence $A$ under the {\em indexing model}: after the the data structure is built, $A$ itself need not be stored verbatim; to answer a query, it suffices to provide an operator that can retrieve any element in $A$.

\begin{lemma}
	\label{minmax_index}
	Let $A[0..n^{\prime}-1]$ be a packed sequence drawn from alphabet $[\sigma]$, where $ \sigma \leq 2^{\sqrt{\lg n}}$ and $n'\le n$. 
	There is a data structure using $O(n^{\prime}\lg \lg n)$ extra bits constructed in $O(n^{\prime}{\lg \sigma}/{\lg n})$ time, which answers $\rmq/\rMq$ in $O(1)$ time and $O(1)$ accesses to the elements of $A$. 
	The query procedure uses a universal table of $o(n)$ bits.
\end{lemma}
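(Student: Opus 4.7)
The plan is to use a two-level blocking scheme in the spirit of classical constant-time RMQ structures, exploiting the packed representation of $A$ for word-parallel preprocessing. Set the block size $b = \lfloor \lg n / (2\lceil \lg \sigma\rceil)\rfloor$, so that $A$ is divided into $n'' = \lceil n'/b\rceil = O(n'\lg\sigma/\lg n)$ consecutive blocks, each occupying at most $\lg n / 2$ bits and therefore residing in half a machine word of the packed representation of $A$. Since $\lg\sigma \le \sqrt{\lg n}$, we also have $b \ge \Omega(\sqrt{\lg n})$, so $n'' = O(n'/\sqrt{\lg n})$ is within the preprocessing budget.

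I would first precompute a universal table $U$ indexed by triples $(x,\ell,r)$, where $x \in [2^{b\lceil\lg\sigma\rceil}]$ encodes a possible block content and $0\le \ell\le r<b$. For each triple, $U$ stores the offset of the minimum (and of the maximum) among positions $\ell..r$ inside the block. Because $2^{b\lceil\lg\sigma\rceil}\le \sqrt{n}$ and the offset pairs number $O(b^2)$, the table has $O(\sqrt{n}\,\polylog n) = o(n)$ bits and can be built in $o(n)$ time; it depends only on $n$ and $\sigma$ and therefore qualifies as the universal table mentioned in the statement. Next, for each block $c$, a single lookup in $U$ with $(\ell,r)=(0,b-1)$ yields the offset $o_c$ of the block minimum in $O(1)$ time, after which one access to $A$ gives the value $v_c = A[c\cdot b + o_c]$. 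I pack the offsets $o_c$ into an array $O[0..n''-1]$ of $\lceil \lg b\rceil$ bits per entry, occupying $O(n''\lg\lg n) = O(n'\lg\sigma\lg\lg n/\lg n) = O(n'\lg\lg n)$ bits. I then build the Fischer--Heun structure of Lemma~\ref{fisher_rmq} on the auxiliary sequence $V=v_0,\dots,v_{n''-1}$, obtaining an $O(n'')$-bit summary built in $O(n'')$ time that answers $\rmq$ on $V$ without accessing $V$, so $V$ itself is discarded after construction. The same construction is carried out for $\rMq$ in parallel.

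A query $\rmq(i,j)$ proceeds as follows. Let $b_i = \lfloor i/b\rfloor$ and $b_j = \lfloor j/b\rfloor$. If $b_i = b_j$, a single lookup in $U$ with the word containing that block and offsets $(i\bmod b, j\bmod b)$ gives the answer in $O(1)$ time and one access to $A$. Otherwise, I extract three candidate positions: the partial-left answer via $U$ applied to the word of block $b_i$ with offsets $(i\bmod b, b-1)$; the partial-right answer via $U$ applied to the word of block $b_j$ with offsets $(0,j\bmod b)$; and an interior answer obtained by asking the Fischer--Heun summary for the block $c^\ast\in[b_i+1,b_j-1]$ with minimum $v_{c^\ast}$, whose representative position is $c^\ast\cdot b + O[c^\ast]$ (recovered by one access to the packed array $O$). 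Choosing the overall minimum requires comparing the values at these three candidate positions, each retrieved by a single access to $A$, so the query uses $O(1)$ time and $O(1)$ accesses to $A$.

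The main technical point is to verify that everything fits simultaneously within the claimed preprocessing, space, and query bounds. Preprocessing does $O(n'')$ operations for the per-block min/max lookups, $O(n'')$ operations to pack $O$, $O(n'')$ accesses to $A$ to obtain $V$, and $O(n'')$ time for Fischer--Heun, totaling $O(n'\lg\sigma/\lg n)$ as required. The extra space is dominated by the packed array $O$ at $O(n'\lg\lg n)$ bits plus $O(n'')$ bits for Fischer--Heun, with the shared $o(n)$-bit table $U$ outside this budget. The delicate part is the trade-off embedded in the choice of $b$: making $b$ larger than $\Theta(\lg n/\lg\sigma)$ would blow the table $U$ past $o(n)$ bits, while making $b$ smaller than $\Theta(\sqrt{\lg n})$ would inflate $n''$ and break the preprocessing bound; $b = \Theta(\lg n/\lg\sigma)$ is the unique scaling that satisfies both constraints under the hypothesis $\lg\sigma\le\sqrt{\lg n}$.
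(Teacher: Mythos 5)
Your proposal shares the paper's high-level skeleton (block size $b=\Theta(\lg n/\lg\sigma)$, Fischer--Heun over block summaries, universal tables for in-block queries), but there is a genuine gap in the query procedure. Your in-block table $U$ is indexed by the \emph{raw packed contents} of a block together with a range $(\ell,r)$. To answer the partial-left and partial-right subqueries you therefore need, at query time, the packed block of $A$ containing position $i$ (resp.\ $j$). But this lemma is stated under the indexing model: after preprocessing, $A$ is not stored, and the only way to touch it is through an operator that returns a single element of $A$ per call (in the paper's application, this operator is $\point$, which returns one coordinate). Reconstructing an entire block to feed into $U$ would cost $b=\Theta(\lg n/\lg\sigma)=\Omega(\sqrt{\lg n})$ element accesses, violating the claimed $O(1)$ accesses to $A$. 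Your array $O$ does not help here, since $O[c]$ records only the offset of the global minimum in block $c$, not enough information to answer an arbitrary within-block range query.

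The paper closes exactly this hole by \emph{storing} per-position local information: each element is replaced, in $O(n'/b)$ total time via a universal table, by its rank within its block, occupying $O(\lg\lg n)$ bits per position and $O(n'\lg\lg n)$ bits overall --- this is precisely where the extra-bits budget goes. The in-block table is then indexed by these stored rank patterns (of length $b\cdot O(\lg\lg n)=O(\lg n)$ bits) and the query offsets, not by the raw block content of $A$; it returns the in-block index $\tau$ of the partial-range minimum, after which a single access $A[b\cdot s+\tau]$ retrieves the value. So the paper achieves $O(1)$ time and $O(1)$ element accesses, whereas your version retains $O(1)$ word operations only if the packed $A$ is still resident, which is not the model here. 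Incidentally, your other observations (that the offset array $O$ needs only $o(n'\lg\lg n)$ bits, and that Fischer--Heun lets you discard the block-minimum sequence $V$) are correct and slightly more space-frugal than the paper's choice of storing $A'$, but the space budget is dominated by whatever local information replaces raw-block lookups, so this saving does not rescue the approach; you would still need to store something of $\Omega(\lg\lg n)$ bits per position to make the partial-block queries answerable without touching $A$ more than a constant number of times.
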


\section{Fast Construction of $\prank$ Query Structures}
\label{section_partial_rank}

In this section we focus on how to efficiently construct data structures for $\prank$ queries over a sequence $A[0..n'-1]$ drawn from alphabet $[\sigma]$, where $n' \le n$ and $\sigma\leq 2^{\sqrt{\lg n}}$.
This is needed to solve ball inheritance in a special case that we need. Lemma \ref{lemma_rank_prime_small} already solves this problem when $\sigma\leq \lg n$, so we assume $\lg n < \sigma \leq 2^{\sqrt{\lg n}}$ in the rest of this section.

In our solution, we conceptually divide sequence $A$ into chunks of length $\sigma$. For simplicity, assume that $n^{\prime}$ is a multiple of $\sigma$. Let $A_k$ denote the $k$th chuck, where $0 \leq k \leq n^{\prime}/\sigma-1$. For each $c \in [0, \sigma-1]$, we define the following data structures:
\begin{itemize}
	\item A bitvector $B_c=1^{\rankop_c(A_0, \sigma)}01^{\rankop_c(A_1, \sigma)}0\ldots 1^{\rankop_c(A_{n^{\prime}/\sigma-1}, \sigma)}0$, which encodes the number of occurrences of symbol $c$ in each chunk in unary. $B_c$ is represented using Lemma~\ref{bit_sequence} to support $\rankop$ and $\selop$ in constant time. 
	\item A sequence $P_c[0..n^{\prime}/\sigma-1]$, in which $P_c[i] = \prank(A_i,c)$ for each $i \in [0, n^{\prime}/\sigma-1]$, i.e., $P_c[i]$ stores the answer to a partial rank query performed locally within $A_i$ at position $c$.
\end{itemize}

Note that we have one $B_c$ for each alphabet symbol $c$, while we have one $P_c$ for each relative position $c$ in the chunks of $A$. The support for $\prank$ over $A$ follows from: 
$\prank(A, j) = \selop_0(B_c, t)-(t-1)+P_{\tau}[t],  where\ \tau = j\bmod \sigma, t=\lfloor \frac{j}{\sigma} \rfloor, and \ c=A[j]$.

\begin{REMOVED}
We have the following lemma on supporting queries using these data structures, with a space analysis. 

\begin{lemma}
	\label{rank_prime_queryspace}
	The data structures in this section occupy $n^{\prime}\lg \sigma+ o(n^{\prime}\lg \sigma)$ extra bits and support $\prank$ in $O(1)$ time and $O(1)$ accesses to elements of $A$. 
\end{lemma}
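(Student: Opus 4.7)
The plan is to split the claim into two independent parts --- the space bound and the query procedure --- each of which is essentially an unpacking of the definitions given above.

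For the space accounting, I would first sum the sizes of the $\sigma$ arrays $P_c$: each stores $n'/\sigma$ values of $\lceil\lg\sigma\rceil$ bits, so together they occupy exactly $n'\lceil\lg\sigma\rceil$ bits, forming the leading term. For the bitvectors, the key observation is telescoping: $\sum_{c}|B_c| = \sum_{c}(\mathrm{occ}(c,A) + n'/\sigma) = n' + n' = 2n'$ bits of payload, and Lemma~\ref{bit_sequence} adds only $o(|B_c|)$ auxiliary bits per bitvector, for $2n' + o(n')$ bits in total. Because we are in the regime $\sigma > \lg n$ throughout this section, $\lg\sigma = \omega(1)$, so $2n' = o(n'\lg\sigma)$, and the final bound $n'\lg\sigma + o(n'\lg\sigma)$ follows.

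For the query, I would verify the stated identity
\[
\prank(A,j) \;=\; \selop_0(B_c,t) - (t-1) + P_\tau[t],
\]
with $c=A[j]$, $t=\lfloor j/\sigma\rfloor$, and $\tau = j \bmod \sigma$. Splitting $A[0..j]$ at position $t\sigma$ decomposes the count of $c$'s into occurrences inside the complete chunks $A_0,\ldots,A_{t-1}$ plus occurrences in $A_t[0..\tau]$; the latter is precisely $\prank(A_t,\tau) = P_\tau[t]$ by definition, and the former equals the number of $1$s preceding the $t$-th $0$ in $B_c$, which under a zero-indexed select convention is $\selop_0(B_c,t)-(t-1)$. The running time then follows immediately: one read $A[j]$ (the only access to $A$), one $O(1)$ select via Lemma~\ref{bit_sequence}, and a constant-time word-RAM extraction of $P_\tau[t]$, which is legal because each entry of $P_\tau$ occupies $\lceil\lg\sigma\rceil \le \sqrt{\lg n}$ bits and hence fits inside a single machine word.

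The only bookkeeping nuisance worth flagging is the $\sigma$ pointers or offsets used to address the individual $B_c$'s and $P_c$'s: stored naively, a directory of $\sigma$ entries of $\Theta(\lg n)$ bits costs $O(\sigma\lg n) = O(2^{\sqrt{\lg n}}\lg n)$ bits, which must be absorbed into the $o(n'\lg\sigma)$ term. Since $2^{\sqrt{\lg n}} = n^{o(1)}$, this is immediate whenever $n' \ge \sigma$, which is the only nontrivial regime --- for $n' < \sigma$ the entire sequence fits in $O(\sqrt{\lg n})$ words and $\prank$ is answered in $O(1)$ by a precomputed universal table of $o(n)$ bits. I do not foresee any deeper obstacle: the proof is essentially an arithmetic check combined with an invocation of Lemma~\ref{bit_sequence}.
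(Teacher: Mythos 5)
Your space accounting and the verification of the query identity match the paper's argument exactly; the paper states the formula without verification and defers the space count to an appendix where it is computed by the same telescoping observation (the $B_c$'s have $n'$ ones and $n'$ zeros in total, the $P_c$'s sum to $n'\lceil\lg\sigma\rceil$ bits, and $\sigma>\lg n$ lets the $2n'$ disappear into the lower-order term). So the core of the proposal is the same route.

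The one place your reasoning has a hole is the pointer directory. You claim $\sigma\lg n$ bits are absorbed into $o(n'\lg\sigma)$ ``whenever $n'\ge\sigma$'', but that implication does not hold: take $n'=\sigma=2^{\sqrt{\lg n}}$. Then $n'\lg\sigma=2^{\sqrt{\lg n}}\sqrt{\lg n}$ while $\sigma\lg n=2^{\sqrt{\lg n}}\lg n$, so the directory is a $\sqrt{\lg n}$ factor \emph{larger} than the claimed bound, not smaller. The correct fix is not to absorb the directory but to avoid it: the $P_c$'s all have exactly $n'/\sigma$ entries of $\lceil\lg\sigma\rceil$ bits, so their start offsets in a single concatenated packed array are arithmetic in $c$ and require no stored pointers; for the variable-length $B_c$'s, store them concatenated into one bitvector of length $2n'$ together with an auxiliary bitvector of the same length marking the $\sigma$ start positions, and use a $\selop$ structure from Lemma~\ref{bit_sequence} ($o(n')$ extra bits) to locate $B_c$ in $O(1)$ time. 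That eliminates the $\sigma\lg n$ term entirely and is presumably what the paper intends by omitting any discussion of the directory. Everything else in your write-up is fine.
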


\begin{proof}
	See Appendix \ref{app: space_rank_prime_queryspace} for space analysis.
	A query $\prank(A, j)$ can be answered as follows:
	$$
	\prank(A, j) = select_0(B_c, t)-(t-1)+P_{\tau}[t],  where\ \tau = j\bmod \sigma, t=\lfloor \frac{j}{\sigma} \rfloor, and \ c=A[j]
	$$
	As the $\selop$ query over $B_c$ takes constant time, answering $\prank(A, j)$ requires $O(1)$ time and a single access to $A$.
\end{proof}

Next, we consider how to construct the sequences $B_c$'s efficiently.

\begin{lemma}
	\label{lemma_B_c}
	Bitvectors $B_0, B_1, \ldots,B_{\sigma-1}$ can be constructed in $O(n^{\prime}\lg^2 \sigma/\lg n+\sigma)$ time in total. 
\end{lemma}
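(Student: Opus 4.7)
My strategy is to sort each chunk, extract a packed per-chunk ``signature'' that encodes the contribution of the chunk to all $B_c$'s, and then distribute these signatures to the $B_c$'s. The first stage is a bit-packed radix sort on each chunk. A chunk has length $\sigma$ over alphabet $[\sigma]$, so the sort runs in $O(\sigma\lg^2\sigma/\lg n)$ per chunk, summing to $O(n'\lg^2\sigma/\lg n)$ across the $n'/\sigma$ chunks, which already matches the dominant term of the target bound.

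\textbf{Stage 2 (chunk signatures).} From each sorted chunk $\tilde{A}_k$ I produce, in packed form, the bit string $O_k=1^{C_k[0]}0\,1^{C_k[1]}0\cdots 1^{C_k[\sigma-1]}0$ of length exactly $2\sigma$, where $C_k[c]=\rankop_c(A_k,\sigma)$. The construction is a state-machine scan whose state is the ``current symbol'' $c\in[0,\sigma-1]$; it reads one element of $\tilde{A}_k$ at a time, emits a $1$, then advances $c$ (emitting a $0$ per advance) until it catches up with the value just read. I would accelerate this scan by a universal table of $o(n)$ bits that, on input a word-sized block of $b=\Theta(\lg n/\lg\sigma)$ sorted elements together with the current symbol, returns the next up to $2b$ bits of $O_k$ and the new symbol. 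Each block costs $O(1)$, each chunk has $O(\sigma/b)=O(\sigma\lg\sigma/\lg n)$ blocks, and this stage totals $O(n'\lg\sigma/\lg n)$.

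\textbf{Stage 3 (assembly).} From $O_0,O_1,\ldots,O_{n'/\sigma-1}$ I assemble the $B_c$'s. The length of $B_c$ is $n_c+n'/\sigma$, where $n_c$ is the total number of occurrences of $c$ in $A$; the $n_c$'s (hence the layout of all $B_c$'s) can be read off the leaf-level sizes of a binary wavelet tree on $A$ built via Lemma~\ref{lemma:wavelet_construct_d_packed} in $O(n'\lg^2\sigma/\lg n+\sigma)$ time. The assembly then amounts to extracting, for each $c$, the $c$-segment of every $O_k$ and concatenating into $B_c$; with per-chunk cumulative offsets (also obtainable from Stage~2) driving the extraction, the total output is $2n'$ bits, and I aim to carry out all gather-scatters in $O(n'/\lg n+\sigma)$ time.

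\textbf{Main obstacle.} The last stage is where I expect the real difficulty. A naive segment-by-segment extraction costs $\Theta(1)$ per (chunk, symbol) pair and therefore $\Theta(n')$ overall, which already breaks the budget when $\sigma$ is close to its lower bound $\lg n$ and $\lg^2\sigma/\lg n$ is sub-constant. Avoiding this seems to require performing all writes at the word level via another small universal table that, given a word of signature bits together with the current symbol and a compressed summary of the affected $B_c$ write pointers, returns which $B_c$'s receive which new bits and by how much each write pointer advances. Showing that this routing information can be produced and applied in $O(1)$ amortized per word, while keeping the table within $o(n)$ bits, is the key technical challenge and is essentially where the $\lg^2\sigma/\lg n$ factor is paid back in Stage 3.
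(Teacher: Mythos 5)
Your Stage~3 is where the proof breaks, and you have correctly identified the obstacle but not closed it. The task there is a \emph{ragged transpose}: the signatures $O_k$ store the bits in $(k,c)$ row-major order, and you need them in $(c,k)$ column-major order to form the $B_c$'s. A word of signature bits can straddle segments destined for $\Theta(\lg n)$ different arrays $B_c$, each a separate memory destination, so no matter how clever the table lookup is you still need $\Omega(1)$ separate writes per $(k,c)$ pair, i.e.\ $\Omega(n')$ writes in total. When $\sigma$ is only slightly above $\lg n$ the budget $O(n'\lg^2\sigma/\lg n + \sigma)$ is $o(n')$, so this is not an inefficiency you can tune away; the bottleneck is the number of distinct write targets, not the number of bits moved. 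Reordering the loop to go symbol-by-symbol does not help either, because the $c$-segments are then scattered across the $O_k$'s and have variable lengths, so you trade a scattered write for a scattered read.

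The paper's proof avoids ever putting itself in that position. It never sorts chunks separately and then assembles: instead it interleaves the sequence $A$ with $n'/\sigma$ sentinel ``boundary elements'' to form $M$, and then runs a $\lceil\lg\sigma\rceil$-level MSB-first binary distribution over $M$, sending each regular element left or right according to the next bit of its value but \emph{copying each boundary element to both children}. Because boundary markers are duplicated, every intermediate sequence $M(u)$ already carries its own chunk delimiters, so at every level the writes are to two contiguous output buffers (one per child) and can be done at $\Theta(\lg n)$ bits per $O(1)$ time via a universal table. The duplication of boundaries inflates the level-$l$ total length only to $n'+2^l n'/\sigma$, which sums over $\lg\sigma$ levels to $O(n'\lg^2\sigma/\lg n+\sigma)$ — exactly the bound claimed. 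After the last level, leaf $c$ holds all occurrences of $c$ and all boundaries in order, and $B_c$ is just the sequence of least-significant flag bits. In short: the paper performs the distribution progressively through a tree so that each step is a local split with contiguous destinations, whereas your sort-then-scatter plan defers all the routing to a single pass that inherently needs $\Theta(n')$ scattered writes. To repair your argument you would essentially have to replace Stage~3 (and with it Stages~1 and~2, which become redundant) with the boundary-marker tree distribution.
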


\begin{proof}
\end{REMOVED}
  
	To construct the sequences $B_c$'s, we first construct a sequence $M[0..n^{\prime}+n^{\prime}/\sigma-1]$ in which each element is encoded in $\lceil\lg \sigma\rceil+1$ bits.
	In $M$, $n'$ elements are {\em regular elements}, and the rest are {\em boundary elements} each of which is an integer whose binary expression simply consists of $\lceil\lg \sigma\rceil+1$ $0$-bits.
	$M$ is divided into $n^{\prime}/\sigma$ chunks, and each chunks contains $\sigma$ regular elements followed by a boundary element.
	The subsequence of the $\sigma$ regular elements in the $i$-th chunk can be obtained by appending a $1$-bit to the end of the binary expression of each element in $A_k$. 
	$M$ can be computed in $O(n^{\prime}\lg \sigma/\lg n)$ time using a universal table $U$; See Appendix \ref{app: generate_M} for details.
	
      

    We then construct a tree $T$ over $M$, in which each node $u$ is associated with a sequence $M(u)$.
    At the root node $r$ of $T$, we set $M(r)=M$, and we perform the following recursive procedure at any node $u$ at level $l$ of $T$ where $l \in [0, \lceil\lg \sigma\rceil-1]$:
    We create the left child, $u_0$, and the right child, $u_1$, of $u$, and perform a linear scan of $M(u)$.
	During the scan, for each $i\in[0, |M(u)-1|]$, if $M(u)[i]$ is a boundary element, it is appended to both $M(u_0)$ and $M(u_1)$.
	If $M(u)[i]$ is not a boundary element and its $l$th most significant bit is $0$, $M(u)[i]$ is appended to $M(u_0)$.
	If its most significant bit is $1$, it is appended to $M(u_1)$.
	After generating the sequences $M(u_0)$ and $M(u_1)$, we discard the sequence $M(u)$.
	We finish recursion after we create $\lceil \lg \sigma \rceil$ levels, i.e., we only examine the first $\lceil\lg \sigma\rceil$ bits of each element of $M$ to determine the tree structure. 
	Thus, this tree has $\sigma$ leaves, and the sequences associated with the leaves from left to right are named $M_0$, $M_1$, \ldots, $M_{\sigma-1}$.
	They form a partition of $M$.
	
	To speed up this process, we use a universal table $U'$ of $o(n)$ bits described in Appendix \ref{app: universal_table_lemma_B_c}, which allows us to process $M(u)$ in $O(|M(u)|\lg \sigma /\lg n+1)$ time. 
	Note that we assign $n^{\prime}$ regular and $2^l\times\frac{n^{\prime}}{\sigma}$ boundary elements to the nodes at tree level $l$.
	Thus the total time required to construct $T$ is $O(\sum_{l=0}^{\lceil\lg \sigma\rceil-1} ((n^{\prime}+2^l\times\frac{n^{\prime}}{\sigma})\lg \sigma/\lg n)+\sigma)=O(n'\lg^2 \sigma/\lg n+\sigma)$.

	To compute $B_c$ for any $0 \leq c \leq \sigma-1$,
	a crucial observation is that the $i$-th bit in $B_c$ is the same as the least significant bit of the $i$-th element of $M_c$. 
	Thus it takes $O(|B_c|(\lg \sigma+1)/\lg n+1)$ time to compute the content of $B_c$ using bit packing (see Lemma~\ref{lemma_split_1}).
        The time to represent $B_c$ using Lemma~\ref{bit_sequence} is less.
	Thus the time used to build $B_0$, $B_1$, \ldots, $B_{\sigma-1}$ from $M_0, M_1, \ldots, M_{\sigma-1}$ is $O(n'\lg \sigma/\lg n+\sigma)$, dominated by the time of building $T$. 

    We show how to efficiently construct $P_0, P_1,\ldots, P_{\sigma-1}$ in Appendix \ref{app: construct_p_seq}. See Appendix \ref{app: space_rank_prime_queryspace} for the space analysis of all data structure designed.
    Our work in this section combined with Lemmas~\ref{lemma_rank_prime_small} yields the following result. 

\begin{lemma}
	\label{rank_prime_1}
	Let $A[0..n^{\prime}-1]$ be a packed sequence drawn from alphabet $[\sigma]$, where $n' \le n$ and $\sigma =  O(2^{O(\sqrt{\lg n})})$. 
	A  data structure using $n^{\prime}\lceil\lg \sigma\rceil+ o(n^{\prime}\lg \sigma)$ extra bits can be constructed in $O(n^{\prime}{\lg^2 \sigma}/{\lg n}+\sigma)$ time to support $\prank$ in $O(1)$ time and $O(1)$ accesses to elements of $A$.
\end{lemma}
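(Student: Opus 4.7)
The plan is to prove Lemma \ref{rank_prime_1} by case analysis on $\sigma$. When $\sigma \le \lg n$, the result is immediate from Lemma \ref{lemma_rank_prime_small}, whose bounds coincide with those claimed. For the complementary range $\lg n < \sigma \le 2^{O(\sqrt{\lg n})}$ developed in this section, I would instantiate the chunk-based framework laid out in the excerpt: maintain the bitvectors $B_c$ and the partial-rank arrays $P_c$, and answer $\prank(A,j)$ via the formula $\selop_0(B_c,t)-(t-1)+P_\tau[t]$ with $c = A[j]$, $t = \lfloor j/\sigma \rfloor$, and $\tau = j \bmod \sigma$. Correctness is direct: the first term counts the occurrences of $c$ in chunks $A_0,\ldots,A_{t-1}$ from the unary-delimited encoding of $B_c$, and $P_\tau[t]$ supplies the within-chunk partial rank at position $\tau$ in $A_t$. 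The query runs in $O(1)$ time with one access to $A$ using the constant-time $\selop_0$ provided by Lemma \ref{bit_sequence}.

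For space, $\sum_c |B_c| = n' + (n'/\sigma)\cdot\sigma = 2n'$ bits of raw unary data, and wrapping each $B_c$ using Lemma \ref{bit_sequence} adds only $o(|B_c|)$ bits, so the $B_c$'s together occupy $O(n')+o(n')$ bits, which is $o(n'\lg\sigma)$. The $P_c$'s jointly store $n'$ values of $\lceil\lg(\sigma+1)\rceil$ bits in packed form, contributing $n'\lceil\lg\sigma\rceil + o(n'\lg\sigma)$ bits and matching the claimed overall space. For construction, the $B_c$'s are read off from the least-significant bits of the leaves $M_c$ of the tree $T$ already built in the excerpt in $O(n'\lg^2\sigma/\lg n + \sigma)$ time, and wrapping each via Lemma \ref{bit_sequence} stays within the same budget.

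The main obstacle is producing the $P_c$'s inside the tight $O(n'\lg^2\sigma/\lg n + \sigma)$ budget, since a straightforward per-chunk counter-array pass uses $\Theta(\sigma)$ per chunk and $\Theta(n')$ in total, which exceeds the budget whenever $\lg\sigma = o(\sqrt{\lg n})$. My plan is to augment $M$ so that each regular element additionally encodes its within-chunk position in $\lceil\lg\sigma\rceil$ bits, and to rebuild the partitioning tree of the excerpt on this augmented sequence; the per-element width and tree depth remain $\Theta(\lg\sigma)$, so this second tree can be built in $O(n'\lg^2\sigma/\lg n + \sigma)$ time by the same argument used for $T$. At each leaf of the augmented tree, the regular elements between two consecutive boundary markers are the in-order occurrences of one symbol inside one chunk; reading the $r$-th such occurrence yields $\prank(A_i,\tau) = r$ for its recorded within-chunk position $\tau$ in chunk $i$. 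To transform these values into the packed $P_\tau$ arrays without one scattered write per occurrence, I would first deposit them chunk-by-chunk into a packed row-major matrix $R$ with $R[i][\tau] = \prank(A_i,\tau)$, and then transpose the $(n'/\sigma)\times\sigma$ matrix $R$ into the column-major $P_\tau$ layout using an $o(n)$-bit precomputed bit-matrix transposition table; since $R$ carries $n'\lg\sigma$ bits, the transposition costs $O(n'\lg\sigma/\lg n)$ time, well inside the target budget.
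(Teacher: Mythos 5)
Your setup---the $B_c$ bitvectors, the $P_\tau$ arrays, the query formula, the space accounting, and falling back to Lemma~\ref{lemma_rank_prime_small} when $\sigma \le \lg n$---tracks the paper. The gap is in building the $P_\tau$'s. Your single augmented tree $T'$ over $M$ leaves the data at its leaves organized \emph{symbol-major} (leaf $M_c$, then chunk $i$, then within-chunk position $\tau$), whereas the row-major matrix $R$ you want to fill is \emph{chunk-major}. Going from the former to the latter amounts to inverting the sort that $T'$ just performed, and ``deposit them chunk-by-chunk'' does not say how to do that within $O(n'\lg^2\sigma/\lg n + \sigma)$. A per-chunk gather touches $\Theta(\sigma)$ leaf segments per chunk, hence $\Theta(n')$ in total; a per-entry scattered write is also $\Theta(n')$, which you correctly rule out; and even skipping empty $(\text{leaf},\text{chunk})$ segments still leaves $\Theta(n')$ non-empty ones in the worst case. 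In exactly the regime you flag as hard, $\lg n < \sigma < 2^{\sqrt{\lg n}}$, the budget is $o(n')$, so this step is a genuine unresolved hole.

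The paper avoids the issue by never leaving chunk-major order. For each chunk $k$ it builds a \emph{truncated} binary wavelet tree (recursion halts once a node holds at most $b$ elements, keeping the per-chunk node count to $O(\sigma\lg^2\sigma/\lg n)$ instead of $O(\sigma)$), and from its leaves it reads the lists $I_k$ (within-chunk positions in per-chunk-leaf order) and $Q_k$ (the matching partial ranks). Concatenating over $k$ yields chunk-major sequences $I'$ and $Q$, and one global binary wavelet tree keyed on the $\tau$-values of $I'$---carrying $Q$ along---produces $P_0,\ldots,P_{\sigma-1}$ directly at its leaves: stability keeps the within-leaf order equal to chunk order precisely because the input was chunk-major and each chunk contributes exactly one entry per $\tau$. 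That one sweep does both your ``un-sort'' and your ``transpose'' and fits the budget. Two smaller points: computing the running counts $r$ at the leaves of your tree with a boundary-aware lookup table is fine; but the bit-matrix transposition you invoke is not $O(n'\lg\sigma/\lg n)$---a word-RAM transpose picks up an extra $O(\lg\min(n'/\sigma,\sigma))=O(\lg\sigma)$ factor, giving $O(n'\lg^2\sigma/\lg n)$, which happens to still fit but should not be claimed as linear in the matrix's word size.
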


\section{Fast Construction of Data Structures for Ball Inheritance}
\label{sect:ball_inheritance}

We now solve, with fast preprocessing, ball inheritance for the special cases needed later to match the time and space bounds in parts (b) and (c) of Lemma~\ref{lemma:ball_intro}.
The omitted proofs are in Appendix \ref{app:ball_inheritance}.
One strategy is to construct the solution of Chan et al.~\cite{chan2011orthogonal} by replacing some of their components with those we designed with faster preprocessing. 
This yields: 


\begin{lemma}
	\label{ball_inheritance_large_d}
	Let $X[0, n-1]$ be a sequence drawn from alphabet $[\sigma]$ denoting the point set $N= \{(X[i], i)| 0\le i \le n-1\}$, where $2^{\sqrt{\lg n}} \le \sigma \le n$.
	A $2^{\sqrt{\lg n}}$-ary wavelet tree over $X$ 
	occupying $O(n \lg \sigma \cdot f(\sigma)+n\lg n)$ bits can be constructed in $O(n\lg\sigma/\sqrt{\lg n})$ time to support $\point$ in $O(g(\sigma))$ time and $\noderange$ in $O(\lg \lg n+ g(\sigma))$ time, where (a) $f(\sigma)=O(\lg (\lg \sigma/\sqrt{\lg n})) $ and $g(\sigma)=O(\lg (\lg \sigma/\sqrt{\lg n})) $; or (b) $f(\sigma)=O(\lg^{\epsilon} \sigma) $ and $g(\sigma)=O(1)$ for any constant $\epsilon >0 $.
	The $\noderange$ query requires a universal table of $o(n)$ bits. 
\end{lemma}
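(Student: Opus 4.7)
The strategy is to rebuild Chan et al.'s ball-inheritance construction on top of a $d$-ary wavelet tree with $d := 2^{\sqrt{\lg n}}$, and to replace every underlying primitive whose preprocessing would otherwise be too slow by a packed-construction counterpart from the previous sections -- most importantly Lemma~\ref{rank_prime_1}. Let $L := \log_d \sigma = \lg\sigma/\sqrt{\lg n}$ denote the height of the tree. Substituting $\lg\sigma$ by $L$ in the anchor analysis behind Lemma~\ref{lemma:ball_intro} is what produces the two advertised tradeoffs (a) and (b).

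First I would build the skeleton using Lemma~\ref{lemma:wavelet_construct_d_unpacked} in $O(n\lg\sigma/\sqrt{\lg n})$ time, retaining at each internal node $u$ only the navigation sequence $S(u)\in[d]^{|A(u)|}$ in packed $\sqrt{\lg n}$-bit form and discarding the value/index arrays that lemma produces. I would then install a $\prank$ structure on each $S(u)$ via Lemma~\ref{rank_prime_1}: its local alphabet has size $d = 2^{\sqrt{\lg n}}$, so one invocation costs $O(|A(u)|\lg^2 d/\lg n + d) = O(|A(u)| + 2^{\sqrt{\lg n}})$ time. Summing over all $O(\sigma)$ nodes and using $\sum_u |A(u)| = nL$ yields $O(nL + \sigma) = O(n\lg\sigma/\sqrt{\lg n})$ in total. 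A boundary-bitvector trick analogous to the $B_c$'s built in Section~\ref{section_partial_rank} attaches the symmetric $\selop$ to each $S(u)$ within the same budget, so descending or ascending one $d$-ary level takes $O(1)$ time.

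On top of the skeleton I would overlay Chan et al.'s anchoring scheme, treating each $d$-ary level as a single level of a virtual tree of height $L$: at a selected set of anchor levels, store the index array $I(u)$ in packed form at every node $u$; between anchors, traverse using the $O(1)$-per-level primitives just built. The two anchor spacings from~\cite{chan2011orthogonal} -- hierarchical for case (a), and uniform at rate $1/\epsilon$ for case (b) -- transfer verbatim, giving $f = g = O(\lg L)$ and $f = O(L^\epsilon), g = O(1)$ respectively, which match the target after substituting $L = \lg\sigma/\sqrt{\lg n}$. Each anchor layer's $I(u)$'s can be generated top-down in $O(n)$ time by pushing indices through the $\prank$ data of Step~1, so the anchor-construction total is $O(nL)$, absorbed into $O(n\lg\sigma/\sqrt{\lg n})$. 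For $\noderange$ I would keep an $O(\lg\lg n)$-time predecessor structure on the root's index array to convert the input range $[c,d]$ into a root-level interval, then push the interval down to $v$ via a $\point$-style descent in $O(g(\sigma))$ time, giving $O(\lg\lg n + g(\sigma))$ overall. The main obstacle is the bookkeeping: verifying that $S(u)$, $\prank$, and the packed anchor arrays can all be produced simultaneously inside the $O(n\lg\sigma/\sqrt{\lg n})$ budget, which rests on the observations that each of the $L$ levels carries $n$ symbols of only $\sqrt{\lg n}$ bits and is thus processable in $O(n/\sqrt{\lg n})$ time using the universal tables from the preliminaries, and that Lemma~\ref{rank_prime_1} has exactly the per-node overhead needed for that telescoping sum to close.
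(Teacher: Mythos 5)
The core gap is in how multi-level jumps are realized. In Chan et al.'s ball inheritance, at each node $u$ on a level $l$ with color $c$ one stores a sequence of \emph{skipping pointers} $\Sp(u)$ that directly identify the descendant of $u$ at the deeper level $l'=\tau^{c+1}\lceil l/\tau^{c+1}\rceil$, together with a $\prank$ structure on $\Sp(u)$; one call to $\prank$ realizes a jump of $l'-l$ levels in $O(1)$ time, and iterating reaches the leaves (where coordinates are stored) in $O(\log_\tau h)$ jumps with $h=\log_d\sigma=L$, which is exactly where $g(\sigma)=O(\lg L)$ in case (a) and $g(\sigma)=O(1)$ in case (b) come from. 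Your scheme instead installs $\prank$ only on $S(u)\in[d]^{|A(u)|}$, which permits only one-level descent, and stores index arrays at a sparse set of anchor levels. That cannot reproduce the tradeoffs: to obtain $g=O(\lg L)$ in case (a) by one-level traversal you would need anchors roughly $\lg L$ apart, and the $\Theta(L/\lg L)$ anchor levels each carrying $\Theta(n\lg n)$ bits cost $\Theta(nL\lg n/\lg L)$, which blows past the target $O(n\lg\sigma\lg L + n\lg n)$ as soon as $\sigma$ is superpolynomial in $2^{\sqrt{\lg n}}$; case (b) with $g=O(1)$ fares worse. So the ``anchoring scheme'' you describe is not Chan et al.'s construction, and does not transfer verbatim.

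A second issue: the alphabet of $\Sp(u)$ is $d^{l'-l}$, which can be as large as $\sigma$ itself and in particular far exceeds $2^{\sqrt{\lg n}}$, so Lemma~\ref{rank_prime_1} does not apply to those sequences. The paper handles this regime with the linear-time construction of Lemma~\ref{rank_select_0}: its per-node alphabet term telescopes to $O(\sigma)$ over one level (the nodes at a given level have at most $\sigma$ descendants at the target level in total), and summing over the $O(\log_d\sigma)$ levels yields the $O(n\lg\sigma/\sqrt{\lg n})$ budget. The present lemma is then exactly the instantiation of the paper's Lemma~\ref{ball_inheritance_point_para} with $d=2^{\sqrt{\lg n}}$ and $\tau=2$ (case a) or $\tau=\lg^{\epsilon}\sigma$ (case b), which is a much cleaner path than re-deriving the anchoring from scratch. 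Finally, for $\noderange$ the paper equips every node $v$ (not only the root) with a predecessor structure over $I(v)$ in the indexing model using Lemma~\ref{pre_succ_chan}, decoded through $\point$; pushing an interval from the root down to $v$ one level at a time would again reintroduce a $\Theta(\log_d\sigma)$ factor that exceeds the allowed $O(\lg\lg n + g(\sigma))$.
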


\begin{lemma}
	\label{ball_inheritance_point_small_02}
	Let $X[0..n^{\prime}-1]$ be a packed sequence drawn from alphabet $[\sigma]$ and $Y[0..n^{\prime}-1]$ be a packed sequence in which $Y[i]=i$ for each $i\in[0..n^{\prime}-1]$, where $\sigma = O(2^{O(\sqrt{\lg n})})$ and $n' = O(\sigma^{O(1)})$. 
	Given $X$ and $Y$ as input, a $d$-ary wavelet tree over $X$ using $O(n^{\prime}\lg \sigma\lg (\lg \sigma/\lg d)+\sigma w)$ bits of space can be constructed in $O(n^{\prime}{\lg^2 \sigma}/{\lg n}+\sigma\log_d \sigma)$ time to support $\point$ in $O(\lg (\lg \sigma/\lg d))$ time and $\noderange$ in $O(\lg \lg \sigma)$ time, where $d$ is a power of $2$ upper bounded by $min(\sigma, 2^{\sqrt{\lg n}})$. 
\end{lemma}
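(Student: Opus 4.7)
The plan is to instantiate the ball-inheritance scheme of part (b) of Lemma~\ref{lemma:ball_intro} in the $d$-ary setting, substituting the fast-preprocessing components of Sections~\ref{sect:preliminaries} and~\ref{section_partial_rank} for the standard ones.

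First I would build the $d$-ary wavelet tree over $X$ together with value and index arrays at every level by invoking Lemma~\ref{lemma:wavelet_construct_d_packed} on $(X, Y)$. Since $n' = \sigma^{O(1)}$ gives $\lg n' = \Theta(\lg \sigma)$, this step costs $O(n'\lg^2\sigma/\lg n + \sigma)$ time and emits all value/index arrays in packed form. Each internal node $u$ stores $S(u) \in [d]^{|A(u)|}$, and since $d \le 2^{\sqrt{\lg n}}$, Lemma~\ref{rank_prime_1} applies, letting me equip every $S(u)$ with a constant-time $\prank$ structure using $|S(u)|\lceil \lg d \rceil + o(|S(u)|\lg d)$ bits and $O(|S(u)|\lg^2 d/\lg n + d)$ time.

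Following Chan et al., I would then mark a geometric subsequence of the $h = \log_d \sigma$ levels as \emph{sampled}, producing $\Theta(\lg(\lg\sigma/\lg d))$ sampled levels, and retain the index arrays $I(v)$ only at these sampled levels to meet the $O(n'\lg\sigma\lg(\lg\sigma/\lg d))$-bit storage, with the $\sigma w$ term covering the tree skeleton and child pointers. At each sampled level I would build a y-fast trie over the (sorted) $I(v)$'s so that predecessor and successor queries on $I(v)$ cost $O(\lg\lg n') = O(\lg\lg\sigma)$ time, the equality using $n' = \sigma^{O(1)}$. To keep $\point$ fast, before the non-sampled index arrays are discarded I would also precompute short jump pointers linking each level to its nearest sampled level, so that one can jump from any non-sampled position to the corresponding position at its nearest sampled ancestor in $O(1)$ time. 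To answer $\point(v,i)$, I would hop from $v$ up to its nearest sampled ancestor using these jumps, then traverse between consecutive sampled ancestors by composing $O(1)$ partial-rank operations per jump on the appropriate $S(u)$; with the geometric spacing there are $O(\lg(\lg\sigma/\lg d))$ jumps in total. To answer $\noderange(c,e,v)$, I would locate the deepest sampled ancestor $u$ of $v$, use its y-fast trie to find the predecessors of $c{-}1$ and $e$ in $I(u)$ in $O(\lg\lg\sigma)$ time, and then map the obtained endpoints down to $v$ using $O(1)$ $\prank$/$\selop$ operations bridged by the stored jump tables.

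The main obstacle is the bookkeeping. I have to verify that the geometric sampling schedule simultaneously supports both the $O(\lg(\lg\sigma/\lg d))$ bound for $\point$ and the $O(\lg\lg\sigma)$ bound for $\noderange$ while fitting in the stated space; that the $O(|S(u)|\lg^2 d/\lg n + d)$ costs summed over the tree stay within $O(n'\lg^2\sigma/\lg n + \sigma\log_d\sigma)$, using that the total length of the $S$-arrays per level is at most $n'$ and the tree has $O(\sigma)$ nodes spread over $\log_d \sigma$ levels; that the y-fast tries (linear-time to build over already sorted inputs) together with their $O(\sigma)$-bit universal-table initialization incurred once per sampled level contribute only to the $\sigma\log_d\sigma$ additive term; and finally that the jump pointers are computed before the non-sampled index arrays are discarded, so that the transient storage peak does not exceed the final $O(n'\lg\sigma\lg(\lg\sigma/\lg d) + \sigma w)$-bit budget.
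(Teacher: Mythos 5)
Your skeleton for $\point$ is roughly aligned with the paper's, but there is a genuine gap in the $\noderange$ plan that the paper avoids with a different idea.

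For $\point$: the paper instantiates Lemma~\ref{ball_inheritance_point_para} with $\tau=2$, which means the $\prank$ structures are built on the skipping-pointer arrays $\Sp(u)$ (which jump several levels down at once), not on the one-level arrays $S(u)$. Building $\prank$ only over $S(u)$, as you describe in your first paragraph, would force a level-by-level descent costing $\Theta(\log_d\sigma)$, not the claimed $O(\lg(\lg\sigma/\lg d))$. Also, a $\point$ query in this scheme travels strictly downward toward the leaf level (where coordinates are kept explicitly); your description of hopping "up to the nearest sampled ancestor" and then composing partial-rank steps "between consecutive sampled ancestors" has the direction backwards and, as written, would require $\selop$ (to go up), not $\prank$.

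The real gap is $\noderange$. You propose to store the index arrays $I(v)$ only at a geometric set of sampled levels, build a y-fast trie at each, and then for an arbitrary $v$ go to its nearest sampled ancestor $u$, solve the predecessor query in $I(u)$, and "map the obtained endpoints down to $v$ using $O(1)$ $\prank$/$\selop$ operations bridged by the stored jump tables." This does not work: $\noderange$ asks for the range in $I(v)$, and the endpoints found in $I(u)$ must be pushed through every intermediate level by $\rankop$ on the corresponding $S$-arrays, costing $\Theta(\text{depth}(v)-\text{depth}(u))$, which with geometric sampling can be $\Omega(\log_d\sigma)$. You never define the "jump tables" that would shortcut this descent, and any such table would effectively be a multi-level rank structure whose size and construction cost would need a fresh analysis. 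The paper instead builds an indexing-model predecessor/successor structure over $I(v)$ at \emph{every} node $v$ (Lemma~\ref{pre_succ_index_duplicates}), using only $O(|I(v)|\lg\lg\sigma)$ extra bits and $O(|I(v)|/\sqrt{\lg n}+1)$ construction time per node; the $O(1)$ element accesses that the predecessor query needs are supplied by $\point$. This keeps the total extra space $O(n'\lg\lg\sigma\log_d\sigma)$, which is dominated by the skipping-pointer budget $O(n'\lg\sigma\lg(\lg\sigma/\lg d))$, and gives $\noderange$ in $O(\lg\lg\sigma + \lg(\lg\sigma/\lg d)) = O(\lg\lg\sigma)$ time. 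You would need to adopt this per-node predecessor idea (or supply and analyze a genuinely new descent shortcut) for the lemma to hold.
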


This strategy however cannot achieve, with the preprocessing time as in Lemma~\ref{ball_inheritance_point_small_02}, part (c) of Lemma~\ref{lemma:ball_intro} when the coordinates of points can be encoded in $O(\sqrt{\lg n})$ bits.
For this special case, we twist the approach of Chan et al.: they only store point coordinates explicitly at the leaf level of the wavelet tree, while we take advantage of the smaller grid size to store coordinates at more levels.
This allows us to build $\prank$ structures at fewer levels of the tree, decreasing the preprocessing time. 
The details are as follows.

Recall that, when used to represent the given point set $N$, each node, $u$, of the $d$-ary wavelet tree $T$ is conceptually associated with an ordered list, $N(u)$, of points whose $x$-coordinates are within the range represented by $u$, and these points are ordered by $y$-coordinate. 
Assume for simplicity that $\sigma$ is a power of $d$, and that both $1/\epsilon$ and $\tau = \log_d^{\epsilon} \sigma$ are integers.
We assign a color to each level of $T$:
Level $0$ is assigned color $0$, while any other Level $l$ is assigned color $\max\{c \ |\ l \text{ is divided by } \tau^c \text{ and } 0 \le c \le 1/\epsilon - 1\}$.
For each node $u$ of $T$ at a level assigned with color $1/\epsilon - 1$, we store the coordinates of the points in $N(u)$ explicitly.
For any other node $v$ (let $l$ be the level $l$ of $v$ and $c$ the color assigned to level $l$), we do not store $N(v)$.
Instead, for each $i \in [0, |N(v)|]$, we store a {\em skipping pointer} $\Sp(v)[i]$, which stores, at the closest level $l'$ satisfying $l' > l$ and $l'$ is a multiple of $\tau^{c+1}$, the descendant of $v$ at level $l'$ containing point $N(v)[i]$ in its ordered list of points. 
This descendant is encoded by its rank among all the descendants of $v$ at level $l'$ in left-to-right order. 
We use Lemma~\ref{rank_prime_1} to support $O(1)$-time $\prank$ over $\Sp(v)$. 
Then, since both $N(u)$ and $N(\Sp(u)[i])$ order points by $y$-coordinate, a $\prank(\Sp(u), i)$ query gives the position of the point $N(u)[i]$ in $N(\Sp(u)[i])$.
Thus, to compute $\point(v, i)$, we follow skip pointers starting from $v$ by performing $\prank$, until we reach a level with color $1/\epsilon - 1$, where we retrieve coordinates. 
With this we have:

\begin{lemma}
	\label{ball_inheritance_point}
	Let $X[0..n^{\prime}-1]$ be a packed sequence drawn from alphabet $[\sigma]$ and $Y[0..n^{\prime}-1]$ be a packed sequence in which $Y[i]=i$ for each $i\in[0..n^{\prime}-1]$, where $\sigma= O(2^{O(\sqrt{\lg n})})$ and $n'= O(\sigma^{O(1)})$. 
	Given $X$ and $Y$ as input, a $d$-ary wavelet tree over $X$ using $O(n^{\prime}\lg \sigma\log_d^{\epsilon} \sigma+\sigma w)$ bits for any positive constant $\epsilon$ can be constructed in $O(n^{\prime}{\lg^2 \sigma}/{\lg n}+\sigma\log_d \sigma)$ time to support $\point$ in $O(1)$ time and $\noderange$ in $O(\lg \lg \sigma)$ time, where $d$ is a power of $2$ upper bounded by $min(\sigma, 2^{\sqrt{\lg n}})$.
        The $\noderange$ query requires a universal table of $o(n)$ bits. 
\end{lemma}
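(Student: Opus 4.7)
The plan is to realize the colored-level scheme sketched in the paragraph preceding the statement. Compared to Lemma~\ref{ball_inheritance_point_small_02}, the new ingredient is to exploit the fact that each coordinate fits in $O(\lg\sigma)=O(\sqrt{\lg n})$ bits, which allows us to store coordinates explicitly at all of the $\log_d^\epsilon\sigma$ color-$(1/\epsilon-1)$ levels and to dispense with $\prank$ data at those levels, shaving a $\log_d \sigma$ factor off the work that the straightforward approach would spend on $\prank$ construction.

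First I would apply Lemma~\ref{lemma:wavelet_construct_d_packed} to $X$ and $Y$ to build the $d$-ary wavelet tree $T$ with packed value and index arrays $A(v)$, $I(v)$ at every node; since $n' = O(\sigma^{O(1)})$ gives $\lg n' = \Theta(\lg\sigma)$, this step costs $O(n'\lg^2\sigma/\lg n + \sigma)$. I then compute and assign each level its color. At every node $u$ whose level has color $1/\epsilon-1$ I keep $A(u)$ and $I(u)$ as the explicit coordinate arrays. At every other node $v$ at level $l$ of color $c$, letting $l'$ be the smallest multiple of $\tau^{c+1}$ greater than $l$, I build $\Sp(v)$ by extracting from each packed entry $A(v)[i]$ the contiguous $(l'-l)\lg d$-bit field identifying the descendant of $v$ at level $l'$ containing the corresponding point; a universal bit-extraction table of $o(n)$ bits lets this proceed in $O(|N(v)|\lg\sigma/\lg n+1)$ time. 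Finally I augment $\Sp(v)$ with a constant-time $\prank$ structure using Lemma~\ref{rank_prime_1} (the precondition is satisfied since $d^{l'-l}\le d^{\tau^{c+1}}\le\sigma\le 2^{O(\sqrt{\lg n})}$), and discard $A(v)$ and $I(v)$ at every such node $v$.

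To answer $\point(v,i)$, I iterate: while the level of $v$ does not have color $1/\epsilon-1$, set $p=\prank(\Sp(v),i)$ and then replace $v$ by the descendant encoded by $\Sp(v)[i]$ and $i$ by $p-1$. Each iteration is $O(1)$, and since the color strictly increases at every hop, the loop halts after at most $1/\epsilon = O(1)$ hops, at which point $(A(v)[i], I(v)[i])$ gives the answer directly. For $\noderange$ I would mimic the downward-traversal argument of Lemma~\ref{ball_inheritance_point_small_02}, invoking constant-time $\prank$ on the $S$-sequences at the levels that carry such structures and folding the short stretches of levels in between through a universal table of $o(n)$ bits, giving the overall $O(\lg\lg\sigma)$ bound.

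The main obstacle is verifying the aggregate construction and space bounds. Each invocation of Lemma~\ref{rank_prime_1} on $\Sp(v)$ pays $O(|N(v)|\lg^2(d^{l'-l})/\lg n + d^{l'-l})$. Summing the first summand over all non-color-$(1/\epsilon-1)$ nodes requires the balancing identity that at color $c$ the exponent $l'-l$ is at most $\tau^{c+1}$ while the fraction of color-$c$ levels is only $1/\tau^c - 1/\tau^{c+1}$, so the weighted factor collapses to at most $\log_d^\epsilon\sigma$, keeping the total within $O(n'\lg^2\sigma/\lg n)$; summing the alphabet-initialization term requires checking that the $d^l$ nodes on the deepest level of each color dominate, yielding an aggregate of $O(\sigma\log_d\sigma)$. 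The space bound $O(n'\lg\sigma\log_d^\epsilon\sigma + \sigma w)$ bits follows from the same counting, noting that the coordinate storage at color-$(1/\epsilon-1)$ nodes also sums to $O(n'\lg\sigma\log_d^\epsilon\sigma)$ bits. The choice $\tau = \log_d^\epsilon\sigma$, engineered so that $d^{\tau^{1/\epsilon}} = \sigma$, is what makes these sums telescope as desired, and this delicate balance is the principal technical point.
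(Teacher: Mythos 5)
Your construction and analysis of the $\point$ operation follow the paper's argument essentially verbatim: the same coloring of levels with $\tau=\log_d^\epsilon\sigma$, the same skipping pointers $\Sp(v)$ targeting the next level that is a multiple of $\tau^{c+1}$, the same use of Lemma~\ref{rank_prime_1} for $\prank$, and the same observation that explicitly storing coordinates at all color-$(1/\epsilon-1)$ levels (rather than only at the leaf level, as in Lemma~\ref{ball_inheritance_point_para}) removes the $\tau$ factor from the $\prank$-construction term. Your bookkeeping of the sums over colors reaches the right totals ($O(n'\lg^2\sigma/\lg n)$ for the per-element cost and $O(\sigma\log_d\sigma)$ for the alphabet terms), though the intermediate phrase ``the weighted factor collapses to at most $\log_d^\epsilon\sigma$'' does not quite describe what happens: the geometric sum $\sum_c\tau^{c+2}$ is dominated by its last term and is $O(\tau^{1/\epsilon})=O(\log_d\sigma)$, not $O(\log_d^\epsilon\sigma)$, which together with the outer $\log_d\sigma$ factor is what gives $(\log_d\sigma)^2\lg^2 d=\lg^2\sigma$.

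The genuine gap is in $\noderange$. You describe ``mimicking the downward-traversal argument of Lemma~\ref{ball_inheritance_point_small_02}, invoking constant-time $\prank$ on the $S$-sequences at the levels that carry such structures and folding the short stretches of levels in between through a universal table.'' There is no such downward-traversal argument in Lemma~\ref{ball_inheritance_point_small_02}. In that lemma, $\noderange(c,d,v)$ is supported by building, at every node $v$, an indexing-model predecessor/successor structure over the sorted index array $I(v)$ using Lemma~\ref{pre_succ_index_duplicates}; a query maps $c$ and $d$ to $\succ(c)$ and $\pred(d)$ within $I(v)$ in $O(\lg\lg\sigma)$ time plus $O(1)$ accesses to entries of $I(v)$, each resolved by a call to $\point$. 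With your $O(1)$-time $\point$, this gives the claimed $O(\lg\lg\sigma)$ bound; the extra $O(|I(v)|\lg\lg\sigma)$ bits per node and $O(|I(v)|/\sqrt{\lg n}+1)$ construction time per node are absorbed into the stated bounds. By contrast, the scheme you sketch would not work as written: $\noderange$ needs general range rank on $S$-sequences (not partial rank $\prank$, which is only defined at a position equal to a fixed symbol), and a top-down pass over the $\log_d\sigma$ levels of the wavelet tree---even with $o(n)$-bit tables collapsing bounded stretches---does not reach $O(\lg\lg\sigma)$ without an additional mechanism, since the stretch between consecutive color-$(1/\epsilon-1)$ levels has length $\tau^{1/\epsilon-1}=\Theta(\log_d\sigma/\tau)$, which is far too long to fold in a constant number of table lookups. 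Finally, you never account for the space or construction time of whatever auxiliary structures $\noderange$ requires, even though they are non-trivial; the omission of the predecessor/successor structures over the $I(v)$ arrays is the substantive missing ingredient here.
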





\section{Optimal Orthogonal Range Reporting with Fast Preprocessing}
\label{sect: range_reporting}

We now design  structures that support orthogonal range reporting in optimal time and can be constructed fast.
We follow the solution overview given in Section~\ref{sec:introduction}: We first reduce the problem in the general case to the special case in which the points are from a $2^{\sqrt{\lg n}}\times n'$ (narrow) grid,
In this reduction, we need only support ball-inheritance over a wavelet tree with high fanout which is solved by part (b) of Lemma~\ref{ball_inheritance_large_d}.
We then further reduce it to the range reporting problem over a (small) grid of size at most $2^{\sqrt{\lg n}} \times 2^{2\sqrt{\lg n}}$, to which we apply Lemma~\ref{ball_inheritance_point} for ball inheritance.




The following two lemmas present our solution for small and then narrow girds:

\begin{lemma}
	\label{range_reporting_1}
	Let $N$ be a set of $\delta$ points with distinct $y$-coordinates in a $2^{\sqrt{\lg n}}\times \delta$ grid where $\delta \le 2^{2\sqrt{\lg n}}$.
	Given packed sequences $X$ and $Y$ respectively encoding the $x$- and $y$-coordinates of these points where $Y[i] = i$ for any $i \in [0, \delta-1]$,
	a data structure of $O(\delta\lg^{1/2+\epsilon} n+w \cdot 2^{\sqrt{\lg n}})$ bits  can be constructed in $O(\delta+\sqrt{\lg n}\cdot 2^{\sqrt{\lg n}})$ time to support orthogonal range reporting over $N$ in $O(\lg \lg n+\occ)$ time with the help of an $o(n)$-bit universal table, where $\epsilon$ is an arbitrary positive constant and $\occ$ is the number of reported points.
\end{lemma}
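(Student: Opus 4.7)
The plan is to apply Lemma~\ref{ball_inheritance_point} with $d=2$, $n'=\delta$, and $\sigma=2^{\sqrt{\lg n}}$. Since $\delta \le 2^{2\sqrt{\lg n}} = O(\sigma^{O(1)})$ and $\sigma = O(2^{O(\sqrt{\lg n})})$, its preconditions hold. This gives a binary wavelet tree over $X$ supporting $\point$ in $O(1)$ time and $\noderange$ in $O(\lg\lg\sigma) = O(\lg\lg n)$ time, built in $O(\delta(\sqrt{\lg n})^{2}/\lg n + 2^{\sqrt{\lg n}}\sqrt{\lg n}) = O(\delta + \sqrt{\lg n}\cdot 2^{\sqrt{\lg n}})$ time and occupying $O(\delta\sqrt{\lg n}\cdot(\sqrt{\lg n})^{2\epsilon} + w\cdot 2^{\sqrt{\lg n}}) = O(\delta\lg^{1/2+\epsilon} n + w\cdot 2^{\sqrt{\lg n}})$ bits, if we take the lemma's internal constant to be $2\epsilon$. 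I would additionally install Lemma~\ref{bit_sequence} on the child-selector bit sequence $S(v)$ at every internal node to obtain $O(1)$-time $\rankop$, which over all levels contributes only $o(\delta\sqrt{\lg n})$ extra bits and is absorbed into the space bound.

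To answer a query $[x_1,x_2]\times[y_1,y_2]$, I would traverse the canonical decomposition of $[x_1,x_2]$ top-down, maintaining at each visited node $v$ the interval $[c_v,d_v]$ of positions of $X(v)$ whose $y$-coordinates (equivalently, original indices in $X$, since $Y[i]=i$) lie in $[y_1,y_2]$. At the root this interval is simply $[y_1,y_2]$, and descending from $v$ to a child is an $O(1)$-time operation using two $\rankop$ queries on $S(v)$. A canonical node $v$ (whose $x$-range is contained in $[x_1,x_2]$) contributes its output via $\point(v,c_v),\ldots,\point(v,d_v)$; a node disjoint from $[x_1,x_2]$ is skipped; at a split node we recurse into both children. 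The output phase costs $O(\occ)$ thanks to $O(1)$-time $\point$, so the total query cost equals the navigation cost over the canonical decomposition.

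The delicate issue is that the tree's height is $h = \sqrt{\lg n}$, so the plain navigation would cost $\Theta(\sqrt{\lg n})$ rather than the required $O(\lg\lg n)$. I close this gap using the promised $o(n)$-bit universal table $U$ indexed by the pair $(x_1,x_2)$ of $\sqrt{\lg n}$-bit $x$-coordinates. Entry $U[x_1,x_2]$ stores, as an $O(\lg n)$-bit packed sequence, the complete list of $O(\sqrt{\lg n})$ canonical node addresses (depth together with rank among nodes at that depth) for $[x_1,x_2]$ in the complete binary tree of height $\sqrt{\lg n}$. Since there are only $2^{2\sqrt{\lg n}} = o(n)$ such pairs, $U$ fits in $o(n)$ bits and can be precomputed in $o(n)$ time. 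During a query, $U$ delivers the canonical node addresses in $O(1)$ time, and only the corresponding intervals $[c_v,d_v]$ still need to be produced. I would arrange the per-level bit sequences obtained by concatenating the $S(v)$'s of each level so that, after shared descent-path prefixes are processed once, each remaining canonical node's $[c_v,d_v]$ is computed in $O(1)$ amortised time from the nearest already-processed ancestor. The hard part is proving that this fractional-cascading accounting, coupled with the table lookup for the skeleton, drops the cumulative navigation cost over all canonical nodes to $O(\lg\lg n)$ instead of $\Theta(\sqrt{\lg n})$; this is where I expect the main technical effort.
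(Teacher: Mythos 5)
Your overall setup is sound — you invoke Lemma~\ref{ball_inheritance_point} with the same parameters the paper does, and your space/time accounting for the wavelet tree is correct. But the query algorithm is the wrong approach, and you flag the gap yourself: you decompose $[x_1,x_2]$ into its $\Theta(\sqrt{\lg n})$ canonical nodes, and you have no concrete mechanism to avoid paying $\Theta(1)$ per canonical node (and per split-path node) to obtain each $[c_v,d_v]$. The universal table gives you the \emph{addresses} of the canonical nodes in $O(1)$, but the intervals $[c_v,d_v]$ still depend on the data (on the $S(\cdot)$ sequences along the descent paths), so they cannot be precomputed into a data-independent table. Worse, even if you could compute all those intervals quickly, many canonical nodes may contain no output points, so the enumeration alone is $\Omega(\sqrt{\lg n})$ — strictly more than the $O(\lg\lg n)$ target when $\occ$ is small. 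The "fractional cascading" remark does not close this: the canonical nodes form a tree, not a chain, each needs its own rank computations relative to $[y_1,y_2]$, and there is no cascading structure that reduces the aggregate to $O(\lg\lg n)$.

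The paper avoids the canonical decomposition entirely. It locates only the \emph{single} LCA $u$ of $l_{x_1}$ and $l_{x_2}$, reduces the 4-sided query to two 3-sided queries in $u$'s two children (using $\noderange$ once per child, costing $O(\lg\lg n)$ total), and then answers each 3-sided query by recursive range-maximum/minimum queries on the value arrays $A(u_l)$ and $A(u_r)$, supported by the index-model RMQ structure of Lemma~\ref{minmax_index} built at every internal node. The recursion terminates immediately when the extremal element falls outside $[x_1,x_2]$, so it visits only $O(\occ)$ nodes and pays $O(1)$ per visit (using $\point$ to retrieve coordinates). Your proposal never builds the RMQ structures — Lemma~\ref{minmax_index} does not appear in it — and that is precisely the missing ingredient that lets the query cost be $O(\lg\lg n + \occ)$ rather than $O(\sqrt{\lg n} + \occ)$.
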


\begin{proof}
	We build a binary wavelet tree $T$ over $X$ augmented with support for ball inheritance.
	By Lemma~\ref{ball_inheritance_point}, $T$ occupies $O(\delta\lg^{1/2+\epsilon} n+w \cdot 2^{\sqrt{\lg n}})$ bits and can be built in $O(\delta+\sqrt{\lg n}\cdot 2^{\sqrt{\lg n}})$ time.
	It also supports $\point$ in $O(1)$ time and $\noderange$ in $O(\lg\lg n)$ time.
	For any internal node $v$ of $T$, its value array $A(v)$ is built at some point when augmenting $T$ to solve ball inheritance, though $A(v)$ may be discarded eventually.
	When $A(v)$ was available, we build a data structure $M(v)$ to support range minimum and maximum queries over $A(v)$ using Lemma~\ref{minmax_index}. 
	As $T$ has $\lceil \sqrt{\lg n} \rceil$ non-leaf levels and the total length of the value arrays of the nodes at each tree level is $\delta$,
	over all internal nodes, these structures use $O(\delta\sqrt{\lg n}\lg \lg n)$ bits in total and the overall construction time is $\sum_v O({|A(v)|}/{\sqrt{\lg n}}+1)=O(\delta+2^{\sqrt{\lg n}})$.
	These costs are subsumed in the storage and construction costs of $T$.
	Recall that $A(v)$ stores the $x$-coordinates of the set, $N(v)$, of points from $N$ whose $x$-coordinates are within the range represented by $v$, and the entries of $A(v)$ are ordered by the corresponding $y$-coordinates of these points.
	Thus any entry of $A(v)$ can be retrieved by $\point$ in constant time.
	Therefore, even after $A(v)$ is discarded, $M(v)$ can still support $\rmq$/$\rMq$ over $A(v)$ in $O(1)$ time.
	
	Given a query range $Q=[a, b]\times[c, d]$,  we first locate the lowest common ancestor $u$ of $l_{a}$ and $l_{b}$ in constant time, where $l_{a}$ and $l_{b}$ denote the $a$-th and $b$-th leftmost leaves of $T$, respectively. 
	Let $u_l$ and $u_r$ denote the left and right children of $u$, respectively, $[c_l, d_l]=\noderange(c, d, u_l)$ and $[c_r, d_r]=\noderange(c, d, u_r)$. 
	Then $Q \cap N= (([a, +\infty)\times[c_l, d_l]) \cap N(u_l)) \cup (([0, b]\times[c_r, d_r]) \cap N(u_r))$. 
	  In this way, we reduce a $2$-d $4$-sided range reporting in $N$ to $2$-d $3$-sided range reporting in $N(u_l)$ and $ N(u_r)$.
          Each $3$-sided range reporting can be solved in optimal time by performing $\rMq$ and $\rmq$ over $A(u_l)$ and $A(u_r)$ recursively as done by Chan et al.~\cite{chan2011orthogonal};
          for completeness, we provide the details in Appendix~\ref{app:rmqrecursive}.
          Thus the overall query time is $O(\lg \lg n+\occ)$. 
\end{proof}




\begin{lemma}
	\label{fat_rect}
	Let $N$ be a set of $n'$ points with distinct $y$-coordinates in a $2^{\sqrt{\lg n}}\times n^{\prime}$ grid where $n' \le n$. 
	Given packed sequences $X$ and $Y$ respectively encoding the $x$- and $y$-coordinates of these points where $y[i] = i$ for any $i \in [0, \delta-1]$,
	a data structure occupying $O(n^{\prime}\lg^{1/2+\epsilon} n+w (2^{\sqrt{\lg n}}+n'/2^{\sqrt{\lg n}}))$ bits  can be constructed in  $O(n^{\prime}+\sqrt{\lg n}\cdot 2^{\sqrt{\lg n}})$ time
	to support orthogonal range reporting over $N$ in $O(\lg \lg n+\occ)$ time with the help of an $o(n)$-bit universal table, where $\epsilon$ is an arbitrary positive constant and $\occ$ is the number of reported points.
\end{lemma}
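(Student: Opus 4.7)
The plan is a two-level decomposition that reduces the narrow-grid problem to repeated invocations of Lemma~\ref{range_reporting_1} on $y$-blocks, plus one slower-to-build but optimal-query range reporting structure on a sparse set of representatives. I partition the $n'$ points, already sorted by $y$ (since $Y[i]=i$), into $K=\lceil n'/B\rceil$ consecutive blocks $G_0,\ldots,G_{K-1}$ of $B=2^{2\sqrt{\lg n}}$ points each. Every block has at most $B$ points in a $2^{\sqrt{\lg n}}\times B$ small grid, so I build its per-block structure from Lemma~\ref{range_reporting_1}. Summed over the $K$ blocks this costs $O(n'+K\sqrt{\lg n}\,2^{\sqrt{\lg n}})=O(n'+\sqrt{\lg n}\cdot 2^{\sqrt{\lg n}})$ construction time (the $o(n)$-bit universal table is built once and shared) and $O(n'\lg^{1/2+\epsilon}n + w\cdot n'/2^{\sqrt{\lg n}})$ bits in total.

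I then install, at the cost of one stable counting-sort by $x$-coordinate within each block (in $O(B+2^{\sqrt{\lg n}})$ time per block, $O(n')$ overall), a bit-packed array $L_i$ listing the $y$-indices of the points in $G_i$ in order of $x$ (and then of $y$), together with an offset array $P_i[0..2^{\sqrt{\lg n}}]$ so that $L_i[P_i[x_0]..P_i[x_0+1]-1]$ enumerates exactly the points of $G_i$ with $x$-coordinate $x_0$. These auxiliary arrays use $O(n'\sqrt{\lg n})$ bits across all $L_i$'s and $O(w\cdot n'/2^{\sqrt{\lg n}})$ bits for the $P_i$'s, both within the target. Over the representative set $R=\{(x_0,i):P_i[x_0]<P_i[x_0+1]\}$, which lies in a $2^{\sqrt{\lg n}}\times K$ grid with $|R|\le n'/2^{\sqrt{\lg n}}$, I build any previously known optimal range reporting structure achieving $O(\lg\lg n+k)$ query time using $O(|R|\lg^{\epsilon}n)$ words in $O(|R|\lg|R|)$ preprocessing time (e.g., Alstrup et al.~\cite{abr2000}). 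Since $\lg n=o(2^{\sqrt{\lg n}})$, this preprocessing is $o(n')$; and its $O(|R|\lg^{1+\epsilon}n)=O(n'\lg^{1+\epsilon}n/2^{\sqrt{\lg n}})$ bits fit in $O(n'\lg^{1/2+\epsilon}n)$.

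To answer $[a,b]\times[c,d]$, I locate the blocks $i_l,i_r$ containing $c$ and $d$ and call Lemma~\ref{range_reporting_1} directly on $G_{i_l}$ with $y\in[c,\cdot]$ and on $G_{i_r}$ with $y\in[\cdot,d]$ (a single call when $i_l=i_r$), taking $O(\lg\lg n+\occ_{\mathrm{bdry}})$ time. If any middle blocks remain, I query the top-level structure on $[a,b]\times[i_l+1,i_r-1]$; for each returned representative $(x_0,i)$ I then scan $L_i[P_i[x_0]..P_i[x_0+1]-1]$ at $O(1)$ per point. Because every returned representative witnesses at least one reported point, the $O(\lg\lg n+|R_{\mathrm{out}}|)$ top-level query plus the $O(\occ_{\mathrm{mid}})$ scans combine to $O(\lg\lg n+\occ)$ overall.

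The main obstacle I foresee is making the enumeration step $O(1)$ per reported point with \emph{no} additive per-representative overhead: simply re-invoking the per-block Lemma~\ref{range_reporting_1} for each returned representative would incur an extra $O(\lg\lg n)$ per rep and could blow up the query to $O(\occ\lg\lg n)$. Storing the explicit $L_i,P_i$ lists sidesteps this at the acceptable price of $O(n'\sqrt{\lg n})$ bits. A second point requiring care is the choice of block size $B=2^{2\sqrt{\lg n}}$: this is precisely the largest value that still makes each block a ``small grid'' suitable for Lemma~\ref{range_reporting_1}, and simultaneously the smallest value that makes $|R|=O(n'/2^{\sqrt{\lg n}})$ sparse enough for the slow-preprocessing optimal top-level structure to fit within the global $O(n'+\sqrt{\lg n}\cdot 2^{\sqrt{\lg n}})$ construction budget.
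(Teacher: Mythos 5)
Your proposal matches the paper's proof essentially step for step: the same block size $b=2^{2\sqrt{\lg n}}$, the same per-block use of Lemma~\ref{range_reporting_1} on local coordinates, the same sparse representative set (one point per nonempty block/$x$-value pair, of size $\le n'/2^{\sqrt{\lg n}}$) with a slower-to-build optimal range-reporting structure on top, the same pre-stored per-block, per-$x$-value lists (your $L_i,P_i$ are the paper's $P_{i,j}$) to avoid an $O(\lg\lg n)$ per-representative penalty, and the same three-way query decomposition. The only detail to adjust is your choice of Alstrup et al.~\cite{abr2000} for the top-level structure, whose $O(n\lg n)$ construction is only \emph{expected} time; the paper instead uses the deterministic $O(n\lg n)$-time structure of Chan et al.~(Lemma~\ref{range_reporting_chan}), which your ``any previously known structure'' phrasing already accommodates.
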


\begin{proof}
	Let $b = 2^{2\sqrt{\lg n}}$.
	We need only consider the case in which $n' > b$ as Lemma~\ref{range_reporting_1} applies otherwise.
	Assume for simplicity that $n'$ is divisible by $b$.
	We divide $N$ into $n'/b$ subsets, and for each $i \in [0, n'/b-1]$, the $i$th subset, $N_i$, contains points in $N$ whose $y$-coordinates are in $[ib, (i+1)b-1]$.
	Let $p$ be a point in $N_i$. We call its coordinates $(p.x, p.y)$ {\em global coordinates}, while $(p.x', p.y') = (p.x, p.y \bmod b)$ its {\em local coordinates} in $N_i$; the conversion between global and local coordinates can be done in constant time.
With the local coordinates, We can apply Lemma~\ref{range_reporting_1} to construct an orthogonal range search structure over each $N_i$.
We also define a point set $\hat{N}$ in a $2^{\sqrt{\lg n}} \times n' / b$ grid. For each set $N_i$ where $i \in [0, n'/b-1]$ and each $j \in [0, 2^{\sqrt{\lg n}}-1]$, we store a point $(j,i)$ in $\hat{N}$ iff there exists at least one point in $N_i$ whose $x$-coordinate is $j$. Thus the number of points in $\hat{N}$ is at most $n'/b \times 2^{\sqrt{\lg n}} = n'/2^{\sqrt{\lg n}}$. $|\hat{N}|$ is small enough for us to build the optimal range reporting structure of Chan et al.~{\cite[Section 2]{chan2011orthogonal}},{\cite[Lemma 5]{DBLP:journals/corr/abs-1108-3683}}. 
In addition, for each $i \in [0, n'/b-1]$ and $j \in [0, 2^{\sqrt{\lg n}}-1]$, we store a list $P_{i,j}$ storing the local $y$-coordinates of the points in $N_i$ whose $x$-coordinates are equal to $j$.
	See Appendix \ref{app: space_construction_fat_rect} for the analysis of the space usage and the construction time.
	
	Given a query range $Q=[x_1, x_2]\times[y_1, y_2]$, we first check if $\lfloor y_1/b\rfloor$ is equal to $\lfloor y_2/b\rfloor$. If it is, then the points in the answer to the query reside in the same subset $N_{\lfloor y_1/b\rfloor}$,
and they can be reported in $O(\lg\lg n+\occ)$ time by Lemma~\ref{range_reporting_1}.
        Otherwise, we decompose $Q$ into three subranges $Q_1 = [x_1,x_2]\times[y_1,b(\lfloor y_1/b\rfloor+1)-1]$, $Q_2 = [x_1,x_2]\times[b(\lfloor y_1/b \rfloor+1), b\lfloor y_2/b \rfloor-1]$ and $Q_3 = [x_1,x_2]\times[b\lfloor y_2/b\rfloor, y_2]$.
	The points in $N \cap Q_1$ and $N \cap Q_3$ are in $N_{\lfloor y_1/b\rfloor}$ and $N_{\lfloor y_2/b\rfloor}$, respectively, and by Lemma~\ref{range_reporting_1}, they can be reported in optimal time. 
	The points in $N \cap Q_2$ are in $N_{\lfloor y_1/b\rfloor+1}, N_{\lfloor y_1/b\rfloor+2}, \ldots, N_{\lfloor y_2/b\rfloor-1}$. To retrieve them, we first perform an orthogonal range query in $\hat{N}$ with query range $\hat{Q} = [x_1,x_2] \times [\lfloor y_1/b\rfloor+1, \lfloor y_2/b\rfloor-1]$.
        For each point $(x,y) \in \hat{N}\cap\hat{Q}$, we observe that its existence means that is at least one point in $N_y \cap Q_2$ whose $x$-coordinates are equal to $x$; we retrieve the local $y$-coordinates of these points from $P_{y,x}$.
	The overall query time is thus $O(\lg\lg n+\occ)$.
\end{proof}




We now describe a solution in which the grid size is $\sigma \times n$ with $2^{\sqrt{\lg n}}\leq \sigma\leq n$. This is more general than our final result, but it will be needed for an application later.

\begin{lemma}
	\label{theorem_range_reporting_general}
	Given a sequence $X[0, n-1]$ drawn from alphabet $[\sigma]$ denoting the point set $N= \{(X[i], i)| 0\le i \le n-1\}$, a data structure of $O(n\lg^{1+\epsilon}  \sigma+n\lg n)$ bits for any constant $\epsilon>0$ can be constructed in $O(n{\lg \sigma}/{\sqrt{\lg n}})$ time to support orthogonal range reporting over $N$ in $O(\lg \lg n+\occ)$ time, where $2^{\sqrt{\lg n}}\leq \sigma\leq n$ and $\occ$ is the number of reported points.
\end{lemma}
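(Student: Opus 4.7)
The plan is to realize the ``general $\to$ narrow'' reduction sketched in Section~\ref{sec:introduction} with a $d$-ary wavelet tree $T$ of fanout $d=2^{\sqrt{\lg n}}$, augmented at every internal node by two per-node structures: a narrow-grid 4-sided reporting structure from Lemma~\ref{fat_rect} to answer the ``middle children'' part of the query at the LCA in one shot, and a Fischer--Heun RMQ/RMax structure (Lemma~\ref{fisher_rmq}) on the global $x$-values of the node's point set listed in $y$-order, so that each of the two boundary children of the LCA can be dispatched by a single 3-sided query without recursing level-by-level.

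First I would invoke Lemma~\ref{ball_inheritance_large_d}(b) to build $T$ with ball-inheritance support, giving $\point$ in $O(1)$ and $\noderange$ in $O(\lg\lg n)$ time, within $O(n\lg^{1+\epsilon}\sigma+n\lg n)$ bits and $O(n\lg\sigma/\sqrt{\lg n})$ construction time. Along the standard top-down construction, at each internal node $v$ I would then (a) feed the sequence of local $x$-values (the index in $[0,d-1]$ of the child of $v$ each point belongs to) to Lemma~\ref{fat_rect} with parameter $\epsilon'=\epsilon/2$, obtaining a narrow-grid reporter over $N(v)$; and (b) feed the $y$-ordered array of global $x$-values of $N(v)$ to Lemma~\ref{fisher_rmq} to get $O(1)$-time $\rmq/\rMq$, discarding the array afterwards since any entry is recovered by $\point(v,\cdot)$ in $O(1)$ time.

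To answer a query $[a,b]\times[c,d]$, I would locate the LCA $u$ of the leaves $l_a,l_b$ in $O(1)$ time, compute the local $y$-range $[c_u,d_u]$ at $u$ with one $\noderange$ call, and read off the child indices $i<j$ of $u$ containing $a$ and $b$. The middle children $c_{i+1},\ldots,c_{j-1}$ are then reported in $O(\lg\lg n+\occ_{\mathrm{mid}})$ time by a single 4-sided narrow-grid query on $u$ with local $x\in[i+1,j-1]$ and local $y\in[c_u,d_u]$, lifting each reported position back to a global point via $\point(u,\cdot)$. The left boundary $c_i$ is handled in $O(\lg\lg n+\occ_{\mathrm{left}})$ by first one $\noderange(c,d,c_i)$ to obtain the local $y$-range $[c_{c_i},d_{c_i}]$ and then the classical RMQ recursive split: repeatedly locate the position $p$ of the maximum global $x$ via $\rMq$ on the current subrange, look up its value by $\point(c_i,p)$, and if it is $\ge a$ report the point and recurse on $[c_{c_i},p-1]$ and $[p+1,d_{c_i}]$; otherwise stop. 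The right boundary $c_j$ is handled symmetrically with $\rmq$ and the threshold $\le b$. Since each reported point contributes $O(1)$ work and the only $\lg\lg n$ terms come from a constant number of $\noderange$ calls, the overall query time is $O(\lg\lg n+\occ)$.

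For the totals, $\sum_v |N(v)|=O(nh)=O(n\lg\sigma/\sqrt{\lg n})$ absorbs both the Fischer--Heun structures and the linear parts of Lemma~\ref{fat_rect} in time and space; with $\epsilon'=\epsilon/2$ and $\lg\sigma\ge\sqrt{\lg n}$ the $n h\,\lg^{1/2+\epsilon'}n$ contribution collapses to $O(n\lg^{1+\epsilon}\sigma)$; the lump $\sqrt{\lg n}\cdot 2^{\sqrt{\lg n}}$ from Lemma~\ref{fat_rect}, summed over the $O(\sigma/d)$ internal nodes, gives $O(\sigma\sqrt{\lg n})=O(n\lg\sigma/\sqrt{\lg n})$ because $\sigma/\lg\sigma\le n/\lg n$ for $\sigma\le n$; and $w\cdot 2^{\sqrt{\lg n}}$ per internal node sums to $O(\sigma w)=O(n\lg n)$ bits. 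The main obstacle I expect is exactly the one that motivates the RMQ layer: a naive recursion down the wavelet tree to cope with boundary children would introduce a factor of $h=\lg\sigma/\sqrt{\lg n}$ on top of the $\lg\lg n$ from $\noderange$; routing the boundary work through a single RMQ cascade at one node, rather than descending levels, is what keeps the query at $O(\lg\lg n+\occ)$.
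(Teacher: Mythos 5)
Your proposal matches the paper's proof essentially step for step: the same $2^{\sqrt{\lg n}}$-ary wavelet tree with ball inheritance from Lemma~\ref{ball_inheritance_large_d}(b), the same per-node Fischer--Heun $\rmq/\rMq$ structures on the $y$-ordered $x$-values (to handle the two boundary children of the LCA by the recursive maximum-splitting trick), and the same use of Lemma~\ref{fat_rect} on $\hat{S}(v)$ to answer the middle-children 4-sided subquery in a single shot, followed by $\point$ to recover global coordinates. The space and construction-time accounting you sketch is the same accounting carried out in Appendix~\ref{app: space_construction_theorem_range_reporting_general}.
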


\begin{proof}
	We build a $2^{\sqrt{\lg n}}$-ary wavelet tree $T$ upon $X[0, n-1]$ with support for ball inheritance using part (b) of Lemma~\ref{ball_inheritance_large_d}.
	As in the proof of Lemma~\ref{range_reporting_1}, for each internal node $v \in T$, we build a data structure $M(v)$ to support $\rMq/\rmq$ over its value array $A(v)$ in constant time using Lemma~\ref{fisher_rmq}, even though $A(v)$ is not be explicitly stored.
        Conceptually, let $N(v)$ denote an ordered list of points from $N$ whose $x$-coordinates are within the range represented by $v$, and these points are ordered by $y$-coordinate.
        Recall that $v$ is associated with sequence $S(v)$ drawn from alphabet $[2^{\sqrt{\lg n}}]$, where $S(v)[i]$ encodes the rank of the child of $v$ that contains $N(v)[i]$ in its ordered list.
	Let $\hat{S}(v)$ denote the point set $\{(S(v)[i], i)| 0\le i \le |S(v)|-1\}$, and we use  Lemma~\ref{fat_rect} to build a structure supporting orthogonal range reporting over $\hat{S}(v)$.
	See Appendix~\ref{app: space_construction_theorem_range_reporting_general} for the analysis of the space usage and the construction time.

	Given a query range $Q=[a, b]\times[c, d]$,  we first locate the lowest common ancestor $u$ of $l_{a}$ and $l_{b}$ in constant time, where $l_{a}$ and $l_{b}$ denote the $a$-th and $b$-th leftmost leaves of $T$, respectively.
	Let $u_i$ denote the $i$-th child of $u$, for any $i \in [0, 2^{\sqrt{\lg n}}-1]$.
	We first locate two children, $u_{a'}$ and $u_{b'}$, of $u$ that are ancestors of $l_a$ and $l_b$, respectively.
	They can be found in constant time by simple arithmetic as each child of $u$ represents a range of equal size.
	Then the answer, $Q\cap N$, to the query can be partitioned into three point sets $A_1=Q\cap N(v_{a'})$, $A_2=Q\cap (N(v_{a'+1})\cup N(v_{a'+2})\cup\ldots N(v_{b'-1}))$ and $A_3=Q\cap N(v_{b'})$.
        As in the proof of Lemma \ref{range_reporting_1}, $A_1$ and $A_3$ can be computed in in $O(\lg\lg n+|A_1|+|A_3|)$ time using $\noderange$ and $\rmq$/$\rMq$. 
	To compute $A_2$, observe that
        by performing range reporting over $\hat{S}$ to compute $S \cap ([a^{\prime}+1, b^{\prime}-1]\times[c_v, d_v])$, where $[c_v, d_v] = \noderange(c,d, v)$,
	we can find the set of points in $\hat{S}(v)$ corresponding to the points in $A_2$.
	For each point returned, we use $\point$ to find its original coordinates in $N$ and return it as part of $A_2$.
	This process uses $O(\lg\lg n+|A_2|)$ time.
	Hence we can compute $Q\cap N$ as $A_1\cup A_2\cup A_3$ in $O(\lg \lg n+\occ)$ time. 
\end{proof}

Our result on points over an $n\times n$ gird immediately follows.

\begin{theorem}
	\label{theorem_range_reporting_optimal}
	Given $n$ points in rank space, a data structure of $O(n\lg^{\epsilon}  n)$ words for any constant $\epsilon>0$ can be constructed in $O(n\sqrt{\lg n})$ time to support orthogonal range reporting in $O(\lg \lg n+\occ)$ time, where $\occ$ is the number of reported points.
\end{theorem}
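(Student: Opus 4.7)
The plan is to derive this theorem as a direct specialization of Lemma~\ref{theorem_range_reporting_general} with $\sigma = n$. Since the given $n$ points lie in rank space, both their $x$- and $y$-coordinates are distinct integers in $[0, n-1]$. I would first convert this geometric input into the sequence representation used by the lemma: sort the points by $y$-coordinate and let $X[0..n-1]$ be the sequence in which $X[i]$ stores the $x$-coordinate of the point whose $y$-coordinate is $i$. Then the point set $N$ coincides with $\{(X[i], i) : 0 \le i \le n-1\}$, so orthogonal range reporting in the original grid reduces to orthogonal range reporting on $X$.

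Next I would argue that the initial sorting step fits within the budget. Since $y$-coordinates are a permutation of $[0, n-1]$, the sequence $X$ can be produced in $O(n)$ time by a single scattering pass that places each point at the position given by its $y$-coordinate; no comparison-based sort is required. Once $X$ is available (in packed form, if desired, by a linear-time scan), I would invoke Lemma~\ref{theorem_range_reporting_general} with $\sigma = n$. This is legitimate since $n \ge 2^{\sqrt{\lg n}}$ holds for all sufficiently large $n$, and the tiny remaining cases can be handled by any trivial linear-space, $O(\lg\lg n + \occ)$-time structure.

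Plugging $\sigma = n$ into the bounds of Lemma~\ref{theorem_range_reporting_general} gives: the construction time is $O(n\lg n/\sqrt{\lg n}) = O(n\sqrt{\lg n})$; the space is $O(n\lg^{1+\epsilon} n + n\lg n) = O(n\lg^{1+\epsilon} n)$ bits, which is $O(n\lg^{\epsilon} n)$ words since the additive $n\lg n$ term is absorbed for any positive constant $\epsilon$; and the query time remains $O(\lg\lg n + \occ)$. These match the bounds claimed in the theorem statement exactly.

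The main obstacle is conceptual rather than technical: all the nontrivial work has already been carried out inside Lemma~\ref{theorem_range_reporting_general}, which in turn rests on the narrow-grid reduction of Lemma~\ref{fat_rect}, the small-grid solution of Lemma~\ref{range_reporting_1}, and the fast ball inheritance structures of Lemmas~\ref{ball_inheritance_large_d} and~\ref{ball_inheritance_point}. Consequently, the proof would be only a short remark verifying the reduction to the sequence representation and the arithmetic on the cost parameters; no new algorithmic ideas are needed beyond noting that an $O(n)$-time preparation of $X$ suffices to feed the general lemma.
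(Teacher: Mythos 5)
Your proposal is correct and matches the paper's own (one-line) derivation: the paper states that Theorem~\ref{theorem_range_reporting_optimal} "immediately follows" from Lemma~\ref{theorem_range_reporting_general} by taking $\sigma = n$, which is precisely your specialization. The added remarks on preparing $X$ in $O(n)$ time by scattering and on absorbing the $n\lg n$ additive space term are accurate routine checks that the paper leaves implicit.
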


\section{Optimal Orthogonal Range Successor with Fast Preprocessing}
\label{sect: fast_orthogonal_range_succ}

In this section, we assume that a range successor query asks for the lowest point in the query rectangle. The following theorem presents our result on fast construction of structures for optimal range successor; we provide a proof sketch, and the full proof is in Appendix~\ref{app: fast_orthogonal_range_succ}:
\begin{theorem}
  \label{theorem_range_successor_new}
  Given $n$ points in rank space, a data structure of $O(n\lg\lg n)$ words can be constructed in $O(n\sqrt{\lg n})$ time to support orthogonal range successor in $O(\lg \lg n)$ time. 
\end{theorem}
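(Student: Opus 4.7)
My plan is to adapt the multi-level reduction used in Section~\ref{sect: range_reporting} to the successor setting. At the top I would build a $2^{\sqrt{\lg n}}$-ary wavelet tree $T$ on the sequence of $x$-coordinates indexed by $y$, augmented with ball inheritance via Lemma~\ref{ball_inheritance_large_d}(a); this supports $\point$ and $\noderange$ in $O(\lg\lg n)$ time, occupies $O(n\lg n\lg\lg n) = O(n\lg\lg n)$ words, and constructs in $O(n\sqrt{\lg n})$ time. A query $[a,b]\times[c,d]$ is then reduced as follows: locate the lca $u$ of the $a$-th and $b$-th leftmost leaves, compute $[c_u,d_u]=\noderange(c,d,u)$, and convert to a single narrow-grid successor query on the sequence $S(u)\in[2^{\sqrt{\lg n}}]^{|S(u)|}$, namely to find the smallest index in $S(u)[c_u..d_u]$ whose value lies in the sub-range $[a',b']$ of children of $u$ determined by $a$ and $b$; one $\point$ call then returns the coordinates in the original grid.

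For the narrow-grid sub-problem on $n'$ entries over $[2^{\sqrt{\lg n}}]$, I would mirror the grouping used in Lemma~\ref{fat_rect}: partition the entries into chunks of $b = 2^{2\sqrt{\lg n}}$ consecutive $y$-positions, build a small-grid successor structure on local coordinates inside each chunk, and build a signature set $\hat N$ of size at most $n'/2^{\sqrt{\lg n}}$ recording, for each $(x,\text{chunk})$ pair, whether the chunk actually contains such a point. A successor structure would be built recursively on $\hat N$, and for every realized $(x,\text{chunk})$ pair I would also store the lowest local $y$. A query then splits into a partial small-grid query on the first chunk; if that is empty, one $\hat N$ query to find the first subsequent full chunk containing a point in the $x$-range and a direct lookup of its lowest $y$; and if still empty, a partial small-grid query on the last chunk.

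For the small-grid sub-problem on a $2^{\sqrt{\lg n}}\times 2^{2\sqrt{\lg n}}$ grid, I would use a binary wavelet tree on the $x$-coordinates augmented via Lemma~\ref{ball_inheritance_point}, giving $O(1)$-time $\point$ and $O(\lg\lg n)$-time $\noderange$. At each internal node $v$ I would attach a compact auxiliary structure combining Lemma~\ref{minmax_index} with the $\prank$ data structure of Lemma~\ref{rank_prime_1}, backed by $o(n)$-bit universal tables on $\sqrt{\lg n}$-bit blocks; this should support 3-sided successor probes on the value array $A(v)$ in $O(1)$ per probe. A single descent from the lca in this binary tree, concluded by a $\point$ call, then resolves the small-grid query within $O(\lg\lg n)$ time.

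The main obstacle will be meeting all three budgets simultaneously: construction in $O(n\sqrt{\lg n})$, space $O(n\lg\lg n)$ words, and query $O(\lg\lg n)$. Because the top-level ball inheritance already spends $\Omega(\lg\lg n)$ per $\point$ or $\noderange$ call, the narrow-grid and small-grid sub-queries must finish in $O(\lg\lg n)$ \emph{total}, rather than per level of their own wavelet trees; so each sub-structure must descend only $O(1)$ levels amortized before the answer is pinned down, which is precisely what the chunk size $b=2^{2\sqrt{\lg n}}$ and the $\sqrt{\lg n}$-bit universal tables are designed to enable. A charging argument refining the space analyses in the proofs of Lemmas~\ref{fat_rect} and~\ref{theorem_range_reporting_general} will be needed to verify that the per-node auxiliary structures sum to $O(n\lg\lg n)$ words across all three layers, and that the corresponding construction times sum to $O(n\sqrt{\lg n})$ by exploiting packed input and shared $o(n)$-bit lookup tables throughout.
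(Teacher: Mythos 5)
Your top-level setup (a $2^{\sqrt{\lg n}}$-ary wavelet tree with ball inheritance via Lemma~\ref{ball_inheritance_large_d}(a)) and the chunking idea for the medium narrow grid both match the paper, but there are three genuine gaps in the reductions below that.

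First, the reduction of the top-level query to a \emph{single} narrow-grid successor query on $S(u)$ is incorrect. The boundary children $u_{a'}$ and $u_{b'}$ are only partially covered by the $x$-range $[a,b]$, and $S(u)$ only records child indices, so a query on $S(u)$ with $x$-range $[a',b']$ admits false positives (points in child $a'$ with $x<a$, or in $b'$ with $x>b$). The paper handles the boundary children with separate \emph{three-sided next-point} queries over $\hat N(u_{a'})$ and $\hat N(u_{b'})$, using Zhou's indexing structure (Lemma~\ref{lemma_three_sided_next_point_big}), and restricts the $S(u)$ query to the interior range $[a'+1,b'-1]$. Your proposal never introduces a three-sided next-point structure anywhere, and without one the decomposition does not close.

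Second, your plan for the small $2^{\sqrt{\lg n}}\times 2^{2\sqrt{\lg n}}$ grid is to build a binary wavelet tree and combine $\rmq/\rMq$ (Lemma~\ref{minmax_index}) with $\prank$ (Lemma~\ref{rank_prime_1}) into an ``$O(1)$-time three-sided successor probe.'' Neither primitive answers a three-sided \emph{next-point} query: $\rmq/\rMq$ returns the extreme $x$-coordinate in a $y$-range (useful for range \emph{reporting} where you recurse and output along the way), and $\prank$ counts copies of a symbol; neither locates the minimum-$y$ point subject to a one-sided $x$ constraint. That problem genuinely needs Zhou's structure, whose construction over a block requires rank reduction by sorting packed $\sqrt{\lg n}$-bit coordinates, costing $\Theta(n'\lg\lg n/\sqrt{\lg n})$ --- a factor $\lg\lg n$ more than $\rmq$. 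This is precisely why the paper does \emph{not} use a binary wavelet tree at the small-grid level but a $\lg^{1/4} n$-ary one (Lemma~\ref{ors_on_block} via Lemma~\ref{ball_inheritance_point_small_02}): the $\lg\lg n$-fold reduction in the number of levels cancels the extra $\lg\lg n$ construction factor. Relatedly, a binary wavelet tree on $x\in[2^{\sqrt{\lg n}}]$ has depth $\sqrt{\lg n}$, so a ``single descent from the lca'' costs $\Theta(\sqrt{\lg n})$, not $O(\lg\lg n)$; the paper only does a level-by-level binary descent on the much smaller $\lg^{1/4} n\times n'$ grid (Lemma~\ref{theorem_orthogonal_range_successor_small}), whose height is $O(\lg\lg n)$, and your proposal omits this extra layer of recursion.

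Third, in the medium narrow-grid layer you say a successor structure ``would be built recursively'' on the sampled set $\hat N$. The paper instead applies Zhou's \emph{existing} $O(m\lg m)$-construction structure (Lemma~\ref{theorem_range_successor}) directly to $\hat N$, which is legitimate because $|\hat N|\le n'/2^{\sqrt{\lg n}}$ makes the $O(\lg m)$ overhead affordable. A blind recursion does not obviously bottom out or save time, and you would still be missing the three-sided structures needed at the boundary blocks/children at every level. In short, the missing ingredient throughout is the three-sided next-point index, and it is exactly what forces the extra $\lg^{1/4} n$-ary intermediate layer that your proposal skips.
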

\begin{proof}[Proof (sketch)] Our approach is similar to that in Section~\ref{sect: range_reporting}, but more levels of reductions are required.  Let the sequence $X[0, n-1]$ denote the point set $N = \{(X[i], i)| 0\le i \le n-1\}$.	We build a $2^{\sqrt{\lg n}}$-ary wavelet tree $T$ upon $X[0, n-1]$ with support for ball inheritance using part (a) of Lemma~\ref{ball_inheritance_large_d}.
  As shown in the proof of Lemma~\ref{theorem_range_reporting_general}, a query can be answered by locating the lowest common ancestor, $u$, of the two leaves corresponding to the end points of the query $x$-range, and then performing two $3$-sided queries over the point sets represented by two children of $u$ and one $4$-sided query over $S(u)$.
  For the $3$-sided queries, Zhou~\cite{zhou2016two} already designed an indexing structure, which, with our $O(\lg\lg n)$-time support for $\point$ and $\noderange$, can answer a $3$-sided query in $O(\lg \lg n)$ time.
  The construction time is linear, but it is fine since $T$ has only $O(\sqrt{\lg n})$ levels. 
  The $4$-side query over $S(u)$ is a range successor query over $n'$ points in a $2^{\sqrt{\lg n}}\times n'$ (medium narrow) grid for any $n' \le n$.

  For such a medium narrow grid, we use the sampling strategy in Lemma~\ref{fat_rect} to reduce the problem to range successor over a set of $n'$ points in a $2^{\sqrt{\lg n}}\times n'$ grid where $n' \le 2\times2^{2{\sqrt{\lg n}}}-1$.
  The sampling is adjusted, as we need select at most $2^{\sqrt{\lg n}}$ sampled points from each subset.
  The grid size of $2^{\sqrt{\lg n}}\times n'$ with $n' \le 2\times2^{2{\sqrt{\lg n}}}-1$ is the same as that in Lemma~\ref{range_reporting_1}, so one may be tempted to apply the same strategy of building a binary wavelet tree to reduce it to the problem of building index structures for $3$-sided queries.
  However, we found that, to construct the structure of Zhou~\cite{zhou2016two} over $n'$ points whose coordinates are encoded in $O(\sqrt{\lg n})$ bits, $O(n'\lg\lg n /\sqrt{\lg n})$ time is required, which is a factor of $\lg\lg n$ more than the preprocessing time of the $\rmq$ structure needed in the proof of Lemma~\ref{range_reporting_1}.
  This factor comes from rank reduction in \cite{zhou2016two}, which requires us to sort packed sequences.
  To overcome this additional cost, we build a $\lg^{1/4} n$-ary wavelet tree over the $x$-coordinates, whose number of levels is a factor of $O(\lg\lg n)$ less than that of a binary wavelet tree.
  As discussed for the general case, this strategy reduces the current problem to orthogonal range successor over $n'$ points in an $\lg^{1/4} n \times n'$ (small narrow) grid with $n' \le n$.

  For a small narrow grid, there are two cases. If $n' > \lg n$, we build a binary wavelet tree of height $O(\lg\lg n)$. In the query algorithm, after finding the lowest common ancestor of the two leaves corresponding to the end points of the query $x$-range, we do not perform 3-sided queries. Instead, we traverse the two paths leading to these two leaves. This requires us to traverse down $O(\lg\lg n)$ levels, and at each level, we perform certain $\rankop$/$\selop$ operations in constant time, with the right auxiliary structures at each node.
  No extra support for ball inheritance is needed as we can simply go down the tree level by level to map information. 
  Finally, if $n' < \lg n$, we use sampling to reduce it to even smaller grids  of size at most $\lg^{1/4} n \times \lg^{3/4} n$, 
  over which a query can be answered using a table lookup.  
\end{proof}

\section{Applications}

We now apply our range search structures to the text indexing problem, in which we preprocess a text string $T \in [\sigma]^n$, where $\sigma \le n$.
Given a pattern string $P[0..p-1]$, a {\em counting query} computes the number of occurrences of $P$ in $T$ and a {\em listing query} reports these occurrences.


\subparagraph*{Text indexing and searching in sublinear time.} When both $T$ and $P$ are given in packed form, 
a text index of Munro et al.~\cite{abs-1712-07431} occupies $O(n\lg \sigma)$ bits, can be built in $O(n\lg \sigma/\sqrt{\lg n})$ time and supports counting queries in $O(p/\log_{\sigma} n +\lg n\log_{\sigma}n)$ time (there are other tradeoffs, but this is their main result).
Thus for small alphabet size which is common in practice, they achieve both $o(n)$ construction time and $o(p)$ query time, while previous results achieve at most one of these bounds.
To support listing queries, however, they need to increase space cost to $O(n\lg \sigma\lg^{\epsilon} n)$ bits and construction time to $O(n\lg \sigma \lg^{\epsilon} n)$, and then a listing query can be answered in $O({p}/{\log_{\sigma} n}+ \log_{\sigma} n\lg\lg n+\occ)$. 
The increase in storage and construction costs stems from one component they used which is an orthogonal range reporting structure over $t=O(n/r)$ points  in a $\sigma^{O(r)}\times t$ grid, for $r=c \log_{\sigma} n$ for any constant $c<1/4$.  
We can apply Lemma~\ref{theorem_range_reporting_general} over this point set to decrease the construction time of their index for listing queries to match that for counting queries:

\begin{REMOVED}
Their text index~\cite[Theorem 2]{abs-1712-07431} occupies $O((n/r)\lg n)$ bits, can be built in $O(n((\lg \lg n)^2/r+\lg \sigma/\sqrt{\lg n}))$ time and supports counting queries in $O(p/\log_{\sigma} n +r^2 \lg \sigma+r\lg \lg n )$ time, for any $0<r<(1/4)\log_{\sigma} n$.
Thus for small alphabet size which is common in practice, they achieve both $o(n)$ construction time and $o(p)$ query time, while previous results achieve at most one of these bounds but not both.
To support listing queries, however, they need to increase space cost to $O((n/r)\lg n+n\lg \sigma \lg^{\epsilon} n)$ bits and construction time to $O(n((\lg \lg n)^2/r+(r\lg^2 \sigma)/\lg^{1-\epsilon} n))$ for any constant $0<\epsilon<1$, and then a listing query can be answered in $O({p}/{\log_{\sigma} n}+r\lg \lg n+\occ)$ time, where $\occ$ is the number of occurrences of $P$ in $T$. 
The source of the increase in storage and construction costs stems from one component they used which is an orthogonal range reporting structure over $t=O(n/r)$ points  in a $\sigma^{O(r)}\times t$ grid.  
This structure occupies $O(n\lg \sigma\lg^{\epsilon} n)$ bits, can be constructed in $O(n(r\lg^2 \sigma)/\lg^{1-\epsilon} n)$ time and supports orthogonal range reporting queries in time $O(\lg \lg n+\occ)$.
While the query time and space cost of this data structure matches the current best result, we can improve the construction time when $r=c \log_{\sigma} n$ for any constant $c<1/4$.
This is because the $x$-coordinate of each point can be encoded with $\Theta(r\lg \sigma)=\Theta(\lg n)=\Omega(\sqrt{\lg n})$ bits. 
Thus, Lemma~\ref{theorem_range_reporting_general} applies for sufficiently large $n$,
and this means we can construct an orthogonal range reporting structure over these points in $O((n/r)\times r\lg\sigma/{\sqrt{\lg n}})=O(n\lg \sigma/\sqrt{\lg n})$ time, while the space cost and query time remain the same. 
As a result, we can improve their index for listing queries by decreasing construction time to match that for counting queries:
\end{REMOVED}

\begin{theorem}
	Given a packed text string $T$ of length $n$ over an alphabet of size $\sigma$, an index of $O(n\lg \sigma\lg^{\epsilon} n)$ bits can be built in $O(n\lg \sigma/\sqrt{\lg n})$ time for any positive constant $\epsilon$. 
	Given a packed pattern string $P$ of length $p$, this index supports listing queries in $O({p}/{\log_{\sigma} n}+\log_{\sigma} n\lg \lg n +\occ)$ time where $\occ$ is the number of occurrences of $P$ in $T$.
\end{theorem}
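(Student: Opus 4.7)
The plan is to treat the Munro et al.\ listing index as a near-black-box and replace its single slow component with a structure built from Lemma~\ref{theorem_range_reporting_general}. Their listing index is built from the same ingredients as their counting variant, whose preprocessing is already $O(n\lg\sigma/\sqrt{\lg n})$, together with one additional piece: a 2d orthogonal range reporting structure over $t = O(n/r)$ points in a $\sigma^{O(r)} \times t$ grid, where $r = c\log_{\sigma} n$ for some constant $c<1/4$. In their original construction this extra piece alone takes $\Theta(n\lg\sigma\,\lg^{\epsilon} n)$ time and thus dominates their total preprocessing; everything else is already fast enough.

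First I would verify that Lemma~\ref{theorem_range_reporting_general} actually applies to this point set. Setting the sequence length to $n' = t = O(n/r)$ and the alphabet size to $\sigma' = \sigma^{O(r)}$, the identity $r\lg\sigma = \Theta(\lg n)$ gives $\lg\sigma' = \Theta(\lg n)$, so $\sigma'$ falls between $2^{\sqrt{\lg n'}}$ and $n'$ for all sufficiently large $n$, provided the hidden constant in $\sigma^{O(r)}$ together with $c<1/4$ keeps the exponent of the resulting $n^{O(1)}$ strictly below $1$. This small parameter check is the only delicate step, and it is exactly why the surrounding discussion singles out $c<1/4$.

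Once the lemma is in force, it yields a range reporting structure of $O(n'\lg^{1+\epsilon}\sigma' + n'\lg n') = O((n/r)\lg^{1+\epsilon} n) = O(n\lg\sigma\,\lg^{\epsilon} n)$ bits that answers queries in $O(\lg\lg n + \occ)$ time, matching exactly the space and query bounds of the structure it replaces. The payoff is in construction time, which becomes $O(n'\lg\sigma'/\sqrt{\lg n'}) = O((n/r)\cdot r\lg\sigma/\sqrt{\lg n}) = O(n\lg\sigma/\sqrt{\lg n})$. Substituting this faster-built structure into the Munro et al.\ listing index leaves both the overall $O(n\lg\sigma\,\lg^{\epsilon} n)$-bit space bound and the $O(p/\log_{\sigma} n + \log_{\sigma} n\,\lg\lg n + \occ)$ listing query time untouched, since neither depends on the preprocessing speed of the replaced component, and the remaining ingredients contribute only $O(n\lg\sigma/\sqrt{\lg n})$ preprocessing time by the counting analysis of Munro et al. I foresee no substantive obstacle beyond the parameter check; the argument is a surgical substitution enabled by Lemma~\ref{theorem_range_reporting_general}.
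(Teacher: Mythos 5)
Your proposal is correct and takes essentially the same approach as the paper: identify the single orthogonal range reporting component over $O(n/r)$ points in a $\sigma^{O(r)}\times O(n/r)$ grid as the sole bottleneck in the Munro et al.\ listing index, observe that with $r=c\log_\sigma n$ the $x$-coordinates fit in $\Theta(\lg n)$ bits so that Lemma~\ref{theorem_range_reporting_general} applies, and substitute it in to bring the construction time down to $O(n\lg\sigma/\sqrt{\lg n})$ while leaving space and query time unchanged. Your parameter check that $2^{\sqrt{\lg n'}}\le\sigma'\le n'$ holds for $c<1/4$ is exactly the delicate step the paper relies on.
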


\subparagraph*{Position-restricted substring search.} In a position-restricted substring search~\cite{makinen2006position}, we are given both a pattern $P$ and two indices $0\leq l \leq r\leq n-1$, and we report all occurrences of $P$ in $T[l..r]$.
Makinen and Navarro~\cite{makinen2006position} solves this problem using an index for the original text indexing problem 
and a two-dimensional orthogonal range reporting structure. 
Different text indexes and range reporting structures yield different tradeoffs.
The tradeoff with the fastest query time supports position-restricted substring search in $O(p + \lg\lg n + \occ)$ time, where $\occ$ is the output size, and it uses $O(n\lg^{1+\epsilon} n)$ bits and can be constructed in $O(n\lg n)$ time. 
Again, the construction time of the range reporting structure is the bottleneck, which can be improved by Theorem~\ref{theorem_range_reporting_optimal}.
We can also use a new text index by Bille~et al.~\cite{bille2016deterministic} to achieve speedup when $P$ is given as a packed sequence.
We have: 

\begin{theorem}
	Given a text $T$ of length $n$ over an alphabet of size $\sigma$, an index of $O(n\lg^{1+\epsilon} n)$ bits can be built in $O(n\sqrt{\lg n})$ time for any constant $0<\epsilon<1/2$.
	Given a packed pattern string $P$ of length $p$, this index supports position-restricted substring search in $O({p}/{\log_{\sigma} n}+\lg p+\lg \lg \sigma+\occ)$ time, where $\occ$ in the size of the output.
\end{theorem}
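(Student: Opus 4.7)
The plan is to follow Makinen and Navarro's classical reduction from position-restricted substring search to plain pattern matching plus $2$-d orthogonal range reporting, plugging in two modern fast-construction components. Let $SA$ be the suffix array of $T$ and let $N=\{(i, SA[i]) : 0\le i\le n-1\}$ be the associated set of $n$ points on the $n\times n$ grid. If $P$ matches the suffix-array range $[sp,ep]$, then the occurrences of $P$ in $T[l..r]$ are exactly $N\cap([sp,ep]\times[l,r])$, reducing the query to (a)~locating $[sp,ep]$ and (b)~a $4$-sided orthogonal range reporting query on $N$.

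For the construction I would first build the suffix array in $O(n\sqrt{\lg n})$ time using a known packed-SA algorithm, and then layer on top (i)~Bille~et al.'s deterministic packed text index, which, on a packed input $P$, returns $[sp,ep]$ in $O(p/\log_\sigma n + \lg p + \lg\lg\sigma)$ time, and (ii)~the optimal orthogonal range reporting structure of Theorem~\ref{theorem_range_reporting_optimal} built on $N$, which uses $O(n\lg^\epsilon n)$ words (i.e.\ $O(n\lg^{1+\epsilon} n)$ bits), is constructable in $O(n\sqrt{\lg n})$ time, and answers a query in $O(\lg\lg n + \occ)$. The engineering item to verify is that Bille~et al.'s structure, originally assembled in $O(n\lg n)$ time, can be bootstrapped off the fast SA within the $O(n\sqrt{\lg n})$ budget while inheriting its space and query guarantees; the dominant space term comes from the range reporting component, and both structures fit comfortably in $O(n\lg^{1+\epsilon}n)$ bits.

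A query then chains the two lookups: invoke the text index on $P$ to obtain $[sp,ep]$, then run the range reporting query $[sp,ep]\times[l,r]$ on $N$, for a combined time of $O(p/\log_\sigma n + \lg p + \lg\lg\sigma + \lg\lg n + \occ)$. The main analytic obstacle is absorbing the trailing $\lg\lg n$ to obtain the bound claimed in the theorem. Writing $\lg\lg n = \lg\lg\sigma + \lg\log_\sigma n$, one has $\lg\lg n \le \lg\lg\sigma + \lg p$ whenever $p \ge \log_\sigma n$, which is the dominant regime since otherwise the packed pattern occupies $O(1)$ machine words. For the residual case $p < \log_\sigma n$ I would split off a dedicated fast path that exploits the $O(1)$-word packed pattern so as to replace the generic $\lg\lg n$-time predecessor step inside the range reporting structure by a tabulated or word-parallel routine, yielding the stated bound. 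Designing that fast path and budgeting it inside the $O(n\lg^{1+\epsilon}n)$-bit space and $O(n\sqrt{\lg n})$-time targets is the subtlest point of the proof; the rest is a direct composition of Theorem~\ref{theorem_range_reporting_optimal} with Bille~et al.'s index via the classical Makinen--Navarro reduction.
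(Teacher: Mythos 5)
Your proposal follows exactly the route the paper takes for this theorem: the Makinen--Navarro reduction to plain text indexing plus a $2$-d orthogonal range reporting query on $\{(i,SA[i])\}$, with Theorem~\ref{theorem_range_reporting_optimal} supplying the fast-construction range reporting component and Bille et al.'s packed index supplying the suffix-range lookup. The paper itself gives no detailed proof beyond the surrounding paragraph, so your decomposition and the note that the construction cost of Bille et al.'s index must stay within $O(n\sqrt{\lg n})$ are already at least as explicit as what the paper offers.

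The $\lg\lg n$-absorption issue you raise is the one genuine subtlety, and your calculation $\lg\lg n = \lg\lg\sigma + \lg\log_\sigma n \le \lg\lg\sigma + \lg p$ whenever $p \ge \log_\sigma n$ is exactly right. However, your sketched fix for $p < \log_\sigma n$ does not close the gap. The $\lg\lg n$ cost comes from predecessor search inside the orthogonal range reporting structure, which is entirely independent of the pattern: even a one-character query range still triggers a predecessor search over $n$ $y$-coordinates, and by the P\v{a}tra\c{s}cu--Thorup lower bound that the paper itself cites, range emptiness requires $\Omega(\lg\lg n)$ time in $O(n\polylog n)$ space. Tabulating or word-parallelizing on the (short, packed) pattern therefore cannot eliminate this term when $\occ=0$. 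The honest conclusion is that either the theorem implicitly assumes $p = \Omega(\log_\sigma n)$ (a natural regime for packed patterns), or its stated query bound should carry an extra additive $\lg\lg n$; the paper does not say which, and you should not claim a fast path resolves it. Aside from this, your proof matches the paper's approach and level of detail.
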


\bibliography{rangereportingconstruct}

\newpage

\appendix

\section{Proofs Omitted From Section \ref{sect:preliminaries}}
\label{app: preliminary}

\subsection{Proof of Lemma \ref{lemma:wavelet_construct_d_packed}}

Through out the paper, we define $C(s..f)$ to be the bits between and including the $s$- and $f$-th most significant bits of $C$, where $C$ can be an integer or a character. 
Our proof requires the following lemma:

%
\begin{lemma}
	\label{lemma_split_1}
	Let $C[0..n^{\prime}-1]$ be a packed sequence of $c$-bit elements, where $n' \le n$.  Given a pair of parameters $s$ and $f$ satisfying that $0\leq s \leq f \leq c-1$, a packed sequence $A[0..n^{\prime}-1]$ of $(f-s+1)$-bit elements in which $A[i]=C[i](s..f)$ for each entry $i\in[0..n^{\prime}-1]$ can be constructed in $O(n^{\prime}{c}/{\lg n}+1)$ time. 
\end{lemma}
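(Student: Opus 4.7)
The plan is to use word-level parallelism via a small universal table, in the same spirit as the construction-time tools used elsewhere in the paper. Set $b = \lfloor \lg n / (2c) \rfloor$ so that any block of $b$ consecutive $c$-bit elements of $C$ occupies at most $(\lg n)/2$ bits, hence fits in one machine word and admits at most $\sqrt{n}$ distinct values. (If $c > (\lg n)/2$ then $n'c/\lg n = \Omega(n')$ and I would simply process elements one at a time in $O(1)$ each, which already meets the bound; so from now on assume $b \ge 1$.)

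Precompute a universal table $U$ indexed by triples $(\beta, s', f')$ where $\beta$ is a $bc$-bit block of $b$ packed $c$-bit integers and $0 \le s' \le f' \le c-1$. The entry $U[\beta, s', f']$ stores, packed into a single word, the sequence of $b$ $(f'-s'+1)$-bit values obtained by extracting bits $(s'..f')$ from each of the $b$ sub-elements of $\beta$. The total size of $U$ is at most $\sqrt{n}\cdot c^{2}\cdot w = o(n)$ bits, matching the universal-table convention stated in Section~\ref{sect:preliminaries}, and $U$ can be filled in $o(n)$ time once and reused across all invocations of the lemma.

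The construction of $A$ is then a single pass over $C$ in blocks. Conceptually partition $C$ into $\lceil n'/b \rceil$ blocks of $b$ consecutive $c$-bit elements. For each block, read its $bc$-bit content from the packed array $C$ in $O(1)$ time using a constant number of word reads, shifts, and masks (the block may straddle a word boundary); look up $U$ with this content together with the fixed parameters $s$ and $f$ to obtain the corresponding packed block of $b$ extracted $(f-s+1)$-bit values; and append the resulting $b(f-s+1)$ bits to the tail of $A$ via a further constant number of shifts, masks, and writes to handle the boundary in $A$. Any trailing block of fewer than $b$ elements is handled in the same way with a minor masking adjustment. Summing $O(1)$ per block over $O(n'/b) = O(n'c/\lg n)$ blocks, plus $O(1)$ for setup and the partial block, yields the claimed bound of $O(n'c/\lg n + 1)$.

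The only real obstacle is calibrating the table so that its size remains $o(n)$; this forces $bc \le (\lg n)/2$ rather than the natural choice $bc \approx w$, but still leaves enough parallelism for the desired running time. The remaining steps are routine bit manipulation and boundary bookkeeping.
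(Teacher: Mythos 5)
Your proof is correct and takes essentially the same approach as the paper: the same block size $b=\lfloor \lg n/(2c)\rfloor$, the same universal table indexed by a packed block together with the range $[s,f]$, and the same single-pass extraction in $O(1)$ per block. You additionally make explicit the degenerate case $c>(\lg n)/2$ (where $b=0$) and the word-boundary bookkeeping, both of which the paper leaves implicit.
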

\begin{proof}
	Let $\delta$ denote the block size $\lfloor \frac{\lg n}{2\times c} \rfloor$. 
	We construct a universal lookup table $U$. 
	For each possible $\delta$-element packed sequence $S_1$ drawn from alphabet $[2^c]$, and each different range $[s, f]$ where $0\leq s \leq f \leq c-1$, $U$ stores a packed sequence $S_2$ in which $S_2[i]=S_1[i](s..f)$ for each $i\in[0..\delta-1]$. 
	As there are $2^{\delta\times c}\times c^2=\sqrt{n}\times c^2$ entries in $U$, and each entry stores a result of $\delta\times(f-s+1)$ bits, table $U$ occupies $O(\sqrt{n}\times c^2 \times \delta\times(f-s+1))=o(n)$ bits of space. 
	Given a pair of parameters $s$ and $f$, we can apply table $U$ to extract the bits from $\delta$ consecutive elements $C[i], C[i+1], \ldots, C[i+\delta-1]$  for each $i\in [0, n^{\prime}-1-\delta]$ in constant time. 
	Therefore, the overall processing time is $O(n^{\prime}{c}/{\lg n}+1)$.
\end{proof}

With this lemma, we now prove Lemma \ref{lemma:wavelet_construct_d_packed}.

\begin{proof}
	We only prove the result when value and index arrays are required; the other results in the lemma follow by removing the steps of constructing them. 
	The construction consists of two steps: we first build a binary wavelet tree $T_2$ and then convert it to a $d$-ary wavelet tree $T_d$.
	
	To construct $T_2$, let $r$ denote its root node, and we have $A(r) = A$ and $I(r) = I$. 
	We then create the left child, $r_0$, and the right child, $r_1$, of $r$, and perform a linear scan of $A(r)$ and $I(r)$. 
	During the scan, for each $i\in[0, |A(r)|-1]$, if the highest bit of $A(r)[i]$ is $0$, then $A(r)[i]$ is appended to $A(r_0)$, and $I(r)[i]$ is appended to $I(r_0)$. Otherwise, they are appended to $A(r_1)$ and $I(r_1)$.
	Afterwards,  we recursively process the child node $r_0$ and $r_1$ in the same manner, but we examine the second highest bit of each element of $A(r_0)$ and $A(r_1)$. 
	In general, when generating the sequences for the child nodes of an internal node $u$ at tree level $l$ where $l\in[0, \lg \sigma-1]$, we append $A(u)[i]$ and $I(u)[i]$ to $A(r_0)$ and $I(r_0)$, respectively, if the $l$-th highest bit of $A(u)[i]$ is 0. 
	Otherwise. they are appended to $A(r_1)$ and $I(r_1)$. 
	If $A(v)$ for some node $v$ is empty but $v$ is above the leaf level, then we keep $v$ as an empty node, and at next phase we create empty children $v_0$ and $v_1$ under $v$. 
	$T_2$ have been constructed completely after processing all the $\lceil \lg \sigma \rceil$ bits of each element of $A$. 
	
	To speed up this process, we use a universal table $U$.
	Let $b = \lfloor \frac{\lg n}{2 t} \rfloor$, where $t = \lceil\lg n^{\prime} \rceil+\lceil \lg \sigma\rceil$. 
	This table $U$ has an entry for each possible triple $(D, E, c)$, where $D$ is a sequence of length $b$ drawn from universe $[\sigma]$, $E$ is a sequence of length $b$ drawn from universe $[n^{\prime}]$, and $c$ is an integer in $[0, \lceil \lg \sigma \rceil-1]$.
	This entry $U[D, E, c]$ stores four packed sequences $D_0$, $D_1$, $E_0$ and $E_1$ defined as follows:
	$D_0[i]$ or $D_1[i]$ stores the $i$th element in $D$ whose $c$-th most significant bit is $0$ or $1$, respectively,
	while $E_0[i]$ or $E_1[i]$ stores $E[j]$ if $D[j]$ is the $i$th element in $A$ whose $c$-th most significant bit is $0$ or $1$, respectively. 
	Similar to the table $U$ in the proof of Lemma \ref{lemma_split_1}, $U$ uses $o(n)$ bits.
	By performing table lookups with $U$, we can process $b$ consecutive elements in $A(u)$ and $I(u)$ in constant time, 
	and hence we spend $O(|A(u)| / b + 1)$ time on each internal node $u$.
	The sum of the lengths of all the value arrays for the nodes at the same level of $T_2$ is $n'$. 
	As $T_2$ has $\lceil \lg \sigma \rceil+1$ levels and $O(\sigma)$ nodes, the total time required to construct $T_2$ is $O(n^{\prime}\lg \sigma(\lg n^{\prime}+\lg \sigma)/{\lg n}+ \sigma)$. 
	
	We then transform $T_2$ into a $d$-ary tree $T_d$. 
	For simplicity, assume that $\sigma$ is a power of $d$. 
	We first remove the nodes of $T_2$ whose levels are not multiples of $\lg d$, and add edges between each remaining node $u$ and its descendants at the next remaining level. 
	We then visit each internal node $u$ of $T_d$ and associate it with a packed sequence $S(u)$ storing $A(u)[i](l\lg d..(l+1)\lg d-1)$ for all $i\in[0, |A(u)|-1]$, where $l$ is the level of $u$ in $T_d$.
	It remains to analyze the time needed to transform $T_2$ into $T_d$.
	For each internal node $u \in T_d$, it takes $O(|A(u)|\lg \sigma/\lg n+1)$ time to construct $S(u)$.
	The sum of the lengths of all the value arrays for the nodes at the same level of $T_d$ is $n'$. 
	As $T_d$ has $\lg \sigma/\lg d+1$ levels and $O(\sigma)$ nodes, the time required to construct $S(u)$'s for all the internal node of $T_d$ is $O(n'\lg^2 \sigma/(\lg n\times \lg d)+\sigma)$.
	Therefore, the two steps of our construction algorithm use $O(n^{\prime}\lg \sigma(\lg n^{\prime}+\lg \sigma)/{\lg n}+ \sigma)$ time in total.
\end{proof}

\subsection{Proof of Lemma \ref{lemma:wavelet_construct_d_unpacked}}
\begin{proof} 
	For simplicity, assume that $\sigma$ is a power of $d$. 
	We use $O(n)$ time to create a sequence $I[0..n-1]$ in which $I[i]=i$ for each $i\in[0..n-1]$.
	At the root node $r$ of the wavelet tree, set $A(r) = A$ and $I(r) = I$. We then create an empty sequence $S(r)$, and, for each $i\in[0, n'-1]$, we append $A(r)[i](0..\lg d-1)$ to $S(r)$. 
	At the second level, there are $d$ children of $r$. 
	We linearly scan $A(r)$, $I(r)$ and $S(r)$, appending $A(r)[i]$ or $I(r)[i]$ to $A(r_{\alpha})$ or $I(r_{\alpha})$, respectively, where $\alpha=S(r)[i]$ and $r_{\alpha}$ represents the $\alpha$-th child node of $r$. 
	Next, we construct $S(v)$ for each node $v$ at the second tree level by appending $A(v)[i](\lg d..2\lg d-1)$ to $S(v)[i]$ for each $i\in[0, |A(v)|-1]$.
	
	This process continues at each successive level: in general, when generating $S(u)$ for a node $u$ at a level $\ell$ where $\ell\in[0, \frac{\lg \sigma}{\lg d}-1]$, we append $A(u)[i](\ell\times\lg d..(\ell+1)\times\lg d-1)$ to $S(u)[i]$ for each $i\in[0, |A(u)|-1]$. If $u$ is an internal node, we append $A(u)[i]$ or $I(u)[i]$ to the sequence $A(u_\alpha)$ or $I(u_\alpha)$, respectively, where $\alpha=S(u)[i]$. 
	After reaching the leaf level, $\frac{\lg \sigma}{\lg d}+1$ levels have been created on $T_d$. 
	As it uses $O(n)$ time for the non-empty nodes at each tree level and there are in total at most $O(\sigma)$ empty nodes, overall the construction time is $O(n\times{\lg \sigma}/{\lg d}+\sigma)=O(n{\lg \sigma}/{\lg d})$, as $\sigma\leq n$. 
\end{proof}

\subsection{Proof of Lemma \ref{lemma_rank_prime_small}}

Our proof requires the $\countop_c(A, j)$ operation which computes the number of elements less than or equal to $c$ in $A[0..j]$.
Clearly the support for $\countop$ implies that for $\rankop$.
Our proof alos requires the following lemma:
\begin{lemma}[{\cite[Lemma 2.3]{babenko2015wavelet}}]
	\label{lemma_rank_small}
	Let $A[0..n^{\prime}-1]$ be a packed sequence drawn from alphabet $[\sigma]$, where $n' \le n$ and $\sigma < \lg^{1/3} n$. A systematic data structure occupying $o(n^{\prime})$ extra bits supporting $\countop$ in $O(1)$ time can be constructed in $O(n^{\prime}\lg \sigma/\lg n)$ time. 
\end{lemma}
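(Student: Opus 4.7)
\medskip

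\noindent\textbf{Proof plan for Lemma~\ref{lemma_rank_small}.} The plan is to build a two-level (block/superblock) sampling structure on top of the verbatim $A$, together with a universal lookup table that handles residues inside a single block. Because $\sigma<\lg^{1/3} n$, a block of $b=\lfloor \lg n/(2\lg\sigma)\rfloor$ symbols occupies at most $\lg n/2$ bits, so its content is an integer in $[0,\sqrt{n}]$ and can be used directly as a table index. I will choose superblocks of size $B_1=\lceil\lg^2 n\rceil$ (an integral number of blocks).

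First I would specify what is stored. The structure stores $A$ verbatim (this is the systematic part). For each superblock and each character $c\in[\sigma]$, store the absolute count $\countop_c(A,\cdot)$ taken at the superblock boundary; since $\sigma\lg n=O(\lg^{4/3}n)$ bits per superblock and there are $n'/B_1$ superblocks, this takes $O(n'/\lg^{2/3}n)=o(n')$ bits. For each block and each $c$, store the count of occurrences of values $\le c$ within its enclosing superblock up to the block boundary; each such count fits in $2\lceil\lg\lg n\rceil$ bits, so the block table uses $(n'/b)\cdot\sigma\cdot O(\lg\lg n)=O(n'\sigma\lg\sigma\lg\lg n/\lg n)=o(n')$ bits, again using $\sigma<\lg^{1/3}n$. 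Finally I would build a universal table $U$ indexed by (block content, within-block position $\tau\in[0,b-1]$, symbol $c$) returning the number of entries $\le c$ in the first $\tau+1$ positions of the block; its space is $O(\sqrt{n}\cdot b\cdot\sigma\cdot\lg b)=o(n)$ bits and it can be constructed in $o(n)$ time once per value of $n$, as allowed by the paper's universal-table convention.

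Second, the $\countop_c(A,j)$ query is immediate: compute $t=\lfloor j/b\rfloor$, $\tau=j\bmod b$ and $s=\lfloor t/(B_1/b)\rfloor$; read the stored absolute superblock-$s$ count for $c$, add the stored within-superblock block-$t$ count for $c$, and add the table entry $U[\,\text{content}(A_t),\tau,c\,]$ obtained by reading the block content from the packed $A$ in $O(1)$ time. This is $O(1)$ time and $O(1)$ accesses to $A$ by design, and $\rankop_c$ follows by subtracting two $\countop$ values.

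Third, for the construction-time bound of $O(n'\lg\sigma/\lg n)$ I will process $A$ left-to-right one block at a time, maintaining two pieces of state: a word holding the $\sigma$ running within-superblock counts packed into fields of $O(\lg\lg n)$ bits (they fit in a single word since $\sigma\lg\lg n=o(\lg n)$), and the $\sigma$ absolute counts maintained only at superblock boundaries. For each block I perform one lookup in $U$ (actually into a companion table returning, for a block content, the $\sigma$ total within-block counts packed in the same layout as the state word), one word-parallel addition to update the packed relative counts, and I write the packed state as the block's stored relative-count record. This costs $O(1)$ per block, hence $O(n'/b)=O(n'\lg\sigma/\lg n)$ over all blocks; the superblock absolute counts can be refreshed in $O(\sigma)$ time per superblock, contributing only $O(n'\sigma/B_1)=O(n'/\lg^{5/3}n)$ additional time. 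The main obstacle I expect is simply bookkeeping around the two count scales so that everything stays within $o(n')$ extra bits while each per-block update remains $O(1)$; the key trick that makes this work is that the small alphabet allows the $\sigma$ within-superblock counts to be packed into a single word and updated with one word-parallel addition per block.
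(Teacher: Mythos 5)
Your construction is correct: the block/superblock decomposition with a universal lookup table for in-block residues, together with the word-packed vector of $\sigma$ within-superblock counters updated by one word-parallel addition per block (feasible precisely because $\sigma\lg\lg n = o(\lg n)$ when $\sigma<\lg^{1/3}n$), gives the claimed $o(n')$ extra bits, $O(1)$ query time, and $O(n'\lg\sigma/\lg n)$ construction time. The paper does not prove this lemma itself but cites it from Babenko et al., and your argument is essentially the same standard approach used there.
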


With Lemma \ref{lemma_rank_small}, we now prove Lemma \ref{lemma_rank_prime_small}.

\begin{proof}
	Lemma \ref{lemma_rank_small} already subsumes this lemma when $\sigma\leq 2^{\lceil 1/4\lg \lg n \rceil}$, so it suffices to assume that $2^{\lceil 1/4\lg \lg n \rceil}<\sigma$ in the rest of the proof. 
	
	Let $d = 2^{\lceil 1/4\lg \lg n \rceil}$.
	We first build a $d$-ary wavelet tree $T$ in $O(n^{\prime}\lg^2 \sigma/{\lg n}+\sigma)$ time by Lemma~\ref{lemma:wavelet_construct_d_packed}. 
	Then the height of the tree is $h=O(\lg \sigma/(1/4\lg \lg n)) = O(1)$.
	For each level $l$ of $T$ except the leaf level, we construct a packed sequence $S_l$ by concatenating all the $S(v)$'s for the nodes at this level from left to right.
	As $S_l$ is drawn from alphabet $[d]$ and there are at most $\sigma$ nodes at each level, it takes $O(|S_l|\lg d / \lg n + \sigma) = O(n'\lg\sigma / \lg n + \sigma)$ time to construct $S_l$.
	We then build a data structure $C_l$ with constant-time support for $\countop$ over $S_l$;
	by Lemma~\ref{lemma_rank_small}, this data structure occupies $o(n^{\prime}\lg \sigma)$ extra bits, and it can be constructed in $O(n^{\prime}\lg d/\lg n)=O(n'\lg \sigma/\lg n)$ time.
	At last, we discard all sequences $S(v)$ and the tree $T$ to save space.
	As $T$ has a constant number of levels, all the $S_l$'s and $C_l$'s occupy $n^{\prime}\lceil \lg \sigma \rceil+o(n^{\prime}\lg \sigma)$ bits in total, and their construction time, including that of $T$ which we discard later, is $O(n^{\prime}\lg^2 \sigma/{\lg n}+\sigma)$.
	The set of $S_l$'s and $C_l$'s is the data structures Bose et al.~{\cite[Theorem 4]{bhmm2009}} designed, which support $\countop$ operations in constant time over a sequence drawn from an alphabet of size $O(\polylog(n))$. This implies the support for $\rankop$.
\end{proof}

\subsection{Proof of Lemma \ref{rank_select_0}}

\begin{proof}
	Belazzougui et al.~\cite[Lemma 3.5]{belazzougui2020linear} already proved this lemma for the case in which $\sigma \leq n$. When $\sigma > n$, then the data structure we construct is simply an array $A'[0..n-1]$ storing all the answers, i.e., $A'[i] = \prank(A, i)$ for any $i \in [0, n-1]$. $A'$ occupies $O(n\lg n) = O(n\lg\sigma)$ bits. To construct $A'$, it is enough to perform a linear scan of $A$, and during the scan, we maintain an array $C[0..\sigma-1]$ in which $C[j]$, for any $j \in [0, \sigma-1]$, stores how many times symbol $j$ occurs in the portion of $A$ that we have scanned so far. This uses $O(n+\sigma)$ time.
\end{proof}

\subsection{Proof of Lemma \ref{minmax_index}} 

Belazzougui and Puglisi~\cite{belazzougui2016range} provided a systematic scheme with efficient construction for range minimum/maximum queries over an input sequence from small alphabets.
Our proof requires their result presented as follows.
\begin{lemma}[{\cite[Lemma D.1]{belazzougui2016range}}]
	\label{minimax_bela}
	Let $A[0..n^{\prime}-1]$ be a packed sequence drawn from alphabet $[\sigma]$, where $n' \le n$ and $\sigma\leq 2^{\sqrt{\lg n}}$. 
	There is a systematic data structure using $O(n^{\prime}{\lg \sigma}/{\lg n})$ extra bits constructed in $O(n^{\prime}{\lg \sigma}/{\lg n})$ time, which answers $\rmq(i, j)/\rMq(i, j)$ queries in constant time.  
	The query procedures each uses a universal table of $o(n)$ bits.
\end{lemma}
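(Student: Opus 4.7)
The plan is to use a standard two-level RMQ decomposition tailored to the packed representation of $A$. I would partition $A$ into blocks of $b = \Theta(\lg n/\lg \sigma)$ elements, so that every block occupies exactly one machine word and can be read with a single $O(1)$ access to $A$. Conceptually form the block-minima array $M$ whose $k$-th entry is the minimum value of the $k$-th block; then $M$ has length $O(n'\lg \sigma/\lg n)$. Over $M$ I would build the Fischer--Heun structure of Lemma~\ref{fisher_rmq}, which uses $O(|M|) = O(n'\lg \sigma/\lg n)$ bits, answers $\rmq$ over $M$ in $O(1)$ time without storing $M$ explicitly, and can be built in time linear in $|M|$ once the block minima are known.

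For within-block queries, I would use a universal table of $o(n)$ bits together with word-level bit tricks. Given a range $(i,j)$ that lies inside a single block, the procedure reads the block in $O(1)$ time, masks out the elements outside $(i,j)$ by replacing them with the maximum possible value, and invokes a tabulated routine to locate the position of the minimum among the resulting packed integers. To keep the table $o(n)$ bits when $\sigma$ is as large as $2^{\sqrt{\lg n}}$, I would not index the table by the entire block content (which has $\lg n$ bits and would give $n$ possibilities), but rather break the block into $O(\sqrt{\lg n})$ chunks of $\sqrt{\lg n}$ bits each and index the table by a single chunk, giving only $2^{\sqrt{\lg n}}$ possible contents; a second small $o(n)$-bit table then reduces the $O(\sqrt{\lg n})$ chunk-min candidates to the global block answer in $O(1)$ time through standard subword-parallel operations on a single word.

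An arbitrary $\rmq(i,j)$ is answered in the usual three-piece manner: locate the blocks $B_{k_1}$ and $B_{k_2}$ containing $i$ and $j$; if they coincide, invoke the within-block procedure directly; otherwise, obtain three candidate positions, namely the minimum in the suffix of $B_{k_1}$ starting at $i$, the minimum in the prefix of $B_{k_2}$ ending at $j$, and the minimum of $M[k_1+1\mathinner{..}k_2-1]$ via Fischer--Heun (which pinpoints a block on which one further within-block call returns the exact position), and finally return the candidate with the smallest value by reading those $O(1)$ entries of $A$. Support for $\rMq$ is entirely symmetric.

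The main obstacle I anticipate is making the within-block routine simultaneously run in $O(1)$ time and use only $o(n)$ table bits across the entire range $2 \le \sigma \le 2^{\sqrt{\lg n}}$: direct tabulation by block content is too large because a block can be $\lg n$ bits, whereas a per-block Cartesian-tree encoding stores $\Theta(n')$ bits when $\sigma$ is small and blows the $O(n'\lg \sigma/\lg n)$ budget. Threading this needle is what forces the chunk-based table and the packed-sorting-style subword tricks sketched above. Once this is in place, construction time is dominated by one $O(1)$ table lookup per block (to compute each block minimum) plus the linear-time Fischer--Heun build over $M$, for a total of $O(n'\lg \sigma/\lg n)$, as required.
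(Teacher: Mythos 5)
The paper itself does not prove this statement---it is quoted verbatim as Lemma~D.1 of Belazzougui and Puglisi~\cite{belazzougui2016range}---so there is no internal proof to compare against. Your overall two-level scheme (block minima fed into Fischer--Heun plus tabulated in-block queries) is exactly the right architecture and matches the intended construction.

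There is, however, a genuine gap in the in-block step, and it is self-inflicted by your choice of block size. By insisting that a block occupy a \emph{full} word ($b\lg\sigma = \Theta(\lg n)$ bits), you create the very problem you then struggle with: a block has $\Theta(\lg n)$ bits, so a table indexed by block content has $\Theta(n)$ entries. Your fix---splitting a block into $\Theta(\sqrt{\lg n})$ chunks of $\sqrt{\lg n}$ bits and tabulating per chunk---does not actually give $O(1)$ query time or $O(1)$-time block-minimum computation: either you look up each of the $\Theta(\sqrt{\lg n})$ chunks sequentially (which is $\Theta(\sqrt{\lg n})$ time per block, blowing the construction budget to $\Theta(n'\lg\sigma/\sqrt{\lg n})$), or you must combine $\Theta(\sqrt{\lg n})$ chunk-minima candidates that together occupy $\Theta(\lg n)$ bits, which is the same too-many-possibilities problem in disguise, and ``standard subword-parallel operations'' is not by itself a justification that this can be tabulated in $o(n)$ bits. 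The simple and correct choice is to take $b = \lfloor \lg n /(2\lg\sigma)\rfloor$, so a block has at most $\lg n/2$ bits; then a single universal table indexed by (block content, $q_1$, $q_2$) has $2^{\lg n/2}\cdot O(b^2)$ entries of $O(\lg\lg n)$ bits each, which is $o(n)$ bits, answers any in-block range-minimum query in one lookup, and also gives each block minimum in one lookup during construction. With this block size, $|M| = O(n'\lg\sigma/\lg n)$ still holds, Fischer--Heun over $M$ costs $O(|M|)$ bits and $O(|M|)$ time, and every bound in the lemma follows; you do not need Cartesian-tree encodings, chunking, or Fredman--Willard-style word parallelism at all.
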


With Lemma \ref{minimax_bela}, we now prove Lemma \ref{minmax_index}:

\begin{proof}
	If $\sigma\leq \lg n$, each element in $A$ can be encoded with $O(\lg \lg n)$ bits, so explicitly storing elements of $A$ requires $O(n^{\prime}\lg \lg n)$ bits, which is affordable. 
	We then apply Lemma \ref{minimax_bela} for $\rmq/\rMq$ queries over $A$, which achieves the efficient construction time and the constant query time.
	Therefore, for the rest of the proof, it suffices to assume $\sigma > \lg n$.
	
	We only show the proof for range minimum as the support for range maximum is similar. 
	Let $b$ denote the block size $\lfloor {\lg n}/{(2\lg \lceil \sigma \rceil)}  \rfloor$. 
	The elements of $A$ are conceptually divided into blocks of $b$ elements each. 
	With a universal lookup table $U$, we can retrieve the minimum value of a block of elements in constant time. 
	For each possible $b$ elements drawn from $[\sigma]$, $U$ stores the minimum element value of these $b$ elements. 
	Similar to the table $U$ in the proof of Lemma \ref{lemma_split_1}, $U$ uses $o(n)$ bits.
	
	Next,  we store the minimum values of the blocks in a sequence $A^{\prime}$ ordered by their original position in $A$. 
	The sequence $A^{\prime}$ occupies $O(n^{\prime}/b\times \lg \sigma)=O(n^{\prime})$ bits. Over $A^{\prime}$ we build a data structure $DS_1$ of $O({n^{\prime}}/{b})$ bits in $O({n^{\prime}}/{b})$ time by Lemma \ref{fisher_rmq}. 
	
	To save storage, we do not keep the original element values in a block. Instead, each element value $e$ is replaced with its rank, i.e., the number of elements in the block that are smaller than $e$. 
	As each block has $b$ elements, the rank value can be encoded with $O(\lg b)=O(\lg \lg n)$ bits. 
	The transformation from element values to their corresponding rank values for each block can be processed in constant time by applying a universal lookup table $U'$. 
	For each possible $b$ elements drawn from $\{0,1,..., \sigma-1\}$, we store the rank values in $O(b\lg \lg n)$ bits. 
	Similar to the table $U$ in the proof of Lemma \ref{lemma_split_1}, $U'$ uses $o(n)$ bits.
	With table $U'$, we spend $O(\frac{n^{\prime}}{b})$ time on transferring $n'$ elements into their ranks. 
	
	At last, we construct a universal lookup table $U''$ in which for each possible $b$ elements drawn from $[b]$ and for each different query range $[q_1, q_2]$ where $0\leq q_1 \leq q_2\leq b-1$, we store the in-block index of the minimum value in the range $[q_1, q_2]$. 
	Similar to the table $U$ in the proof of Lemma \ref{lemma_split_1}, $U''$ uses $o(n)$ bits.
	The universal table $U''$ will be used amid the querying procedure only. 
	
	All rank values occupy $O(n^{\prime}\lg \lg n)$ bits. In addition to the $O(n^{\prime}+\frac{n^{\prime}}{b})$-bit space usage of $A^{\prime}$ and $DS_1$, the overall space cost is $O(n^{\prime}\lg \lg n)$ bits. 
	As shown above, computing the rank value for all $n^{\prime}$ elements uses $O(\frac{n^{\prime}}{b})$ time. 
	Constructing the sequence $A^{\prime}$ and building the data structure $DS_1$ over $A^{\prime}$ takes $O(\frac{n^{\prime}}{b})$ time. 
	Overall the construction time is bounded by $O({n^{\prime}}/{b})=O(n'\lg \sigma/\lg n)$.
	
	Now we show how to answer the range minimum query given a query range $[i, j]$. Let $B_s$ and $B_t$ denote the block containing $i$ and $j$, respectively, where $s=\lfloor \frac{i}{b} \rfloor$ and $t=\lfloor \frac{j}{b} \rfloor$.  We only consider the case when $s < t$; the remaining case in which $s$ is equal to $t$ can be handled similarly. Let $m_1$ denote the minimum value in $B_s[i \ mod \ b, b-1] $, $m_2$ denote the minimum value among blocks $B_{s+1}$, $B_{s+2}$,\ldots, $B_{t-1}$, and $m_3$ denote the minimum value in $B_t[0, j \ mod \ b]$. The answer is clearly $\min(m_1, m_2, m_3)$. 
	
	The value $m_2$ can be retrieved in constant time as follows: We search the data structure $DS_1$ for the index $\tau$ of the minimum value in the query range $[s+1, t-1]$ and complete with accessing $A^{\prime}[\tau]$.
	For the values $m_1$ and $m_3$ both can be answered in a similar way, and we take how to retrieve $m_1$ as an example. Given the pattern of block $B_s$ and the query range $[i \ mod \ b, b-1]$, we can apply $U''$ to retrieve the in-block index $\tau$ of the minimum value $e$ in constant time,  compute the original index $\tau^{\prime}$ of $e$ in $A$, where $\tau^{\prime}=b\times s+\tau$, and retrieve $m_1$ by accessing $A[\tau^{\prime}]$. Overall, the query requires $O(1)$ time and $O(1)$ accesses to the elements of $A$.
\end{proof}

\section{Fast Construction of Predecessor Query Structures}
\label{sect:pred_succ}
Let $A[0..n-1]$ be a sequence of integers sorted in the increasing order. Given a query integer $x$, we define operations $\pred(x)$ and $\succ(x)$:
\[\pred(x)=\max\{j\ |\ A[j] \leq x, 0\leq j \leq n-1\}\]
\[\succ(x)=\min\{j\ |\ x \leq A[j],0\leq j \leq n-1\}\]

Belazzougui et al.~\cite{belazzougui2020linear} show a data structure with deterministic linear preprocessing time for predecessor/successor queries. Their result shown as follows will be used later in our methods.
\begin{lemma}[{\cite[Lemma 3.6]{belazzougui2020linear}}]
	\label{y_fast}
	Given a sorted sequence $A$ of integers from universe $ [0, u-1]$, a data structure of $O(n\lg u)$ bits can be constructed in linear time, which answers a $\pred$ or $\succ$ query in $O(\lg \lg u)$ time. 
\end{lemma}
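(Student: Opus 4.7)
The plan is to build a two-level structure in the style of a y-fast trie. First, partition the sorted input $A$ into $m=\lceil n/\lg u\rceil$ buckets of $\lceil\lg u\rceil$ consecutive elements, stored as packed sorted subarrays. This takes $O(n)$ time and $O(n\lg u)$ bits in total. From each bucket extract one representative, say its minimum, into an array $R[0..m-1]$; a single linear scan suffices.

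Second, build a structure on $R$ that answers predecessor and successor queries in $O(\lg\lg u)$ time. I would use an x-fast trie: a compacted binary trie of height $\lceil\lg u\rceil$ whose leaves are the $m$ keys of $R$, augmented with one dictionary per level that, given a bit-prefix of length $\ell$, returns the unique trie node at depth $\ell$ matching that prefix (if any). A query $\pred(x)$ is answered by binary searching on depth: using the dictionaries we locate the deepest ancestor of $x$ present in the trie, and a precomputed pair of pointers at that node takes us to the immediate predecessor or successor leaf. This costs $O(\lg\lg u)$. The trie has $O(m\lg u)=O(n)$ nodes, and each dictionary entry is a $\lg u$-bit key paired with an $O(\lg n)=O(\lg u)$-bit pointer, so the whole structure occupies $O(n\lg u)$ bits.

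Third, to answer $\pred(x)$ on $A$, use the top-level structure to find the index $j$ with $R[j]\le x<R[j+1]$ in $O(\lg\lg u)$ time, then binary-search bucket $j$ (and possibly $j+1$ for $\succ$) in its sorted packed representation. Because each bucket holds $\Theta(\lg u)$ elements, the internal binary search also costs $O(\lg\lg u)$, matching the overall bound.

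The main obstacle is implementing the dictionaries underlying the x-fast trie deterministically in linear time with $O(1)$ lookup. A randomized version via universal hashing is immediate, but for a strictly deterministic bound one must invoke a known deterministic dictionary admitting linear construction and constant lookup (for instance the schemes of Ru\v{z}i\'c or Hagerup--Miltersen--Pagh). Plugging such a dictionary into each of the $O(\lg u)$ levels, each storing $O(m)$ entries of $O(\lg u)$ bits, keeps the total construction time at $O(m\lg u)=O(n)$ and the total space at $O(n\lg u)$ bits, yielding the claimed lemma.
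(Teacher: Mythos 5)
The paper does not actually prove this lemma; it is imported verbatim as \cite[Lemma 3.6]{belazzougui2020linear}, so your attempt has to stand on its own. Your outline is the textbook y-fast-trie reduction, and most of it is sound: bucketing the sorted input into groups of $\Theta(\lg u)$ consecutive elements, binary-searching inside a bucket in $O(\lg\lg u)$ time, and building an x-fast trie over the $n/\lg u$ representatives are all fine, and you correctly isolate the only nontrivial step, namely realizing the per-level prefix dictionaries deterministically.

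That step, however, is a genuine gap rather than a detail one can discharge by citation. No deterministic static dictionary with worst-case constant lookup and \emph{linear} construction time is known: Hagerup--Miltersen--Pagh need $\Theta(N\lg N)$ construction, and Ru\v{z}i\'c's best bound is $O(N(\lg\lg N)^2)$. With your bucket size $\lceil\lg u\rceil$ the trie has $\Theta(n)$ dictionary entries in total over its $\lg u$ levels, so plugging in either known construction gives $\omega(n)$ preprocessing and the lemma's linear bound is missed --- and in this paper ``linear time'' must mean worst-case deterministic, since the whole point of the surrounding work is to avoid expected-time preprocessing. The gap is repairable: enlarge the buckets to $\Theta(\lg u\cdot(\lg\lg n)^2)$ elements, so the number of representatives drops to $n/(\lg u(\lg\lg n)^2)$, the total dictionary size over all levels becomes $O(n/(\lg\lg n)^2)$, and Ru\v{z}i\'c's $O(N(\lg\lg N)^2)$ construction then fits in $O(n)$ overall while the in-bucket binary search remains $O(\lg(\lg u(\lg\lg n)^2))=O(\lg\lg u)$. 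As written, though, your argument rests on a dictionary that is not known to exist; overcoming exactly this obstacle is what the cited lemma of Belazzougui et al.\ is for.
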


We also need a solution under the indexing model over packed sequences. The following result can be achieved by combining an approach of Grossi et al.~\cite{grossi2009more} with Lemma~\ref{y_fast}, while applying universal tables.

\begin{lemma}
	\label{pre_succ_index1}
	Given a sorted packed sequence $A$ of $n^{\prime}$ distinct integers from $[\sigma]$, where $n^{\prime}\leq n $ and $\sigma\leq 2^{c\sqrt{\lg n}}$ for any arbitrary positive constant $c$, a data structure using $O(n^{\prime}\lg \lg \sigma)$ extra bits of space can be constructed in $O({n^{\prime}}/{\sqrt{\lg n}})$ time, which answers a $\pred(x)$ or $\succ(x)$ query in $O(\lg \lg \sigma)$ time and $O(1)$ accesses to the elements of $A$. 
	The construction and query procedures each requires access to a universal table of size $o(n)$ bits.
\end{lemma}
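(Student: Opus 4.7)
The plan is to combine Lemma~\ref{y_fast} with a sampling reduction and a universal-table-based in-block structure. Set the block size $b = \lfloor\sqrt{\lg n}/(4c)\rfloor$, so that any block of $b$ consecutive entries of $A$ occupies at most $\lg n/4$ bits in packed form and fits in $O(1)$ machine words. Conceptually partition $A$ into $\lceil n'/b\rceil$ blocks.

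First I would extract the first element of every block into a sorted packed sample array $A_s$ of size $\Theta(n'/\sqrt{\lg n})$, using a universal table to handle $\Theta(\lg n)$ bits of $A$ at a time (as in Lemma~\ref{lemma_split_1}); this extraction runs in $O(n'/\sqrt{\lg n})$ time. I then apply Lemma~\ref{y_fast} to $A_s$. Because $\lg\sigma = O(\sqrt{\lg n}) = O(b)$, the resulting structure occupies $O(|A_s|\lg\sigma) = O(n')$ bits, is built in linear sample-size time $O(n'/\sqrt{\lg n})$, and answers $\pred$/$\succ$ queries over $A_s$ in $O(\lg\lg\sigma)$ time. For any query $x$, this outer layer therefore returns, in $O(\lg\lg\sigma)$ time and without touching $A$, the index of the unique block that can contain $\pred(x)$ or $\succ(x)$.

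Second, I would attach to each block a compact in-block predecessor sketch occupying $O(b\lg\lg\sigma)$ bits that can be computed from the packed block content by a single universal-table lookup in $O(1)$ time per block, and that answers an in-block query in $O(\lg\lg\sigma)$ time using $O(1)$ accesses to $A$. Following the approach of Grossi et al.~\cite{grossi2009more}, the sketch records only the discriminating bit positions among the sorted block elements (each fitting in $\lceil\lg\lg\sigma\rceil$ bits); an in-block query walks the sketch guided by the appropriate bits of $x$, arrives at a candidate position whose corresponding element of $A$ is then accessed once to verify the answer, with at most one further neighbor access to select between predecessor and successor. Since the packed block is at most $\lg n/4$ bits and the query value is $\lg\sigma$ bits, the universal table has $2^{\lg n/4 + O(\sqrt{\lg n})} = o(n)$ entries and occupies $o(n)$ bits; it is prebuilt once.

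Combining the two layers yields query time $O(\lg\lg\sigma)$ with $O(1)$ accesses to $A$, total extra space $O(n'\lg\lg\sigma)$ bits, and construction time $O(n'/\sqrt{\lg n})$. The main obstacle is the in-block step: the $O(\lg\lg\sigma)$-bits-per-element budget rules out storing element values or duplicating the block, while the $O(1)$-access budget rules out standard binary search over $A$. Storing only bit-position sketches and relying on a single $A$-access to finalize the answer is what closes the gap; the fact that each block fits in $O(1)$ words is what makes the sketches computable by a universal lookup fast enough to stay within the preprocessing budget.
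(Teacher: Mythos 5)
Your overall plan is the same as the paper's: sample one element per block, run Lemma~\ref{y_fast} on the samples to locate the correct block in $O(\lg\lg\sigma)$ time, and attach a Patricia-style sketch of $O(b\lg\lg\sigma)$ bits to each block, using $o(n)$-bit universal tables to keep preprocessing at $O(n'/\sqrt{\lg n})$. (The paper samples the last element of each block rather than the first; either works.) So the architecture is right.

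There is, however, a real gap in the in-block step. You claim the sketch "answers an in-block query in $O(\lg\lg\sigma)$ time" by "walking the sketch guided by the appropriate bits of $x$." A blind Patricia trie over $b=\Theta(\sqrt{\lg n})$ keys has worst-case depth $\Theta(\min(b,\lg\sigma))$, so a naive walk costs $\Theta(\min(\sqrt{\lg n},\lg\sigma))$ time, which is $\omega(\lg\lg\sigma)$ for every $\sigma$ except a constant (e.g.\ for $\sigma=\Theta(\lg n)$ the walk is $\Theta(\lg\lg n)$ but $\lg\lg\sigma=\Theta(\lg\lg\lg n)$). Nothing in the proposal brings this down. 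The paper closes exactly this gap by also consulting a universal table at query time, indexed by the \emph{sketch} together with the relevant query bits, so that the Grossi--Orlandi--Raman-style blind search finishes in $O(1)$ time before the $O(1)$ verification accesses to $A$; the sketch costs $O(\sqrt{\lg n}\lg\lg n)=o(\lg n)$ bits, so this table is still $o(n)$ bits. Your universal table is instead keyed by the raw packed block ("the packed block is at most $\lg n/4$ bits and the query value is $\lg\sigma$ bits"), but under the indexing model the block is not stored verbatim and you are allowed only $O(1)$ accesses to $A$, so that table cannot be addressed at query time. Re-keying the table by the sketch (plus the query) and citing Grossi et al.'s $O(1)$-time in-block search fixes both the query-time claim and the table description; as written, neither the $O(\lg\lg\sigma)$ nor an $O(1)$ in-block bound is justified.
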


\begin{proof}
	Let $b=\lfloor \sqrt{\lg n}/(2c) \rfloor$. We divide $A$ into blocks of length $b$ each.
	We retrieve the last element of each block and construct a predecessor/successor data structure $R$ over these elements using Lemma~\ref{y_fast}.
	$R$ uses $O(({n^{\prime}}/{b})\times \lg \sigma)=O(n^{\prime})$ bits of space and can be constructed in $O({n^{\prime}}/{b})$ time.
	Then, over each block, we regard each integer in it as a binary string of length $\lg \sigma$ and construct a Patricia trie over the integers in this block, as done by Grossi et al.~\cite[Lemma 3.3]{grossi2009more}.
	This is a compressed bitwise trie with a skip value of $O(\lg\lg \sigma)$ bits stored at each node.
	It stores elements at the leaves in sorted order. 
	As the trie has $b$ leaves and $b-1$ internal nodes, its tree structure can be encoded in $O(b)$ bits.
	With skip values, each trie can be encoded in $O(b\lg\lg\sigma)$ bits, without encoding the $b$ elements at its leaves. 
	To construct such a trie fast, we use a universal $U$ which has an entry for any possible packed sequence $S$ of $b$ elements drawn from $[\sigma]$.
	This entry stores the Patricia trie (without the  elements stored at leaves) of $O(b\lg \lg \sigma)$ bits constructed upon $S$. 
	As there are at most $2^{(\lg \sigma)\times b}\le \sqrt{n}$ different entries in $U$, $U$ uses $O(\sqrt{ n} \times b\lg \lg \sigma)=o(n)$ bits. 
	With $U$, a Patricia trie over any block of $b$ elements can be constructed in constant time. 
	As there are $\lceil {n^{\prime}}/{b} \rceil$ blocks in total, the overall space usage of the tries and $R$ is $O(({n^{\prime}}/{b}) \times b \lg \lg \sigma + n^{\prime})=O(n^{\prime}\lg \lg \sigma)$ bits, and  
	and the overall processing time is $O({n^{\prime}}/{\sqrt{\lg n}})$.
	
	To answer a query, given an integer $y$, where $y \in [0, \sigma-1]$, we first perform a predecessor or successor query over $R$ to find the block $B$ containing $\pred(y)$ or $\succ(y)$ in $O(\lg \lg \sigma)$ time. 
	Then, with the help of an $o(n)$-bit universal table, we query over the trie built upon $B$ using the query algorithm by Grossi et al.~\cite{grossi2009more} in $O(1)$ time and $O(1)$ accesses of elements of $A$ to retrieve $\pred(y)$ or $\succ(y)$.
	Therefore, the overall query cost is $O(\lg \lg \sigma)$ time and $O(1)$ accesses to elements of $A$.
\end{proof}

We then extend Lemma \ref{pre_succ_index1} for a sequence of integers that are not necessarily distinct. 

\begin{lemma}
	\label{pre_succ_index_duplicates}
	Given a sorted packed sequence $A$ of $n^{\prime}$ integers from $[\sigma]$, where $n^{\prime}\leq \sigma\leq 2^{c\sqrt{\lg n}}$ for any arbitrary positive constant $c$, a data structure using $O(n^{\prime}\lg \lg \sigma)$ extra bits of space can be constructed in $O({n^{\prime}}/{\sqrt{\lg n}})$ time, which answers a $\pred(x)$ or $\succ(x)$ query in $O(\lg \lg \sigma)$ time and $O(1)$ accesses to the elements of $A$. 
	The construction and query procedures each requires access to a universal table of size $o(n)$ bits.
\end{lemma}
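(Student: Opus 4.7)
The plan is to reduce the problem with duplicates to the distinct case handled by Lemma~\ref{pre_succ_index1}. Since $A$ is sorted, equal elements appear consecutively, so I would mark the first occurrence of each value with a bit vector $B[0..n'-1]$ defined by $B[0]=1$ and $B[i]=1$ iff $A[i]\neq A[i-1]$. Let $d\le n'$ denote the number of distinct values and let $A'[0..d-1]$ be the (conceptual, not explicitly stored) sequence of distinct values of $A$ in sorted order, so that $A'[j]=A[\selop_1(B,j+1)]$. Using Lemma~\ref{bit_sequence}, I would represent $B$ with constant-time $\rankop$/$\selop$ support in $O(n')$ bits; the bits of $B$ themselves can be produced in $O(n'/\sqrt{\lg n})$ time by a universal table of $o(n)$ bits that, given one word of $A$ holding $\Omega(\sqrt{\lg n})$ packed elements, outputs the word of bits indicating inequality between each element and its predecessor.

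The second step is to apply the scheme from the proof of Lemma~\ref{pre_succ_index1} directly to $A'$, but under an indexing model in which $A'$ is not stored verbatim: any access $A'[j]$ is served in $O(1)$ time via $A[\selop_1(B,j+1)]$. Concretely, I would partition $A'$ into blocks of $b=\lfloor\sqrt{\lg n}/(2c)\rfloor$ elements, build the predecessor/successor structure $R$ of Lemma~\ref{y_fast} over the last element of each block, and build one Patricia trie per block via the universal table $U$ from the proof of Lemma~\ref{pre_succ_index1}. Since each trie encodes only its shape and skip values (the keys themselves are recovered through $A$ and $B$), it occupies $O(b\lg\lg\sigma)$ bits, so across the $\lceil d/b\rceil$ blocks the tries take $O(d\lg\lg\sigma)\subseteq O(n'\lg\lg\sigma)$ bits, and $R$ stores $O(d/b)$ keys of $\lg\sigma=O(\sqrt{\lg n})$ bits each, for a total of $O(d)\subseteq O(n')$ bits. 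Both fit within the claimed extra-space budget.

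To answer $\pred(x)$ (and symmetrically $\succ(x)$), I would first use $R$ to locate the block of $A'$ containing $\pred_{A'}(x)$ in $O(\lg\lg\sigma)$ time, then use the Patricia trie of that block together with $O(1)$ accesses into $A'$ (each realized by one access to $A$ via $\selop_1$ on $B$) to compute the index $j$ of $\pred_{A'}(x)$ in $O(1)$ further time. Finally, I would convert $j$ back to an index in $A$ by returning $\selop_1(B,j+2)-1$ when $j<d-1$ and $n'-1$ otherwise (the last copy of the value $A'[j]$); the $\succ$ case returns $\selop_1(B,j+1)$ directly. The overall query cost is $O(\lg\lg\sigma)$ time and $O(1)$ accesses to $A$.

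The main obstacle is performing the per-block construction without ever materializing $A'$: the universal-table construction in Lemma~\ref{pre_succ_index1} expects each block of $b$ consecutive elements of $A'$ to be delivered as a single packed word. I would resolve this by designing an additional universal table that, given one word of $A$ together with the corresponding slice of $B$, emits the marked distinct values in packed form along with their count. Feeding this stream into a two-word buffer to realign it into fixed-size blocks of $b$ elements lets me amortize the extraction across the $O(n'/\sqrt{\lg n})$ words of $A$, so the entire construction, including $B$, the packed $A'$-blocks, $R$, and the tries, runs in $O(n'/\sqrt{\lg n})$ time, as required.
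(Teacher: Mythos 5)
Your proposal is correct and follows essentially the same approach as the paper: both reduce to the distinct case by marking first occurrences in a bitvector $B$ (built with a universal-table pass and represented via Lemma~\ref{bit_sequence}), accessing the conceptual distinct-value sequence $A'$ through $\selop_1$ on $B$, and then reusing the block/Patricia-trie machinery of Lemma~\ref{pre_succ_index1}. The one place where you diverge is the construction: you avoid ever materializing the packed $A'$ by streaming extracted distinct values through a two-word realignment buffer, whereas the paper simply builds packed $B$ and $A'$ together with a single universal table lookup per word (flagging whether the first element of a block repeats the last of the previous), applies Lemma~\ref{pre_succ_index1} to $A'$ as a black box, and discards $A'$ afterward — since the working space $O(n'\lg\sigma)$ bits $= O(n'/\sqrt{\lg n})$ words is permissible, your extra buffering step, while sound, is an avoidable complication.
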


\begin{proof}
	We create a bitvector $B[0..n']$ in which $B[i]=1$ if $i=0$ or $A[i]>A[i-1]$, and represent it by Lemma \ref{bit_sequence} to support $\rankop$ and $\selop$.
	Thus $B$ records the position of the first occurrence of each distinct integer in $A$.
	We also define a sequence $A'[0..t-1]$, in which $A'[i] = A[\selop_1(B, i)]$, where $t$ is the number of distinct elements in $A$.
	$A'[0..t-1]$ then stores the distinct elements of $A$.
	We construct a predecessor/successor structure over $A'$ using Lemma \ref{pre_succ_index1}.
	$A'$ is needed during construction but is discarded at the end of preprocessing, as each element, $A'[i]$, can be accessed by retrieving $A[\selop_1(B, i)]$.
	By Lemmas~\ref{bit_sequence} and \ref{pre_succ_index1}, $B$ and the predecessor/successor structure over $A'$ occupy $O(n'\lg \lg \sigma)$ bits in total. 
	
	Next, we give a query algorithm for $\succ(x)$ over $A$;  the support for $\pred(x)$ is similar.
	We perform a successor query over $A'$ to retrieve the successor, $x'$, in $A'$, which uses $O(\lg \lg \sigma)$ time and $O(1)$ accesses of elements of $A$.
	This will also give the position, $i$, of $x'$ in $A'$. 
	Then, we find $\succ(x)$ over $A$ in constant time, which is $\selop_1(B, i)$. 
	Overall, we require $O(\lg \lg \sigma)$ time and $O(1)$ accesses of elements of $A$.
	
	To build these data structures, we first show how to compute $B$ and $A'$. 
	Let $b=\lfloor \sqrt{\lg n}/(2c) \rfloor$.
	We use a universal table $U$ to generate $b$ elements of $B$ and $A'$ in constant time.
	$U$ has an entry for each possible packed sequence $S[0..b-1]$ drawn from $[\sigma]$ and each possible flag $f \in\{0,1\}$, which stores a bitvector $V[0..b-1]$ and  a packed sequence $S'$ of length at most $b$ defined as follows:
	If $f = 1$, we set $V[0] = 1$ and $V[0] = 0$ otherwise.
	For each $i\in[1, b-1]$, if $S[i-1] = S[i]$, then $V[i]$ is set to 0. 
	Otherwise, $V[i]$ is set to 1.
	Then, the length of $S'$ is equal to the number of $1$'s in $V$, and $S'[i] = S[\rankop_1(V, i)]$. 
	As $U$ has $O(n^{1/2})$ entries each using $O(\polylog(n))$ bits, $U$ occupies $o(n)$ bits. 
	With $U$, we can generate $b$ bits in $B$ and $b$ entries of $A'$ in one table lookup, and thus the content of $B$ and $A'$ can be computed in $O(n/b)$ time.
	Adding the construction time needed to build query structures over them, the overall preprocessing time is $O(n/b) = O({n^{\prime}}/{\sqrt{\lg n}})$. 
\end{proof}

For general integer sequences, we will use the $\pred(x)/\succ(x)$ data structure by Chan et al.~\cite{chan2011orthogonal}, which is summarized in the following lemma (even though Chan et al. did not analyze the construction time, it follows directly from previous results on the data structure components used):

\begin{lemma}[{\cite[Section 2]{chan2011orthogonal}}]
	\label{pre_succ_chan}
	Given an increasingly sorted sequence $A$ of $n^{\prime}$ distinct integers from universe $[n]$ where $n'\le n$, a data structure using extra $O(n^{\prime}\lg \lg n)$ bits of space can be constructed in linear time, which answers a $\pred(x)$ or $\succ(x)$ query in $O(\lg \lg n)$ time and $O(1)$ accesses to elements of $A$. 
	The query algorithm requires access to a universal table of $o(n)$ bits.
\end{lemma}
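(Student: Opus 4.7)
The plan is to construct a two-level predecessor/successor structure in the style of van~Emde~Boas and $y$-fast tries adapted to the indexing model, exactly as Chan et al.~\cite{chan2011orthogonal} sketch, and account for each component's construction cost. First, I would form a sample by selecting every $\lfloor \lg n\rfloor$-th element of $A$, producing a sorted list of $O(n'/\lg n)$ representatives. Over these representatives I would apply Lemma~\ref{y_fast} to obtain a deterministic predecessor/successor structure that occupies $O((n'/\lg n)\cdot \lg n)=O(n')$ bits, is built in linear time in the number of samples, and answers queries in $O(\lg\lg n)$ time. Given a query $x$, this top structure returns in $O(\lg\lg n)$ time the block of $\lg n$ consecutive elements of $A$ that contains $\pred(x)$ or $\succ(x)$.

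Second, inside each block of $\lg n$ consecutive elements, I would build a Patricia trie in the style used by Grossi et al.\ and already exploited in the proof of Lemma~\ref{pre_succ_index1}: the trie stores only its topology and an $O(\lg\lg n)$-bit skip value at each node, so that the keys at its leaves need not be kept verbatim and can be fetched from $A$ on demand. Each such trie fits in $O(\lg n\cdot \lg\lg n)$ bits and, since the block has only $\lg n$ already-sorted elements, can be built directly in $O(\lg n)$ time by inserting them left-to-right and maintaining the rightmost root-to-leaf path. Summed over the $O(n'/\lg n)$ blocks this contributes $O(n'\lg\lg n)$ bits and $O(n')$ total construction time, matching the claimed bounds.

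To answer $\pred(x)$ or $\succ(x)$, I would first locate the correct block via the top structure in $O(\lg\lg n)$ time, then descend the block's Patricia trie using the blind-search technique: guided only by the skip values, we reach a candidate leaf, probe $A$ at that leaf, and use one additional probe (to an adjacent leaf identified from the last mismatching bit) to recover the true predecessor or successor. With an $o(n)$-bit universal table that decodes skip-value comparisons on chunks of $\Theta(\lg n)$ bits in constant time, this yields $O(\lg\lg n)$ query time and $O(1)$ accesses to $A$.

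The main technical point, rather than a genuine obstacle, is to justify linear construction time in the indexing model when $A$ is over the large universe $[n]$: unlike the small-alphabet variant in Lemma~\ref{pre_succ_index1}, we cannot feed an entire block into a single universal-table lookup because one block already spans $\Theta(\lg^2 n)$ bits. Instead, the Patricia tries must be built block by block by the direct $O(\lg n)$-time algorithm above, exploiting the fact that each block is already sorted so that each new leaf is inserted along the trie's rightmost path in amortized $O(1)$ time. This avoids any $\Theta(\lg n)$ slowdown per element and produces the promised $O(n')$ preprocessing cost and $O(n'\lg\lg n)$-bit extra space.
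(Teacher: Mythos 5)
The gap is in the block-level query. You choose blocks of $\lg n$ consecutive elements from the universe $[n]$ and claim a blind descent in the resulting Patricia trie takes $O(\lg\lg n)$ time via lookups into an $o(n)$-bit universal table that processes $\Theta(\lg n)$ bits of the trie per step. But a Patricia trie over $\lg n$ keys has up to $\lg n-1$ internal nodes and can have depth $\Theta(\lg n)$ (take keys $0,1,3,7,15,\ldots$), so the descent visits $\Theta(\lg n)$ nodes in the worst case. More importantly, at each visited node the descent must examine a bit of the query key $x$ at a position given by the cumulative skip, and these positions can range over the entire $\lg n$-bit span of $x$. A table lookup that handles a $\Theta(\lg n)$-bit chunk of the trie therefore has to be indexed by essentially all of $x$ as well, which forces a table of at least $2^{\lg n}=n$ entries --- far from $o(n)$ bits. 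Your argument nowhere explains how the table avoids depending on all of $x$.

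This is exactly the constraint that Lemma~\ref{pre_succ_index1} is engineered around: there the universe is $\sigma\le 2^{c\sqrt{\lg n}}$, the block size is $b=\Theta(\sqrt{\lg n})$, and consequently the packed block, the trie, \emph{and} the query key together occupy $o(\lg n)$ bits, so a single $o(n)$-bit table indexed by (trie, key) suffices. None of that transfers to universe $[n]$, where the key alone already fills a word. To repair the approach you would need something like Fredman--Willard sketch extraction to pull out the $O(B)$ bits of $x$ at the trie's branching positions in $O(1)$ time, together with a smaller block such as $B=\Theta(\lg n/\lg\lg n)$ so that (trie, sketch) is $o(\lg n)$ bits; you mention neither. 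For context, the paper itself does not prove this lemma --- it is cited to Chan et al.\ with only the remark that the preprocessing bound follows from the components used --- so while your two-level template (top-level $\pred$/$\succ$ over samples plus per-block tries) is the right shape, the in-block machinery for large universes is the part that remains unjustified.
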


\section{Proofs Omitted From Section \ref{section_partial_rank}}

\subsection{Fast Construction of Sequence $M$}
\label{app: generate_M}

We show how to create $M$ efficiently with the help of a universal table $U$.
This table has an entry for each possible pair $(D, t)$, where $D$ is a sequence of length $b = \lfloor\frac{\lg n}{2\lceil\lg \sigma\rceil}\rfloor$ drawn from $[\sigma]$ and $t$ is an integer in $[0, b]$.
If $t = 0$, this entry stores a sequence of length $b$ which is obtained by appending a $1$-bit to the end of the binary expression of each element in $D$.
Otherwise, this entry stores a sequence of length $b+1$ consisting of three sections: the first section is obtained by appending a $1$-bit to the end of the binary expression of each of the first $t$ elements in $D$, the second section is a boundary element, and the third section is obtained by appending a $1$-bit to the end of the binary expression of each of the last $b-t$ elements in $D$.
As there are at most $n^{1/2}$ possible sequences of length $b$ drawn from $\sigma$ and $t$ has $b+1$ possible values, $U$ has at most $n^{1/2}(b+1)$ entries.
Since each entry is encoded in at most $(b+1)(\lceil\lg \sigma\rceil+1)=O(\polylog(n))$ bits, $U$ uses $o(n)$ bits.
With $U$, we can scan $A$ and process $b$ of its elements in constant time; whether or where a boundary element should be created when processing these $b$ elements can be inferred by keeping track of the number of elements of $b$ that we have scanned so far. Note that at most one boundary element will be created when reading $b$ elements from $A$ as $b < \lg n < \sigma$. 
The time needed to create $M$ is hence $O(n'/b) = O(n^{\prime}\lg \sigma/\lg n)$. 

\subsection{Processing $M(u)$ Efficiently with $U'$}
\label{app: universal_table_lemma_B_c}

We show how to process $M(u)$ efficiently with the help of a universal table $U'$.
Recall that $b = \lfloor\frac{\lg n}{2\lceil\lg \sigma\rceil}\rfloor$. 
$U'$ has an entry for each possible pair  $(E, c)$, where $E$ is a
sequence of length $b$ drawn from universe $[2\sigma]$ and $c$ is an integer in $[0, \lceil\lg \sigma\rceil-1]$.
This entry stores a pair of packed sequences $E_0$ and $E_1$ defined as follows: $E_0$ or $E_1$ stores the boundary elements in $E$ and the 
regular elements in $E$ whose $c$-th most significant bit is $0$ or $1$, respectively.
The elements in $E_0$ retain their relative order in $E$, and the same is true with $E_1$. 
As $U'$ has $2^{b\times(\lceil\lg \sigma\rceil+1)}\times \lceil\lg \sigma\rceil$ entries and each entry stores a pair of packed sequences occupying $O(b\lceil\lg \sigma\rceil)$ bits in total, $U'$ uses $o(n)$ bits. 
With $U'$, we can process $M(u)$ in $O(|M(u)|\lg \sigma /\lg n+1)$ time. 

\subsection{Constructing Sequences $P_0, P_1,\ldots, P_{\sigma-1}$}
\label{app: construct_p_seq}

\begin{lemma}
	\label{lemma_C_k}
	Sequences $P_0, P_1,\ldots, P_{\sigma-1}$ can be constructed in $O(n^{\prime}{\lg^2 \sigma}/{\lg n}+ \sigma)$ time in total. 
\end{lemma}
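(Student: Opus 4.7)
The plan is to compute, for each chunk $k\in[0,n'/\sigma)$, the sequence $R_k[0..\sigma-1]$ with $R_k[c]=\prank(A_k,c)$, and to store all of these one after another in a packed array so that $P_c[k]=R_k[c]$ can be retrieved at bit offset $(k\sigma+c)\lceil\lg\sigma\rceil$. This layout is equivalent to separate sequences $P_0,\ldots,P_{\sigma-1}$ for the $O(1)$-time access needed by the $\prank$ formula given just before Lemma~\ref{lemma_C_k}, so it suffices to compute the $R_k$'s in total time $O(n'\lg^2\sigma/\lg n+\sigma)$.

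For each chunk $A_k$ I would build a packed binary wavelet tree $T_k$ over $A_k$ with value and index arrays at the leaves, as in Lemma~\ref{lemma:wavelet_construct_d_packed} instantiated with $n'=\sigma$, alphabet $[\sigma]$, and $d=2$. Using Lemma~\ref{lemma:wavelet_construct_d_packed} naively would cost $O(\sigma\lg^2\sigma/\lg n+\sigma)$ per chunk, whose additive $\sigma$ summed over $n'/\sigma$ chunks would give the undesirable $\Theta(n')$. To avoid this, I would allocate one shared skeleton of the wavelet tree on $[\sigma]$ once (in $O(\sigma)$ time) and, for each chunk, maintain at every level of the skeleton a single packed sequence that is the left-to-right concatenation of all nodes' sequences at that level. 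The bit split at each level is then performed by one pass of the universal table of Lemma~\ref{lemma:wavelet_construct_d_packed} over the $O(\sigma\lg\sigma/\lg n)$ words holding the concatenated sequence, without any per-node overhead. Consequently the per-chunk cost of constructing $T_k$, its value array, and its index array becomes $O(\sigma\lg^2\sigma/\lg n)$.

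At the leaves of $T_k$, the $i$-th element of the leaf for value $v$ has partial rank $i+1$ and original position $I_v[i]$ in $A_k$. Together the leaves contribute exactly $\sigma$ pairs $(I_v[i],i+1)$, one for each $c\in[0,\sigma-1]$, that are already present in the packed level-concatenated sequence. I would pack these into a single packed array of $2\lceil\lg\sigma\rceil$-bit records and stably sort them by the first coordinate; because the first coordinates form a permutation of $[0,\sigma-1]$, the $c$-th sorted record is $(c,R_k[c])$, and extracting the second coordinates produces $R_k$ directly in packed form, which is appended to the output array. The stable sort is a packed merge sort: with a universal table, two sorted runs each fitting $\Theta(\lg n/\lg\sigma)$ records into a single word can be merged in $O(1)$ time, so each of the $\lg\sigma$ merge levels costs $O(\sigma\lg\sigma/\lg n)$, giving $O(\sigma\lg^2\sigma/\lg n)$ per chunk and $O(n'\lg^2\sigma/\lg n)$ total; combined with the one-time $O(\sigma)$ skeleton and table setup we obtain exactly $O(n'\lg^2\sigma/\lg n+\sigma)$.

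The main obstacle is the first step: making the per-chunk wavelet-tree construction cost strictly $O(\sigma\lg^2\sigma/\lg n)$, without the additive $\sigma$ in Lemma~\ref{lemma:wavelet_construct_d_packed}. The level-concatenation trick on a preallocated shared skeleton is exactly what buys this, and the correctness hinges on the fact that, at every level, the concatenated sequence still carries, for each element, both the bits of its value used for further classification and its local index in $A_k$, so the standard packed bit-splitting step of Lemma~\ref{lemma:wavelet_construct_d_packed} still applies block by block. The second delicate point is the packed stable merge subroutine for records of $O(\lg\sigma)$ bits; since the keys are at most $\lg\sigma\le\sqrt{\lg n}$ bits and $\sigma\le 2^{\sqrt{\lg n}}$, the required merge table fits in $o(n)$ bits and can be built in $o(n)$ time as part of the universal-table setup.
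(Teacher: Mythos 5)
Your overall plan genuinely differs from the paper's in its second half: the paper computes each chunk's (position, partial-rank) pairs in leaf order and then runs a separate global reorganization phase (a wavelet tree over the concatenation $I'$ of all chunks' index sequences, of total length $n'$) to regroup the answers into $P_0,\ldots,P_{\sigma-1}$, whereas you sort each chunk's $\sigma$ pairs back into position order locally with a packed merge sort and use an interleaved memory layout so that no global phase is needed. That part of your argument is sound: since the only use of the $P_c$'s is $O(1)$ random access to $P_\tau[t]$ in the query formula, the transposed layout is acceptable; the packed merge sort of $\sigma$ records of $O(\lg\sigma)$ bits in $O(\sigma\lg^2\sigma/\lg n)$ time is exactly Lemma~\ref{packed_sort}; and because the positions within a chunk form a permutation of $[0,\sigma-1]$, stability and the extraction of $R_k$ are unproblematic.

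The gap is in the first half, at the step you yourself flag as the main obstacle. A single table pass that splits each block of $b$ packed elements by bit $l$ does \emph{not} turn the level-$l$ concatenation of an (MSD) wavelet tree into the level-$(l+1)$ concatenation: the destination of an element is determined by its node (its top $l$ bits) \emph{together} with bit $l$, not by bit $l$ alone. If a block straddles the segments of two nodes $u$ and $u'$, the global split places the $0$-elements of $u$ adjacent to the $0$-elements of $u'$, whereas in the correct level-$(l+1)$ concatenation the $1$-elements of $u$ must sit between them. Respecting the up-to-$2^l$ node boundaries at level $l$ reintroduces $\Theta(\sigma)$ work per chunk at the deep levels, which is precisely the overhead you were trying to eliminate. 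The repair is easy and stays within your own toolbox: all you need from the tree is its leaf level, i.e., a stable grouping of the chunk's (value, position) pairs by value plus each pair's rank inside its group, and that is obtained by one more application of the packed merge sort (by value this time), together with an $o(n)$-bit table to write out the in-group ranks blockwise --- note these ranks are not ``already present'' in the packed data and must be materialized. For comparison, the paper's own fix for the per-node overhead is different: it truncates the per-chunk tree at nodes of size at most $b$, bounding the number of nodes by $O(\sigma\lg^2\sigma/\lg n)$.
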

\begin{proof}
	The construction consists of two phases. 
	In the first phase, we compute the set of pairs $R_k = \{(i, \prank(A_k, i))| 0 \le i \le \sigma-1\}$ for each chunk $A_k$.
	Even though $P_i[k] =  \prank(A_k, i)$ and thus the entries of all the $P_i$'s have been computed in this phase, the pairs themselves generated for $A_k$ are not in any order that allows us to directly assign values from these pairs to entries of $P_i$'s quickly enough.
	Thus, in the second phase, we reorganize all $n'$ pairs computed from all the chunks, to compute $P_0, P_1,\ldots, P_{\sigma-1}$ efficiently.
	
	We first show how to compute the pair set $R_k$ for each $A_k$ efficiently. 
	Let $I[0, \sigma-1]$ denote a packed sequence such that $I[i]=i$ for each $i\in[0, \sigma-1]$.
	Note that $I$ can be constructed once in $O(\sigma)$ time and shared with all chunks.
	By Lemma~\ref{lemma:wavelet_construct_d_packed}, a binary wavelet tree, in which node $u$ is associated with $A(u)$ and $I(u)$ as defined before, over $A_k$ could be constructed in $O(\sigma\lg^2 \sigma/\lg n+ \sigma)$ time.
	However, the second term $O(\sigma)$, when summed over all $n'/\sigma$ chucks, is too expensive to afford.
	Thus, we modify the structure of a wavelet tree to decrease this term.
	In the modified tree, when a node $v$ satisfies $|A(v)|\leq b = \lfloor\frac{\lg n}{2\lceil\lg \sigma\rceil}\rfloor$, we make $v$ a leaf node without any descendants. 
	With this modification, we observe the following two properties. 
	First, if a leaf node $l$ satisfies $|A(l)| > b$, then the tree level of $l$ must be $\lg \sigma$ and all entries of $A(l)$ store the same symbol. 
	Second, as there are at most $\lceil \sigma/b \rceil$ nodes at each level, the modified tree has $O(\sigma/b\times \lg \sigma)=O(\sigma\lg^2 \sigma/\lg n)$ nodes.
	Following from the analysis of the algorithm in the proof of Lemma~\ref{lemma:wavelet_construct_d_packed}, the modified tree can be constructed in $O(\sigma\lg^2 \sigma/\lg n)$ time.
	After this tree is constructed, we only keep the sequences $A(l)$ and $I(l)$ for each leaf node $l$ and call them {\em leaf sequences}. We discard the rest of the tree. 
	
	To further compute $R_k$ using these leaf sequences, observe that, for any symbol $\alpha$, there exists one leaf $l$ such that $A(l)$ contains all the occurrences of $\alpha$ in $A$. 
	Thus $(I(l)[i], \prank(A_k, I(l)[i])) = (I(l)[i], \prank(A(l), i))$ holds, which we can use to reduce the problem of computing the pairs in $R_k$ to the problem of computing the answer to a partial rank query at each position of $A(l)$ for each leaf $l$. 
	Hence for each leaf $l$, we define a packed sequence $Q(l)[0..|A(l)|-1]$ in which $Q(l)[i]=\prank(A(l), i)$ to store these answers. 
	To construct $Q(l)$ efficiently, we consider two cases. 
	When $|A(l)|\leq b$, we apply a universal table $U''$ to generate $Q(l)$ in constant time. 
	$U''$ has an entry for each possible pair $(F, x)$, where $F$ is a sequence of length $b$ drawn from universe $[\sigma]$, and $x$ is an integer in $[0, b]$.
	This entry stores a packed sequence $G[0..x]$ in which $G[i]=\prank(F, i)$.
	Similar to $U$ in Appendix \ref{app: generate_M}, $U''$ uses $o(n)$ bits.
	When $|A(l)|> b$, all entries of $A(l)$ store the same symbol. 
	Thus, we have $Q(l)[i]=i$ for each $i\in[0, |A(l)|-1]$, and hence we can create $Q(l)$ by copying the first $|A(l)|$ elements from the sequence $I$ which we created before. 
	In either case, $Q(l)$ can be constructed in $O(|A(l)|\lg \sigma/\lg n+1)$ time.
	Let $l_i$ denote the $(i+1)$-st leaf visited in a preorder traversal of the tree, and $f$ the number of leaves.
	Since $\sum_{i=0}^f|Q(l_i)| = \sigma$ and $f = O(\sigma\lg^2 \sigma/\lg n)$, the total time required to build $Q(l_0), Q(l_1), \ldots, Q(l_{f-1})$ is $O(\sigma\lg^2 \sigma/\lg n)$.
	Then we construct the concatenated packed sequence $I_k = I(l_0)I(l_1)\ldots I(l_{f-1})$ and $Q_k = Q(l_0)Q(l_1)\ldots Q(l_{f-1})$.
	It requires $O(\sigma\lg^2 \sigma/\lg n)$ to concatenate these sequences if we process $\Theta(\lg n)$ bits, i.e., $O(1)$ words, in constant time by performing bit operations. 
	Since for any $i \in [0, \sigma-1]$, $(I_k[i], Q_k[i])$ is a distinct pair in $R_k$, $I_k$ and $Q_k$ store all the pairs in $R_k$.
	We perform the steps in this and the previous paragraphs for all the chunks in $A$, and the total time spent in this phase is $O(n'{\lg^2 \sigma}/{\lg n}+\sigma)$. 
	
	Next we construct $P_0, P_1,\ldots, P_{\sigma-1}$ efficiently using the pairs computed in the previous phase. 
	We first build in $O(n'\lg^2 \sigma/\lg n)$ time two concatenated packed sequences each of length $n'$:  $I'=I_0I_1\ldots I_{n'/\sigma-1}$ and $Q = Q_0Q_1\ldots Q_{n'/\sigma-1}$.
	Then we construct a binary wavelet tree over $I'$.
	Each node, $u$, of the wavelet tree is associated with two sequences, $I'(v)$ which contains all the elements of $I'$ whose values are within the range represented by $v$, retaining their relative order in $I'$, and $Q(v)$ in which $Q(v)[i]$ is the element in $Q$ corresponding to $I'(v)[i]$.
	The  algorithm of constructing a binary wavelet tree shown in Lemma~\ref{lemma:wavelet_construct_d_packed} can be modified easily to construct this wavelet tree in $O(n'\lg^2 \sigma/\lg n+\sigma)$ time. 
	Let $l'_i$ denote the $(i+1)$st leaf of this wavelet tree in preorder.
	Observe that all the entries in $I'(l'_i)$ store $i$, and $I'(l'_i)[j]$ initially came from $A_j$, i.e., $I'(l'_i)[j]$ corresponds to the $i$th position in chunk $A_j$.
	Therefore, $Q(l'_i)[j] = P_i[j]$, and we have $P_i = Q(l'_i)$.
	The processing time required for this phase is also $O(n^{\prime}{\lg^2 \sigma}/{\lg n}+\sigma)$, which is the same as the bound for the first phase.
	Therefore, the total time  required to construct all sequences $P_0, P_1,\ldots, P_{\sigma-1}$ is $O(n'{\lg^2 \sigma}/{\lg n}+\sigma)$.
\end{proof}

\subsection{Space Analysis of Data Structure in Lemma \ref{rank_prime_1}}
\label{app: space_rank_prime_queryspace}
\begin{lemma}
	The structures built in Lemma \ref{rank_prime_1} occupy $n^{\prime}\lg \sigma+ o(n^{\prime}\lg \sigma)$ extra bits of space.
\end{lemma}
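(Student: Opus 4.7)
The plan is to sum the extra bits contributed by the two families of structures constructed in Section~\ref{section_partial_rank}, namely $\{B_c\}_{c=0}^{\sigma-1}$ and $\{P_c\}_{c=0}^{\sigma-1}$, after disposing of the easy range $\sigma \le \lg n$ directly by Lemma~\ref{lemma_rank_prime_small}. So first I would assume we are in the regime $\lg n < \sigma \le 2^{\sqrt{\lg n}}$ treated in the section, and note that here $\lg \sigma > \lg\lg n = \omega(1)$; this will let me absorb various lower-order terms into $o(n^{\prime}\lg \sigma)$ at the end.

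Next, I would handle the $P_c$'s, which should account for the leading $n^{\prime}\lg\sigma$ term. Each $P_c$ is a packed array of $n^{\prime}/\sigma$ entries, and every entry is a partial rank within a chunk of size $\sigma$ and thus fits in $\lceil \lg(\sigma+1)\rceil$ bits. Summing over the $\sigma$ values of $c$ immediately yields $n^{\prime}\lceil \lg \sigma \rceil + O(n^{\prime})$ bits, and the $O(n^{\prime})$ term is swallowed by $o(n^{\prime}\lg \sigma)$.

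I would then turn to the $B_c$'s, which should contribute only lower-order space. By definition $|B_c|$ equals the number of occurrences of $c$ in $A$ plus the $n^{\prime}/\sigma$ zero separators, so $\sum_c |B_c| = 2n^{\prime}$. Each $B_c$ is represented via Lemma~\ref{bit_sequence}, which uses its verbatim bits plus $o(|B_c|)$ extra. To guarantee these lower-order overheads aggregate cleanly across $\sigma$ possibly short bitvectors, I would concatenate all $B_c$'s into a single bitvector of length $2n^{\prime}$ together with an $O(\sigma \lg n) = o(n)$-bit directory of starting positions (using $\sigma \le 2^{\sqrt{\lg n}}$), and apply Lemma~\ref{bit_sequence} a single time, producing $O(n^{\prime})$ bits in total, which is again $o(n^{\prime}\lg\sigma)$. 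This is the step I expect to be the main subtlety; the single-concatenation trick is the cleanest way to dispose of any per-vector constant overhead that a naive sum of $o(|B_c|)$ terms would leave undischarged.

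Finally, I would remark that the universal tables $U$, $U'$, $U''$ used throughout the construction each occupy $o(n)$ bits and are shared across components under the model of Section~\ref{sect:preliminaries}, so they do not enter this extra-bits count. Adding the $P_c$ and $B_c$ contributions therefore gives $n^{\prime}\lceil \lg \sigma \rceil + o(n^{\prime}\lg \sigma) = n^{\prime}\lg \sigma + o(n^{\prime}\lg \sigma)$ extra bits, matching the claim.
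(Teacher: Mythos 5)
Your decomposition and accounting mirror the paper's own proof exactly: the $P_c$'s contribute the leading $n'\lceil\lg\sigma\rceil$ bits because each chunk has $\sigma$ positions, the $B_c$'s collectively have length $2n'$ (one $1$ per symbol occurrence, $\sigma$ zeros per chunk totaling $n'$), and the $\sigma>\lg n$ assumption swallows the $O(n')$ slack into $o(n'\lg\sigma)$. The one place you go beyond the paper is the concatenation trick for the $B_c$'s; the paper simply asserts the aggregate Lemma~\ref{bit_sequence} overhead is $o(n')$ without addressing how $\sigma$ separate $o(|B_c|)$ terms sum, so your explicit care there is a genuine improvement in rigor.

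That said, the directory you propose has a small budget mismatch worth noting. You bound it by $O(\sigma\lg n)=o(n)$ bits, but the lemma promises $n'\lceil\lg\sigma\rceil+o(n'\lg\sigma)$ \emph{extra} bits with no separate $o(n)$ allowance for data-dependent storage (the $o(n)$ convention in the preliminaries applies only to universal tables, whose contents depend solely on $n$, whereas the starting positions of the $B_c$'s depend on the input $A$). In the regime $n'\approx\sigma$ with $\lg n\approx\lg^2\sigma$, $\sigma\lg n$ is not $o(n'\lg\sigma)$, so the explicit pointer directory would overshoot. The fix is cheap and keeps your concatenation idea intact: instead of storing explicit offsets, store a delimiter bitvector of length $2n'+\sigma$ marking where each $B_c$ begins within the concatenation, support $\selop$ on it via Lemma~\ref{bit_sequence}, and locate $B_c$ in $O(1)$ time with $O(n')=o(n'\lg\sigma)$ bits. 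With that substitution your argument is airtight and somewhat more careful than the paper's.
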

\begin{proof}
	In $B_c$, each $1$ bit corresponds to an occurrence of symbol $c$ in $A$, while each $0$ corresponds to a chunk. Thus, these bit vectors have $n'$ $1$s and $n'/\sigma \times \sigma = n'$ $0$s in total. Therefore, the lengths of all these bit vectors sum up to $2n'$. By Lemma~\ref{bit_sequence}, $o(n')$ bits are needed to augment them to support $\rankop$ and $\selop$. As each chunk has $\sigma$ elements, encoding an entry of each $P_c$ requires $\lceil\lg \sigma\rceil$ bits. Thus $P_0$, \ldots, $P_{\sigma-1}$ occupy $n'\lceil\lg \sigma\rceil$ bits in total. The total space usage of all the data structures in this section is therefore $2n^{\prime} + o(n^{\prime})+n^{\prime}\lceil\lg \sigma \rceil$ bits, which is $n^{\prime}\lg \sigma+ o(n^{\prime}\lg \sigma)$ when $\sigma>\lg n$.
\end{proof}

\section{Proofs Omitted From Section \ref{sect:ball_inheritance}}
\label{app:ball_inheritance}


We first discuss how to construct the ball inheritance structures of Chan et al.~\cite{chan2011orthogonal} efficiently over generalized wavelet trees,
by replacing some of their data structure components with those we designed in previous sections to achieve faster construction time.

When used to represent the point set $N$, each node, $u$, of $T$ is conceptually associated with an ordered list, $N(u)$, of points whose $x$-coordinates are within the range represented by $u$, and these points are ordered by $y$-coordinate.
To save space, Chan et al.~\cite{chan2011orthogonal} do not encode each ordered point list explicitly.
Instead, they define a sequence, $\Sp(u)$, of {\em skipping pointers} for $u$, in which $\Sp(u)[i]$ stores, at a certain number of levels below $u$, which descendant of $u$ has $N(u)[i]$ in its ordered list of points; different choices of the distance between $u$ and its descendant give different time-space tradeoffs.
Then, since both $N(u)$ and $N(\Sp(u)[i])$ order points by $y$-coordinate, the result of a $\prank(\Sp(u), i)$ query is the position of the point $N(u)[i]$ in $N(\Sp(u)[i])$.
Thus, to compute $\point(v, i)$, we can follow these skip pointers starting from $v$ and perform $\prank$ queries along the way, until we reach the leaf level of $T$.
As Chan et al. store the coordinates of each point in the ordered lists associated with the leaves, this process will answer $\point(v, i)$. 

The following lemma gives analyses of the approach of Chan et al.~\cite{chan2011orthogonal}, in which the analysis of preprocessing time is restricted to the special cases that we need for orthogonal range searching problems. 
\begin{lemma}
	\label{ball_inheritance_point_para}
	Let $X[0..n'-1]$ be a sequence drawn from alphabet $[\sigma]$ and $Y[0..n'-1]$ be a sequence in which $Y[i]=i$ for each $i\in[0..n'-1]$, where $max(\sigma, n')\leq n$.
	A $d$-ary wavelet tree over $X$, where $d$ is a power of $2$ upper bounded by $\sigma$, can be represented using $O(n'\tau (\lg \sigma) \log_{\tau} (\log_d \sigma)+n'\lg n'+\sigma w)$ bits to support $\point$ in $O(\log_{\tau} (\log_d \sigma))$ time.
	Given $X$ and $Y$ as input, this tree can be constructed in $O(n'\tau\lg^2 \sigma/\lg n+n'\lg n'\lg \sigma/\lg n+\sigma\log_d \sigma)$ time if $\sigma=O(2^{O(\sqrt{\lg n})})$. 
	If $d\ge 2^{\sqrt{\lg n}}$, the construction time is $O((n'+\sigma)\log_d \sigma)$. The construction requires a universal table of $o(n)$ bits. 
	\begin{REMOVED}
		Given $X$ and $Y$ as input, a $d$-ary wavelet tree over $X$ using $O(n'\tau (\lg \sigma) \log_{\tau} (\log_d \sigma)+n'\lg n'+\sigma w)$ bits of space can be constructed in $P(n)$ time to support $\point$ in $O(\log_{\tau} (\log_d \sigma))$ time, where (a) $P(n)=O(n'\tau (\lg^2 \sigma/\lg n)+\sigma\log_d \sigma)$ if $\sigma=O(2^{O(\sqrt{\lg n})})$, (b) $P(n)=O(n'\log_d \sigma+n'\tau\lg \sigma/\sqrt{\lg n}+\sigma\log_d \sigma)$ if $\sigma=\Omega(2^{O(\sqrt{\lg n})})$ and $d<2^{\sqrt{\lg n}}$, \footnote{In this case, it is possible to further decrease the first term in $P(n)$ by combining a wavelet tree construction algorithm and integer sorting, but we do not go into details since this case is never encountered in our solutions to orthogonal range reporting/successor.} (c) $P(n)=O((n'+\sigma)\log_d \sigma)$, if $\sigma=\Omega(2^{O(\sqrt{\lg n})})$ and $d\ge 2^{\sqrt{\lg n}}$. 
	\end{REMOVED}
\end{lemma}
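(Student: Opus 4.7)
My plan is to generalize the coloring-plus-skipping-pointer construction of Section~\ref{sect:ball_inheritance} (Lemma~\ref{ball_inheritance_point}), taking the skip parameter to be the arbitrary $\tau$ of the statement rather than the particular value $\log_d^{\epsilon}\sigma$ used there. First I would build a $d$-ary wavelet tree $T$ over $X$ with value and index arrays---using Lemma~\ref{lemma:wavelet_construct_d_packed} on the packed inputs in case (a), and Lemma~\ref{lemma:wavelet_construct_d_unpacked} in case (b) where the tree has height at most $\sqrt{\lg n}$. Next I would assign to each level $l$ the color $c(l)=\min\{C,\max\{c:\tau^c \mid l\}\}$, where $C=\lfloor\log_\tau(\log_d\sigma)\rfloor$, so that in particular the leaf level has color $C$. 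At each internal node $u$ at level $l$ of color $c<C$ I would store a packed sequence $\Sp(u)$ of skipping pointers whose $i$th entry is the rank among the descendants of $u$ at the nearest lower level $l'$ of color $\ge c+1$ (so $l'-l\le \tau^{c+1}$ and each entry fits in $\tau^{c+1}\lg d$ bits), while at a color-$C$ internal node I would store skip pointers that jump directly to a leaf descendant (at most $\lg\sigma/\tau$ bits each). Each leaf would store the $y$-coordinates of its points in $\lg n'$ bits each, the $x$-coordinate being implicit in the leaf position. Finally, each $\Sp(u)$ would be equipped with a constant-time $\prank$ structure, via Lemma~\ref{rank_prime_1} in case (a) or Lemma~\ref{rank_select_0} in case (b).

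The query $\point(v,i)$ would repeatedly replace the current pair $(u,i)$ by $(\Sp(u)[i],\prank(\Sp(u),i))$. The invariant underpinning correctness, already used in Section~\ref{sect:ball_inheritance}, is that $N(u)$ and $N(\Sp(u)[i])$ are both sorted by $y$-coordinate, so $\prank$ exactly maps the current offset into the descendant's ordered list. Since each hop strictly increases the color of the current node, after $O(\log_\tau(\log_d\sigma))$ hops we land at a color-$C$ node, one more hop lands at a leaf, and we return its stored $y$-coordinate together with the implicit $x$-coordinate. The total query cost is $O(\log_\tau(\log_d\sigma))$ as claimed.

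For the space, the leaves contribute $n'\lg n'$ bits and the pointer-based tree with $O(1)$-LCA support contributes $O(\sigma w)$ bits. At color $c<C$, the number of color-exactly-$c$ levels is $O(\log_d\sigma/\tau^c)$, and since the sum of value-array lengths at any fixed level is $n'$, the skip-pointer bits at color $c$ sum to
\[
O\!\left(\frac{\log_d\sigma}{\tau^{c}}\cdot n'\cdot \tau^{c+1}\lg d\right)=O(n'\tau\lg\sigma),
\]
so over all $O(\log_\tau(\log_d\sigma))$ colors this totals the claimed $O(n'\tau\lg\sigma\log_\tau(\log_d\sigma))$ bits; the color-$C$ pointers add only $O(n'\lg\sigma)$ bits and the $\prank$ indexes from Lemma~\ref{rank_prime_1} inflate the skip-pointer storage by merely a $1+o(1)$ factor.

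The hard part will be the construction-time analysis, which has to telescope correctly across color levels. In case (a), populating $\Sp(u)$ from the value array $A(u)$ reduces to extracting the top $\tau^{c+1}\lg d$ bits of each entry, which by Lemma~\ref{lemma_split_1} costs $O(|A(u)|\lg\sigma/\lg n+1)$ per node, and building the $\prank$ index on $\Sp(u)$ via Lemma~\ref{rank_prime_1} costs $O(|\Sp(u)|(\tau^{c+1}\lg d)^2/\lg n+d^{\tau^{c+1}})$. Summed across color-$c$ levels and then across $c$, the per-color contribution $n'\tau^{c+2}\lg d\lg\sigma/\lg n$ is geometric in $c$ and dominated by its largest term $c=C$, giving $O(n'\tau\lg^2\sigma/\lg n)$; the wavelet-tree construction contributes the $n'\lg n'\lg\sigma/\lg n$ term from Lemma~\ref{lemma:wavelet_construct_d_packed}; and the alphabet-table overhead $d^{\tau^{c+1}}\le \sigma$, summed over all $O(\log_d\sigma)$ color levels, contributes $O(\sigma\log_d\sigma)$. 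Case (b) is tidier: the tree height is at most $\sqrt{\lg n}$, so populating all skip pointers and their $\prank$ indexes (via Lemma~\ref{rank_select_0}) costs $O(n'+\sigma)$ per level and $O((n'+\sigma)\log_d\sigma)$ overall. The delicate point to verify in case (a) is that the geometric telescoping really absorbs every lower-color contribution into the $c=C$ term; this relies on $\tau\ge 2$, which is ensured whenever the lemma is nontrivial.
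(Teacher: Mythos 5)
Your proposal is essentially the paper's own proof of Lemma~\ref{ball_inheritance_point_para}: the same $\tau$-ary coloring of levels, skipping pointers to the nearest higher-color descendant level with $\prank$ structures from Lemma~\ref{rank_prime_1} (case (a)) or Lemma~\ref{rank_select_0} (case (b)), the same geometric telescoping of both the space bound ($\sum_c (\log_d\sigma/\tau^c)\cdot\tau^{c+1}\lg d$) and the $\prank$-construction cost across colors, and the same treatment of the two construction regimes via Lemmas~\ref{lemma:wavelet_construct_d_packed} and~\ref{lemma:wavelet_construct_d_unpacked}. The only cosmetic differences (capping color at $C=\lfloor\log_\tau(\log_d\sigma)\rfloor$ and storing $x$-coordinates implicitly at leaves rather than explicitly with $O(\lg\sigma)$ bits) do not change the argument; the paper instead assigns the leaf level color $\log_\tau h$ unconditionally, which avoids your edge case where $\tau^C\nmid\log_d\sigma$ makes the leaf's color fall short of $C$, but the hop-count bound $O(\log_\tau(\log_d\sigma))$ survives either way since colors strictly increase along the chain.
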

\begin{proof}
	Let $h$ denote $\log_d \sigma$.
	Each point in $N$ appears in the ordered point list associated with a node $u$ at each level, $l$, of $T$.
	At each internal node, a skipping pointer is created, which encodes the rank of the descendant of $u$ among all the descendants of $u$ at level $l' = \tau^{c+1} \lceil l/\tau^{c+1} \rceil$. 
	As $u$ has $d^{l'-l} \le d^{\tau^{c+1}}$ descendants at level $l'$, $\Sp(v)[i]$ can be encoded using at most $\tau^{c+1}\lg d$ bits.
	Since there are at most $h / \tau^{c}$ levels with color $c$, the skipping pointers created for this point across all levels of $T$ occupy at most $\sum_{c=0}^{\log_{\tau} h-1} \frac{h}{\tau^c}\times \tau^{c+1}\times \lg d=O(\tau\lg \sigma \log_{\tau} h)$ bits. 
	As there are $n^{\prime}$ points in $N$, the space usage of all skipping pointers is $O(n'\tau\lg \sigma \log_{\tau} h)=O(n'\tau (\lg \sigma) \log_{\tau} (\log_d \sigma))$ bits.
	By either Lemma~\ref{rank_prime_1} or Lemma~\ref{rank_select_0}, the extra space cost needed to build data structures to support $\prank$ in sequences of skip pointers is also $O(n'\tau (\lg \sigma) \log_{\tau} (\log_d \sigma))$ bits. 
	We know that there are in total $n'$ points at the leaf level and the coordinates of each point can be encoded in $O(\lg \sigma+\lg n')$ bits. 
	So the cost of storing point coordinates at leaf levels is $O(n'(\lg \sigma+\lg n'))$ bits.
	As $T$ has $O(\sigma)$ nodes, the implicit representation of $T$, with color and depth information, occupies $O(\sigma w)$ bits.
	Overall, the space usage of these structures is $O(n'\tau (\lg \sigma) \log_{\tau} (\log_d \sigma)+n'\lg n'+\sigma w)$ bits.
	
	Now we analyze the query time of $\point$. 
	We retrieve the depth, $l$, of $v$ to get the color, $c$, assigned to level $l$. If $v$ is a leaf node, i.e., $c=\log_{\tau} h$, then $N(v)$ is stored explicitly, and we return $N(v)[i]$ as the answer. 
	Otherwise, let $l' = \tau^{c+1} \lceil l/\tau^{c+1} \rceil$. 
	Then the point $p$ that we will eventually return as the answer to the query is also distributed to the ordered point list associated with the $\Sp(v)[i]$-th descendant, $u$, of $v$ at level $l'$.
	Node $u$ can be located in constant time using the implicit representation of $T$ as a complete $d$-ary tree. 
	Furthermore, $p$ is at position $j = \prank(\Sp(v), i)$ of $N(u)$.
	We then perform the query $\point(u, j)$ recursively to compute the answer. 
	To bound the running time, observe that this process is terminated once we reach a leaf level. 
	Hence, the process is applied recursively $O(\log_{\tau}  h)$ times, each with a cost of $O(1)$. 
	Therefore, it requires $O(\log_{\tau}  h)$ time to support $\point(v, i)$.
	
	To construct these data structures, we first build $T$ as a $d$-ary wavelet tree $T$ over $X$ with value and index arrays.
	If $\sigma=O(2^{\sqrt{\lg n}})$, $T$ can be built in $O(n'{\lg \sigma(\lg n' + \lg \sigma)}/{\lg n}+\sigma)$ time by Lemma~\ref{lemma:wavelet_construct_d_packed}.
	Otherwise, it takes $O(n'\log_d \sigma+\sigma)$ time using the algorithm shown in Lemma~\ref{lemma:wavelet_construct_d_unpacked}.
	Computing the depth of each node of $T$ and storing $T$ implicitly use $O(\sigma)$ time. 
	We then assign colors to its levels as follows:
	We first assign color $0$ to the root level.
	Then we assign color $\log_{\tau} h$ to the leaf level. 
	Among the remaining levels, we assign color $\log_{\tau} h -1$ to those that are multiples of $\tau^{\log_{\tau} h -1}=h/\tau$, and so on.
	During this process, an array of flags is used to mark those levels that have been assigned colors. 
	As we use $O(1)$ time for each level, this requires $O(\log_d \sigma)$ time.
	Observe that the value and index arrays of each node $u$ of $T$ encode the $x$- and $y$-coordinates of the points in $N(u)$, respectively.
	Therefore, at the leaf level, we keep its value and index arrays as the encoding of $N(u)$.
	Otherwise, let $c$ be the color assigned to level $l$.
	Then, for any $i \in [0, |N(u)|-1]$, $\Sp(u)[i]$ needs to store the rank of the descendant of $u$ at level $l' = \tau^{c+1} \lceil l/\tau^{c+1} \rceil)$,
	which can be computed as $A(u)[i](l\lg d..l'\lg d)$; recall that $A(u)$ is the value array of $u$ storing the $x$-coordinates of the points in $N(u)$. 
	By Lemma \ref{lemma_split_1}, all elements of $\Sp(u)$ can be generated in $O(|N(u)|{\lg \sigma}/{\lg n}+1)$ time.
	The overall time needed to generate all the skipping pointers across the entire tree $T$ is thus bounded by $O(n'{\lg^2 \sigma}/({\lg n}\times\lg d)+\sigma)$, which is subsumed by the time cost spending on building $T$.
	We discard the value and index arrays of $u$ after $\Sp(u)$ has been built.

	Next, we show how to build the data structure for $\prank$ queries upon $\Sp(u)$.
	We first consider the case in which $\sigma =O(2^{\sqrt{\lg n}})$, in which the alphabet size of $\Sp(u)$ is at most $\sigma = O(2^{\sqrt{\lg n}})$, so we apply Lemma \ref{rank_prime_1} to build a $\prank$ structure over $S(u)$. 
	$\Sp(u)$ is drawn from alphabet $d^{l'-l}$, and since $l' -l \le \tau^{c+1}$, this structure can be built in $O(|\Sp(u)|(l'-l)^2 \lg^2 d/\lg n + d^{l'-l})=O(|\Sp(u)|\tau^{2c+2}\lg^2 d/\lg n + d^{l'-l})$ time.
	Over all nodes at level $l$, observe that the term, $|\Sp(u)|\tau^{2c+2}\lg^2 d/\lg n$, sums up to $n'\tau^{2c+2}\lg^2 d/\lg n$, while the term, $d^{l'-l}$, sums up to $f \times d^{l'-l}$, where $f$ is the number of nodes at level $l$.
	To bound $f$, observe that, as each node at level $l$ has $d^{l'-l}$ descendants, there are $f\times d^{l'-l}$ nodes at level $l'$ and we have $f\times d^{l'-l} \le \sigma$. 
	Thus the sum of the term, $d^{l'-l}$, over nodes at level $l$ is bounded by $\sigma$. 
	Hence the total time required to build auxiliary data structures for $\prank$ for nodes at a level with color $c$ is $O(n'\tau^{2c+2} \lg^2 d/\lg n + \sigma)$. 
	As there are at most $h /\tau^c$ levels with color $c$, the total construction time over all levels of $T$, including the time spent building and coloring $T$ itself, is $O(n'{\lg \sigma(\lg n' + \lg \sigma)}/{\lg n}+\sigma +\sum_{c=0}^{(\log_{\tau} h)-1} (h /\tau^c) \times O(n'\tau^{2c+2} \lg^2 d/\lg n + \sigma)=O(n'\tau\lg^2 \sigma/\lg n+n'\lg n'\lg \sigma/\lg n+\sigma\log_d \sigma)$.
	Finally, we consider the case in which $d\ge 2^{\sqrt{\lg n}}$. In this case, all $\prank$ structures are built using Lemma \ref{rank_select_0}, so the total construction time is  $O(n'\log_d \sigma+\sigma)+\sum_{c=0}^{(\log_{\tau} h)-1} O(n'+ \sigma) = O(n'\log_d \sigma +\sigma\log_d \sigma)$. 
\end{proof}

Another operation of ball inheritance is $\noderange(c, d, v)$.
Recall that given a range $[c, d]$ and a node $v$ of $T$, $\noderange(c, d, v)$ finds the range $[c_v, d_v]$ such that $I(v)[i] \in [c, d]$ iff $i \in [c_v, d_v]$.
Obviously, $c_v$ or $d_v$ is equal to the positions of $\succ(c)$ or $\pred(d)$ in $I(v)$, respectively.
Hence by constructing predecessor/successor data structures over $I(v)$, we can support $\noderange$.

Lemmas~\ref{ball_inheritance_large_d} and \ref{ball_inheritance_point_small_02} addressing special cases of ball inheritance, in which either the wavelet tree has high fanout or the coordinates can be encoded in $O(\sqrt{\lg n})$ bits, can thus be obtained by choosing appropriate values for $\tau$ and applying different data structures for $\pred$/$\succ$ operations.
Next, we give the proofs of these Lemmas.

\subsection{Proof of Lemma~\ref{ball_inheritance_large_d}}
\begin{proof}
	Consider the case in Lemma \ref{ball_inheritance_point_para} for $d \ge 2^{\sqrt{\lg n}}$. In this case, set $n' = n$ and $d =2^{\sqrt{\lg n}}$. By further setting $\tau = 2$  or  $\tau=\lg^{\epsilon} \sigma$, we have the result $(a)$ or $(b)$, respectively, apart from the support of $\noderange$.
	Next, we show the data structure supporting $\noderange$, whose space cost and construction time are both subsumed by the data structure supporting $\point$.
	We apply Lemma \ref{pre_succ_chan} to construct a data structure supporting $\pred/\succ$ over $I(v)$ at each node $v$, 
	which answers $\noderange$ in $O(\lg \lg n)$ time and $O(1)$ calls to $\point$ without requiring $I(v)$ to be stored explicitly.
	This structure occupies $O(|I(v)|\lg \lg n)$ extra bits of space can be built upon $I(v)$ in $O(|I(v)|)$ time.
	As $T$ has $\lg \sigma/\sqrt{\lg n}+1$ levels and there are $n$ elements at each level, the overall time needed to construct it over all nodes is $O(n \lg \sigma/\sqrt{\lg n})$, and the overall extra space cost is $O(n\lg \lg n \times \lg \sigma/\sqrt{\lg n})=o(n\lg \sigma)$ bits.
\end{proof}


\subsection{Proof of Lemma~\ref{ball_inheritance_point_small_02}}

When coordinates of points can be encoded in $O(\sqrt{\lg n})$ bits, we can achieve faster construction time by applying the $\prank$ supporting data structure designed in Lemma~\ref{rank_prime_1} and the $\pred/\succ$ supporting data structure under the indexing model designed in Lemma~\ref{pre_succ_index_duplicates}.
\begin{proof}
	Consider the case in Lemma \ref{ball_inheritance_point_para} for $\sigma = O(2^{O(\sqrt{\lg n})})$. By setting $\tau=2$ and applying $n' = O(\sigma^{O(1)})$, we can obtain the construction time, the space cost and the query time needed to support $\point$, which match the bounds show in this lemma.
	It remains to show the support of $\noderange$.
	As $n'$ is bounded by $O(\sigma^{O(1)})$, each element of the index array $I(u)$ of each node $u$ can be encoded with $O(\lg \sigma)=O(\sqrt{\lg n})$ bits.
	We then build the predecessor/successor data structure over $I(u)$ using Lemma \ref{pre_succ_index_duplicates}.
	Given that $\point$ takes $O(\lg (\lg \sigma/\lg d))$ time, $\noderange$ can be answered in $O(\lg \lg \sigma+\lg (\lg \sigma/\lg d))=O(\lg \lg \sigma)$ time without explicitly storing $I(u)$.
	By Lemma \ref{pre_succ_index_duplicates}, this data structure occupies $O(|I(u)|\lg \lg \sigma)$ extra bits of space and can be built upon $I(u)$ in $O(|I(u)|/\sqrt{\lg n}+1)$ time.
	As $T$ has $\sigma$ nodes and $h+1$ levels, the overall time needed to construct it over all nodes is $\sum_u O(|I(u)|/\sqrt{\lg n}+1)=O(n'\lg \sigma/(\sqrt{\lg n}\times \lg d)+\sigma)$ bounded by $O(n^{\prime}{\lg^2 \sigma}/{\lg n}+\sigma\log_d \sigma)$.
	Similarly, the overall extra space cost is $O(n'\lg \lg \sigma \log_d \sigma)$ bits.
	Therefore, the overall space cost required by the data structure designed is $O(n^{\prime}\lg \sigma\lg (\lg \sigma/\lg d)+n'\lg \lg \sigma \log_d \sigma+\sigma w)=O(n^{\prime}\lg \sigma\lg (\lg \sigma/\lg d)+\sigma w)$ bits.
\end{proof}

\subsection{Proof of Lemma \ref{ball_inheritance_point}}
\label{app:twist}
One may attempt to achieve this by setting $\tau$ to $(\log_d \sigma)^{\epsilon}$ in Lemma \ref{ball_inheritance_point_para} to achieve constant-time support for $\point$, but then the construction time is $O(n' \tau \lg^2 \sigma/\lg n+\sigma \log_d \sigma)$, in which the first term is not small enough.
This term shows the time spent on building the auxiliary data structures for $\prank$.
To remove the $\tau$ factor in it, we have designed in Section \ref{sect:ball_inheritance} a variant of the solution by Chan et al.~\cite{chan2011orthogonal} by storing point coordinates at a subset of levels of $T$ instead of only at the leaf level.
This twist allows us to build $\prank$ structures at fewer tree levels to decrease the construction time, and we still achieve the query time and space bounds that  match those in part (c) of Lemma~\ref{lemma:ball_intro}.

\begin{lemma}
	\label{lem:point_space}
	The data structure in Lemma \ref{ball_inheritance_point} occupy $O(n^{\prime}\lg \sigma\log_d^{\epsilon} \sigma+\sigma w)$ bits. 
\end{lemma}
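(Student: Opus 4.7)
The plan is to decompose the structure from Lemma~\ref{ball_inheritance_point} into five parts and bound each: (i) the $d$-ary tree topology with depth/color labels, (ii) the skipping-pointer sequences $\Sp(v)$ at internal nodes whose level has color less than $1/\epsilon-1$, (iii) the explicit point coordinates stored at nodes whose level has color $1/\epsilon-1$ (including the leaves), (iv) the $\prank$ auxiliary structures from Lemma~\ref{rank_prime_1} built on each $\Sp(v)$, and (v) the $\noderange$ predecessor/successor structures from Lemma~\ref{pre_succ_index_duplicates} on each $I(u)$. Part (i) contributes $O(\sigma w)$ bits because $T$ has $O(\sigma)$ nodes, yielding the second summand of the claimed bound directly.

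The first step in the calculation is a level count. With $h=\log_d\sigma$ and $\tau=\log_d^{\epsilon}\sigma$, a level has color $c$ iff it is a multiple of $\tau^c$ but not of $\tau^{c+1}$, so there are $O(h/\tau^c)$ levels of color $c$, and in particular $O(\tau)$ levels of color $1/\epsilon-1$ (using $\tau^{1/\epsilon}=h$). The central argument is then per-point accounting for (ii) and (iii). At a level of color $c$ a skipping pointer indexes one of at most $d^{\tau^{c+1}}$ descendants, so uses $O(\tau^{c+1}\lg d)$ bits; since each point lies in exactly one ordered list per level, the per-point contribution to (ii) is
\[
\sum_{c=0}^{1/\epsilon-2}\frac{h}{\tau^c}\cdot\tau^{c+1}\lg d \;=\; O\!\left(h\tau\lg d/\epsilon\right) \;=\; O(\tau\lg\sigma).
\]
For (iii), at each of the $O(\tau)$ color-$(1/\epsilon-1)$ levels a point's coordinates occupy $O(\lg\sigma+\lg n')=O(\lg\sigma)$ bits, using $n'=O(\sigma^{O(1)})$, giving another $O(\tau\lg\sigma)$. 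Multiplying by $n'$ yields $O(n'\tau\lg\sigma)=O(n'\lg\sigma\log_d^{\epsilon}\sigma)$ for (ii)+(iii) combined.

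For (iv), each $\Sp(v)$ has alphabet size at most $d^{\tau^{c+1}}$, which is $2^{O(\sqrt{\lg n})}$ since $\tau^{c+1}\lg d=O(\sqrt{\lg n})$ when $d\le 2^{\sqrt{\lg n}}$ and $\sigma=O(2^{O(\sqrt{\lg n})})$; hence Lemma~\ref{rank_prime_1} applies and its extra space stays within a constant factor of the storage of $\Sp(v)$ already counted in (ii). For (v), Lemma~\ref{pre_succ_index_duplicates} adds $O(|I(u)|\lg\lg\sigma)$ per node, summing to $O(n'h\lg\lg\sigma)$ over the $h$ internal levels. The step I expect to require the most care is verifying that this is $O(n'\lg\sigma\log_d^{\epsilon}\sigma)$: after cancellation the inequality reduces to $\lg\lg\sigma=O(\lg d\cdot h^\epsilon)$, which follows from $\sigma=2^{O(\sqrt{\lg n})}$ by a brief case split on $d$ (for small $d$, $h^\epsilon=(\log_d\sigma)^\epsilon$ dominates $\lg\lg\sigma$ for sufficiently large $n$; for large $d$, $\lg d$ alone does). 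Summing (i)--(v) then gives the claimed $O(n'\lg\sigma\log_d^{\epsilon}\sigma+\sigma w)$.
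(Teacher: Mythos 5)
Your proof is correct and rests on the same central calculation as the paper's, namely the per-point, per-color sum $\sum_{c=0}^{1/\epsilon-2}\frac{h}{\tau^c}\cdot\tau^{c+1}\lg d=O(\tau\lg\sigma)$ for the skipping-pointer bits, with Lemma~\ref{rank_prime_1}'s $\prank$ overhead then matching that term up to constants. The difference is that the paper's proof of Lemma~\ref{lem:point_space} explicitly verifies only the $\Sp(u)$ and $\prank$ terms, and for everything else it simply asserts that the cost is ``the same to those in Lemma~\ref{ball_inheritance_point_para}.'' Your decomposition is more thorough on exactly the two pieces that differ from that earlier lemma: part (iii), explicit coordinate storage, which here occurs at $O(\tau)$ color-$(1/\epsilon-1)$ levels rather than only at the leaf level, costing $O(n'\tau\lg\sigma)$ using $\lg n'=O(\lg\sigma)$; and part (v), the predecessor/successor indices for $\noderange$, which the paper's sub-lemma does not discuss at all (they are treated separately in the proof of Lemma~\ref{ball_inheritance_point_small_02}). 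Your reduction of (v) to $\lg\lg\sigma=O(\lg d\cdot h^\epsilon)$ is correct, though the case split as stated is slightly loose since $\sigma$ need not grow with $n$; the cleaner observation is that both sides are $\Theta(1)$ when $\sigma=O(1)$, and for $\sigma=\omega(1)$ either $\lg d\ge 1$ and $h^\epsilon=(\lg\sigma/\lg d)^\epsilon$ is eventually $\ge\lg\lg\sigma$, or $d$ is so large that $\lg d=\Omega(\lg\sigma)\gg\lg\lg\sigma$. In short, your argument does not take a different route, but it supplies accounting that the paper leaves implicit; that makes it a more self-contained proof of the stated space bound.
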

\begin{proof}
	We only discuss the space usage of all $\Sp(u)$'s and the space cost of $\prank$ data structure built over $\Sp(u)$, as the space costs of all other data structures are the same to those in Lemma \ref{ball_inheritance_point_para}.
	Each point in $N$ appears in the ordered point list associated with a node $u$ at each level, $l$, of $T$.
	When the color, $c$, of $l$ is not $1/\epsilon -1$, a skipping pointer is created for this node, which encodes the rank of the descendant of $u$ among all the descendants of $u$ at level $l' = \tau^{c+1} \lceil l/\tau^{c+1} \rceil$. 
	As $u$ has $d^{l'-l} \le d^{\tau^{c+1}}$ descendants at level $l'$, $\Sp(v)[j]$ can be encoded using at most $\tau^{c+1}\lg d$ bits.
	Since there are at most $\log_d \sigma / \tau^{c}$ levels with color $c$, the skipping pointers created for this point across all levels of $T$ occupy at most $\sum_{c=0}^{1/\epsilon -2} \frac{\log_d \sigma}{\tau^c}\times \tau^{c+1}\times \lg d=O(\tau\lg \sigma)$ bits. 
	As there are $n^{\prime}$ points in $N$, the space usage of all skipping pointers is $O(n^{\prime}\lg \sigma\log_d^{\epsilon} \sigma)$ bits.
	By Lemma~\ref{rank_prime_1}, the extra space cost needed to build data structures to support $\prank$ in sequences of skip pointers is also $O(n^{\prime}\lg \sigma\log_d^{\epsilon} \sigma)$  bits.
	In addition to the other data structures shown in Lemma \ref{ball_inheritance_point_para}, the overall space cost is $O(n^{\prime}\lg \sigma\log_d^{\epsilon} \sigma+\sigma w)$ bits. 
	
	%
\end{proof}

\begin{lemma}\label{lem:point_query}
	The new data structures can support $\point(v, i)$ in $O(1)$ time. 
\end{lemma}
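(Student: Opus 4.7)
The plan is to show that the recursive query procedure for $\point(v,i)$ described just before the lemma statement terminates after a constant number of steps, each of which costs $O(1)$ time, given our assumption that $\epsilon$ is a positive constant.

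The key structural observation is about the coloring. By definition, the color of level $l$ is the largest $c \in [0, 1/\epsilon - 1]$ such that $\tau^c$ divides $l$. Now consider a node $v$ at a level $l$ whose color is $c < 1/\epsilon - 1$. The skipping pointer $\Sp(v)[i]$ jumps to the closest level $l' > l$ that is a multiple of $\tau^{c+1}$. Since $\tau^{c+1}$ divides $l'$, the color of $l'$ is at least $c+1$. Thus each jump strictly increases the color of the current level, and after at most $1/\epsilon - 1 = O(1)$ jumps we arrive at a level with color $1/\epsilon - 1$, where coordinates are stored explicitly and can be returned directly.

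It remains to bound the cost of a single step. At a non-terminal node $v$, we perform $\prank(\Sp(v), i)$ to obtain the index $j$ of the point in the ordered list of the descendant $u = \Sp(v)[i]$; by Lemma~\ref{rank_prime_1}, this costs $O(1)$ time. To locate $u$ itself, we use the implicit representation of $T$ as a complete $d$-ary tree: knowing the rank $\Sp(v)[i]$ of the target descendant among all descendants of $v$ at the relevant level, together with $v$'s own index and the level gap, the node $u$ is found in $O(1)$ time via simple arithmetic. At a terminal node (color $1/\epsilon - 1$), we simply read the explicitly stored coordinates at position $i$ in $O(1)$ time.

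Combining these observations, $\point(v,i)$ finishes in $O(1/\epsilon) = O(1)$ recursive calls, each of constant cost. The only potential subtlety is ensuring that the constant depends only on $\epsilon$ and not on $\sigma$ or $d$, which is immediate because the number of distinct colors is $1/\epsilon$ by construction and each color is visited at most once along the recursion path. This gives the claimed $O(1)$ query time.
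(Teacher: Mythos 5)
Your proof is correct and takes essentially the same approach as the paper: trace the recursion step-by-step, bound each step by $O(1)$ via constant-time $\prank$ and implicit-tree arithmetic, and bound the number of recursive calls by $O(1/\epsilon) = O(1)$. Your observation that each skip strictly increases the color of the current level (because the target level is a multiple of $\tau^{c+1}$) is a slightly cleaner justification for termination after at most $1/\epsilon - 1$ jumps than the paper's phrasing ``$O(\log_\tau \tau^{1/\epsilon-1}) = O(1)$ times,'' but the two are equivalent and reach the same conclusion.
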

\begin{proof}
	We retrieve the depth, $l$, of $v$ to get the color, $c$, assigned to level $l$. If $c=1/\epsilon -1$, then $N(v)$ is stored explicitly, and we return $N(v)[i]$ as the answer. 
	Otherwise, let $l' = \tau^{c+1} \lceil l/\tau^{c+1} \rceil$. 
	Then the point $p$ that we will eventually return as the answer to the query is also distributed to the ordered point list associated with the $\Sp(v)[i]$-th descendant, $u$, of $v$ at level $l'$.
	Node $u$ can be located in constant time using the implicit representation of $T$ as a complete $d$-ary tree. 
	Furthermore, $p$ is at position $j = \prank(\Sp(v), i)$ of $N(u)$.
	We then perform the query $\point(u, j)$ recursively to compute the answer. 
	To bound the running time, observe that this process is terminated once we reach a level with color $1/\epsilon -1$,
	and one out of every $\tau^{1/\epsilon-1}$ levels of $T$ is assigned this color. 
	Hence, the process is applied recursively $O(\log_{\tau}  \tau^{1/\epsilon-1}) = O(1)$ times, each with a cost of $O(1)$. 
	Therefore, it requires constant time to support $\point(v, i)$.
\end{proof}

\begin{lemma}\label{lem:point_construct}
	The new data structures can be constructed in $O(n^{\prime}{\lg^2 \sigma}/{\lg n}+\sigma\log_d \sigma)$ time. 
\end{lemma}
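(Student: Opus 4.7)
The plan is to follow the construction template of Lemma~\ref{ball_inheritance_point_para} but adapted to the twist introduced in this section: skipping pointers terminate at the closest level of color $1/\epsilon - 1$ rather than at the leaves, and coordinates are kept only at levels of that terminal color. The construction proceeds in three phases: (i) build the underlying $d$-ary wavelet tree $T$ with value and index arrays via Lemma~\ref{lemma:wavelet_construct_d_packed}; (ii) compute the depth and color of each level; and (iii) for every internal node $u$ at a level $l$ of color $c \le 1/\epsilon - 2$, construct the skipping-pointer sequence $\Sp(u)$ together with its $\prank$ structure, then discard $A(u)$ and $I(u)$. At every level of color $1/\epsilon - 1$ we simply retain $A(u)$ and $I(u)$ as the explicit coordinate storage.

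Phase (i) runs in $O(n'\lg\sigma(\lg n' + \lg\sigma)/\lg n + \sigma) = O(n'\lg^2\sigma/\lg n + \sigma)$ since $n' = O(\sigma^{O(1)})$ implies $\lg n' = O(\lg\sigma)$. Phase (ii) is trivial, needing $O(\log_d \sigma)$ time. For phase (iii), at a node $u$ at level $l$ of color $c$, each $\Sp(u)[i]$ corresponds to the bit range $A(u)[i](l\lg d\,..\,l'\lg d - 1)$, where $l' = \tau^{c+1}\lceil l/\tau^{c+1}\rceil$; by Lemma~\ref{lemma_split_1} the whole sequence $\Sp(u)$ is extracted in $O(|A(u)|\lg\sigma/\lg n + 1)$ time. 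The alphabet of $\Sp(u)$ has size $d^{l'-l} \le d^{\tau^{c+1}}$, so by Lemma~\ref{rank_prime_1} the $\prank$ structure on it is built in $O(|\Sp(u)|(\tau^{c+1}\lg d)^2/\lg n + d^{\tau^{c+1}})$ time.

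The crux is the double summation. Fix a color $c$; there are $O(\log_d\sigma/\tau^c)$ levels of that color, and at any single such level, $\sum_u |\Sp(u)| = n'$ while $\sum_u d^{l'-l}$ is bounded by the number of nodes at level $l'$, hence by $\sigma$. The total cost at one level of color $c$ is therefore $O(n'(\tau^{c+1}\lg d)^2/\lg n + \sigma)$. Summed across the $O(\log_d\sigma/\tau^c)$ levels of color $c$ this becomes $O(n'\lg\sigma \cdot \tau^{c+2}\lg d/\lg n + \sigma\log_d\sigma/\tau^c)$. Finally, summing over $c = 0, 1, \ldots, 1/\epsilon - 2$ gives a geometric series whose dominant term occurs at $c = 1/\epsilon - 2$. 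There $\tau^{c+2} = \tau^{1/\epsilon} = \log_d\sigma$, so $\tau^{c+2}\lg d = \lg\sigma$, producing the overall bound $O(n'\lg^2\sigma/\lg n + \sigma\log_d\sigma)$, which combined with phases (i) and (ii) matches the lemma.

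The main obstacle I expect is bookkeeping the summation carefully, specifically verifying three points: that $(l'-l)\lg d \le \tau^{c+1}\lg d$ is the correct alphabet bound for every $u$ at color $c$, that the ``level count'' sum $\sum_u d^{l'-l} \le \sigma$ holds uniformly at each level (by counting descendants at level $l'$), and that the geometric series in $c$ is indeed dominated by its last term, which requires $\tau \ge 2$. The critical algebraic identity is $\tau^{1/\epsilon} = \log_d\sigma$: it is precisely what collapses the $n'\tau\lg^2\sigma/\lg n$ factor appearing in Lemma~\ref{ball_inheritance_point_para} down to $n'\lg^2\sigma/\lg n$, thereby removing the $\tau$ overhead that would otherwise exceed our target budget and showing why the twist of storing coordinates at color-$(1/\epsilon - 1)$ levels, rather than only at leaves, is necessary for a fast construction.
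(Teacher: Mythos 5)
Your proposal is correct and follows essentially the same approach as the paper: build the tree and skipping pointers, then sum the $\prank$-structure construction costs per level (bounded by $n'\tau^{2c+2}\lg^2 d/\lg n + \sigma$, using $\sum_u d^{l'-l}\le\sigma$), then across the $O(\log_d\sigma/\tau^c)$ levels of each color, then across colors $c=0,\ldots,1/\epsilon-2$, with the geometric series dominated by the term at $c=1/\epsilon-2$ where $\tau^{c+2}=\log_d\sigma$. One small slip: computing depths requires $O(\sigma)$ time rather than $O(\log_d\sigma)$ since there are $O(\sigma)$ nodes, but this is absorbed by the $O(\sigma\log_d\sigma)$ term.
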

\begin{proof}
	We build $T$ and the skipping pointer sequences for its nodes as in the proof of Lemma~\ref{ball_inheritance_point_para}, which uses $O(n^{\prime}{\lg^2 \sigma}/\lg n+\sigma)$ time.
	We then apply Lemma \ref{rank_prime_1} to build the data structure for $\prank$ queries upon $\Sp(u)$.
	As $\Sp(u)$ is drawn from alphabet $d^{l'-l}$, this requires $O(|\Sp(u)|(l'-l)^2 \lg^2 d/\lg n + d^{l'-l})$ time, which is bounded by $O(|\Sp(u)|\tau^{2c+2}\lg^2 d/\lg n + d^{l'-l})$ as $l' -l \le \tau^{c+1}$.
	Hence the total time required to build auxiliary data structures for $\prank$ for nodes at a level with color $c$ is $O(n'\tau^{2c+2} \lg^2 d/\lg n + \sigma)$.
	As there are at most $\log_d \sigma/\tau^c$ levels with color $c$, the total construction time over all levels of $T$ is
	$\sum_{c=0}^{1/\epsilon-2} (\log_d \sigma/\tau^c) \times O(n'\tau^{2c+2} \lg^2 d/\lg n + \sigma)=O(n^{\prime}{\lg^2 \sigma}/{\lg n}+\sigma\log_d \sigma)$, which dominates the construction time of all data structures.
\end{proof}

\section{Proofs Omitted From Section \ref{sect: range_reporting}}

\subsection{Using Range Maximum/Minimum to Answer 3-Sided Queries in the Proof of Lemma~\ref{range_reporting_1}}
\label{app:rmqrecursive}

	To report points in $([a, +\infty)\times[c_l, d_l]) \cap N(u_l)$, we need only report the points in $N(u_l)[c_l, d_l]$ whose $x$-coordinates are at least $a$.
	This can be done by performing range maximum queries over $A(u_l)$ recursively as follows.
	We perform $\rMq(c_l, d_l)$ to get the index $m$ of the point $p$ that has the maximum $x$-coordinate in $N(u_l)[c_l, d_l]$, and retrieve its coordinates $(p.x, p.y)$ by $\point(u_l, m)$.
	If $p.x\geq a$, we report $p$ and perform the same process recursively in $N(u_l)[c_l, m-1]$ and $N(u_l)[m+1, d_l]$. 
	Otherwise we stop.
	The points in $([0, b]\times[c_r, d_r]) \cap N(u_r)]$ can be reported in a similar way. 
	To analyze the query time, observe that we perform $\noderange$ twice in $O(\lg\lg n)$ time. The recursive procedure is called $O(\occ)$ times, and each time it is performed, it uses $O(1)$ time.
	All other steps require $O(1)$ time.
	Therefore, the overall query time is $O(\lg \lg n+\occ)$.

\subsection{The Analysis of the Space Usage and Construction Time for Data Structure in Lemma \ref{fat_rect}}

Our analysis requires previous result shown as follows:
\begin{lemma}[{\cite[Section 2]{chan2011orthogonal}},{\cite[Lemma 5]{DBLP:journals/corr/abs-1108-3683}}]
	\label{range_reporting_chan}
	Given a set, $N$, of $n$ points in $[u]\times[u]$, a data structure of $O(n\lg^{1+\epsilon} n)$ bits can be constructed in $O(n\lg n)$ time, which supports orthogonal range reporting over $N$ in $O(\lg \lg u+\occ)$ time, where $\occ$ is the number of reported points.
\end{lemma}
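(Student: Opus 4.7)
The plan is to prove Lemma~\ref{range_reporting_chan} by implementing the classical range-tree-plus-ball-inheritance framework of Chan et al., using components available to us and accepting the standard $O(n \lg n)$ preprocessing cost (we do not need the faster $O(n\sqrt{\lg n})$ preprocessing here).

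First, I would perform a rank-space reduction on the input: sort the points by $x$-coordinate and by $y$-coordinate to obtain two arrays of $n$ distinct integers in $[u]$, and for each axis build an $O(n \lg \lg u)$-bit predecessor structure that translates any query coordinate in $[u]$ into its rank in $[n]$ in $O(\lg \lg u)$ time. This costs $O(n \lg n)$ construction time (dominated by sorting) and contributes the $\lg \lg u$ term to the query time. For the remainder of the construction I work in rank space, so the grid is $[n] \times [n]$.

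Next, over the sequence $X[0..n-1]$ storing the $x$-rank of the point whose $y$-rank is $i$, I would build a binary wavelet tree $T$ augmented with the ball-inheritance structure of Lemma~\ref{lemma:ball_intro}(c), yielding $O(n \lg n \cdot \lg^\epsilon n) = O(n \lg^{1+\epsilon} n)$ bits of space, with $\point$ in $O(1)$ and $\noderange$ in $O(\lg \lg n)$. The wavelet tree itself (including its rank/select structures on internal bit vectors) can be built in $O(n \lg n)$ time by the standard level-by-level construction, and the ball-inheritance skipping pointers and their $\prank$ structures can also be built in $O(n \lg n)$ total time by the original Chan et al.\ scheme (using ordinary rank support, no bit-packing is needed here). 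At every internal node $v$ of $T$ I additionally build a range minimum and range maximum structure over the conceptual value array $A(v)$ (which stores the $x$-coordinates of points falling in the subtree of $v$, in $y$-order) using Lemma~\ref{fisher_rmq}; although $A(v)$ is not stored explicitly, the Fischer--Heun structure does not need to access it to answer $\rmq/\rMq$. This adds $O(n)$ bits per level, hence $O(n \lg n)$ bits total, built in $O(n \lg n)$ time.

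For a query $[a,b] \times [c,d]$, I would first convert $a,b,c,d$ to ranks via the predecessor structures in $O(\lg \lg u)$ time, then locate the LCA $u$ of the leaves for the rank of $a$ and the rank of $b$ in $O(1)$ time. Let $u_l, u_r$ be the children of $u$: compute $[c_l,d_l] = \noderange(c,d,u_l)$ and $[c_r,d_r] = \noderange(c,d,u_r)$ in $O(\lg \lg n)$ time, reducing the 4-sided query to a 3-sided query in the left subtree (points with $x$-rank $\ge \text{rank}(a)$ within $A(u_l)[c_l..d_l]$) and one in the right subtree. Each 3-sided query is answered by the standard recursive $\rMq$/$\rmq$ procedure described in Appendix~\ref{app:rmqrecursive}: repeatedly take the extremal $x$-value in the current index range via $\rMq$ or $\rmq$, retrieve its coordinates via $\point$ in $O(1)$, and recurse into the two sub-ranges if the point satisfies the threshold; otherwise prune. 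This reports each of the $\occ$ points in $O(1)$ amortized time, so the total query time is $O(\lg \lg u + \lg \lg n + \occ) = O(\lg \lg u + \occ)$.

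The main obstacle I anticipate is controlling the constants in the ball-inheritance layer to guarantee the $O(n\lg^{1+\epsilon} n)$-bit bound while simultaneously keeping construction at $O(n\lg n)$: since Lemma~\ref{lemma:ball_intro}(c) inflates the per-point storage by $\lg^\epsilon n$, we must choose the sampling parameter in the skipping-pointer scheme so that the $\prank$ structures built at each sampled level sum to $O(n\lg n)$ preprocessing; this is exactly the trade-off Chan et al.\ balance, and it suffices to invoke their analysis with the ordinary (non-packed) $\rankop$ structures of Lemma~\ref{rank_select_0}. Everything else reduces to routine verification of space and time sums across the $O(\lg n)$ levels of $T$.
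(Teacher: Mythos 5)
Your proposal is correct, and it is essentially the standard construction that the paper itself merely cites for this lemma (the paper gives no proof of Lemma~\ref{range_reporting_chan}, attributing it to Chan et al.\ and Bille--G{\o}rtz and noting elsewhere that the $O(n\lg n)$ preprocessing follows from a straightforward analysis). Your reconstruction --- rank reduction via linear-time predecessor structures, a binary wavelet tree with ball inheritance in the regime of part (c) of Lemma~\ref{lemma:ball_intro}, Fischer--Heun $\rmq/\rMq$ structures per level, and the LCA-based reduction of a $4$-sided query to two $3$-sided queries answered by recursive $\rMq/\rmq$ --- matches the cited construction and the same machinery the paper reuses in Lemma~\ref{range_reporting_1}.
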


With this lemma, we now give the analysis of the space usage and construction time for the data structure in Lemma \ref{fat_rect}.
\label{app: space_construction_fat_rect}
\begin{lemma}
	The data structure in Lemma~\ref{fat_rect} occupies $O(n^{\prime}\lg^{1/2+\epsilon} n+w(2^{\sqrt{\lg n}}+n'/2^{\sqrt{\lg n}}))$ bits of space and can be constructed in $O(n^{\prime}+\sqrt{\lg n}\cdot 2^{\sqrt{\lg n}})$ time.
\end{lemma}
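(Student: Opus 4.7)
The plan is to account for space and construction time separately over the three ingredients defined in the proof body: (i) one range-reporting structure per subset $N_i$ built via Lemma~\ref{range_reporting_1}, (ii) a single optimal range-reporting structure on the sampled set $\hat N$ built via Lemma~\ref{range_reporting_chan}, and (iii) the family of lists $P_{i,j}$ holding local $y$-coordinates.

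First I would aggregate the per-subset bounds. There are $n'/b$ subsets with $b = 2^{2\sqrt{\lg n}}$, so using the identity $(n'/b)\cdot 2^{\sqrt{\lg n}} = n'/2^{\sqrt{\lg n}}$ and $\sum_i |N_i| = n'$, Lemma~\ref{range_reporting_1} contributes
\[
\sum_i O\!\bigl(|N_i|\lg^{1/2+\epsilon} n + w\cdot 2^{\sqrt{\lg n}}\bigr) = O\!\bigl(n'\lg^{1/2+\epsilon} n + w\cdot n'/2^{\sqrt{\lg n}}\bigr)
\]
bits, and a construction time of $\sum_i O(|N_i| + \sqrt{\lg n}\cdot 2^{\sqrt{\lg n}}) = O(n' + n'\sqrt{\lg n}/2^{\sqrt{\lg n}}) = O(n')$, apart from the single $o(n)$-bit universal table that supports every invocation and accounts for the residual $\sqrt{\lg n}\cdot 2^{\sqrt{\lg n}}$ term in the overall bound.

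Next I would bound the sampled-set cost. Since $|\hat N|\le n'/2^{\sqrt{\lg n}}$, Lemma~\ref{range_reporting_chan} yields $O((n'/2^{\sqrt{\lg n}})\lg^{1+\epsilon} n)$ bits and $O((n'/2^{\sqrt{\lg n}})\lg n)$ construction time; because $2^{\sqrt{\lg n}}$ dominates every polylogarithmic factor, both are absorbed into $O(n'\lg^{1/2+\epsilon} n)$ bits and $O(n')$ time respectively. The set $\hat N$ itself can be built by a single linear scan that flags distinct $(j,i)$ pairs, with bit packing used to batch the flagging within each $N_i$. The lists $P_{i,j}$ collectively hold $n'$ entries of $O(\sqrt{\lg n})$ bits each, so they take $O(n'\sqrt{\lg n})$ bits and can be populated in $O(n')$ time by another linear pass, again absorbed into the target bounds. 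Summing all three contributions yields the claimed space cost $O(n'\lg^{1/2+\epsilon} n + w(2^{\sqrt{\lg n}} + n'/2^{\sqrt{\lg n}}))$ and construction time $O(n' + \sqrt{\lg n}\cdot 2^{\sqrt{\lg n}})$.

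The main delicate point will be verifying that the additive $w\cdot 2^{\sqrt{\lg n}}$ charge that Lemma~\ref{range_reporting_1} pays once per subset (for the wavelet-tree skeleton) aggregates into the claimed $w\cdot n'/2^{\sqrt{\lg n}}$ rather than exploding, and symmetrically that the $\Theta(\cdot\lg n)$ construction cost of Lemma~\ref{range_reporting_chan} on $\hat N$ fits into $O(n')$. Both reduce to the fact that $b = 2^{2\sqrt{\lg n}}$ is super-polylogarithmic, so the $1/2^{\sqrt{\lg n}}$ savings absorbs any residual $\lg^{O(1)} n$ factor.
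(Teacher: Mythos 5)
Your derivation for the case $n'>b$ largely mirrors the paper's: you aggregate the per-subset costs from Lemma~\ref{range_reporting_1}, bound the structure over $\hat N$ via Lemma~\ref{range_reporting_chan}, and sum the $P_{i,j}$ lists, and the per-subset sum indeed collapses to $O(n'\lg^{1/2+\epsilon}n + wn'/2^{\sqrt{\lg n}})$ bits and $O(n')$ time using $(n'/b)\cdot 2^{\sqrt{\lg n}}=n'/2^{\sqrt{\lg n}}$. However, there is a genuine gap in how you account for the additive terms $w\cdot 2^{\sqrt{\lg n}}$ (space) and $\sqrt{\lg n}\cdot 2^{\sqrt{\lg n}}$ (time) in the claimed bound. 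You attribute the latter to the $o(n)$-bit universal table, which is wrong: under the conventions set out in Section~\ref{sect:preliminaries}, the universal table is built once in $o(n)$ time for all components of a given type and is explicitly excluded from these per-instance bounds; it plays no role in producing a $\sqrt{\lg n}\cdot 2^{\sqrt{\lg n}}$ term. The actual source of those two terms is the complementary regime $n'\le b$, which Lemma~\ref{fat_rect} handles by falling back to Lemma~\ref{range_reporting_1} directly, yielding exactly $O(n'\lg^{1/2+\epsilon}n + w\cdot 2^{\sqrt{\lg n}})$ bits and $O(n'+\sqrt{\lg n}\cdot 2^{\sqrt{\lg n}})$ time. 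The paper then takes the union of the two cases as the stated bound. Your argument never considers $n'\le b$, so your sum as written produces a tighter expression than the one you ultimately assert, and the extra terms are conjured rather than derived.

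A secondary, more minor omission: the lists $P_{i,j}$ are indexed by $(i,j)\in[0,n'/b-1]\times[0,2^{\sqrt{\lg n}}-1]$, so each needs a $w$-bit pointer to locate its storage, contributing an additional $n'w/2^{\sqrt{\lg n}}$ bits beyond the $O(n'\sqrt{\lg n})$ bits you counted for the entries themselves. This happens to be dominated by the $wn'/2^{\sqrt{\lg n}}$ term you already have from part (i), so the final bound is unaffected, but the paper tracks it explicitly and your accounting should too.
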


\begin{proof}
	To bound the storage costs, by Lemma~\ref{range_reporting_1}, the orthogonal range reporting structure over each $N_i$ uses $O(2^{2\sqrt{\lg n}}\lg^{1/2+\epsilon} n+w 2^{\sqrt{\lg n}})$ bits.
	Thus, the range reporting structures over $N_0, N_1, \ldots, N_{n/b-1}$ occupy $O((n'/b)\times(2^{2\sqrt{\lg n}}\lg^{1/2+\epsilon} n+w\cdot 2^{\sqrt{\lg n}})) = O(n'\lg^{1/2+\epsilon} n + n'w/2^{\sqrt{\lg n}})$.
	As there are at most $n'/2^{\sqrt{\lg n}}$ points in $\hat{N}$, by Lemma~\ref{range_reporting_chan}, the range reporting structure for $\hat{N}$ occupies $O(n'\lg^{1+\epsilon} n/2^{\sqrt{\lg n}}) = o(n')$ bits.
	There are $n'$ points in all $P_{i,j}$'s and each of their local $y$-coordinates can be encoded in $\lg b=2\sqrt{\lg n}$ bits.
	In addition, each $P_{i,j}$ requires a pointer to encode its memory location, so $n'/b\times 2^{\sqrt{\lg n}} = n' / 2^{\sqrt{\lg n}}$ pointers are needed.
	Therefore, the total storage cost of all $P_{i,j}$'s is $O(n'w / 2^{\sqrt{\lg n}} + n'\sqrt{\lg n})$.
	Thus the  space costs of all structures add up to $O(n^{\prime}\lg^{1/2+\epsilon} n+n'w/2^{\sqrt{\lg n}})$ bits.
	Note that the above analysis assumes $n' > b$. Otherwise, $O(n^{\prime}\lg^{1/2+\epsilon} n+w\cdot 2^{\sqrt{\lg n}})$ bits are needed, so we use  $O(n^{\prime}\lg^{1/2+\epsilon} n+w\cdot 2^{\sqrt{\lg n}}+n'w/2^{\sqrt{\lg n}})$ as the space bound on both cases. 
	
	Regarding construction time, when $n' > b$, observe that the point sets $N_0, N_1, \ldots, N_{n'/b-1}$ and $\hat{N}$, as well as the sequences $P[i,j]$ for $i = 0,1,\ldots,n'/b-1$ and $j=0,1,\ldots,2^{\sqrt{\lg n}}-1$, can be computed in $O(n')$ time.
	By Lemma~\ref{range_reporting_chan}, The range reporting structure for $\hat{N}$ can be built in $O(n'/b \times \lg n) = o(n')$ time.
	Finally, the total construction time of the range reporting structures for $N_0, N_1, \ldots, N_{n/b-1}$ is $O({n^{\prime}}/{2^{2\sqrt{\lg n}}}\times(2^{2\sqrt{\lg n}}+\sqrt{\lg n}\times 2^{\sqrt{\lg n}}))=O(n^{\prime})$, which dominates the total preprocessing time of all our data structures.
	When $n' \le b$, the construction time is $O(n^{\prime}+\sqrt{\lg n}\cdot 2^{\sqrt{\lg n}})$ by Lemma~\ref{range_reporting_1}, so we use $O(n^{\prime}+\sqrt{\lg n}\cdot 2^{\sqrt{\lg n}})$ as the upper bound on construction time in both cases.
\end{proof}

\subsection{The Analysis of Space Usage and Construction Time for Data Structure in Lemma \ref{theorem_range_reporting_general}}
\label{app: space_construction_theorem_range_reporting_general}
\begin{lemma}
	The data structure in Lemma~\ref{theorem_range_reporting_general} occupies $O(n\lg^{1+\epsilon}  \sigma+n\lg n)$ bits of space for any constant $\epsilon>0$  and can be constructed in $O(n{\lg \sigma}/{\sqrt{\lg n}})$ time.
\end{lemma}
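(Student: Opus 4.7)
The plan is to bound the space and construction time of each of the three components built in the proof of Lemma~\ref{theorem_range_reporting_general}, namely the $2^{\sqrt{\lg n}}$-ary ball-inheritance wavelet tree $T$, the range minimum/maximum structures $M(v)$ on each internal node, and the orthogonal range reporting structures built on the point sets $\hat S(v)$. Throughout I will exploit two structural facts about $T$: it has $\lg\sigma/\sqrt{\lg n}$ levels and $O(\sigma/2^{\sqrt{\lg n}})$ internal nodes in total, and at every level the value arrays satisfy $\sum_v |S(v)| = n$.

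For the space bound, the wavelet tree with ball inheritance contributes $O(n\lg^{1+\epsilon}\sigma + n\lg n)$ bits directly from part (b) of Lemma~\ref{ball_inheritance_large_d}. The $M(v)$'s take $O(|A(v)|)$ bits each by Lemma~\ref{fisher_rmq}, so summing per level and over $\lg\sigma/\sqrt{\lg n}$ levels yields $O(n\lg\sigma/\sqrt{\lg n})$ bits, absorbed into the main bound. For the structures built by Lemma~\ref{fat_rect}, each contributes $O(|S(v)|\lg^{1/2+\epsilon} n + w(2^{\sqrt{\lg n}} + |S(v)|/2^{\sqrt{\lg n}}))$ bits. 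Summing $|S(v)|\lg^{1/2+\epsilon} n$ over one level gives $n\lg^{1/2+\epsilon} n$, and over all levels gives $n\lg^\epsilon n\cdot\lg\sigma$; since $\lg\sigma\ge\sqrt{\lg n}$, this is $O(n\lg^{1+2\epsilon}\sigma)$, which fits the target after relabeling $\epsilon$. The additive $w\cdot 2^{\sqrt{\lg n}}$ per internal node, multiplied by the total number $O(\sigma/2^{\sqrt{\lg n}})$ of internal nodes, gives $O(\sigma w) = O(n\lg n)$; and the $w|S(v)|/2^{\sqrt{\lg n}}$ term sums across all levels to $O(n\lg n \cdot \lg\sigma/(2^{\sqrt{\lg n}}\sqrt{\lg n})) = o(n)$.

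For the construction time, Lemma~\ref{ball_inheritance_large_d}(b) builds $T$ in $O(n\lg\sigma/\sqrt{\lg n})$ time, during which each $A(v)$ is transiently available, so the $O(|A(v)|)$-time construction of $M(v)$ via Lemma~\ref{fisher_rmq} can be overlapped and sums to $O(n\lg\sigma/\sqrt{\lg n})$ in total. The $\hat S(v)$ structures cost $O(|S(v)| + \sqrt{\lg n}\cdot 2^{\sqrt{\lg n}})$ each by Lemma~\ref{fat_rect}; the $|S(v)|$ terms sum per level to $n$ and over all levels to $O(n\lg\sigma/\sqrt{\lg n})$, while the additive $\sqrt{\lg n}\cdot 2^{\sqrt{\lg n}}$ term over $O(\sigma/2^{\sqrt{\lg n}})$ internal nodes is $O(\sigma\sqrt{\lg n}) = O(n\sqrt{\lg n})$, which is absorbed since $\lg\sigma\ge\sqrt{\lg n}$ implies $n\sqrt{\lg n} = O(n\lg\sigma/\sqrt{\lg n})$.

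The main obstacle I anticipate is the bookkeeping on additive terms of the form $w\cdot 2^{\sqrt{\lg n}}$ and $\sqrt{\lg n}\cdot 2^{\sqrt{\lg n}}$: naively multiplied by the number of levels these would blow up, so the argument must use the total internal-node count $O(\sigma/2^{\sqrt{\lg n}})$ (rather than a per-level bound) to cancel the $2^{\sqrt{\lg n}}$ factor. Combined with the inequality $\sqrt{\lg n}\le \lg\sigma$ that converts $\lg n$ factors into $\lg\sigma$ factors at the cost of an $\epsilon$ blow-up, these observations give exactly the claimed $O(n\lg^{1+\epsilon}\sigma + n\lg n)$ space and $O(n\lg\sigma/\sqrt{\lg n})$ preprocessing bounds.
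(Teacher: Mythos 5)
Your decomposition and bookkeeping mirror the paper's argument exactly, using the per-level identity $\sum_v |S(v)|=n$ for terms linear in the array lengths and the total internal-node count $O(\sigma/2^{\sqrt{\lg n}})$ for the additive per-node terms; the space analysis is correct. However, one step in the construction-time analysis is wrong. After bounding the total additive cost of the Lemma~\ref{fat_rect} structures by $O(\sigma\sqrt{\lg n})$, you relax this to $O(n\sqrt{\lg n})$ and then assert that $\lg\sigma\ge\sqrt{\lg n}$ gives $n\sqrt{\lg n}=O(n\lg\sigma/\sqrt{\lg n})$. The latter inequality is equivalent to $\lg\sigma=\Omega(\lg n)$, which the hypothesis $\sigma\ge 2^{\sqrt{\lg n}}$ does not supply; at the boundary $\sigma=2^{\sqrt{\lg n}}$ it would read $n\sqrt{\lg n}=O(n)$, which is false. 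The slack entered when you replaced $\sigma$ by $n$ before comparing against the target. The correct move is to compare $\sigma\sqrt{\lg n}$ with $n\lg\sigma/\sqrt{\lg n}$ directly: the inequality $\sigma\sqrt{\lg n}\le n\lg\sigma/\sqrt{\lg n}$ rearranges to $\sigma/\lg\sigma\le n/\lg n$, which holds because $x/\lg x$ is increasing and $\sigma\le n$ --- this is what the paper's terse justification ``as $\sigma\le n$'' is appealing to. With that one repair, your proof coincides with the paper's.
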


\begin{proof}
	Now we analyze the space costs.
	$T$ with support for ball inheritance uses $O(n\lg^{1+\epsilon}\sigma+n\lg n)$ bits for any positive $\epsilon$.
	For each internal node $v$, since $w = \Theta(\lg n)$, the data structure for range reporting over $\hat{S}$ uses $O(|S(u)|\lg^{1/2+\epsilon'} n+2^{\sqrt{\lg n}}\lg n+|S(u)|\lg n/2^{\sqrt{\lg n}})$ bits for any positive $\epsilon'$. 
	This subsumes the cost of storing $M(u)$ which is $O(|A(u)|)$ bits. 
	As $T$ has $O(\sigma/2^{\sqrt{\lg  n}})$ internal nodes,
	the total cost of storing these structures at all internal nodes is $\sum_u O(|S(u)|\lg^{1/2+\epsilon'} n+2^{\sqrt{\lg n}}\lg n+|S(u)|\lg n/2^{\sqrt{\lg n}}) = O(n\lg\sigma/\sqrt{\lg n} \times \lg^{1/2+\epsilon'}n + \sigma \lg n) = O(n\lg\sigma\lg^{\epsilon'} n + \sigma\lg n)$.
	As $\lg n \le \lg^2\sigma$ and $\sigma \le n$, this is bounded by $O(n\lg^{1+2\epsilon'}\sigma)$.
	Setting $\epsilon' = \epsilon/2$, the space bound turns to be $O(n\lg^{1+\epsilon}  \sigma)$ bits. 
	Overall, the data structures occupy $O(n\lg^{1+\epsilon}\sigma+n\lg n)$ bits.
	
	Finally, we analyze the construction time.
	As shown in Lemma~\ref{ball_inheritance_large_d}, $T$ with support for ball inheritance can be constructed in $O(n\lg \sigma/\sqrt{\lg n})$ time.
	For each internal node $u$ of $T$, constructing $M(u)$ and the range reporting structure over $\hat{S}(v)$ requires $O(|A(u)|+|S(u)|+\sqrt{\lg n}\cdot 2^{\sqrt{\lg n}})=O(|S(u)|+\sqrt{\lg n}\cdot 2^{\sqrt{\lg n}})$ time.
	As $T$ has $O(\sigma/2^{\sqrt{\lg  n}})$ internal nodes, these structures over all internal nodes can be built in $\sum_u O(|S(u)|+\sqrt{\lg n}\times2^{\sqrt{\lg n}}) = O(n\lg\sigma/\sqrt{\lg n}+\sigma\sqrt{\lg n}) = O(n\lg \sigma/\sqrt{\lg n})$ as $\sigma\leq n$.
	The preprocessing time of all data structures is hence $O(n\lg \sigma/\sqrt{\lg n})$.
\end{proof}

\section{Optimal Orthogonal Range Successor with Fast Preprocessing}
\label{app: fast_orthogonal_range_succ}

We now design data structures over $n$ points in 2d rank space that support an orthogonal range successor query in optimal time and can be constructed fast.
Previously, using a solution to three-sided next point problem defined in Section \ref{sect: three_side_next_point} and ball inheritance, Zhou~\cite{zhou2016two} solved the orthogonal range successor problem within optimal query time.
As their solution relies on auxiliary structures on a binary wavelet tree, the pre-processing time requires $O(n\lg n)$.
Our data structure is constructed upon a $2^{\sqrt{\lg n}}$-ary wavelet tree to reduce the problem in the general case to the three-sided next point query problem and the orthogonal range successor problem in the special case in which the points are from a $2^{\sqrt{\lg n}}\times n'$ {\em medium narrow} grid.
And our solutions to ball inheritance upon a generalized wavelet tree with high fanout can apply, which reduces the processing time from $O(n\lg n)$ to $O(n\sqrt{\lg n})$.
We further design data structures with fast construction time supporting the three-sided next point problem and the reduced orthogonal range successor problem in the special cases.
Hence, we describe our solutions in this order: in Section \ref{sect: three_side_next_point}, we introduce the methods to solve the three sided next point query, and in Section \ref{sect:successor}, we describe our solutions to the orthogonal range successor problem over a small narrow, medium narrow and general grid, respectively.

\subsection{Fast Construction of the Three-Sided Next Point  Structures}
\label{sect: three_side_next_point}
In this subsection, we show how to efficiently construct data structures for three-sided next point queries, defined as follows.
Given a set of points  $N$, of $n$ points in the rank space, a three-sided next point query to be the problem of retrieving the point with the smallest $y$-coordinate among all points in $N\cap Q$ where $Q=[a, +\infty]\times[c, d]$.
We assume that points are in the rank space. 

The methods shown in Lemmas \ref{lemma_three_sided_next_point_big} and \ref{lemma_three_sided_next_point_small} are under the indexing model: after the construction of the data structure, each query operation needs to access some points and report them.
The point set $N$ itself need not be stored explicitly; it suffices to provide an operator supporting the access to an arbitrary point of $N$. 
The operator is implemented by $\point(v, i)$ from ball inheritance.
Our solutions will use the previous results as follows:

\begin{lemma}[{\cite[Lemma 5]{nekrich2012sorted}}]
	\label{lemma-three-sided-base}
	There exists a data structure of $O(n\lg^3 n)$-bit space constructed upon a set of $n$ points in rank space in $O(n\lg^2 n)$ time, which supports three-sided next point query in $O(\lg \lg n)$ time.
\end{lemma}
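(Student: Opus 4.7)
The plan is to build a balanced binary search tree $T$ on the $y$-coordinates of the $n$ input points so that $T$ has $O(\lg n)$ levels, and to augment each node $v$ with enough auxiliary structure to answer, in $O(\lg\lg n)$ time, the restricted sub-query ``among the points in $v$'s subtree, return the one of minimum $y$-coordinate whose $x$-coordinate is at least $a$''. Concretely, I would associate with $v$ an array $A(v)$ storing the points in $v$'s subtree sorted by $x$-coordinate, a $y$-fast trie (e.g.\ via Lemma~\ref{y_fast}) over the $x$-coordinates of $A(v)$ supporting predecessor/successor in $O(\lg\lg n)$, and a linear-space RMQ index (Lemma~\ref{fisher_rmq}) over the $y$-coordinates in the order they appear in $A(v)$. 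Since $\sum_v |A(v)| = O(n\lg n)$, these components use $O(n\lg^2 n)$ bits in total and can be built level by level in $O(n\lg^2 n)$ time by merging children's $x$-sorted lists at each level.

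Given a query $Q = [a, +\infty]\times[c,d]$, I would decompose $[c,d]$ into the $O(\lg n)$ canonical subtrees of $T$ sandwiched between the two root-to-leaf paths corresponding to $c$ and $d$. At each canonical root $v$, use the $y$-fast trie of $v$ to find in $O(\lg\lg n)$ time the smallest index $p_v$ such that $A(v)[p_v]$ has $x$-coordinate at least $a$; then return $A(v)[\rmq(p_v, |A(v)|-1)]$ in $O(1)$ time via the $y$-RMQ. The answer to $Q$ is the canonical answer of minimum $y$-coordinate across all canonical subtrees, plus a constant number of explicit checks at the two boundary leaves for $c$ and $d$.

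The main obstacle is cutting the naive query cost of $O(\lg n\cdot \lg\lg n)$ down to the claimed $O(\lg\lg n)$. My plan here is to apply fractional cascading between the $y$-fast tries along the two root-to-leaf paths of $T$, so that the single predecessor search performed at the top of $T$ yields the corresponding positions $p_v$ in every canonical subtree $v$ in $O(1)$ per subtree via bridge pointers; in parallel, I would precompute, at each internal node of $T$, a tournament layer that combines the $y$-minima of its canonical ``off-path'' descendants so that the overall minimum across the $O(\lg n)$ canonical subtrees can be extracted in $O(1)$ after the initial $O(\lg\lg n)$-time descent. Storing these fractional-cascading bridges and the tournament layer inflates the space by a factor of $\lg n$, which is precisely what the $O(n\lg^3 n)$-bit budget in the statement accommodates, and the corresponding construction fits in $O(n\lg^2 n)$ time by building the tries and RMQs level by level. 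The subtle point to verify is that the position delivered by the cascaded predecessor bridge at each canonical node $v$ is exactly the starting index required by the $y$-RMQ over $A(v)$, so that the handoff between the $x$-predecessor and the $y$-RMQ remains correct at every canonical subtree without any further per-subtree search.
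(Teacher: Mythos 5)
Note first that the paper does not prove this lemma: it is cited as \cite[Lemma 5]{nekrich2012sorted} and used as a black box, so there is no in-paper proof to compare your attempt against. Judging your argument on its own, the range-tree skeleton (balanced tree on $y$, each node $v$ storing $A(v)$ $x$-sorted together with a predecessor structure and an RMQ index) is sound, and fractional cascading correctly reduces the per-canonical-node predecessor cost to $O(1)$. But the ``tournament layer'' that is supposed to collapse the $O(\lg n)$ canonical answers to a single one in $O(1)$ time does not work. The minimum $y$ contributed by a canonical node $v$ is the least $y$ in $v$'s range whose $x$-value is at least $a$; this quantity depends on the query threshold $a$, which is unknown at preprocessing time. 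A precomputed tournament over the canonical subtrees is therefore answering the wrong question, and the $O(n\lg^3 n)$-bit budget does not let you tabulate it for all $n$ possible values of $a$. With only fractional cascading, the query still must inspect $\Theta(\lg n)$ canonical nodes and so costs $\Theta(\lg n)$, not $O(\lg\lg n)$.

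The missing idea is that only one canonical node is ever relevant. Writing $A[y]$ for the $x$-coordinate of the point with $y$-rank $y$, the sought point is the smallest $y\in[c,d]$ with $A[y]\ge a$; if the canonical nodes covering $[c,d]$ are $v_1,\dots,v_k$ in increasing $y$-order, the answer lies in the first $v_i$ with $\max_{y\in v_i}A[y]\ge a$, and every later $v_j$ can be ignored. So rather than trying to combine $O(\lg n)$ query-dependent minima, one should (i) binary search over $v_1,\dots,v_k$ for this first $v_i$, using precomputed prefix maxima of the subtree-maxima along the root-to-$c$ and root-to-$d$ paths — these are query-\emph{independent}, so a prefix-max table over them is legitimate, and the binary search over $O(\lg n)$ candidates costs $O(\lg\lg n)$ — and then (ii) inside the single node $v_i$, find the leftmost $y$ with $A[y]\ge a$ by a predecessor search over the left-to-right prefix maxima of $A$ restricted to $v_i$'s range (the answer is necessarily such a prefix maximum, and these form an increasing sequence), which costs another $O(\lg\lg n)$. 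Step (i) is the heart of the lemma and is what your proposal is missing: it replaces ``combine many $a$-dependent answers in $O(1)$,'' which cannot be done, with ``identify the unique relevant canonical node from an $a$-independent table,'' which can.
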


Both Lemmas \ref{lemma_three_sided_next_point_block} and \ref{lemma_three_sided_next_point_big} are originally designed by Zhou~\cite{zhou2016two}.
But they did not mention the construction time before.
Here, we only give the analysis of the construction time.

\begin{lemma}[{\cite[Lemma 3.2]{zhou2016two}}]
	\label{lemma_three_sided_next_point_block}
	Let $N$ be a set of $\lg^3 n$ points in rank space. 
	Given packed sequences $X$ and $Y$ respectively encoding the $x$- and $y$-coordinates of these points where $Y[i] = i$ for any $i \in [0, \lg^3 n-1]$, a data structure using $O(\lg^3 n\lg \lg n)$ bits of space constructed over $N$ in $o(\lg^3 n/\sqrt{\lg n})$ time that answers the three-sided next point query in $O(\lg \lg n)$ time. 
	The query procedure requires access to a universal table of $o(n)$ bits.
\end{lemma}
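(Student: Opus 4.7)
The lemma restates a result of Zhou~\cite{zhou2016two} whose space and query bounds are already established; the only content we must supply is the construction time. The plan is therefore to rebuild Zhou's data structure using the fast-construction components developed earlier in the paper, exploiting the fact that with only $\lg^3 n$ points whose coordinates fit in $O(\lg \lg n)$ bits, the packed input $X$ occupies $O(\lg^{2} n \lg\lg n)$ bits---well below $\lg^3 n$---so bit packing and universal tables of $o(n)$ bits can amortize work over many points at once.

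At its core, Zhou's structure is a binary wavelet tree $T$ over $X$ (with the implicit $Y[i]=i$), augmented at each node with range-min/max structures on the value array, predecessor structures on the index array, and support for $\prank$ and ball inheritance. The plan is to instantiate each of these with the small-universe, packed-construction variants from Section~\ref{sect:preliminaries} and Sections~\ref{section_partial_rank}--\ref{sect:ball_inheritance}. Specifically, I would build $T$ itself via Lemma~\ref{lemma:wavelet_construct_d_packed} applied with $n' = \sigma = \lg^3 n$, add ball-inheritance support through Lemma~\ref{ball_inheritance_point} (noting $\sigma = \lg^3 n \le 2^{O(\sqrt{\lg n})}$), and at each node attach $\rmq/\rMq$ via Lemma~\ref{minmax_index} together with $\pred/\succ$ via Lemma~\ref{pre_succ_index_duplicates}. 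Summed over the $O(\lg\lg n)$ levels of $T$, each level contributes $O(\lg^3 n \cdot \lg\lg n/\lg n)$ time, so the total construction cost is $O(\lg^3 n \cdot (\lg\lg n)^2 / \lg n)$, which is $o(\lg^3 n/\sqrt{\lg n})$ because $(\lg\lg n)^2 = o(\sqrt{\lg n})$.

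With these replacements in place, Zhou's query algorithm is unchanged: it performs $O(\lg\lg n)$ elementary operations on the augmented wavelet tree (ball-inheritance $\point/\noderange$, $\rmq/\rMq$, and $\pred/\succ$), each of which runs in $O(1)$ or $O(\lg\lg n)$ time under the lemmas cited, so the overall query time remains $O(\lg\lg n)$. The $O(\lg^3 n \lg\lg n)$-bit space bound follows directly by summing the stated bit costs of those lemmas across the $O(\lg\lg n)$ levels of $T$.

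The main obstacle I anticipate is bookkeeping: verifying that every auxiliary component that Zhou originally built in time $\Omega(\lg^3 n)$ really does admit a drop-in replacement that (i) preserves the interface Zhou's query algorithm relies on, (ii) supports indirect access to the underlying arrays via $\point$ so that nothing extra has to be stored verbatim, and (iii) collectively fits within $O(\lg^3 n \lg\lg n)$ bits. Once each ingredient is audited for compatibility, the promised construction time is an immediate consequence of the $(\lg\lg n)^2 = o(\sqrt{\lg n})$ inequality above.
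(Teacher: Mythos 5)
Your proposal departs fundamentally from the paper's proof, and unfortunately the departure introduces a genuine gap. The paper's proof of this lemma does \emph{not} build a wavelet tree over the $\lg^3 n$ points at all. Instead it groups the points along the $y$-axis into blocks of $\lg^{3/4} n$ points each, extracts the single point with maximum $x$-coordinate from every block via a constant-time universal-table lookup, and builds the Nekrich--Navarro structure of Lemma~\ref{lemma-three-sided-base} only on the resulting set $\hat{N}$ of $O(\lg^{9/4} n)$ sampled points. Since $|\hat{N}|\lg^2|\hat{N}| = O(\lg^{9/4} n\cdot(\lg\lg n)^2)$, the construction cost of that base structure is comfortably $o(\lg^3 n/\sqrt{\lg n})$; within-block subqueries are pushed entirely to the query algorithm, handled there by table lookup, and therefore contribute no construction cost.

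The obstacle for your route is quantitative, not cosmetic. With $\lg^3 n$ points in rank space, the alphabet for any wavelet tree over $X$ is $\sigma = \lg^3 n$, i.e.\ $\sigma$ equals the number of points. Every lemma you cite for the components carries an additive $\Theta(\sigma)$ (or $\Theta(\sigma\log_d\sigma)$) term in its construction time, reflecting the cost of visiting the $\Theta(\sigma)$ wavelet-tree nodes: Lemma~\ref{lemma:wavelet_construct_d_packed} is $O(n'\lg\sigma(\lg n'+\lg\sigma)/\lg n + \sigma)$, and Lemma~\ref{ball_inheritance_point} is $O(n'\lg^2\sigma/\lg n + \sigma\log_d\sigma)$. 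With $n' = \sigma = \lg^3 n$ these additive terms are $\Theta(\lg^3 n)$ regardless of the fanout $d$, which already exceeds the target of $o(\lg^3 n/\sqrt{\lg n})$. Your per-level accounting (``each level contributes $O(\lg^3 n\cdot\lg\lg n/\lg n)$ time'') silently drops these additive terms. The paper circumvents the node-count floor precisely by avoiding the wavelet tree: after sampling it delegates to a structure on only $O(\lg^{9/4} n)$ points, where even the unmodified $O(m\lg^2 m)$ construction of Nekrich--Navarro is cheap enough. If you want to salvage a wavelet-tree construction you would need, at minimum, the sort of ``collapse small subtrees'' modification the paper uses in Lemma~\ref{lemma_C_k} to suppress the node-count cost, but that is a different argument and still needs to be reconciled with ball inheritance.
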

\begin{proof}
	We divide each consecutive $\lg^{3/4} n$ points along $y$-axis of $N$ into blocks.
	The dividing operation can be done in $O(\lg^3 n/\lg^{3/4} n)=o(\lg^3 n/\sqrt{\lg n})$ time by bit-wise operations.
	As each point requires $6\lg \lg n$ bits of space, each block uses $6\lg \lg n\times\lg^{3/4} n$ bits less than a word.
	From each block, we apply a universal table $U$ of $o(n)$ bits to retrieve the point with maximum $x$-coordinate in constant time.
	$U$ has an entry for each possible triple $(\alpha, \beta, \gamma)$, where $\alpha$ or $\beta$ is a packed sequence of length at most $\lg^{3/4} n$ drawn from $[\lg^3 n]$ denoting the $x$- or $y$-coordinates of the points, respectively, and $\gamma$ is an integer $\in [0..(\lg^{3/4} n)-1]$ denoting the number of points.
	This entry stores the point with the maximum $x$-coordinate among the point set denoted by $\alpha$ and $\beta$.
	As $U$ has $O(2^{(\lg^{3/4} n)\times(6\lg \lg n)}\times \lg^{3/4} n)$ entries and each entry stores a point of $6\lg \lg n$ bits, $U$ uses $o(n)$ bits of space.
	Let $\hat{N}$ denote the set of the selected points and $|\hat{N}|=\lceil n^{\prime}/\lg^{3/4} n \rceil$.
	We use Lemma \ref{lemma-three-sided-base} to build a data structure $DS(\hat{N})$ over $\hat{N}$ for the the three-sided next point query. 
	The data structure $DS(\hat{N})$ using $O(|\hat{N}|\lg^3 |\hat{N}|)=o(n')$ bits of space can be built in $O(|\hat{N}|\lg^2 |\hat{N}|)=O(|\hat{N}|\lg \lg n)$ time bounded by $o(n^{\prime}/\sqrt{\lg n})$.
	Overall, the data structure uses $O(n'\lg \lg n+o(n'))=O(n'\lg \lg n)$ bits of space and can be constructed in $o(n^{\prime}/\sqrt{\lg n})$ time.
\end{proof}

\begin{lemma}[{\cite{zhou2016two}}]
	\label{lemma_three_sided_next_point_big}
	Let the sequence $A[0..n'-1]$ of distinct elements drawn from $[n]$ denote a point set $N = \{(A[i], i)| 0\le i \le n'-1\}$, where $ n^{\prime}\leq n$. 
	There exists a data structure using $O(n^{\prime}\lg \lg n)$ bits of extra space constructed over $N$ in $O(n^{\prime})$ time that answers three-sided next point query in $O(\lg \lg n)$ time and $O(1)$ access to $A$. 
\end{lemma}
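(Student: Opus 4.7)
The plan is to build a two-level decomposition of the point set $N$. I partition the $n'$ points into $\lceil n'/\lg^3 n \rceil$ blocks along the $y$-axis, each block $B_i$ consisting of $\lg^3 n$ consecutive points (so the $y$-coordinates within $B_i$ form a contiguous range). From each block I select as its representative the point with the maximum $x$-coordinate, which can be computed in $O(\lg^3 n)$ time per block by a linear scan using $O(1)$ accesses to $A$, giving $O(n')$ in total. On the $O(n'/\lg^3 n)$ representatives I build the structure of Lemma~\ref{lemma-three-sided-base}; since $\lg t \le \lg n$ for $t = n'/\lg^3 n$, this uses $O(t\lg^3 t) = O(n')$ bits and is constructed in $O(t\lg^2 t) = O(n'/\lg n)$ time. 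Inside each block, after rank-reducing the $x$-coordinates and packing them into the sequences $X$ and $Y$ required by Lemma~\ref{lemma_three_sided_next_point_block}, I build the block-level three-sided next-point structure, contributing $O(n'\lg\lg n)$ bits and $o(n'/\sqrt{\lg n})$ construction time in total.

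To answer a query $Q=[a,+\infty]\times[c,d]$, I first locate the blocks $B_{c'}$ and $B_{d'}$ containing $c$ and $d$ by integer division. If $c'=d'$, the answer is obtained directly from the block structure of $B_{c'}$ in $O(\lg\lg n)$ time. Otherwise I compute three candidates. The first is the three-sided next-point query restricted to $B_{c'}$ with local $y$-range $[c,\text{end}(B_{c'})]$. The second comes from the middle blocks: I perform a three-sided next-point query on the representative structure to find, among $B_{c'+1},\ldots,B_{d'-1}$, the representative with smallest $y$-coordinate whose $x$-coordinate is at least $a$; this identifies the first middle block $B_k$ that can contain a valid point, and I then run a full three-sided query inside $B_k$ to pinpoint the actual smallest-$y$ witness. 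The third is the analogous query restricted to $B_{d'}$ with local $y$-range $[\text{start}(B_{d'}),d]$. Returning the candidate with the smallest $y$-coordinate gives the answer, which is reported with $O(1)$ accesses to $A$ in $O(\lg\lg n)$ total time.

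The main obstacle is the preprocessing budget: the block-level structures require rank-reduced packed sequences, yet a naive sort of the $\lg^3 n$ $x$-coordinates (each of $\lg n$ bits) in a block already costs $\Theta(\lg^3 n\lg\lg n)$, which sums to $\Theta(n'\lg\lg n)$ and blows the $O(n')$ budget. To overcome this I would perform a packed radix sort on $\Theta(\sqrt{\lg n})$-bit digits, driven by an $o(n)$-bit universal lookup table that, given a packed chunk of $\Theta(\sqrt{\lg n})$ digits, returns the sorted chunk and bucket counts in constant time. Since each block fits in $O(\lg^3 n\lg n/w)=O(\lg^3 n)$ words and each pass moves a constant amount of work per word, the per-block cost becomes $O(\lg^3 n)$, summing to $O(n')$ across all blocks and absorbing all remaining preprocessing inside the claimed budget. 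The remaining bookkeeping (writing down the rank map so that $\point$-style accesses to $A$ can be translated on the fly) is then straightforward within this time bound.
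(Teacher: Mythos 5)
Your overall decomposition — blocks of $\lg^3 n$ consecutive points along $y$, one representative (the max-$x$ point) per block fed into Lemma~\ref{lemma-three-sided-base}, per-block structures from Lemma~\ref{lemma_three_sided_next_point_block}, and a query that checks the two boundary blocks plus the first eligible middle block via the representative structure — mirrors the paper's proof. The query logic is sound.

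The sorting step, however, has a genuine gap that breaks the $O(n')$ construction bound. The $x$-coordinates being sorted are full-word ($\lg n$-bit) keys, so each one already occupies a whole word and bit packing gives no leverage. A radix sort with $\Theta(\sqrt{\lg n})$-bit digits therefore makes $\Theta(\sqrt{\lg n})$ passes, and every pass must physically permute the $\lg^3 n$ word-sized keys, so the per-block cost is $\Theta(\lg^{3.5} n)$, not the $O(\lg^3 n)$ you claim; summed over $n'/\lg^3 n$ blocks this is $\Theta(n'\sqrt{\lg n}) = \omega(n')$. (With $\sqrt{\lg n}$-bit digits each pass also has to manage $2^{\sqrt{\lg n}} = \omega(\lg^3 n)$ buckets, which is even worse; shrinking the digit to $\Theta(\lg\lg n)$ bits trades the bucket blow-up for $\Theta(\lg n/\lg\lg n)$ passes and still exceeds the budget.) The proposed $o(n)$-bit "chunk-sorting" table does not rescue this: you can sort $\Theta(\sqrt{\lg n})$ extracted digits in one lookup, but the digits belong to distinct keys whose word-sized payloads still have to be moved each pass, so the pass count is unchanged. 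The paper does not attempt any packed radix sort here; it invokes the atomic heap of Fredman and Willard~\cite{FredmanW94}, which sorts any $\polylog(n)$-size set of $O(\lg n)$-bit integers in truly linear time after a one-time $o(n)$-bit precomputation, and there is no elementary packed-word substitute for that primitive.

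A secondary omission: "writing down the rank map" does not by itself let you convert the query threshold $a$ to a block-local $x$-rank within budget. Storing the sorted $x$-array per block explicitly costs $\lg n$ bits per point, i.e.\ $\Theta(n'\lg n)$ bits overall, exceeding $O(n'\lg\lg n)$; and a binary search that dereferences elements of $A$ would make $\Theta(\lg\lg n)$ accesses, not $O(1)$. The paper builds, per block, the predecessor/successor index of Lemma~\ref{pre_succ_chan} over the (conceptual, not stored) sorted $x$-sequence, where each entry is an $O(\lg\lg n)$-bit in-block index and elements are recovered via $\point$. That gives $O(\lg\lg n)$ query time, $O(1)$ accesses to $A$, and $O(\lg^3 n\lg\lg n)$ bits per block, and is the component your proof needs to make the query bound work out.
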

\begin{proof}
	We divide $N$ into $n'/\lg^3 n$ blocks, and for each $i \in [0, n'/\lg^3 n-1]$, the $i$-th block, $N_i$, contains points in $N$ whose $y$-coordinates are in $[i \lg^3 n, (i+1)\lg^3 n-1]$.
	Assume for simplicity that $n'$ is divisible by $\lg^3 n$.
	We linearly scan the points of each block and retrieve the one with maximum $x$-coordinate from each block. 
	Let $\hat{N}$ denote the selected points and $|\hat{N}|= n^{\prime}/\lg^3 n$. 
	We apply Lemma \ref{lemma-three-sided-base} to build the data structure $DS(\hat{N})$ over $\hat{N}$ for three-sided next point queries. 
	As shown in Lemma \ref{lemma-three-sided-base}, the data structure $DS(\hat{N})$ uses $O(\hat{N}\lg^3 \hat{N})=O(n'/\lg n)$ bits of space and can be built in $O(\hat{N}\lg^2 \hat{N})=O(n^{\prime}/\lg n)$ time.
	
	We apply the general rank reduction technique \cite{willard1985new} to reduce the points of each block to the rank space, which can be accomplished by sorting the points once with respect to each of $x$- and $y$-coordinate.
	As there are only $\polylog(n)$ points within each block, it is well-known that an atomic heap \cite{FredmanW94} can be used to sort them in linear time.
	As each point can be encoded in $O(\lg \lg n)$ bits after rank reduction, we take linear time to store the $x$- and $y$- coordinates of points of each block $N_i$ in a packed sequences ${X'}(N_i)$ and ${Y'}(N_i)$, respectively.
	Note that the $y$-coordinates of points in ${Y'}(N_i)$ denote the in-block indexes.
	Afterwards, we build data structure $TS(N_i)$ over ${X'}(N_i)$ and ${Y'}(N_i)$ of each block $N_i$ by Lemma \ref{lemma_three_sided_next_point_block} in $o(\lg^3 n/\sqrt{\lg n})$ time for three-sided next point query within a block.
	As each point in $N$ has a distinct $y$-coordinate represented by its index $j$ in $A[0..n'-1]$, we can use the block index $i$ and in-block index $i'$ to compute $j$, i.e., $j=i\times\lg^3 n+i'$, and then apply $\point(v, j)$ to retrieve the original $x$-coordinate of that point.
	Thus, we do not need to store the coordinates of points in $N$ for saving space.
	
	As defined above, let $Q$ denote the query range $[a, +\infty]\times[c,d]$.  
	When a query happens upon some block $N_i$, the query range $[a, +\infty]$ along $x$-axis need to be reduced to $[\hat{a}, +\infty]$ in rank space. 
	Before discarding the $x$-coordinates of points in $N_i$, we sort them in linear time using an atomic heap \cite{FredmanW94}.
	Let $S(N_i)$ denote the sequence storing all sorted $x$-coordinates. 
	If $S(N_i)$ is available at the querying procedure, we can apply $\succ(a)$ over $S(N_i)$ to find $\hat{a}$. 
	However, storing $S(N_i)$ will overflow the total space usage. 
	Instead, all points of $N_i$ are still sorted by $x$-coordinate and each point $e$ after sorting is specified by its in-block index $i'$ using $O(\lg \lg n)$ bits of space. 
	As all $x$-coordinates are distinct in $N$, we can use Lemma \ref{pre_succ_chan} to build the predecessor/successor data structure $PS(N_i)$ of $O(\lg^3 n\lg \lg n)$ bits in linear time over $S(N_i)$. 
	Afterwards $S(N_i)$ can be discarded. 
	The $\succ(a)$ query can be retrieved in $O(\lg \lg n)$ time and $O(1)$ calls to $\point$ without storing the sequence $S(N_i)$. 
	For all $n'/\lg^3 n$ blocks, the data structure can be constructed in $O(n'+n^{\prime}/\lg n+(n'/\lg^3 n)\times \lg^3 n+ (n'/\lg^3 n)\times o(\lg^3 n/\sqrt{\lg n})+(n'/\lg^3 n)\times \lg^3 n)=O(n')$ time.
\end{proof}

More interestingly, when the $x$- and $y$-coordinates of the points are stored in the packed form, we can solve the three-side queries with a data structure built in sublinear time. 
Our method requires a fast sorting algorithm for performing rank reduction over a small set of points.
When a sequence of $n'$ integers from $[\sigma]$ is bit packed into $O(n'\lg \sigma/ \lg n)$ words, it can be sorted using a bit-packed version of mergesort:
\begin{lemma}[{\cite{ah1997}}]
	\label{packed_sort}
	A packed sequence $A[0..n^{\prime}-1]$ from alphabet $[\sigma]$, where $max(\sigma, n') \leq n$, can be sorted in $O(n^{\prime}\lg n'{\lg \sigma}/{\lg n})$ time with the help of a universal tables of $o(n)$ bits.
\end{lemma}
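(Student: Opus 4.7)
The plan is to implement a word-level version of mergesort that exploits the fact that $b = \lfloor \lg n / (2\lg \sigma)\rfloor = \Theta(\lg n/\lg\sigma)$ elements of $A$ fit in a single machine word. The total running time should then be $O(\lg n')$ merging passes, each costing $O(n'\lg\sigma/\lg n)$ time, which matches the claimed bound.

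First, in a preprocessing phase, I would partition $A$ into blocks of $b$ consecutive elements and sort each block in constant time using a universal lookup table $U_1$. Table $U_1$ stores, for every possible $b$-element packed word over $[\sigma]$, the corresponding sorted packed word; since there are at most $2^{b\lceil\lg\sigma\rceil} \le \sqrt{n}$ entries and each entry occupies $O(\lg n)$ bits, $U_1$ uses $o(n)$ bits and can be built once in $o(n)$ time. This phase runs in $O(n'/b) = O(n'\lg\sigma/\lg n)$ time and yields $n'/b$ sorted runs, each of length $b$.

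Next comes the merging phase. After $j$ passes we have sorted runs of length $b\cdot 2^j$; we merge adjacent pairs in the $(j{+}1)$-st pass. The critical ingredient is a packed merge subroutine that, given two sorted packed sequences occupying a total of $m$ words, merges them in $O(m)$ time. The standard way to obtain this is to implement a bitonic merger on packed words in the style of Albers and Hagerup: using a second universal table $U_2$ that, for any pair of sorted packed words, returns the merged sorted word and the lengths of the leftover tails in constant time, one can produce one sorted output word per step by repeatedly consuming the smaller of the two front words. With this merge primitive, each pass processes $O(n'\lg\sigma/\lg n)$ words in linear time, and since there are $O(\lg(n'/b)) = O(\lg n')$ passes, the overall merging cost is $O(n'\lg n' \lg\sigma/\lg n)$, which dominates the preprocessing phase.

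The main obstacle is the design and correctness of the packed merge primitive: we must argue that $U_2$ really can be realised in $o(n)$ bits (its arguments are two $\Theta(\lg n)$-bit packed words, so we cannot tabulate all pairs directly) and that constant amortised time per output word is achievable. The standard workaround is to let $U_2$ take as input the two front words plus a pair of pointers indicating how much of each word has already been consumed, and to output only one new sorted word together with updated pointers; this keeps the table size to $2^{O(\lg n)}\cdot \polylog n$ bits after choosing the block size $b$ so that $2b\lceil\lg\sigma\rceil \le \frac{1}{2}\lg n$, which is exactly why the factor $\frac{1}{2}$ appears in the definition of $b$. Everything else — the outer mergesort recursion, the accounting of $O(\lg n')$ passes, and the requirement $\max(\sigma,n')\le n$ so that all indices fit in $O(\lg n)$ bits — is routine.
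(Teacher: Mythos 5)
Your proposal reconstructs the packed mergesort of Albers and Hagerup, which is exactly what the paper cites for this lemma (the paper gives no proof of its own, only the reference and the remark that a bit-packed mergesort suffices), so your route is the intended one. One small inconsistency to fix: with $b=\lfloor\lg n/(2\lg\sigma)\rfloor$ two front blocks together occupy roughly $\lg n$ bits, so a table indexed by that pair plus pointers would have $\Omega(n)$ entries rather than the required $o(n)$ bits; the constraint $2b\lceil\lg\sigma\rceil\le\tfrac12\lg n$ that you correctly identify as needed is \emph{not} implied by that choice of $b$ and requires something like $b=\lfloor\lg n/(4\lceil\lg\sigma\rceil)\rfloor$ --- a constant-factor adjustment only, which does not affect the asymptotics.
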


Note one  difference between Lemmas  \ref{lemma_three_sided_next_point_big} and \ref{lemma_three_sided_next_point_small}:  Lemma~\ref{lemma_three_sided_next_point_small} allows multiple points with the same  $x$-coordinate.
\begin{lemma}
	\label{lemma_three_sided_next_point_small}
	Let $N$ be a set of $n'$ points with distinct $y$-coordinates in a $2^{\sqrt{\lg n}}\times n'$ grid, where $n^{\prime}=O(2^{c\sqrt{\lg n}})$ for any constant integer $c$. 
	Given packed sequences $X$ and $Y$ respectively encoding the $x$- and $y$-coordinates of these points where $Y[i] = i$ for any $i \in [0, n'-1]$, a data structure using $O(n^{\prime}\lg \lg n)$ bits of extra space constructed over $N$ in $O({n^{\prime}}\lg \lg n/{\sqrt{\lg n}})$ time that answers three-sided next point query in $O(\lg \lg n)$ time and $O(1)$ access to $A$. 
\end{lemma}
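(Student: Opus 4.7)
The plan is to follow the two-level block decomposition used in Lemma \ref{lemma_three_sided_next_point_big}, but to replace every linear-time construction step with a packed-sequence version, absorbing the $\sqrt{\lg n}$ speedup inherent in packing $O(\sqrt{\lg n})$-bit points. Concretely, I will cut $N$ into outer blocks of $\lg^3 n$ consecutive points along the $y$-axis---trivial since $Y[i]=i$---and within each outer block (a) perform $x$-rank reduction by sorting the $x$-coordinates, (b) build a predecessor index using Lemma \ref{pre_succ_index_duplicates} over the sorted $x$-list, and (c) invoke Lemma \ref{lemma_three_sided_next_point_block} on the rank-reduced block. Across blocks I will select the point of maximum $x$-coordinate from each outer block, producing a representative set $\hat N$ of $n'/\lg^3 n$ points, and feed $\hat N$ into Lemma \ref{lemma-three-sided-base}. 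Queries mirror Lemma \ref{lemma_three_sided_next_point_big}: resolve two boundary blocks via their internal oracles (after translating $a$ through the block-local predecessor structure), resolve the interior via a single query on $\hat N$, and pick the candidate with the smallest $y$-coordinate. All of this uses $O(\lg\lg n)$ time and $O(1)$ accesses to $X$.

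The time accounting is the critical consideration. The rank-reduction sort per outer block, performed via the packed mergesort of Lemma \ref{packed_sort} on $\lg^3 n$ elements over alphabet $[2^{\sqrt{\lg n}}]$, costs $O(\lg^{5/2} n\cdot\lg\lg n)$, and summing over all outer blocks yields exactly $O(n'\lg\lg n/\sqrt{\lg n})$---the dominant term and the reason for the target construction bound. The predecessor and in-block three-sided structures each cost $O(\lg^3 n/\sqrt{\lg n})$ per block by Lemmas \ref{pre_succ_index_duplicates} and \ref{lemma_three_sided_next_point_block}, totalling $O(n'/\sqrt{\lg n})$. Selecting the maximum-$x$ point from each outer block via a universal table operating on $\Theta(\sqrt{\lg n})$-point word-sized sub-blocks contributes another $O(n'/\sqrt{\lg n})$. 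The top-level Lemma \ref{lemma-three-sided-base} costs $O(|\hat N|\lg^2|\hat N|)=O((n'/\lg^3 n)\cdot\lg n)=O(n'/\lg^2 n)$, using that $\lg n'=O(\sqrt{\lg n})$ forces $\lg^2|\hat N|=O(\lg n)$. The total space is $O(n'\lg\lg n)$ bits: each outer block contributes $O(\lg^3 n\cdot\lg\lg n)$ bits from the predecessor index and the internal oracle, while $\hat N$ contributes only $O(n'/\lg^{3/2} n)$ bits since $\lg^3|\hat N|=O(\lg^{3/2} n)$.

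The main obstacle is the small-$n'$ regime. Lemma \ref{lemma_three_sided_next_point_block} is stated for exactly $\lg^3 n$ points, so when $n'<\lg^3 n$ we have at most one partial outer block, and when $n'$ is extremely small---say $n'=O(\sqrt{\lg n})$---the time budget $O(n'\lg\lg n/\sqrt{\lg n})$ shrinks to $O(\lg\lg n)$ and cannot accommodate any nontrivial block construction. I will split into two subcases. If the entire packed representation of $X$ occupies $O(\lg n)$ bits---which happens whenever $n'=O(\sqrt{\lg n})$---I precompute a universal table of $o(n)$ bits keyed by the packed content of $X$ that stores the three-sided oracle explicitly, so preprocessing reduces to one table lookup and each query becomes a second table lookup. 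Otherwise the single-partial-block variant of the above construction suffices; the subtle part of the argument will be to verify that Lemma \ref{lemma_three_sided_next_point_block}'s internal block decomposition still meets its cost guarantee when applied to fewer than $\lg^3 n$ input points, which amounts to checking that each of its sub-steps scales linearly with the actual block size rather than with $\lg^3 n$.
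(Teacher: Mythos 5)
Your proposal mirrors the paper's proof: re-use the block decomposition of Lemma~\ref{lemma_three_sided_next_point_big}, but replace the atomic-heap rank reduction with packed mergesort (Lemma~\ref{packed_sort}) and the predecessor structure of Lemma~\ref{pre_succ_chan} with the packed variant from Lemma~\ref{pre_succ_index_duplicates}; your $O(\lg^{5/2}n\cdot\lg\lg n)$ per-block sort cost and the resulting total $O(n'\lg\lg n/\sqrt{\lg n})$ are exactly what the paper derives, and the space accounting is likewise correct. One small caution regarding your fallback for tiny $n'$: a universal table keyed on the full packed $X$ uses $\Theta(n'\sqrt{\lg n})=\Theta(\lg n)$ key bits, and if the hidden constants make that exceed roughly $\tfrac{1}{2}\lg n$ the table is no longer $o(n)$ bits; the paper instead treats its construction bounds as carrying implicit $+O(1)$ overheads (so a single partial block suffices), and your own observation that Lemma~\ref{lemma_three_sided_next_point_block} scales with the actual block size already covers that regime without the table-lookup fallback.
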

\begin{proof}
	As shown in the proof of Lemma \ref{lemma_three_sided_next_point_big}, the linear construction time is bounded by the rank reduction operation and building the predecessor/successor index data structure upon sorted $x$-coordinates of points for each block. 
	As the $x$- and $y$-coordinates of each point are encoded with $O(\sqrt{\lg n})$ bits and the coordinates of points are stored in packed sequences, we can sort a block of $\lg^{3} n$ points in $O((\lg^3 n)\times({\sqrt{\lg n}}/{\lg n})\times\lg \lg n)$ time by applying Lemma \ref{packed_sort}. 
	Meanwhile, the predecessor and successor data structure for each block $N_i$ can be constructed in $O(\lg^3 n/\sqrt{\lg n})$ time by applying Lemma \ref{pre_succ_index_duplicates}. 
	Overall, the whole data structure over $N$ can be built in $O({n^{\prime}}\lg \lg n/{\sqrt{\lg n}})$ time. 
\end{proof}

\subsection{Fast Construction of the Orthogonal Range Successor  Structures}
\label{sect:successor}

Now we consider the solution of  the orthogonal range successor problem with optimal time.
We describe our solution first for a  small narrow grid of size $\lg^{1/4}\times n'$, then for  a  medium narrow grid of size $2^{\sqrt{\lg n}}\times n'$, and finally for  an $n\times n$ grid.  Every step in our construction relies on the previous one.



\subsubsection{Orthogonal Range Successor Queries in a Small Narrow Grid}
First, we consider a special case such that the number of points is less than $\lg n$.
\begin{lemma}
	\label{lemma_orthogonal_range_successor_base}
	Let $N$ be a set of $n'$ points with distinct $y$-coordinates in a $\lg^{1/4} n \times n'$ grid where $n' <\lg n$.
	Given packed sequences $X$ and $Y$ respectively encoding the $x$- and $y$-coordinates of these points where $Y[i] = i$ for any $i \in [0, n'-1]$, a data structure of $O(n^{\prime}\lg \lg n)$ bits can be built in $o(n^{\prime}/\sqrt{\lg n})$ time over $N$ to answer orthogonal range successor query in $O(1)$ time. 
	The construction and query procedure each requires access to a universal table of $o(n)$ bits.
\end{lemma}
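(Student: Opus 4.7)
The plan is to exploit the fact that the entire input, having $n' < \lg n$ points of $O(\lg\lg n)$ bits each, occupies only $O(\lg n\cdot\lg\lg n)$ bits, so the problem reduces to a constant number of universal-table lookups. First I would partition the points of $N$ along the $y$-axis into at most $\lceil\lg^{1/4} n\rceil$ blocks $B_0, B_1, \ldots$ of $\lg^{3/4} n$ consecutive points each. For each block $B_i$ I store (i) a packed block descriptor listing, for each local $y$-slot in $[0,\lg^{3/4} n-1]$, the $x$-coordinate of the point occupying that slot or a ``no-point'' sentinel, and (ii) the block maximum $M_i=\max_{p\in B_i}p.x$. Together with the global $y$-offset of each block, these structures occupy $O(\lg^{3/4} n\cdot\lg\lg n)$ bits per block, hence $O(n'\lg\lg n)$ bits overall, and are built from the packed input sequences $X$ and $Y$ by slicing and a table-driven max over each block in $O(n'\lg\lg n/\lg n) = o(n'/\sqrt{\lg n})$ time.

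The core ingredients are two universal tables, whose contents depend only on $n$ and are precomputed once globally in $o(n)$ time and space. $U_1$ is indexed by a packed block descriptor together with a query triple $(a, c', d')$ with $a\in[0,\lg^{1/4} n]$ and $c', d'\in[0,\lg^{3/4} n-1]$, and returns the smallest local $y\in[c',d']$ hosting a point with $x\ge a$, or $\bot$. $U_2$ is indexed by a packed sequence of at most $\lg^{1/4} n$ values in $[0,\lg^{1/4} n]$ together with a query value $a$, and returns the first position in the sequence with value $\ge a$, or $\bot$. A block descriptor takes $O(\lg^{3/4} n\cdot\lg\lg n)=o(\lg n)$ bits and an $M$-sequence takes $O(\lg^{1/4} n\cdot\lg\lg n)=o(\lg n)$ bits, so each table has $2^{o(\lg n)}\cdot\polylog\lg n = n^{o(1)}$ entries of $O(\lg\lg n)$ bits and hence fits in $o(n)$ bits; a similar max-of-block table supports the $M_i$ computation during construction.

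A query $[a,+\infty)\times[c,d]$ then runs in $O(1)$ time as follows. Compute by integer division the blocks $B_s$ and $B_t$ containing the global $y$-coordinates $c$ and $d$, and set $c'=c\bmod\lg^{3/4} n$, $d'=d\bmod\lg^{3/4} n$. If $s=t$, return $U_1(B_s, a, c', d')$, translated to global $y$. Otherwise probe $U_1(B_s, a, c', \lg^{3/4} n-1)$; if non-$\bot$, return. Otherwise extract the slice $M[s+1..t-1]$ in $O(1)$ with bit shifts, consult $U_2$ to find the smallest index $i$ in that range with $M_i\ge a$, and, if any, return $U_1(B_i, a, 0, \lg^{3/4} n-1)$. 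If no such $B_i$ exists, finally return $U_1(B_t, a, 0, d')$. Correctness is immediate from processing blocks in increasing global $y$ order: the first block containing any qualifying point supplies the answer, and $U_1$ pinpoints the lowest such point inside it.

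The only delicate point I anticipate is accounting for the universal-table sizes, which stay $o(n)$ only because both $\lg^{3/4} n\cdot\lg\lg n$ and $\lg^{1/4} n\cdot\lg\lg n$ are $o(\lg n)$; this is where the choice of block size $\lg^{3/4} n$ versus fanout $\lg^{1/4} n$ is calibrated. Everything else is routine bit-packing: query time is $O(1)$, construction time is dominated by a single pass through $O(n'\lg\lg n/\lg n)$ input words plus $O(\lceil n'/\lg^{3/4} n\rceil)$ block-max lookups, and total extra space is $O(n'\lg\lg n)$ bits, matching the lemma.
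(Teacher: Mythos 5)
Your proposal solves the \emph{three-sided} successor query $[a,+\infty)\times[c,d]$, but Lemma~\ref{lemma_orthogonal_range_successor_base} is about the \emph{four-sided} orthogonal range successor over $Q=[a,b]\times[c,d]$ (this is the convention fixed at the start of Section~\ref{sect: fast_orthogonal_range_succ}, and the query range in the paper's proof is explicitly $[a,b]\times[c,d]$). This matters exactly at the step where your proof takes a shortcut: storing one block summary $M_i=\max_{p\in B_i}p.x$ and consulting $U_2$ to find the first middle block with $M_i\ge a$. Once there is also an upper bound $b$ on the $x$-coordinates, knowing $M_i\ge a$ tells you nothing about whether $B_i$ actually contains a point with $x\in[a,b]$: the block could contain only points with $x>b$ and points with $x<a$, so following that block and querying $U_1$ with range $[a,b]$ would waste your only remaining constant-time probe and still return $\bot$, at which point you have no way to locate the correct middle block in $O(1)$ time. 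A single scalar per block is simply too coarse a summary for a two-sided $x$-constraint.

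The paper closes this gap with a richer per-block summary. Instead of one max value per block, it forms a set $\hat{N}$ containing, for every block $N_i$ and every $x$-value $j$ that occurs in $N_i$, the lowest point of $N_i$ with $x$-coordinate $j$. This set has at most $(n'/\lg^{3/4}n)\cdot\lg^{1/4}n=n'/\sqrt{\lg n}<\sqrt{\lg n}$ points, so it is itself small enough to fit in one universal-table index, and the middle-blocks portion of the query is answered as a four-sided successor query over $\hat{N}\cap([a,b]\times[\text{middle block range}])$ using the same universal table $U$ that handles an in-block query. Because the middle blocks lie entirely inside $[c,d]$, the lowest point of $N$ in any middle block of $Q$ is the lowest point of its block at its $x$-coordinate, hence is present in $\hat{N}$; correctness follows. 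So the blocking, the packed block descriptors, the $o(n)$-bit universal tables, the construction-time and space accounting in your write-up are all fine, but you need to replace the block-maximum array $M$ and table $U_2$ by the representative set $\hat{N}$ (or some other per-block summary that records which $x$-values are present, with the lowest occurrence of each) so that the middle-blocks step respects both sides of the $x$-range.
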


\begin{proof}
	When $n'\leq \lg^{3/4} n$, we can apply a universal table $U$ of $o(n)$ bits to retrieve in constant time the point with the smallest $y$-coordinate in the query range.
	$U$ has an entry for each possible set $(\alpha, \beta, \gamma, a', b', c', d')$, where $\alpha$ (or $\beta$, respectively) is a packed sequence of length at most $\lg^{3/4} n$ drawn from $[\lg^{1/4} n]$ (or $[\lg^{3/4} n]$, respectively) denoting the $x$-coordinate (or $y$-coordinate, respectively), $\gamma$ is an integer $\in [0..(\lg^{3/4} n)-1]$ denoting the number of points, and $a', b', c', d'$ each is an integer $\in [0, (\lg n)-1]$ and all together denotes the query range.
	This entry stores the point with the smallest $y$-coordinate in the point set denoted by $\alpha$ and $\beta$.
	As $U$ has $O(2^{(\lg^{3/4} n)\times(\lg \lg n)}\times \lg^{3/4} n \times \lg^4 n)$ entries and each entry stores a point of at most $\lg \lg n$ bits, $U$ uses $o(n)$ bits of space.
	
	Assume for simplicity that $n'$ is divisible by $\lg^{3/4} n$.
	We divide $N$ into $n'/\lg^{3/4} n$ subsets, and for each $i \in [0, n'/\lg^{3/4} n-1]$, the $i$-th subset, $N_i$, contains points in $N$ whose $y$-coordinates are in $[i \lg^{3/4} n, ((i+1)\lg^{3/4} n)-1]$.
	The division of $N$ into $N_i$  can be implemented in $O(n'/\lg^{3/4} n)$ time using bitwise operations.
	We also define a point set $\hat{N}$ in a $\lg^{1/4} n \times n'$ grid. 
	For each set $N_i$ where $i \in [0, n'/\lg^{3/4} n-1]$ and each $j \in [0, \lg^{1/4} n-1]$, if there exists at least one point in $N_i$ whose $x$-coordinate is $j$, we store the one with the smallest $y$-coordinate among them in $\hat{N}$.
	Thus the number of points in $\hat{N}$ is at most $n'/\lg^{3/4} n \times \lg^{1/4} n = n'/\sqrt{\lg n}<\sqrt{\lg n}$. 
	As each block of points occupies in total $(1/4\lg \lg n+\lg n')\times\lg^{3/4} n$ bits, creating points for $\hat{N}$ from each block can be implemented in $O(1)$ time with a universal table $U'$ of $o(n)$ bits.
	$U'$ has an entry for each possible triple $(\alpha, \beta, \gamma)$, where $\alpha$ (or $\beta$, respectively) is a packed sequence of length at most $\lg^{3/4} n$ drawn from $[\lg^{1/4} n]$ (or $[(\lg n)-1]$, respectively) denoting the $x$-coordinate (or $y$-coordinate, respectively), and $\gamma$ is an integer $\in [0..(\lg^{3/4} n)]$ denoting the number of points.	This entry stores a packed sequence of at most $\lg^{1/4} n$ points for $\hat{N}$ occupying at most $(\lg^{1/4} n)\times(1/4\lg \lg n+\lg \lg n)$ bits.
	Similar to the universal table $U$, $U'$ uses of $o(n)$ bits.
	Therefore, constructing $\hat{N}$ takes $O(n'/\lg^{3/4} n)$ time.
	Obviously, storing all points in $N$ and $\hat{N}$ occupies $O(n'\lg \lg n)$ bits of space in total.
	
	Let $Q=[a, b]\times[c,d]$ denote the query range and $N_i,..., N_j$ denote the blocks intersecting the range $[c,d]$ such that $i=\lfloor c/ \lg^{3/4} n\rfloor$ and $j=\lfloor d/ \lg^{3/4} n\rfloor$. 
	If $i=j$, then the query range $Q$ is within a single block and  we can apply $U$ to retrieve the answer in constant time. 
	Otherwise, we sequentially check $B_i\cap Q$, $(B_{i+1}\cup B_{i+2} \cup...\cup B_{j-1})\cap Q$, $B_j \cap Q$, and stop querying once the lowest point is retrieved. 
	The second case can also be answered in constant time by querying over $U$ with the range $\hat{N}\cap[a,b]\times[i\times b+b, j\times b]$. 
	Overall, the query time is $O(1)$.
\end{proof}

Next, we consider the orthogonal range successor problem on a larger number of points.
Our method requires the previous result shown as follows:

\begin{lemma}[{\cite[Lemma C.3]{belazzougui2016range}}]
	\label{lem:select}
	Let $A[0..n^{\prime}-1]$ be a packed sequence  drawn from alphabet $[\sigma]$, where $n' \le n$. 
	A data structure of $O(n'\lg \sigma)$ bits supporting $\selop$ in $O(1)$ time can be constructed in $O(n^{\prime}\lg^2 \sigma/\lg n+\sigma)$ time.
\end{lemma}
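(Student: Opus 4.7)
My approach mirrors the two-level decomposition developed in Section~\ref{section_partial_rank} for partial rank, replacing the per-chunk $\prank$ answers with per-chunk $\selop$ structures. Partition $A$ into $n'/\sigma$ chunks $A_0,A_1,\ldots$ of size $\sigma$ (assume divisibility). For each character $c\in[\sigma]$, build the bit vector
$$B_c \;=\; 1^{\mathrm{count}_c(A_0)}\,0\;1^{\mathrm{count}_c(A_1)}\,0\;\cdots\;1^{\mathrm{count}_c(A_{n'/\sigma-1})}\,0,$$
encoding chunk-level occurrence counts in unary, exactly as done in Section~\ref{section_partial_rank}. The construction from that section (building the auxiliary sequence $M$ and the color-tree $T$) yields all $B_c$'s in $O(n'\lg^2\sigma/\lg n+\sigma)$ time and $O(n')$ bits total; augmenting each $B_c$ with Lemma~\ref{bit_sequence} gives constant-time $\selop_0$ and $\selop_1$ at $o(n')$ extra bits and $O(n'/\lg n)$ extra time.

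For the per-chunk part, store for each chunk $A_k$ two packed arrays: $S_k[0..\sigma-1]$ listing the in-chunk positions of the elements of $A_k$ in stable sorted order by character value, and $O_k[0..\sigma]$ giving, for each $c\in[\sigma]$, the offset in $S_k$ at which the block of occurrences of $c$ begins. Both arrays use $O(\sigma\lg\sigma)$ bits per chunk, so $O(n'\lg\sigma)$ bits in total. To answer $\selop_c(A,i)$, set $p=\selop_1(B_c,i)$, $k=p-i$ (chunk index), and $j=i-(\selop_0(B_c,k)-k)$ (in-chunk rank); then return $k\sigma+S_k[O_k[c]+j-1]$, all in $O(1)$ time via packed-array access.

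\textbf{Main obstacle.} The delicate step is constructing $S_k,O_k$ for all $k$ within the $O(n'\lg^2\sigma/\lg n+\sigma)$ budget. A naive per-chunk counting sort costs $\Omega(\sigma)$ per chunk and thus $\Omega(n')$ overall, which exceeds the budget when $\lg^2\sigma=o(\lg n)$. To fix this, I would follow the same trick used in the construction of the sequences $P_c$ in Section~\ref{section_partial_rank}: instead of processing each chunk in isolation, build a single binary wavelet-tree-like structure over $A$ using Lemma~\ref{lemma:wavelet_construct_d_packed} (or the variant without index arrays, combined with universal lookup tables over $\Theta(\lg n/\lg\sigma)$-element packed blocks). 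A final linear pass over the leaf-level output, together with the chunk boundaries already encoded in the $B_c$'s, produces the concatenated stream $S_0 S_1\cdots$ and the $O_k$'s. This stays within $O(n'\lg^2\sigma/\lg n+\sigma)$ time, matching the bound on the $B_c$ construction and thus the overall target.
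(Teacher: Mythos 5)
The paper cites this lemma from Belazzougui and Puglisi (their Lemma~C.3) and does not prove it, so there is no in-paper proof to compare against; what you have written is a reconstruction from scratch, modeled on the paper's $\prank$ machinery from Section~\ref{section_partial_rank}. The query structure is sound: the chunk decomposition, the unary bit vectors $B_c$, and the per-chunk stable-sort arrays $S_k$ with offsets $O_k$ together support $\selop$ in $O(1)$ time within $O(n'\lg\sigma)$ bits, and your arithmetic for turning $(c,i)$ into a chunk index $k$ and in-chunk rank $j$ via $\selop$ on $B_c$ is correct modulo indexing conventions.

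The gap is in the construction of $S_k$ and $O_k$, which you flag but do not actually resolve. Building ``a single binary wavelet-tree-like structure over $A$'' and reading $S_0S_1\cdots$ off the leaves does not fit the budget: to obtain positions to store in $S_k$ you would need index arrays, and a global tree with index arrays costs $O(n'\lg\sigma(\lg n'+\lg\sigma)/\lg n+\sigma)$ by Lemma~\ref{lemma:wavelet_construct_d_packed}, because its index entries live in $[n']$ and take $\lg n'$ bits each; in the regime where this lemma is applied (e.g.\ $\sigma=\lg^{1/4}n$ in Lemma~\ref{theorem_orthogonal_range_successor_small}), $\lg n'$ can be far larger than $\lg\sigma$, so this exceeds $O(n'\lg^2\sigma/\lg n+\sigma)$. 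You also misremember the $P_c$ construction: Appendix~D.3 does not build one tree over all of $A$ in its first phase --- it builds a modified, early-terminating wavelet tree \emph{per chunk}, precisely so that the auxiliary indices live in $[\sigma]$ and cost only $\lg\sigma$ bits, and so that the per-chunk $O(\sigma)$ setup term is truncated away. That is the idea you need here: the concatenated leaf-index sequence $I_k$ produced in that phase is essentially your $S_k$. A smaller unaddressed point is $O_k$: computing the per-chunk prefix-sum histogram by a naive scan costs $\Theta(\sigma)$ per chunk, hence $\Theta(n')$ total, again over budget when $\lg^2\sigma=o(\lg n)$. You should extract $O_k$ from the truncated tree's leaf boundaries via universal tables, or replace $O_k$ by a per-chunk bit vector $1^{\mathrm{count}_0(A_k)}0\cdots1^{\mathrm{count}_{\sigma-1}(A_k)}0$ with select support. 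With these repairs the strategy works, but as written the construction step would not run in the stated time.
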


\begin{lemma}
	\label{theorem_orthogonal_range_successor_small}
	Given packed sequence $X[0..n'-1]$ drawn from $[\lg^{1/4} n]$ denote a point set $N = \{(A[i], i)| 0\le i \le n'-1\}$, where $n'\le n$, a data structure of $O(n^{\prime}\lg^2 \lg n+w\times\lg^{1/4} n)$ bits can be built in $O(n^{\prime}/\sqrt{\lg n})$ time over $N$ to answer orthogonal range successor query in $O(\lg \lg n)$ time. 
\end{lemma}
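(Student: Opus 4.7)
The proof splits on $n'$. If $n' < \lg n$, the claim is immediate from Lemma~\ref{lemma_orthogonal_range_successor_base}, which already achieves the stated construction time, space, and query time for this regime. I therefore focus on the case $n' \ge \lg n$, where my plan is to build a binary wavelet tree $T$ over $X$ and rely only on constant-time $\rankop$/$\selop$ on its bitvectors; no ball-inheritance machinery is needed, since every descent or ascent mapping between levels can be computed directly by $\rankop$/$\selop$ on the $S(v)$'s. Because the alphabet has size $\sigma = \lg^{1/4} n$, the tree has height $\lceil \tfrac{1}{4}\lg\lg n\rceil = O(\lg\lg n)$ and $O(\lg^{1/4} n)$ nodes, whose topology I represent implicitly in $O(w\lg^{1/4} n)$ bits.

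For construction, I invoke Lemma~\ref{lemma:wavelet_construct_d_packed} in its cheapest mode (no value or index arrays), costing $O(n'\lg^2\sigma/\lg n + \sigma) = O(n'(\lg\lg n)^2/\lg n + \lg^{1/4} n)$ time, and then augment every $S(v)$ with constant-time $\rankop$ via Lemma~\ref{bit_sequence} and constant-time $\selop$ via Lemma~\ref{lem:select}, spending $O(n'/\lg n)$ per level and $O(n'\lg\lg n/\lg n)$ in total. Since $n' \ge \lg n$, each of these is $o(n'/\sqrt{\lg n})$, meeting the construction-time bound. Accounting for the bitvectors, their $\rankop$/$\selop$ augmentations across $O(\lg\lg n)$ levels, and the $O(\lg\lg n)$-bit bookkeeping per node used to support the recursive descent, the total space is $O(n'\lg^2\lg n + w\lg^{1/4} n)$.

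The query on $[a,b]\times[c,d]$ proceeds as follows. In $O(1)$ time I locate the LCA $u$ of the leaves $\ell_a,\ell_b$, and, descending from the root to $u$, map the $y$-range $[c,d]$ to its image $[c_u,d_u]$ in $A(u)$ using one $\rankop$ per level. From $u$ I run a unified recursive descent: at a node $v$ with $y$-range $[c_v,d_v]$ and current effective $x$-range, if $v$'s range lies entirely inside the $x$-range the routine returns $c_v$ (the smallest valid position in this canonical subtree); if the $x$-range lies strictly inside one child's range the routine recurses into that child, pushing the $y$-range with a single $\rankop$ and lifting the returned position back with a single $\selop$; and if the $x$-range spans both children, which happens only at $u$, it recurses on both children and returns the smaller lifted position. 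Only a constant number of split events occur across the whole query, so the recursion visits $O(\lg\lg n)$ nodes with $O(1)$ work each. Finally, I translate the position returned at $u$ back to the original $y$-coordinate by $O(\lg\lg n)$ $\selop$ operations going up from $u$ to the root, for a total query time of $O(\lg\lg n)$.

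The main obstacle is arranging the query to run in $O(\lg\lg n)$ rather than $O(\lg^2\lg n)$. A naive reading would enumerate the $O(\lg\lg n)$ canonical subtrees that cover the $x$-range below $u$ and, for each, separately lift its representative position back to the root by $O(\lg\lg n)$ $\selop$ operations, incurring a quadratic cost. The key idea in my plan is to fold the best-so-far comparison into the recursive descent itself, comparing each freshly spawned canonical subtree against the result of the ongoing descent via a single local $\selop$, so that the per-level cost stays $O(1)$ and only one final lift of the winner from $u$ to the root is needed.
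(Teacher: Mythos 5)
Your proof is correct but differs from the paper's in a substantive way, and it is worth noting what each route buys. The paper also builds the binary wavelet tree and identifies the $O(\lg\lg n)$ marked nodes, but it additionally keeps the packed value array $A(v)$ at every node together with $\rankop$ (Lemma~\ref{lemma_rank_small}) and $\selop$ (Lemma~\ref{lem:select}) structures on those value arrays. This lets it compute the global $y$-coordinate of the candidate at any marked node $v$ directly via the identity $I(v)[c_v] = \selop_{A(v)[c_v]}(A(r), \rankop_{A(v)[c_v]}(A(v), c_v))$, i.e.\ in $O(1)$ per candidate with no level-by-level lifting; it then just scans the $O(\lg\lg n)$ candidates. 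Your plan drops $A(v)$ entirely and keeps only the bitvectors $S(v)$ with $\rankop/\selop$ (note that Lemma~\ref{bit_sequence} already gives $\selop$ on bitvectors, so the invocation of Lemma~\ref{lem:select} is redundant here). In place of the rank--select coordinate shortcut you fold the comparison into the descent, keeping the running winner as a position at the current level and lifting it and each freshly spawned marked sibling's candidate by one $\selop$ to the parent frame before comparing. Both routes hit $O(\lg\lg n)$ query time; yours is somewhat lighter in space (no value arrays and no rank/select on them), while the paper's is more plug-and-play with its $\point$ abstraction and also yields the $x$-coordinate $A(v)[c_v]$ for free, whereas in your scheme the $x$-coordinate of the winner still has to be recovered at the end (e.g.\ by retaining $X = A(r)$ or by one more descent), a small step you do not spell out.

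Two small inaccuracies to flag, neither of which breaks the bound. First, the claim that the ``$x$-range spans both children'' only at the LCA $u$ is wrong: on the way from $u$ toward $\ell_a$, whenever the boundary $a$ falls in the left child, the effective $x$-range intersects that child partially and covers its right sibling fully, so the recursion still branches into both children; the same happens on the $\ell_b$ side. The $O(\lg\lg n)$ bound nevertheless holds because the off-path branch always lands on a fully contained node and returns $c_v$ immediately, contributing $O(1)$ each, not because splits are rare. Second, you should explicitly handle the empty-range case $[c_v,d_v]=\emptyset$ at a marked node so that the recursive merge ignores it.
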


\begin{proof}
	Lemma \ref{lemma_orthogonal_range_successor_base} already achieves this result for $n'< \lg n$, so it suffices to consider the case $n' \ge \lg n$ in the rest of the proof.
	%
	
	We construct a binary wavelet tree $T$ upon $X[0..n'-1]$ by Lemma~\ref{lemma:wavelet_construct_d_packed} together with the value array $A(v)$ in packed form at each node $v$ and the bit sequence $S(v)$ if $v$ is an internal node.
	Recall that $A(v)$ stores the $x$-coordinates of the ordered list, $N(v)$, of points from $N$ whose $x$-coordinates are within the range represented by $v$, and these points are ordered by $y$-coordinate. 
	The tree $T$ has $\lceil 1/4\lg \lg n \rceil+1$ levels and $\lg^{1/4} n$ nodes.
	Over the sequences associated with each internal node $u$, we build the following data structures:
	\begin{itemize}
		\item $RK_{ds}(u)$ supports $O(1)$-time $\rankop$ queries over $A(u)$ by Lemma \ref{lemma_rank_small};
		\item $SL_{ds}(u)$ supports $O(1)$-time $\selop$ queries over $A(u)$ by Lemma \ref{lem:select};
		\item  $B_{ds}(u)$  supports $O(1)$-time $\rankop$ queries over $S(u)$ by Lemma \ref{bit_sequence}.
	\end{itemize}
	
	As shown in Lemma~\ref{lemma:wavelet_construct_d_packed}, $T$ uses $O(n'\lg^2 \lg n+w \times \lg^{1/4} n)$ bits of space and can be constructed in $O(n'\lg^2 \lg n/\lg n+\lg^{1/4} n)=o(n'/\sqrt{\lg n})$ time as $n'\ge \lg n$.
	Both $RK_{ds}(u)$ and $SL_{ds}(u)$ use $O(|A(u)|\lg \lg n)$ bits of space, while $B_{ds}(u)$ only requires $o(|S(u)|)$ bits of space.
	As there are $\lceil 1/4\lg \lg n \rceil$ non-leaf levels in $T$ and  $n'$ elements across each level, all data structures $RK_{ds}(u)$, $SL_{ds}(u)$ and $B_{ds}(u)$ use $O(n'\lg^2 \lg n)$ bits.
	Constructing $RK_{ds}(u)$ takes $O(|A(u)|\lg \lg n/\lg n+1)$ time, $SL_{ds}(u)$ uses $O(|A(u)|\lg^2 \lg n/\lg n+\lg^{1/4} n)$ time to build, and building $B_{ds}(u)$ requires $O(|S(u)|/\lg n+1)$ time.
	As $T$ has less than $\lg^{1/4} n$ internal nodes, the overall construction time for these data structures is $\sum_u (O(|A(u)|\lg \lg n/\lg n+1) + O(|A(u)|\lg^2 \lg n/\lg n+\lg^{1/4} n) +  O(|S(u)|/\lg n+1))=O(n' \lg^3 \lg n/\lg n+\sqrt{\lg n})=O(n'/\sqrt{\lg n})$ as $n'\ge \lg n$.
	Therefore, this data structure requires $O(n'\lg^2 \lg n+w \times \lg^{1/4} n)$ bits of space and takes $O(n'/\sqrt{\lg n})$ time to construct.
	With $RK_{ds}(u)$ and $SL_{ds}(u)$, we can implement the operation $\point(u,i)$ in constant time, 	as we have $\point(u,i) = (A(u)[i], \selop_{A(u)[i]}(A(r), \rankop_{A(u)[i]}(A(u), i)))$, where $r$ is the root node.

	Given a query range $Q=[a, b]\times[c, d]$,  we first locate the lowest common ancestor $v$ of $l_{a}$ and $l_{b}$ in constant time, where $l_{a}$ and $l_{b}$ denote the $a$-th and $b$-th leftmost leaves of $T$, respectively.
	Let $\pi_a$ and $\pi_b$ denote the paths from $v$ to the $a$-th leaf and from $a$ to the $b$-th leaf respectively.
	For each node $u$ on $\pi_a$ we mark the right child of $u$ if it exists and is not  on the path $\pi_a$. 
	For each node $u$ on $\pi_b$ we mark the left child of $u$ if it exists and is not on the path $\pi_b$. 
	In addition, we mark the $a$-th and $b$-th leaves. 
	The points on the marked node have the $x$-coordinate in the range $[a, b]$. 
	As the height of $T$ is $O(\lg \lg n)$, there are in total $O(\lg \lg n)$ nodes marked. 
	
	The points at all marked nodes within the query range $Q$ can be identified in total $O(\lg \lg n)$ time. 
	Let $[c_v, d_v]$ denote the range such that $I(v)[c_v..d_v]$ within $[c, d]$.
	Recall that $I(v)$ is the index array that is not explicitly stored in our data structure.
	Clearly, the range $[c_v, d_v]$ can be retrieved by answering $\rankop$ query over $S(u)$ where $u$ is the parent of $v$, i.e., $[c_v, d_v]=[\rankop_0(S(u), c_u),\rankop_0(S(u), d_u)]$ if $v$ is the left child of $u$. Otherwise, $[c_v, d_v]=[\rankop_1(S(u), c_u),\rankop_1(S(u), d_u)]$.
	As we move down  the path from the root node to the $a$-th leaf ($b$-th leaf, respectively), we answer $\rankop$ queries at the visited nodes. And if a marked node $v$ is identified, we can find the index range $[c_v, d_v]$ by $\rankop$ queries over the bit sequence $S(u)$ where $u$ is the parent of $v$.
	
	Obviously, within each marked node $v$ the point represented by ($A(v)[c_v], I(v)[c_v]$) carries the ``locally'' smallest $y$-coordinate in $Q$, where $I(v)[c_v]=\selop_{A(v)[c_v]}(A(r), \rankop_{A(v)[c_v]}(A(v), c_v))$. 
	Therefore, the lowest point in $Q$ can be retrieved by comparing the $O(\lg \lg n)$ locally lowest points at all marked nodes. Overall, the query time is $O(\lg \lg n)$.
\end{proof}

\subsubsection{Orthogonal Range Successor Queries in a Medium Narrow  Grid}
Our solution for points in a $2^{\sqrt{\lg n}}\times n'$ grid for any $2^{\sqrt{\lg n}}\le n' \le n$ uses the following previous result:

\begin{lemma}[{\cite[Theorem 3.3]{zhou2016two}}]
	\label{theorem_range_successor}
	There exists a data structure of $O(n\lg n \lg \lg n)$ bits constructed upon a set of $n$ points in rank space in $O(n\lg n)$ time that answers orthogonal range successor queries in $O(\lg \lg n)$ time.
\end{lemma}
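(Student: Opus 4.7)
My plan is to follow the same two-level reduction strategy that Section~\ref{sect: range_reporting} used for range reporting, but with more reduction levels because the target space bound is tighter ($O(n\lg\lg n)$ words rather than $O(n\lg^{\epsilon}n)$ words). Let $X[0..n-1]$ encode the point set $N = \{(X[i],i) : 0 \le i \le n-1\}$. First I would build a $2^{\sqrt{\lg n}}$-ary wavelet tree $T$ over $X$ with ball inheritance using part (a) of Lemma~\ref{ball_inheritance_large_d}; this gives $O(n\lg n)$ bits, $O(\lg\lg n)$-time $\point$ and $\noderange$, and construction in $O(n\sqrt{\lg n})$ time, as required. To answer a query with $x$-range $[a,b]$, locate the LCA $u$ of the $a$th and $b$th leaves and split the query as in the proof of Lemma~\ref{theorem_range_reporting_general}: two $3$-sided range successor queries over the point sets associated with the two children of $u$ containing $a$ and $b$, plus one $4$-sided range successor query over the sequence $S(u)$ of length at most $n$, whose alphabet has size $2^{\sqrt{\lg n}}$. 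The $3$-sided parts are handled by attaching Zhou's indexing structure (Lemma~\ref{theorem_range_successor}-style, but in indexing form) to each internal node of $T$, with $\point$/$\noderange$ providing the needed accesses; since $T$ has only $O(\sqrt{\lg n})$ levels, the linear preprocessing of those structures sums to $O(n\sqrt{\lg n})$.

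The remaining work is a range successor structure on an $n'$-point set in a $2^{\sqrt{\lg n}}\times n'$ \emph{medium narrow} grid, and this is the main obstacle, because naively applying Zhou~\cite{zhou2016two} on a small alphabet still needs $O(n'\lg\lg n/\sqrt{\lg n})$ preprocessing, which is a factor $\lg\lg n$ too slow. I would handle this by first imitating the sampling step of Lemma~\ref{fat_rect}: group the points into chunks of size $2^{2\sqrt{\lg n}}$ along $y$, and within each chunk pick at most one representative per $x$-coordinate (so at most $2^{\sqrt{\lg n}}$ representatives), yielding a reduced instance on a $2^{\sqrt{\lg n}} \times O(n'/2^{\sqrt{\lg n}})$ grid. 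A query decomposes into two within-chunk queries plus one query on the reduced instance; within-chunk queries use Lemma~\ref{lemma_three_sided_next_point_small} after combining with a chunk-local range successor structure of the same size ($\le 2\cdot 2^{2\sqrt{\lg n}} - 1$ points on alphabet $2^{\sqrt{\lg n}}$). This leaves the problem of building a range successor index for such $O(2^{2\sqrt{\lg n}})$-point instances in $O(\text{size}/\sqrt{\lg n})$ time.

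For these $2^{\sqrt{\lg n}}\times \text{poly}$ instances, I would introduce a second reduction, building a $\lg^{1/4}n$-ary wavelet tree over the $x$-coordinates. This tree has only $O(\sqrt{\lg n}/\lg\lg n)$ levels, so the factor $\lg\lg n$ we could not afford earlier is absorbed. The same LCA-and-split reasoning reduces a query to $3$-sided range successor queries (handled by Lemma~\ref{lemma_three_sided_next_point_small}, whose $O(\lg\lg n)$-per-level cost is fine because there are few levels) plus one $4$-sided range successor query over the child-index sequence $S(u)$ at the LCA, which is a range successor problem in a $\lg^{1/4}n \times n'$ \emph{small narrow} grid. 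Building the wavelet tree itself in packed form uses Lemma~\ref{lemma:wavelet_construct_d_packed}, and ball inheritance for this inner tree is handled by Lemma~\ref{ball_inheritance_point_small_02}; the bookkeeping will show that all components combined still cost $O(\text{size}/\sqrt{\lg n})$ time.

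Finally, the small-narrow problem splits by size: if $n' \ge \lg n$ I invoke Lemma~\ref{theorem_orthogonal_range_successor_small}, which uses a binary wavelet tree of height $O(\lg\lg n)$ with $\rankop$/$\selop$ structures at every node (no extra ball inheritance is needed since one can walk the tree level by level); if $n' < \lg n$ I invoke Lemma~\ref{lemma_orthogonal_range_successor_base}, which after one more chunking of size $\lg^{3/4}n$ resolves everything by universal-table lookups in $o(n'/\sqrt{\lg n})$ time. Putting the three levels together, all preprocessing sums telescope to $O(n\sqrt{\lg n})$; for the space bound, the top $2^{\sqrt{\lg n}}$-ary wavelet tree contributes $O(n\lg n)$ bits, each of the $O(\sqrt{\lg n})$ levels contributes an $O(n\lg\lg n)$-bit 3-sided index, and the medium/small narrow sub-structures contribute geometrically decreasing amounts, giving $O(n\lg n\lg\lg n)$ bits, i.e.\ $O(n\lg\lg n)$ words. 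Query time is $O(\lg\lg n)$ per level of reduction, and since reductions compose only a constant number of times with the dominant level being the $2^{\sqrt{\lg n}}$-ary tree, the total query time remains $O(\lg\lg n)$. The hardest part of carrying this out is verifying that at every reduction the sampling preserves correctness of range successor (unlike range reporting, we cannot lose the specific lowest point) and that the factor-$\lg\lg n$ construction cost never reappears; using the $\lg^{1/4}n$-ary tree precisely to suppress that factor is the key trick.
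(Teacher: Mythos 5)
The lemma you are asked to prove is one the paper does not prove at all: Lemma~\ref{theorem_range_successor} is stated purely as a citation of Zhou~\cite[Theorem 3.3]{zhou2016two} and is used as a black-box component. What you have written is not a proof of that lemma but a sketch of the paper's \emph{own} Theorem~\ref{theorem_range_successor_new}, which achieves the strictly stronger $O(n\sqrt{\lg n})$ construction time rather than the $O(n\lg n)$ claimed here. Your high-level strategy (top-level $2^{\sqrt{\lg n}}$-ary wavelet tree with ball inheritance via Lemma~\ref{ball_inheritance_large_d}(a), Zhou-style three-sided indexing structures at each node, sampling à la Lemma~\ref{fat_rect}, a $\lg^{1/4}n$-ary wavelet tree to suppress the $\lg\lg n$ factor, and table lookups at the bottom) matches the paper's proof of Theorem~\ref{theorem_range_successor_new} fairly closely.

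But there is a concrete gap, and it sits exactly where the paper invokes Lemma~\ref{theorem_range_successor} as a subroutine. In your medium-narrow-grid step you describe the sampling (one representative per chunk and $x$-coordinate, giving a set $\hat{N}$ of at most $n'/2^{\sqrt{\lg n}}$ points), and you correctly note the query decomposes into two within-chunk queries plus one range-successor query over $\hat{N}$. You then push the within-chunk pieces through the $\lg^{1/4}n$-ary reduction, but you never say what structure answers the query over $\hat{N}$. The paper resolves this (Lemma~\ref{lemma_range_successor_core_summary}) by building Zhou's structure --- i.e.\ Lemma~\ref{theorem_range_successor} itself --- over $\hat{N}$; this is affordable precisely because $|\hat{N}|$ is a factor $2^{\sqrt{\lg n}}$ smaller than $n'$, so even the $O(|\hat{N}|\lg|\hat{N}|)$ construction cost and $O(|\hat{N}|\lg|\hat{N}|\lg\lg|\hat{N}|)$-bit space are dominated. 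If you are trying to re-derive Lemma~\ref{theorem_range_successor}, invoking it over $\hat{N}$ is circular, and recursing on $\hat{N}$ does not help either: each round of sampling only shrinks the point count by a $2^{\sqrt{\lg n}}$ factor, so you would need $\Theta(\sqrt{\lg n})$ rounds and pay $\Theta(\sqrt{\lg n}\lg\lg n)$ query time. You also leave unspecified which representative to keep per $(\text{chunk}, x)$ pair; for range successor it must be the point with the minimum $y$-coordinate, so that a successor query over $\hat{N}$ restricted to the fully-covered middle chunks returns the true overall successor.
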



The following lemma presents our solution for a medium narrow grid.  

\begin{lemma}
	\label{ors_on_block}
	Let $N$ be a set of $n'$ points with distinct $y$-coordinates in a $2^{\sqrt{\lg n}} \times n'$ grid where $2^{\sqrt{\lg n}} \leq n^{\prime} \le 2^{2{\sqrt{\lg n}}}$.
	Given packed sequences $X$ and $Y$ respectively encoding the $x$- and $y$-coordinates of these points where $Y[i] = i$ for any $i \in [0, n'-1]$, a data structure of $O(n'\sqrt{\lg n}\lg \lg n+w\times 2^{\sqrt{\lg n}})$ bits can be built over $N$ in $O(n'+2^{\sqrt{\lg n}}\times\sqrt{\lg n}/\lg \lg n)$ time to answer orthogonal range successor query in $O(\lg \lg n)$ time. 
\end{lemma}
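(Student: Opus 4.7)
The plan is to reduce range successor over a medium narrow grid to (i) three-sided next-point queries at the two fringe children of a wavelet-tree LCA and (ii) a four-sided range successor query on the \emph{small} narrow grid naturally associated with the LCA's $S$-sequence. Concretely, I would build a $d$-ary wavelet tree $T$ over $X$ with $d=\lg^{1/4} n$, augmented with ball inheritance via Lemma~\ref{ball_inheritance_point_small_02}. Since $\sigma=2^{\sqrt{\lg n}}$ and $n'\le 2^{2\sqrt{\lg n}}=\sigma^{2}$, the hypotheses apply and $T$ alone fits exactly within the claimed budget: $O(n'\sqrt{\lg n}\lg\lg n+w\cdot 2^{\sqrt{\lg n}})$ bits, constructed in $O(n'+2^{\sqrt{\lg n}}\sqrt{\lg n}/\lg\lg n)$ time, with $O(\lg\lg n)$ support for both $\point$ and $\noderange$. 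The tree has $L=\Theta(\sqrt{\lg n}/\lg\lg n)$ levels.

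At each internal node $u$ I would build two auxiliary structures. First, a three-sided next-point structure on $N(u)$ (which has $|A(u)|\le n'\le 2^{2\sqrt{\lg n}}$ points in a $2^{\sqrt{\lg n}}\times|A(u)|$ grid with distinct local $y$-coordinates) via Lemma~\ref{lemma_three_sided_next_point_small}; this is an indexing-model structure, so its $O(1)$ coordinate accesses per query are routed through $\point$ on $T$. Second, an orthogonal range successor structure on $S(u)$, which lives in a $\lg^{1/4}n\times|A(u)|$ small narrow grid, via Lemma~\ref{theorem_orthogonal_range_successor_small}. Summing the $O(|A(u)|\lg\lg n)$ space of the first and the $O(|A(u)|\lg^2\lg n+w\lg^{1/4}n)$ space of the second over all internal nodes of one level contributes $O(n'\lg^2\lg n)$ plus the $w\lg^{1/4}n$ terms; aggregated over all $L$ levels the former becomes $O(n'\sqrt{\lg n}\lg\lg n)$, and since the internal nodes total $O(2^{\sqrt{\lg n}}/\lg^{1/4}n)$ the latter yields $O(w\cdot 2^{\sqrt{\lg n}})$, matching the claim. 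The construction times of the two auxiliaries are $O(|A(u)|\lg\lg n/\sqrt{\lg n})$ and $O(|A(u)|/\sqrt{\lg n})$ respectively, telescoping to $O(n')$ across all levels and hence absorbed by the ball-inheritance cost.

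To answer a query on $[a,b]\times[c,d]$, I would locate the lowest common ancestor $u$ of the $a$-th and $b$-th leaves, identify the children $u_{a'},u_{b'}$ on those root-to-leaf paths, and obtain the local $y$-ranges $[c_{u_{a'}},d_{u_{a'}}]$, $[c_{u_{b'}},d_{u_{b'}}]$, $[c_u,d_u]$ via $\noderange$. I then perform a three-sided next-point query at $u_{a'}$ with $x$-range $[a,+\infty)$, a symmetric query at $u_{b'}$ with $x$-range $(-\infty,b]$, and a four-sided range successor query over $S(u)$ with range $[a'+1,b'-1]\times[c_u,d_u]$; the original coordinates of each returned local position are recovered via $\point$, and the candidate with minimum original $y$ is the answer. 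Each sub-query uses $O(\lg\lg n)$ time, so the total is $O(\lg\lg n)$. The main obstacle is confirming that the auxiliary structures at the internal nodes can be built and queried \emph{without explicitly materialising} $A(u)$ or the associated point lists---otherwise the space would inflate by a $\Theta(\sqrt{\lg n})$ factor across the $L$ levels. This works because Lemma~\ref{lemma_three_sided_next_point_small} is already stated under the indexing model, and Lemma~\ref{theorem_orthogonal_range_successor_small} takes the packed $S(u)$ itself as input, which is produced and retained during the ball-inheritance construction of Lemma~\ref{ball_inheritance_point_small_02}.
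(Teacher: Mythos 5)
Your proposal follows exactly the same route as the paper's proof: a $\lg^{1/4}n$-ary wavelet tree with ball inheritance via Lemma~\ref{ball_inheritance_point_small_02}, three-sided next-point structures from Lemma~\ref{lemma_three_sided_next_point_small} at each internal node (queried under the indexing model through $\point$), a small-narrow-grid range successor structure from Lemma~\ref{theorem_orthogonal_range_successor_small} over each $S(u)$, and a query that decomposes at the LCA into two three-sided queries at the fringe children plus one four-sided query over the LCA's $S$-sequence. Your space/time accounting and your observation about why the auxiliary structures avoid materialising $A(u)$ also match the paper's analysis, so the proposal is correct and essentially identical.
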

\begin{proof}
	We build a $\lg^{1/4}$-ary wavelet tree $T$ upon $X[0, n'-1]$ and $Y[0, n'-1]$ with support for ball inheritance using Lemma~\ref{ball_inheritance_point_small_02}.
	Recall that each node $u$ of $T$ is associated with (but does not explicitly store)
	the value array $A(u)$ and the index array $I(u)$, in which $A(u)$ and $I(u)$ store the $x$- and $y$-coordinates of the ordered list, ${N}(u)$, of points from $N$ whose $x$-coordinates are within the range represented by $u$, and these points are ordered by $y$-coordinate. 
	Furthermore, $u$ is associated with another sequence $S(u)$ drawn from alphabet $[\lg^{1/4} n]$, in which $S(u)[i]$ encodes the rank of the child of $u$ that contains $N(u)[i]$ in its ordered list.
	Let $\hat{S}(u)$ denote the point set $\{(S(u)[i], i)| 0\le i \le |S(u)|-1\}$, and we use Lemma~\ref{theorem_orthogonal_range_successor_small} to build a structure $RS_{ds}(u)$  supporting orthogonal range successor queries over $\hat{S}(u)$.
	Let $\hat{N}(u)$ denote the point set $\hat{N}(u)= \{(A(u)[i], i)| 0\le i \le |A(u)|-1\}$, and we use Lemma~\ref{lemma_three_sided_next_point_small} to build a structure $TS_{ds}(u)$  supporting three sided next point queries over $\hat{N}(u)$.
	Note that as shown in Lemma \ref{ball_inheritance_point_small_02}, both $\point(v,i)$ and $\noderange(c,d,v)$ can be answered in $O(\lg \lg n)$ time on $T$.
	
	Given a query range $Q=[a, b]\times[c, d]$,  we first locate the lowest common ancestor $v$ of $l_{a}$ and $l_{b}$ in constant time, where $l_{a}$ and $l_{b}$ denote the $a$-th and $b$-th leftmost leaves of $T$, respectively.
	Let $v_i$ denote the $i$-th child of $v$, for any $i \in [0, \lg^{1/4} n-1]$.
	We first locate two children, $v_{a'}$ and $v_{b'}$, of $v$ that are ancestors of $l_a$ and $l_b$, respectively.
	They can be found in constant time by simple arithmetic as each child of $v$ represents a range of equal size.
	Then the answer, $Q\cap N$, to the query can be reduced to retrieving the lowest point among three point sets $A_1=Q\cap N(v_{a'})$, $A_2=Q\cap (N(v_{a'+1})\cup N(v_{a'+2})\cup\ldots N(v_{b'-1}))$ and $A_3=Q\cap N(v_{b'})$.
	To find the lowest point in $A_1$, we need only retrieve the point $p'_1$ with the smallest $y$-coordinate in $[a, +\infty]\times[c_{v_{a'}}, d_{v_{a'}}]$ where $[c_{v_{a'}}, d_{v_{a'}}]= \noderange(c,d, v_{a'})$ and then use $\point(v_{a'}, p'_1.y)$ to find its original coordinates $p_1$ in $N$.
	The point $p'_1$ can be found by querying over $TS_{ds}(u)$ in $O(\lg \lg n)$ time using the algorithm shown in the proof of Lemma~\ref{lemma_three_sided_next_point_small}.
	With $O(\lg \lg n)$-time support for $\noderange$ and $\point$, $p_1$ can be retrieved in $O(\lg \lg n)$ time.
	Similarly, we can find the lowest point $p_3$ in $A_3$ in $O(\lg \lg n)$ time.
	To compute $A_2$, observe that any entry, $\hat{S}(v)[i]$, can be obtained by replacing the $x$-coordinate of point $N(v)[i]$ with the rank of the child whose ordered list contains $N(v)[i]$.
	Hence, by performing an orthogonal range successor query over $RS_{ds}(v)$ to compute $\hat{S}(v) \cap ([a^{\prime}+1, b^{\prime}-1]\times[c_v, d_v])$, where $[c_v, d_v] = \noderange(c,d, v)$,
	we can find in $O(\lg \lg n)$ time the lowest point $p'_2$ in $\hat{S}(v) \cap ([a^{\prime}+1, b^{\prime}-1]\times[c_v, d_v])$.
	Again, we use $\point$ to find its original coordinates $p_2$ in $N$.
	Obviously, the lowest point in $Q\cap N$ is the point with the smallest $y$-coordinate among $p_1$, $p_2$, and $p_3$.
	Therefore, the overall query time required is $O(\lg \lg n)$.

	Now we analyze the space costs.
	$T$ with support for ball inheritance uses $O(n'\sqrt{\lg n}\lg \lg n+ w\times 2^{\sqrt{\lg n}})$ bits by Lemma \ref{ball_inheritance_point_small_02}.
	For each internal node $u$, $RS_{ds}(u)$ over $\hat{S}(u)$ uses $O(|S(u)|\lg^2 \lg n+w\times \lg^{1/4} n)$ bits of space as shown in Lemma \ref{theorem_orthogonal_range_successor_small}.
	This subsumes the cost of storing $TS_{ds}(u)$ over $\hat{N}(u)$, which is $O(|S(u)|\lg \lg n)$ bits.
	As $T$ has $O(2^{\sqrt{\lg n}}/\lg^{1/4} n)$ internal nodes and $4\sqrt{\lg n}/\lg \lg n$ tree levels,
	the total cost of storing these structures at all internal nodes is 
	$\sum_u O(|S(u)|\lg^2 \lg n+w\times\lg^{1/4} n)= O(n'\sqrt{\lg n}\lg \lg n+w\times 2^{\sqrt{\lg n}})$ bits of space.
	Therefore, all the data structures occupy $O(n'\sqrt{\lg n}\lg \lg n+w\times 2^{\sqrt{\lg n}})$ bits of space.

	Finally, we analyze the construction time.
	As shown in Lemma~\ref{ball_inheritance_point_small_02}, $T$ with support for ball inheritance can be constructed in $O(n'+2^{\sqrt{\lg n}}\times \sqrt{\lg n}/\lg \lg n)$ time.
	At each internal node $u$ of $T$, constructing $TS_{ds}(u)$ requires $O({|A(u)|}/{\sqrt{\lg n}}\times\lg \lg n +1 )$ time using the algorithm in the proof of Lemma \ref{lemma_three_sided_next_point_small} and $RS_{ds}(v)$ requires $O(|S(u)|/\sqrt{\lg n}+1)$ time by Lemma \ref{theorem_orthogonal_range_successor_small}.
	As $T$ has ${\sqrt{\lg n}}/(1/4\lg \lg n)$ non-leaf levels and $O(2^{\sqrt{\lg n}}/\lg^{1/4} n)$ internal nodes, these structures over all internal nodes can be built in $\sum_u O(|A(u)|/{\sqrt{\lg n}}\times\lg \lg n+1)=O(n^{\prime})$ time.
	Therefore, the overall construction time is $O(n'+2^{\sqrt{\lg n}}\times \sqrt{\lg n}/\lg \lg n)$.
\end{proof}

\begin{lemma}
	\label{lemma_range_successor_core_summary}
	Let $N$ be a set of $n'$ points with distinct $y$-coordinates in a $2^{\sqrt{\lg n}} \times n'$ grid where $2^{\sqrt{\lg n}}\leq n'\leq n$.
	Given packed sequences $X$ and $Y$ respectively encoding the $x$- and $y$-coordinates of these points where $Y[i] = i$ for any $i \in [0, n'-1]$, a data structure of $O(n^{\prime}\sqrt{\lg n}\lg \lg n+w(n'/2^{\sqrt{\lg n}}+2^{\sqrt{\lg n}}))$ bits can be built over $N$ in $O(n^{\prime}+\sqrt{\lg n}\cdot 2^{\sqrt{\lg n}}/\lg \lg n)$ time to answer orthogonal range successor query in $O(\lg \lg n)$ time.
\end{lemma}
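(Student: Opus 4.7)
The plan is to extend Lemma~\ref{ors_on_block} to larger narrow grids using a sampling reduction in the spirit of Lemma~\ref{fat_rect}, combined with a black-box invocation of Zhou's $O(\lg\lg n)$-query range successor structure on the sampled summary. First I would dispose of the base case $n' \le 2^{2\sqrt{\lg n}}$ by calling Lemma~\ref{ors_on_block} directly; in this range $n'/2^{\sqrt{\lg n}} \le 2^{\sqrt{\lg n}}$, so the bounds already match the ones claimed here. For the rest of the proof I would assume $n' > 2^{2\sqrt{\lg n}}$.

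Setting $b = 2^{2\sqrt{\lg n}}$, I would partition $N$ into $n'/b$ horizontal chunks $N_0, \ldots, N_{n'/b-1}$ by $y$-coordinate, convert each chunk to local $y$-coordinates, and build a local range successor structure on each chunk via Lemma~\ref{ors_on_block}. Summing the guarantees of Lemma~\ref{ors_on_block} over the $n'/b$ chunks produces exactly the dominant terms $O(n'\sqrt{\lg n}\lg\lg n)$ and $O(n'w/2^{\sqrt{\lg n}})$ bits of space and $O(n')$ construction time. I would also build a summary set $\hat{N}$ by selecting, from each chunk, at most one representative per occupied $x$-value, namely the one with the smallest $y$-coordinate within that chunk; this yields $|\hat{N}| \le n'/2^{\sqrt{\lg n}}$. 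Listing representatives chunk by chunk and assigning each a virtual $y$-coordinate equal to its rank in this list keeps the virtual $y$-values distinct and consistent with chunk order, so a range successor query on $\hat{N}$ still identifies the earliest chunk meeting any prescribed $x$-range.

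Queries would then follow the standard three-part decomposition: the boundary chunk containing $y_1$ and the one containing $y_2$ are each handled by one 3-sided range successor query on the corresponding local structure, while for the fully covered middle chunks I would first query $\hat{N}$ to locate the earliest middle chunk $i^*$ meeting the $x$-range, then run a single range successor on $N_{i^*}$ over its entire $y$-extent. Correctness follows because chunk order agrees with the global $y$-order, so the earliest middle chunk hit is the one containing the lowest middle candidate. Each of the four subqueries runs in $O(\lg\lg n)$ time, giving the stated query bound.

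The main obstacle will be handling the summary $\hat{N}$ efficiently: its size $n'/2^{\sqrt{\lg n}}$ can still far exceed $2^{2\sqrt{\lg n}}$, so Lemma~\ref{ors_on_block} cannot be recursed into without paying $\Omega(\lg\lg n)$ per level of recursion. I would sidestep this by building Zhou's $O(n\lg n\lg\lg n)$-bit, $O(n\lg n)$-construction, $O(\lg\lg n)$-query structure (Lemma~\ref{theorem_range_successor}) directly on $\hat{N}$; the $\lg n$ factors in its construction and space costs are absorbed by the $2^{\sqrt{\lg n}}$ savings in $|\hat{N}|$, because $\lg n = o(2^{\sqrt{\lg n}})$. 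The summary structure therefore fits within $o(n'\sqrt{\lg n}\lg\lg n)$ additional bits and $o(n')$ additional construction time, leaving the headline bounds undisturbed.
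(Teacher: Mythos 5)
Your proposal is correct and follows essentially the same strategy as the paper: dispose of the base case $n'\le 2^{2\sqrt{\lg n}}$ via Lemma~\ref{ors_on_block}, partition into $b=2^{2\sqrt{\lg n}}$-sized $y$-chunks, build Lemma~\ref{ors_on_block} locally on each chunk, sample one lowest-$y$ representative per occupied $(chunk, x)$-pair into $\hat{N}$ of size at most $n'/2^{\sqrt{\lg n}}$, and build Zhou's structure (Lemma~\ref{theorem_range_successor}) on $\hat{N}$, observing that its $\lg n$ overhead is absorbed because $\lg n = o(2^{\sqrt{\lg n}})$. The only minor divergence is in handling the middle chunks: the paper keeps actual $y$-coordinates in $\hat{N}$ so that a single range-successor on $\hat{N}$ directly yields the answer, whereas you use rank-space virtual $y$-coordinates and then issue one extra local query on the identified chunk $N_{i^*}$ — both variants stay within $O(\lg\lg n)$ and are equally valid.
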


\begin{proof}
	Let $b$ denote $2^{2\sqrt{\lg n}}$.
	We need only consider the case in which $n' > b$ as Lemma~\ref{ors_on_block} applies otherwise.
	Assume for simplicity that $n'$ is divisible by $b$.
	We divide $N$ into $n'/b$ subsets, and for each $i \in [0, n'/b-1]$, the $i$th subset, $N_i$, contains points in $N$ whose $y$-coordinates are in $[i b, (i+1)b-1]$.
	The dividing procedure can be achieved in linear time.
	Let $p$ be a point in $N_i$. We call its coordinates $(p.x, p.y)$ {\em global coordinates}, while $(p.x', p.y') = (p.x, p.y \bmod b)$ its {\em local coordinates} in $N_i$; the conversion between global and local coordinates can be done in constant time.
	Hence the points in $N_i$ with their local coordinates can be viewed as a point set in a $2^{\sqrt{\lg n}}\times 2^{2\sqrt{\lg n}}$ grid, and we apply Lemma~\ref{ors_on_block} to construct an orthogonal range search structure $RS(N_i)$ over $N_i$.
	We also define a point set $\hat{N}$ in a $2^{\sqrt{\lg n}} \times n'$ grid. 
	For each set $N_i$ where $i \in [0, n'/b-1]$ and each $j \in [0, 2^{\sqrt{\lg n}}-1]$, if there exists at least one point in $N_i$ whose $x$-coordinate is $j$, we store the one among them with the smallest $y$-coordinate in $\hat{N}$.
	Thus the number of points in $\hat{N}$ is at most $2^{\sqrt{\lg n}} \times n' /b = n'/2^{\sqrt{\lg n}}$. 
	Finally, we build the data structure $\hat{RS}_{ds}$ for orthogonal range successor over $\hat{N}$ by Lemma~\ref{theorem_range_successor}.
	
	Given a query range $Q=[x_1, x_2]\times[y_1, y_2]$, we first check if $\lfloor y_1/b \rfloor$ is equal to $\lfloor y_2/b\rfloor$. If it is, then the points in the answer to the query reside in the same subset $N_{\lfloor y_1/b\rfloor}$, and we can retrieve the lowest point $e$ by performing an orthogonal range successor query in $N_{\lfloor y_1/ b \rfloor}\cap Q$, which requires $O(\lg\lg n)$ time by Lemma~\ref{ors_on_block}.
	Then we retrieve its original coordinates in $N$, which is $(e.x, b\lfloor y_1/ b \rfloor+e.y)$.
	Otherwise, let $N_s,\dots, N_e$ denote the blocks interacting $[y_1, y_2]$, where $s=\lfloor y_1/ b\rfloor$ and $e=\lfloor y_2/ b\rfloor$. 
	We sequentially look for the lowest point in $A_1=N_s \cap [x_1, x_2]\times[y_1 \mod b, +\infty]$, $A_2=(N_{s+1}\cup \cdots \cup N_{e-1})\cap [x_1, x_2]\times [-\infty, +\infty]$, and $A_3=N_e \cap [x_1, x_2]\times[0, y_2 \mod b]$.
	Once a point $e$ is returned, we retrieve the original coordinates of $e$ in $N$ and terminate the the query procedure.
	Both the cases $A_1$ and $A_3$ can be answered in $O(\lg \lg n)$ time by Lemma~\ref{ors_on_block}.
	It remains to find the lowest point in $A_2$, which can be implemented by querying in $O(\lg \lg n)$ time over $\hat{N}$ for the lowest point in $Q$ using Lemma~\ref{theorem_range_successor}.
	Overall, the query procedure requires $O(\lg \lg n)$ time.

	To bound the storage costs, by Lemma~\ref{ors_on_block}, the orthogonal range successor structure over each $N_i$ uses $O(2^{2\sqrt{\lg n}}\lg \lg n+w\cdot 2^{\sqrt{\lg n}})$ bits.
	Thus, the orthogonal range successor structures over $N_0, N_1, \ldots, N_{n'/b-1}$ occupy $O((n'/b)\times(2^{2\sqrt{\lg n}}\sqrt{\lg n}\lg \lg n+w\cdot 2^{\sqrt{\lg n}})) = O(n'\sqrt{\lg n}\lg \lg n + n'w/2^{\sqrt{\lg n}})$.
	As there are at most $n'/2^{\sqrt{\lg n}}$ points in $\hat{N}$, by Lemma~\ref{theorem_range_successor}, the range successor structure for $\hat{N}$ occupies $O(n'\lg \lg n \lg n/2^{\sqrt{\lg n}}) = o(n')$ bits.
	Thus the  space costs of all structures add up to $O(n^{\prime}\sqrt{\lg n}\lg \lg n+n'w/2^{\sqrt{\lg n}})$ bits.
	Note that the above analysis assumes $n' > b$. Otherwise, $O(n^{\prime}\sqrt{\lg n}\lg \lg n+w\cdot 2^{\sqrt{\lg n}})$ bits are needed, so we use  $O(n^{\prime}\sqrt{\lg n}\lg \lg n+w(n'/2^{\sqrt{\lg n}}+2^{\sqrt{\lg n}}))$ bits as the space bound on both cases. 

	Regarding construction time,  observe that the point sets $N_0, N_1, \ldots, N_{n'/b-1}$ and $\hat{N}$, can be computed in $O(n')$ time.
	By Lemma~\ref{theorem_range_successor}, The range successor structure for $\hat{N}$ can be built in $O(n'/b \times \lg n') = o(n')$ time.
	Finally, the total construction time of the range successor structures for $N_0, N_1, \ldots, N_{n/b-1}$ is $O({n^{\prime}}/{2^{2\sqrt{\lg n}}}\times(2^{2\sqrt{\lg n}}+\sqrt{\lg n}\times 2^{\sqrt{\lg n}}/\lg \lg n))=O(n^{\prime})$, which dominates the total preprocessing time of all our data structures.
	When $n' \le b$, the construction time is $O(n^{\prime}+\sqrt{\lg n}\cdot 2^{\sqrt{\lg n}}/\lg \lg n)$ by Lemma~\ref{ors_on_block}, so we use $O(n^{\prime}+\sqrt{\lg n}\cdot 2^{\sqrt{\lg n}}/\lg \lg n)$ as the upper bound on construction time in both cases.
\end{proof}

\subsubsection{Orthogonal Range Successor Queries in a $n\times n$ Grid}
Finally, we give the complete proof of Theorem \ref{theorem_range_successor_new}. 

Let the sequence $X[0, n-1]$ denote the point set $N = \{(X[i], i)| 0\le i \le n-1\}$.	We build a $2^{\sqrt{\lg n}}$-ary wavelet tree $T$ upon $X[0, n-1]$ with support for ball inheritance using part (a) of Lemma~\ref{ball_inheritance_large_d}.
	Recall that each node $u$ of $T$ is associated with 
	the value array $A(u)$ and the index array $I(u)$ (these arrays are not stored explicitly); $A(u)$ and $I(u)$ contain the $x$- and $y$-coordinates of ${N}(u)$,  where $N(u)$ is the list of points from $N$ whose $x$-coordinates are within the range of $u$, and points in $N(u)$ are ordered by their $y$-coordinates. 
	Furthermore, $u$ is associated with another sequence $S(u)$ drawn from alphabet $[2^{\sqrt{\lg n}}]$, in which $S(u)[i]$ encodes the rank of the child of $u$ that contains $N(u)[i]$ in its ordered list.
	Let $\hat{S}(u)$ denote the point set $\{(S(u)[i], i)| 0\le i \le |S(u)|-1\}$, and we use Lemma~\ref{lemma_range_successor_core_summary} to build a structure $RS_{ds}(u)$  supporting orthogonal range successor queries over $\hat{S}(u)$.
	Let $\hat{N}(u)$ denote the point set $\hat{N}(u)= \{(A(u)[i], i)| 0\le i \le |A(u)|-1\}$, and we use Lemma~\ref{lemma_three_sided_next_point_big} to build a structure $TS_{ds}(u)$  supporting three sided next point queries over $\hat{N}(u)$.
	The query procedure is exactly the same as in the proof of Lemma \ref{ors_on_block} and  requires $O(\lg \lg n)$ time.

	Now we analyze the space usage.
	$T$ with support for ball inheritance uses $O(n\lg n \lg \lg n)$ bits.
	For each internal node $u$, since $w = \Theta(\lg n)$, $RS_{ds}(u)$ over $\hat{S}(u)$ uses $O(|S(u)|\sqrt{\lg n}\lg \lg n+\lg n\times (|S(u)|/2^{\sqrt{\lg n}}+2^{\sqrt{\lg n}}))=O(|S(u)|\sqrt{\lg n}\lg \lg n+\lg n \times 2^{\sqrt{\lg n}})$ bits of space
	This subsumes the cost of storing $TS_{ds}(u)$ over $\hat{N}(u)$, which is $O(|S(u)|\lg \lg n)$ bits.
	As $T$ has $O(n/2^{\sqrt{\lg  n}})$ internal nodes,
	the total cost of storing these structures at all internal nodes is 
	$\sum_u O(|S(u)|\sqrt{\lg n}\lg \lg n+\lg n \times 2^{\sqrt{\lg n}})= O(n\lg n\lg \lg n+n\lg n)=O(n\lg n\lg \lg n)$ bits of space.
	Therefore, all the data structures occupy $O(n\lg n\lg \lg n)$ bits of space. 

	Finally, we analyze the construction time.
	As shown in part (a) of Lemma~\ref{ball_inheritance_large_d}, the tree $T$ with ball inheritance structures can be constructed in $O(n\sqrt{\lg n})$ time.
	For each internal node $u$ of $T$,  $TS_{ds}(u)$ can be constructed in linear time and $RS_{ds}(v)$ can be constructed in $O(|S(v)|+2^{\sqrt{\lg n}}\times\sqrt{\lg n}/\lg \lg n)$ time.
	As $T$ has $O(n/2^{\sqrt{\lg  n}})$ internal nodes, these structures over all internal nodes can be built in $\sum_u O(|S(v)|+2^{\sqrt{\lg n}}\times\sqrt{\lg n}/\lg \lg n)=O(n\sqrt{\lg n}+n\sqrt{\lg n}/\lg \lg n)=O(n\sqrt{\lg n})$ time.
	The preprocessing time of all data structures is thus $O(n\sqrt{\lg n})$.

\section{Optimal Orthogonal Sorted Range Reporting with Fast Preprocessing}
\label{app:sorted}

%
In this section we study the orthogonal sorted range reporting over $n$ points in 2d rank space. 
In our methods for three-sided sorted reporting and orthogonal sorted range reporting problems, we adopt the same strategy as shown in Section \ref{sect: fast_orthogonal_range_succ} which is to reduce a big point set $N$ into blocks of small point sets and sample several special points from each block. 
Both three-sided sorted reporting and orthogonal sorted range reporting queries over a block will take $O(\lg \lg n+\occ)$ time.
However, the points in the query range possibly distribute among different blocks.
As the $\lg \lg n$-term might subsumes the number of reported points from some block, we can not afford the $\lg \lg n$-term in the query time unless there are at least $\lg \lg n$ points reported from that block.
In this way, the $\lg \lg n$-term can be dismissed.

For the three-sided next point problem with the query range $[a, +\infty]\times[c, d]$, we sample the point with the maximum $x$-coordinate of each block.
Then for its counterpart problem three-sided sorted reporting, we need to sample $\lg \lg n$ points with largest $x$-coordinates from each block.
Similarly, for the orthogonal range successor problem, we sample the points from each block with the smallest $y$-coordinate for each distinct $x$-coordinates.
Then for its counterpart problem, we need to sample the points from each block with $\lg \lg n$ smallest $y$-coordinate for each distinct $x$-coordinates.
This sample strategy makes sure that if the number of points reported from the sampled point set that belongs to the same block $B$ is less than $\lg \lg n$, all points in $B\cap Q$ have been reported from the query over the sampled point set, where $Q$ denote the query range.
Otherwise, there are at least $\lg \lg n$ points in $B\cap Q$. 
And we can afford to query over the data structure built upon $B$.

Given the same query range, an answer to the orthogonal range successor is always the first point reported among the reported points from the orthogonal sorted range reporting query.
Our methods between the orthogonal range successor and orthogonal sorted range reporting are almost the same, apart from the sampling strategy described above.
In addition, our data structures can work in an online fashion: points within the query range $Q$ are reported in increasing order of $x$- or $y$-coordinates until the query procedure is terminated or all points in $Q$ are reported.

\subsection{Fast Construction of the Three-Sided Sorted Reporting Structures}
Now, we show how to efficiently construct data structures for three-sided sorted reporting.
Let $N$ be a set of $n$ points in 2d rank space. 
Given a query range $Q=[a, +\infty]\times[c, d]$, we define three-sided sorted reporting query to be the problem of reporting points in $N\cap Q$ in a increasingly sorted order by $y$-coordinates.
The methods to be shown in Lemmas \ref{lemma_three_sided_reporting_big} and \ref{lemma_three_sided_reporting_small} are under the indexing model.
The following previous result will be adopted in our method:

\begin{lemma}[{\cite[Lemma 5]{nekrich2012sorted}}]
	\label{lemma-three-sided-reporting-base}
	There exists a data structure of $O(n\lg^3 n)$-bit space constructed upon a set of $n$ points in 2d rank space in $O(n\lg^2 n)$ time, which supports three-sided reporting query in $O(\lg \lg n+\occ)$ time, where $\occ$ denotes the number of reported points.
\end{lemma}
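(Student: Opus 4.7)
The plan is to build a balanced binary range tree $T$ on the $y$-coordinates of the $n$ points and augment each node with enough $x$-structure to locate the first qualifying output quickly and then emit the rest in constant amortized time. At each node $v$ of $T$ let $N_v$ be the points whose $y$-coordinate lies in $v$'s range; the $N_v$'s at a fixed level of $T$ partition $N$, so budgeting $O(|N_v|\lg^2 n)$ bits per node yields $O(n\lg^3 n)$ bits in total, matching the space bound. I would store $N_v$ sorted by $y$ together with a priority-search-tree keyed on $x$, which supports reporting points of $N_v$ with $x\ge a$ in $y$-sorted order at $O(1)$ per output after a logarithmic startup; the startup is then reduced to $O(\lg\lg n)$ via a predecessor structure on the $x$-values present in $N_v$ (for example a fusion-tree or $x$-fast-trie variant that can be built deterministically in near-linear time).

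To answer a three-sided query $[a,+\infty)\times[c,d]$, I would locate the two leaves of $T$ corresponding to $c$ and $d$ using a global $O(\lg\lg n)$-time predecessor structure on the $y$-values; this identifies the $O(\lg n)$ canonical nodes $v_1,\dots,v_k$ whose $N_{v_i}$ partition the points with $y\in[c,d]$. At each $v_i$ I would initiate the priority-search-tree scan against threshold $a$, and interleave these $k$ streams in sorted $y$-order. To do this in only $O(\lg\lg n+\occ)$ total work---rather than $O(\lg n\cdot\lg\lg n+\occ)$, which a standard heap merge would cost---I would exploit the extra $\lg n$ factor in the space by precomputing at each internal node of $T$ a cascaded merged ordering of its two children's $y$-sorted lists, so that the canonical nodes can be processed in a single interleaved sweep after an $O(\lg\lg n)$ setup.

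Construction in $O(n\lg^2 n)$ time follows by first sorting the points by $y$ once in $O(n\lg n)$ time, then building $T$'s per-level data structures bottom-up: at each level the priority-search-trees, predecessor structures and cascaded merges can be built in $O(n\lg n)$ time by a sweep over the level's points, and this is summed over the $\lg n$ levels.

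The main obstacle is obtaining the additive $O(\lg\lg n)$ query term despite having $\Theta(\lg n)$ canonical nodes, since a naive heap merge pays $\lg\lg n$ per extraction and therefore spends $\Theta(\lg n\cdot\lg\lg n)$ just to set up the front of the output stream. Resolving this hinges on the precomputed cascaded merges, and the bookkeeping that lets the scan cross level boundaries at $O(1)$ per output is the delicate point that also forces the third $\lg n$ factor in the space bound.
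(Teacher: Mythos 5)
This lemma is cited from Nekrich and Navarro~\cite{nekrich2012sorted}; the paper contains no proof of its own, so there is nothing in-paper to compare against, and I will instead assess your argument directly.

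Your plan has the recognisable shape of a range-tree-plus-filtering solution, but there is a genuine gap that you half-noticed and then repaired in the wrong place. A binary range tree on $y$ decomposes $[c,d]$ into $\Theta(\lg n)$ canonical nodes, namely the nodes hanging off the two spines from the split node down to the leaves of $c$ and $d$; a query must touch each one, so even with $O(1)$ work per node the setup alone is $\Omega(\lg n)$, not $O(\lg\lg n)$. The ``cascaded merged ordering'' fix you propose is aimed at a non-problem: because the range tree is on $y$, the canonical $N_{v_i}$'s cover disjoint and consecutive $y$-intervals, so their output streams are simply concatenated in left-to-right order and there is nothing to interleave. Fractional cascading can make the per-node $x$-predecessor lookup $O(1)$ after a single $O(\lg\lg n)$ locate at the root, but it does nothing to reduce the number of canonical pieces. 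To reach $O(\lg\lg n+\occ)$ you need a decomposition of $[c,d]$ into only $O(\lg\lg n)$ pieces---for instance a van-Emde-Boas-style recursive block scheme, or an $O(\lg\lg n)$-height tree whose few boundary streams are merged with an atomic heap of polylogarithmic size~\cite{FredmanW94}---and that idea is absent from your sketch; as written your query time is $\Theta(\lg n+\occ)$, which does not meet the bound when $\occ$ is small.

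A smaller but still load-bearing issue: you assert that a ``priority-search-tree keyed on $x$'' emits the points of $N_v$ with $x\ge a$ in $y$-sorted order at $O(1)$ amortized per output, but that is not a property of priority search trees as usually defined. What does have this property is the Cartesian tree whose in-order sequence is the $y$-order and whose heap order is max on $x$: the nodes with $x\ge a$ form an upward-closed connected subtree containing the root, and a pruned in-order traversal of that subtree visits $O(1+\occ_v)$ nodes total. You should make this the actual construction and argue the $O(1)$-per-output bound explicitly, since the emission step of your proof hinges on it.
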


\begin{lemma}[{\cite{nekrich2012sorted}}]
	\label{lemma_three_sided_reporting_block}
	Let $N$ be a set of $\lg^3 n$ points in rank space. 
	Given packed sequences $X$ and $Y$ respectively encoding the $x$- and $y$-coordinates of these points where $Y[i] = i$ for any $i \in [0, \lg^3 n-1]$, a data structure using $O(\lg^3 n\lg \lg n)$ bits of space constructed over $N$ in $o(\lg^3 n/\sqrt{\lg n})$ time that answers the three-sided sorted reporting query in $O(\lg \lg n+\occ)$ time. 
	The query algorithm requires access to a universal table of $o(n)$ bits.
\end{lemma}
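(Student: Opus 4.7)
The plan is to adapt the sampling approach in the proof of Lemma~\ref{lemma_three_sided_next_point_block}. Instead of keeping only the single point with maximum $x$-coordinate from each sub-block (which sufficed for next-point queries), I would sample the $\lg\lg n$ points with largest $x$-coordinates from each sub-block, and combine a three-sided sorted reporting query on those samples with direct universal-table lookups on individual sub-blocks to recover any remaining points.

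First, divide $N$ along the $y$-axis into sub-blocks of $\lg^{3/4} n$ consecutive points each. As in the proof of Lemma~\ref{lemma_three_sided_next_point_block}, each point takes $6\lg\lg n$ bits so a whole sub-block fits in less than a word. Using a universal table $U$ of $o(n)$ bits (one entry per possible sub-block content, storing the $\lg\lg n$ largest-$x$ points of that sub-block in $O(\lg^2\lg n)$ bits), assemble the sampled set $\hat{N}$ with $|\hat{N}|\le \lg^{9/4} n\cdot \lg\lg n$ in $O(\lg^{9/4} n)$ time. On top of $\hat{N}$ I would build the data structure $DS(\hat{N})$ of Lemma~\ref{lemma-three-sided-reporting-base}; since $\lg|\hat{N}|=O(\lg\lg n)$, this costs $O(|\hat{N}|\lg^2|\hat{N}|)=O(\lg^{9/4} n\, \lg^3\lg n)=o(\lg^3 n/\sqrt{\lg n})$ time and $O(|\hat{N}|\lg^3|\hat{N}|)=O(\lg^{9/4} n\,\lg^4\lg n)=o(\lg^3 n)$ bits, both within the claimed budgets. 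I would also prepare a second universal table $U'$ that, given a sub-block, a threshold $a$, and an optional $y$-endpoint, returns in $O(\occ'+1)$ time the requested points in $y$-sorted order; $U'$ also occupies $o(n)$ bits since its index has $n^{o(1)}$ possible values.

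To answer a query $Q=[a,+\infty]\times[c,d]$, let $B_s,\ldots,B_e$ be the sub-blocks intersecting $[c,d]$. I would first use $U'$ on the boundary sub-blocks $B_s$ and $B_e$ — since only part of their $y$-range lies in $[c,d]$, their samples in $\hat{N}$ may be incomplete — producing the relevant points already in $y$-sorted order. For the middle sub-blocks, I would query $DS(\hat{N})$ on $[a,+\infty]\times[y(B_{s+1}),y(B_{e-1})]$ and process its sorted stream. Since the sub-blocks partition the $y$-axis, samples from distinct middle sub-blocks arrive in disjoint contiguous chunks. As each chunk from $B_i$ arrives: if fewer than $\lg\lg n$ samples appear, these are by construction exactly $B_i\cap Q$ and can be emitted immediately; if exactly $\lg\lg n$ samples appear, I would discard them and invoke $U'$ on $B_i$ to emit the full sorted list of points in $B_i$ with $x\ge a$, which then contains at least $\lg\lg n$ elements.

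The main point to check is the timing. The initial $DS(\hat{N})$ query costs $O(\lg\lg n + k)$ where $k\le\occ$ is the number of samples returned, and the two boundary invocations of $U'$ cost $O(\occ+1)$. Each middle-block fallback to $U'$ reports at least $\lg\lg n$ points, so its $+1$ overhead is amortized against those new points, summing to only $O(\occ)$ across all fallbacks. The subtle invariant underlying this analysis is that, because $\hat{N}$ contains the $\lg\lg n$ largest-$x$ points of each sub-block, a middle sub-block $B_i$ contributes exactly $\min(\lg\lg n,|B_i\cap Q|)$ samples to the $DS(\hat{N})$ output; hence ``fewer than $\lg\lg n$'' is a correct completion signal, and the amortization goes through to yield $O(\lg\lg n+\occ)$ total query time.
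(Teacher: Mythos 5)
Your proof is correct, but it is more complicated than the paper's, which uses the same single-sample strategy as Lemma~\ref{lemma_three_sided_next_point_block}: the sampled set $\hat{N}$ contains only the one max-$x$ point from each sub-block, a three-sided \emph{sorted} reporting structure $DS(\hat{N})$ is built over it by Lemma~\ref{lemma-three-sided-reporting-base}, and each sub-block that $DS(\hat{N})$ identifies is handed to a universal-table lookup returning its answer points in $y$-sorted order. Because that table lookup costs only $O(1+\occ_i)$ and every identified sub-block has $\occ_i\geq 1$ reported points, the constant overhead is charged to the first point of the block and the total is already $O(\lg\lg n+\occ)$. Your $\lg\lg n$-samples-per-sub-block device, with the ``fewer than $\lg\lg n$ samples means the block is exhausted'' completion signal and the fallback amortized against $\geq\lg\lg n$ new points, is exactly the mechanism the paper uses one level up, in Lemma~\ref{lemma_three_sided_reporting_big}, where it is genuinely needed: there the fallback structure $TS(N_i)$ answers in $O(\lg\lg n+\occ_i)$ time rather than $O(1+\occ_i)$, so the $\lg\lg n$ overhead cannot be charged to a single point. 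At the sub-block scale of the present lemma the fallback is a constant-time table lookup, so the heavier sampling is unnecessary (though not incorrect). Your space and construction bounds still verify, so the argument does go through; it is simply carrying machinery that this level of the recursion does not require.
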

\begin{proof}
	The proof is similar to the one shown in Lemma \ref{lemma_three_sided_next_point_block}.
	We construct almost the same data structure, apart from that upon the sampled point $\hat{N}$ from each block, we build data structure $DS(\hat{N})$ for three sided sorted reporting by Lemma \ref{lemma-three-sided-reporting-base}.
	Let $Q=[a, +\infty]\times[c, d]$ denote the query range, and $N_s$ and $N_e$ denote the blocks containing $c$ and $d$, where $s=\lfloor c/\lg^{3/4} n\rfloor$ and $e=\lfloor d/\lg^{3/4} n\rfloor$.
	If $e$ is equal to $s$, then the points in the answer to the query reside in the same subset $N_{s}$, and we can retrieve the target points in constant time by performing lookups with a universal table $U$ of $o(n)$ bits.
	$U$ has an entry for each possible set $(\alpha, \beta, \gamma, a', b', c', d')$, where $\alpha$ or $\beta$ is a packed sequence of length at most $\lg^{3/4} n$ drawn from $[\lg^3 n]$ denoting the $x$- or $y$-coordinates of the points, $\gamma$ is an integer $\in [0..(\lg^{3/4} n)-1]$ denoting the number of points, and $a', b', c', d'$ each is an integer $\in [0..\lg^3 n-1]$ such that all $a', b', c', d'$ together denote the query range.
	This entry stores a sorted point set of $\gamma$ points in the range $[a', b']\times[c', d']$.
	As $U$ has $O(2^{(\lg^{3/4} n)\times(6\lg \lg n)}\times \lg^{3 \times 4} n \times \lg^{3/4} n)$ entries and each entry stores a point set of at most $(6\lg \lg n)\times \lg^{3/4} n$ bits, $U$ uses $o(n)$ bits of space.
	If $s<e$, we sequentially check $A_1=N_s \cap Q$, $A_2=(N_{s+1} \cup \cdots \cup N_{e-1}) \cap [a, +\infty]\times [-\infty, +\infty]$, and $A_3=N_e \cap Q$, and report points in a sorted order in each of the three cases.
	Among them, points in $A_1$ and $A_3$ can be reported in contant time by performing lookups with $U$.
	It remains to report pioint in $A_2$.
	We query over $DS(\hat{N})$ in $O(\lg \lg n)$ time and retrieve all the blocks that each contains at least one point in $Q$.
	For each reported block $B$, we report points by performing lookups with $U$.
	Overall, the query time is $O(\lg \lg n+\occ)$. 
\end{proof}

\begin{lemma}[{\cite{nekrich2012sorted}}]
	\label{lemma_three_sided_reporting_big}
	Let the sequence $A[0..n'-1]$ of distinct elements drawn from $[n]$ denote a point set $N = \{(A[i], i)| 0\le i \le n'-1\}$, where $n^{\prime}\leq n$. 
	There exists a data structure using $O(n^{\prime}\lg \lg n)$ bits of extra space constructed over $N$ in $O(n^{\prime})$ time that answers three-sided sorted reporting query in $O(\lg \lg n+ t \times \occ)$ time, given that reporting the $x/y$-coordinate of a certain point of $A$ takes $O(t)$ time after the construction of the data structure. 
\end{lemma}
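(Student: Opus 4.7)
The plan is to adapt the two-level construction of Lemma~\ref{lemma_three_sided_next_point_big}, replacing its next-point primitives with sorted-reporting primitives and enlarging the per-block sample size. I would divide $N$ into $\lceil n'/\lg^3 n\rceil$ blocks of at most $\lg^3 n$ consecutive points by $y$-coordinate; rank-reduce each block with an atomic heap~\cite{FredmanW94} in $O(\lg^3 n)$ time; build a per-block three-sided sorted reporting structure $TS(N_i)$ via Lemma~\ref{lemma_three_sided_reporting_block}; and build a $\pred/\succ$ index on the sorted block $x$-coordinates via Lemma~\ref{pre_succ_chan}, precisely as in Lemma~\ref{lemma_three_sided_next_point_big}. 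The essential new ingredient is that, instead of sampling one point per block (the one of maximum $x$), I would sample from each block the $\lg\lg n$ points with the largest $x$-coordinates, collect them into a set $\hat{N}$, and construct over $\hat{N}$ the three-sided sorted reporting structure $DS(\hat{N})$ given by Lemma~\ref{lemma-three-sided-reporting-base}.

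The sample size $\lg\lg n$ is forced by amortization: every direct invocation of a per-block $TS(N_i)$ incurs an additive $O(\lg\lg n)$, affordable only if the block contributes $\Omega(\lg\lg n)$ to the output. The correctness observation I would establish is that for every \emph{middle} block $N_i$ (one whose entire $y$-range lies within $[c,d]$, where $Q=[a,+\infty]\times[c,d]$), if $|N_i\cap Q|\le\lg\lg n$ then $N_i\cap Q\subseteq\hat{N}$. Indeed, $N_i\cap Q$ consists of all points of $N_i$ with $x$-coordinate at least $a$, so it is the top-$|N_i\cap Q|$ subset of $N_i$ by $x$-coordinate, which is contained in the top-$\lg\lg n$ sample. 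Thus each middle block falls into exactly one of two regimes: its sample in $\hat{N}\cap Q$ already exhausts $N_i\cap Q$, or the sample saturates at $\lg\lg n$ points, certifying $|N_i\cap Q|\ge\lg\lg n$ and paying for an $O(\lg\lg n+|N_i\cap Q|)$ call to $TS(N_i)$.

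My query algorithm produces points online in increasing $y$-order. First, I would stream the output of $TS(N_s)$ for the lower boundary block containing $c$. Next, I would launch a sorted enumeration of $\hat{N}\cap([a,+\infty]\times[(s+1)\lg^3 n,\,e\lg^3 n-1])$ via $DS(\hat{N})$ and consume its output, grouping consecutive samples by block index $\lfloor y/\lg^3 n\rfloor$; since block $y$-ranges are disjoint, this grouping respects the global $y$-order. For each block, the group closes when the next sample crosses a block boundary or when the $\lg\lg n$-th sample appears: in the first case I emit the buffered samples directly (they are $y$-sorted and comprise all of $N_i\cap Q$), and in the second I discard the buffer and stream the output of $TS(N_i)$ on the appropriate local range. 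Finally I would stream $TS(N_e)$ for the upper boundary block. Original $x$-coordinates of emitted points are resolved on demand by the access operator of the indexing model, contributing the factor $t$ per emission.

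The main obstacle is orchestrating the online switch from sample-driven enumeration to a full-block query at each block boundary without disturbing the $y$-sorted output stream or inflating the running time; a single one-sample lookahead at each block boundary suffices. With that handled, the resource accounting is routine: the per-block $TS(N_i)$ structures sum to $O(n'\lg\lg n)$ bits and $O(n')$ preprocessing time; since $|\hat{N}|=O((n'/\lg^3 n)\lg\lg n)$, Lemma~\ref{lemma-three-sided-reporting-base} yields $DS(\hat{N})$ in $O(|\hat{N}|\lg^3|\hat{N}|)=O(n'\lg\lg n)$ bits and $O(|\hat{N}|\lg^2|\hat{N}|)=o(n')$ time; and the per-block $\pred/\succ$ indexes from Lemma~\ref{pre_succ_chan} contribute $O(n'\lg\lg n)$ bits in $O(n')$ time. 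Summing yields $O(n'\lg\lg n)$ extra bits, $O(n')$ construction, and by the amortization argument above, $O(\lg\lg n+t\cdot\occ)$ query time.
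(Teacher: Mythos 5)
Your proposal is correct and follows essentially the same route as the paper: divide by $y$ into $\lg^3 n$-blocks, sample the $\lceil\lg\lg n\rceil$ largest-$x$ points per block into $\hat N$ with $DS(\hat N)$ from Lemma~\ref{lemma-three-sided-reporting-base}, build per-block $TS(N_i)$ via Lemma~\ref{lemma_three_sided_reporting_block} together with per-block $\pred/\succ$ indexes, and use the ``saturated sample $\Rightarrow$ switch to full-block query'' amortization. Your articulation of the online stitching (grouping $DS(\hat N)$ output by block index with one-sample lookahead) makes explicit a step the paper leaves implicit, but it is not a different argument.
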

\begin{proof}
	The proof is similar to the one shown in Lemma \ref{lemma_three_sided_next_point_big}.
	We divide the points along $y$-axis of $N$ into blocks of length $\lg^{3} n$ each. 
	Within each block, we retrieve the $\lceil \lg \lg n \rceil$ points with largest $x$-coordinates into the point set $\hat{N}$.
	The capacity of $\hat{N}$ is $\lceil n'/\lg^3 n \rceil\times \lceil \lg \lg n \rceil$.
	We build in $O(|\hat{N}|\lg^{2} |\hat{N}|)=O(n' \lg^2 n'/\lg^3 n \times \lg \lg n)=o(n')$ time the data structure $DS(\hat{N})$ of $O(|\hat{N}|\lg^3 |\hat{N}|)=o(n')$ bits for three-sided sorted reporting by Lemma \ref{lemma-three-sided-reporting-base}.
	Over each block $N_i$ of points in rank space, we build in $o(\lg^3 n/\sqrt{\lg n})$ time the data structure $TS(N_i)$ of $O(\lg^3 n \lg \lg n)$ bits of space for three-sided reporting by Lemma \ref{lemma_three_sided_reporting_block}.
	The remaining data structures to be built are all the same as the proof of Lemma \ref{lemma_three_sided_next_point_big}.
	
	Let $Q=[a, +\infty]\times[c,d]$ denote the query range, and $N_s$ and $N_e$ denote the blocks containing $c$ and $d$, respectively.
	If $s$ is equal to $e$, we perform $\succ(a)$ over the index data structure of sorted $x$-coordinates of points from $N_s$ to retrieve $\hat{a}$ in rank space and perform a three-sided sorted reporting query in $N_s\times[\hat{a}, +\infty]\times[c \mod \lg^3 n, d \mod \lg^3 n]$, which requires $O(\lg \lg n+t\cdot \occ)$ time by Lemma~\ref{lemma_three_sided_reporting_block}.
	Note that once a point $e$ is reported from a block, we can compute its original $y$-coordinate by $i\times \lg^3 n+e.y$, where $i$ denotes the block index. 
	Then, we can retrieve its original $x$-coordinate by the computed $y$-coordinate.
	We dismiss the details here, but assume that the original $x$- and $y$-coordinates of $e$ can be retrieved in $O(t)$ time.
	Otherwise, we sequentially report points from $A_1=N_s\cap [a, +\infty]\times [c \mod \lg^3 n, +\infty]$, $A_2=(N_{s+1} \cup \cdots \cup N_{e-1})\cap [a, +\infty]\times [-\infty, +\infty]$, and $A_3=N_s\cap [a, +\infty]\times [0, d \mod \lg^3 n]$.
	We first take $O(\lg \lg n+t\cdot \occ_1)$ time to report points in $A_1$ following the similar way as we did when $s=e$, where $\occ_1=|A_1|$.
	Then, we query over $DS(\hat{N})$ for points in $A_2$. 
	If there are consecutive $\lceil \lg \lg n \rceil$ points reported from the same block $N_i$, it means that there are at least $\lceil \lg \lg n \rceil$ points in $N_i\cap Q'$, where $Q'=N_i \cap [a, +\infty]\times [-\infty, +\infty]$ and $s<i<e$.
	Then we query over the data structure $TS(N_i)$ for points in $N_i \cap Q'$ in $O(\lg \lg n+t \cdot \occ_i)=O(t \cdot \occ_i)$ time, where $\occ_i$ denotes the number of points in $N_i\cap Q'$.
	If some block $N_i$ contains less than $\lceil \lg \lg n \rceil$ points in $N_i \cap Q$, then all points in $N_i\cap Q$ are reported when performing queries over $DS(\hat{N})$ and we do not need to check $TS(N_i)$.
	Thus reporting points in $A_2$ requires $O(\lg \lg n+t\cdot \occ_2)$ time, where $\occ_2=|A_2|$.
	Finally, we query over $TS(N_e)$ for points in $A_3$ using $O(\lg \lg n+t\cdot \occ_3)$ time.
	Overall, the points in $N\cap Q$ can be reported in a sorted order along $y$-axis in $O(\lg \lg n+ t\cdot occ)$ time.
\end{proof}

More interestingly, when the $x$- and $y$-coordinates of the points are stored in the packed form, we can solve the three-sided sorted reporting with a data structure built in sublinear time. Here, we allow the point set $N$ to have duplicated $x$-coordinates.
\begin{lemma}
	\label{lemma_three_sided_reporting_small}
	Let $N$ be a set of $n'$ points with distinct $y$-coordinates in a $2^{\sqrt{\lg n}} \times n'$ grid, where $n^{\prime}=O(2^{c\sqrt{\lg n}})$ for any constant integer $c$. 
	Given packed sequences $X$ and $Y$ respectively encoding the $x$- and $y$-coordinates of these points where $Y[i] = i$ for any $i \in [0, n'-1]$, a data structure using $O(n^{\prime}\lg \lg n)$ bits of extra space constructed over $N$ in $O({n^{\prime}}/{\sqrt{\lg n}}\times \lg \lg n)$ time that answers a three-sided sorted reporting query in $O(\lg \lg n+t\times \occ)$ time, given that reporting $x/y$-coordinate of a certain point of $A$ takes $O(t)$ time after construction. 
\end{lemma}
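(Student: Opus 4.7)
The plan is to adapt the proof of Lemma \ref{lemma_three_sided_reporting_big} to the packed input setting, exactly in the spirit in which Lemma \ref{lemma_three_sided_next_point_small} adapts Lemma \ref{lemma_three_sided_next_point_big}. I reuse the same construction, sampling strategy, and query algorithm: divide $N$ into blocks $N_0,\ldots,N_{n'/\lg^3 n-1}$ of $\lg^3 n$ consecutive points along the $y$-axis; for each block perform rank reduction, build a predecessor/successor index over the sorted $x$-coordinates, and build a local three-sided sorted reporting structure $TS(N_i)$ via Lemma \ref{lemma_three_sided_reporting_block} on the rank-reduced coordinates; finally sample the $\lceil\lg\lg n\rceil$ points with largest $x$-coordinates in each block to form $\hat{N}$ and build $DS(\hat{N})$ via Lemma \ref{lemma-three-sided-reporting-base}. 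Space and query time bounds are inherited verbatim from the proof of Lemma \ref{lemma_three_sided_reporting_big}, so the whole task is to show that the packed assumption on $X$ and $Y$ drives the construction time from $O(n')$ down to $O(n'\lg\lg n/\sqrt{\lg n})$.

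I plan to replace the two construction bottlenecks identified in Lemma \ref{lemma_three_sided_reporting_big} with their packed word RAM counterparts. Since each coordinate takes $O(\sqrt{\lg n})$ bits, a packed mergesort (Lemma \ref{packed_sort}) sorts one block of $\lg^3 n$ packed pairs $(X[i],i)$ by $x$-coordinate in time $O(\lg^3 n\cdot\lg(\lg^3 n)\cdot\sqrt{\lg n}/\lg n)=O(\lg^2 n\,\sqrt{\lg n}\,\lg\lg n)$; summed over the $n'/\lg^3 n$ blocks this yields $O(n'\lg\lg n/\sqrt{\lg n})$ for the rank-reduction step. The per-block predecessor/successor index built over the sorted packed $x$-coordinates uses Lemma \ref{pre_succ_index_duplicates} (which already tolerates duplicate $x$-values) in $O(\lg^3 n/\sqrt{\lg n})$ time per block, summing to $O(n'/\sqrt{\lg n})$. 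The local $TS(N_i)$ is built in $o(\lg^3 n/\sqrt{\lg n})$ time by Lemma \ref{lemma_three_sided_reporting_block}, summing to $o(n'/\sqrt{\lg n})$.

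Sampling $\hat{N}$ reduces to copying the $\lceil\lg\lg n\rceil$-suffix of each $x$-sorted packed block, which takes $O(\lg\lg n\cdot\sqrt{\lg n}/\lg n)$ packed time per block and $o(n'/\sqrt{\lg n})$ overall. For the global structure, $|\hat{N}|=O(n'\lg\lg n/\lg^3 n)$ and $\lg|\hat{N}|=O(\sqrt{\lg n})$ since $n'=O(2^{c\sqrt{\lg n}})$; hence Lemma \ref{lemma-three-sided-reporting-base} builds $DS(\hat{N})$ in $O(|\hat{N}|\lg^2|\hat{N}|)=o(n'\lg\lg n/\sqrt{\lg n})$ time and its space cost is $o(n')$ bits. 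Summing the four contributions gives the claimed $O(n'\lg\lg n/\sqrt{\lg n})$ construction time, while the space bound $O(n'\lg\lg n)$ bits matches the one in Lemma \ref{lemma_three_sided_reporting_big}. The main point to verify is not an obstacle but a bookkeeping step: because $Y[i]=i$, sorting the packed pairs $(X[i],i)$ simultaneously furnishes the original $y$-indices needed to identify sampled points in $\hat{N}$ and to recover true $y$-coordinates during queries, so the query algorithm and its $O(\lg\lg n+t\cdot\occ)$ analysis carry over unchanged from Lemma \ref{lemma_three_sided_reporting_big}.
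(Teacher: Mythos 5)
Your proposal is correct and follows essentially the same route as the paper, which is extremely terse here (the paper's proof just says ``similar to Lemma~\ref{lemma_three_sided_next_point_small}''): you replace the per-block rank reduction by a packed mergesort (Lemma~\ref{packed_sort}) and the per-block predecessor/successor index by Lemma~\ref{pre_succ_index_duplicates}, keep everything else from Lemma~\ref{lemma_three_sided_reporting_big}, and your arithmetic confirming the $O(n'\lg\lg n/\sqrt{\lg n})$ total, the $o(n')$ cost of $DS(\hat{N})$, and the $O(n'\lg\lg n)$-bit space bound is sound. You also correctly observed that switching from Lemma~\ref{pre_succ_chan} to Lemma~\ref{pre_succ_index_duplicates} is what accommodates the duplicate $x$-coordinates allowed by this lemma, which is the one substantive difference between the ``big'' and ``small'' variants beyond packing.
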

\begin{proof}
	The proof is similar to the one shown in Lemma \ref{lemma_three_sided_next_point_small}.
\end{proof}

\subsection{Fast Construction of Orthogonal Sorted Range Reporting Structures}
Let $N$ be a set of $n$ points in 2d rank space. 
Given a query range $Q=[a, b]\times[c, d]$, we define the orthogonal sorted range reporting to be the problem of reporting points in $N\cap Q$ in a increasingly sorted order by $y$-coordinates.
In this subsection, we consider the orthogonal range queries in three different cases: on a $\lg^{1/4}\times n'$ small narrow grid, on a $2^{\sqrt{\lg n}}\times n'$ medium narrow grid, and eventually on an $n\times n$ grid.



\subsubsection{Orthogonal Sorted Range Reporting on a Small Narrow Grid}
First, we consider a special case such that the number of points is less than $\lg n$.
\begin{lemma}
	\label{lemma_orthogonal_sorted_range_reporting_base}
	Let $N$ be a set of $n'$ points with distinct $y$-coordinates in a $\lg^{1/4} n \times n'$ grid, where $n^{\prime}<\lg n$. 
	Given packed sequences $X$ and $Y$ respectively encoding the $x$- and $y$-coordinates of these points where $Y[i] = i$ for any $i \in [0, n'-1]$, a data structure of $O(n^{\prime}\lg \lg n)$ bits can be built in $o(n^{\prime}/\sqrt{\lg n})$ time over $N$ to answer orthogonal sorted range reporting in $O(\occ)$ time, where $\occ$ is the number of the reported points. 
\end{lemma}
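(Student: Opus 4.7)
The plan is to adapt the block-based scheme underlying Lemma~\ref{lemma_orthogonal_range_successor_base} by replacing its single-minimum universal tables with sorted-reporting tables, and to combine this with a global sampled summary so that empty blocks in the middle of the $y$-range can be skipped without overhead.

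Concretely, I would partition $N$ into $\lceil n'/\lg^{3/4} n\rceil$ consecutive blocks along the $y$-axis. Each block occupies only $O(\lg^{3/4} n \cdot \lg \lg n) = o(\lg n)$ bits, so there is an $o(n)$-bit universal table $U_{\mathrm{blk}}$ whose entry for $(\alpha,\beta,\gamma,a',b',c',d')$ stores the points of the block encoded by $(\alpha,\beta,\gamma)$ that fall inside $[a',b']\times[c',d']$, packed in increasing order of $y$-coordinate. In parallel, I would build the sampled set $\hat{N}$ containing, for each block $B_i$ and each $x \in [\lg^{1/4} n]$ that actually appears in $B_i$, the pair $(x, i)$. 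Because $n' < \lg n$, $|\hat{N}| \le \lg^{1/4} n \cdot \lceil n'/\lg^{3/4} n\rceil = O(n'/\sqrt{\lg n}) = O(\sqrt{\lg n})$, so $\hat{N}$ fits in $O(\sqrt{\lg n} \lg \lg n) = o(\lg n)$ bits, and a second table $U_{\mathrm{sum}}$ can, in one lookup, return all samples of $\hat{N}$ lying in any $x$-range and consecutive block-index range, sorted lexicographically by (block index, $x$-value).

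For the query $Q=[a,b]\times[c,d]$, let $B_s$ and $B_e$ be the first and last blocks overlapping $[c,d]$. If $s=e$, a single lookup on $U_{\mathrm{blk}}$ reports $B_s\cap Q$ in $y$-sorted order. Otherwise I proceed in three phases: (i) a lookup on $U_{\mathrm{blk}}$ reports $B_s\cap [a,b]\times[c,+\infty)$; (ii) a single lookup on $U_{\mathrm{sum}}$ returns the samples of $\hat{N}$ whose $x\in[a,b]$ and whose block index lies in $(s,e)$, and I scan this list while issuing one further $U_{\mathrm{blk}}$ lookup per distinct block index to report that block's points in $[a,b]\times(-\infty,+\infty)$ in $y$-sorted order; (iii) a final lookup on $U_{\mathrm{blk}}$ reports $B_e\cap[a,b]\times(-\infty,d]$. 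Every active middle block contributes at least one point to the output and every sample of $\hat{N}$ returned in phase (ii) corresponds to an $(x,i)$ pair witnessing at least one matching point in $B_i$, so the total work across the three phases is $O(1+\occ)$; concatenating the phases in order yields the global $y$-sorted output.

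Construction slices $X$ and $Y$ into blocks via bit manipulation and then computes $\hat{N}$ with one auxiliary lookup per block, all in $O(n'/\lg^{3/4} n) = o(n'/\sqrt{\lg n})$ time. The verbatim storage of the blocks and of $\hat{N}$ accounts for $O(n'\lg\lg n)$ bits. The main point requiring care is ensuring each table's input fits into the word budget of $o(\lg n)$ bits and that $U_{\mathrm{sum}}$ is organized so that duplicate block indices in its returned list can be recognized in $O(1)$ time per sample during the phase (ii) scan; the assumption $n'<\lg n$ is precisely what keeps every table argument short enough.
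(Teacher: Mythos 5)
Your proof is essentially the same as the paper's: both partition $N$ along the $y$-axis into blocks of $\lg^{3/4} n$ points, build a sampled summary $\hat{N}$ of size $O(\sqrt{\lg n})$ that records which $(block, x)$ pairs are occupied, and answer a query with a constant number of word-sized table lookups -- one for the boundary blocks, one on $\hat{N}$ to identify the nonempty middle blocks, then one per such block -- yielding $O(1+\occ)$ time with the $O(n'\lg\lg n)$-bit budget dominated by the packed block representations. The only cosmetic difference is that you store $(x,i)$ pairs in $\hat{N}$ while the paper stores the lowest point for each such pair, but both serve the identical role of enumerating nonempty blocks, and your deduplication-by-sorting argument for the phase (ii) scan is consistent with the paper's "iterate each reported block from left to right."
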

\begin{proof}
	The data structure for orthogonal range successor queries shown in Lemma \ref{lemma_orthogonal_range_successor_base} can also be used for sorted range reporting queries.
	Therefore, we only show the query algorithm.
	Let $Q=[a, b]\times[c, d]$ denotes the query range, and $N_s$ and $N_e$  denote the block contain $c$ and $d$, respectively.
	We sequentially check $A_1=N_s \cap Q$, $A_2=(N_{s+1} \cup \cdots \cup N_{e-1}) \cap Q$ and $A_3=N_e \cap Q$ and report points in a sorted order.
	Both points in $A_1$ and $A_3$ can be reported in constant time with a universal table $U$ of $o(n)$ bits, similar to $U$ in the proof of Lemma \ref{lemma_three_sided_reporting_block}.
	As $\hat{N}$ has at most $\sqrt{\lg n}$ of points, querying over $\hat{N}$ can be also achieved by performing lookups with $U$.
	To report points in $A_2$, we query over $\hat{N}$ to find all blocks that contains at least one point in the query range.
	Then we iterate each reported block from left to right, and report the target points in a sorted order.
	Overall, the query time is $O(\occ)$.
\end{proof}

Next, we give the method for a point set with any number of points. 
As the data structure show in the proof of Lemma \ref{theorem_orthogonal_range_successor_small} can be used for both orthogonal range successor and sorted range reporting, we only show the query algorithm for orthogonal sorted range reporting in the proof of Lemma follows:

\begin{lemma}
	\label{theorem_orthogonal_reporting_small}
	Let the packed sequence $A[0..n'-1]$ drawn from $[\lg^{1/4} n]$ denote a point set $N = \{(A[i], i)| 0\le i \le n'-1\}$, where $n'\le n$.
	There is a data structure of $O(n^{\prime}\lg^2 \lg n+w\times\lg^{1/4} n)$ bits can be built in $O(n^{\prime}/\sqrt{\lg n})$ time over $N$ to answer orthogonal sorted range reporting in $O(\lg \lg n+\occ)$ time, where $\occ$ is the number of the reported points. 
\end{lemma}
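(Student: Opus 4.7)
The plan is to reuse the data structure built in Lemma~\ref{theorem_orthogonal_range_successor_small} verbatim; only a new query algorithm is needed, because the stated space and preprocessing bounds are identical to those of Lemma~\ref{theorem_orthogonal_range_successor_small}. Recall that this structure is a binary wavelet tree $T$ over $A$ of height $O(\lg\lg n)$, equipped at every internal node $u$ with $O(1)$-time $\rankop$ and $\selop$ on $A(u)$ and $\rankop$ on $S(u)$. The conceptual index sequence $I(u)$ at node $u$ is, by construction, strictly increasing, and any entry $I(u)[i]$ together with $A(u)[i]$ can be retrieved in $O(1)$ time through $\point(u,i)=\bigl(A(u)[i],\selop_{A(u)[i]}(A(r),\rankop_{A(u)[i]}(A(u),i))\bigr)$.

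Given a query rectangle $Q=[a,b]\times[c,d]$, I would first run the preliminary phase of the range-successor query from Lemma~\ref{theorem_orthogonal_range_successor_small}: in $O(\lg\lg n)$ time locate the LCA $v$ of the $a$-th and $b$-th leaves, walk the two root-to-leaf paths, and identify the $k=O(\lg\lg n)$ marked nodes together with, at each marked node $u$, the position interval $[c_u,d_u]$ such that $I(u)[c_u..d_u]$ lists exactly the $y$-coordinates of $N(u)\cap Q$. Since $I(u)$ is strictly increasing, each marked node contributes a sorted stream of $y$-coordinates, and the answer to the sorted reporting query is the merge of these $k$ streams; the head of the stream at $u$, with current pointer $p_u$ initialised to $c_u$, is obtainable in $O(1)$ time by $\point(u,p_u)$.

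The remaining task is to merge $k=O(\lg\lg n)$ sorted streams of $\lg n$-bit keys in $O(\lg\lg n+\occ)$ time. For this I would maintain an atomic heap of Fredman and Willard keyed on the current head values of the marked streams; since $k=O(\lg\lg n)=o(\lg^{1/5} n)$, every insertion and every delete-min costs $O(1)$ amortised time after a one-time $o(n)$-bit universal table is built. Both the table's space and its construction fit inside the $O(w\cdot\lg^{1/4} n)$ and $O(n'/\sqrt{\lg n})$ budgets claimed in the lemma, because the table depends only on the word size. The query then inserts the $k$ heads in $O(\lg\lg n)$ time and, repeatedly, extracts the smallest head, emits the corresponding point via $\point$, advances the pointer of the relevant stream, and inserts the next head if $p_u\le d_u$; each reported point costs $O(1)$. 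This gives total time $O(\lg\lg n+\occ)$ and is naturally online, matching the claim. The main subtlety is accommodating the atomic-heap machinery within the stated bounds and arguing that the streams are indeed sorted; the former is standard, and the latter follows directly from the definition of $I(u)$.
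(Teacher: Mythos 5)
Your proposal is correct and follows essentially the same route as the paper: reuse the wavelet-tree structure from Lemma~\ref{theorem_orthogonal_range_successor_small}, identify the $O(\lg\lg n)$ marked nodes and their index ranges, and use the Q-heap of Fredman and Willard to merge the $O(\lg\lg n)$ sorted streams in $O(\lg\lg n+\occ)$ total time. The paper's own proof makes exactly these moves (it also handles the $n'<\lg n$ case by invoking Lemma~\ref{lemma_orthogonal_sorted_range_reporting_base}, which you implicitly cover by reusing the successor structure wholesale), so there is no substantive difference.
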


\begin{proof}
	Lemma \ref{lemma_orthogonal_sorted_range_reporting_base} already subsumes this lemma when $n'< \lg n$, so it suffices to assume that $n' \ge \lg n$ in the rest of the proof.

	Let $Q=[a, b]\times[c,d]$ denote the query range and $\pi_a/\pi_b$ denote the path from $\lca(a, b)$ to $a/b$-th leaf, where $\lca(a, b)$ denotes the lowest common ancestor of the $a$-th and $b$-th leaves. 
	For each node $u$ on $\pi_a$ we mark the right child of $u$ if it exists and it does not stay on the path $\pi_a$. 
	For each node $u$ on $\pi_b$ we mark the left child of $u$ if it exists and it does not stay on the path $\pi_b$. 
	In addition, we mark the $a$-th and $b$-th leaves. 
	The points on the marked node have the $x$-coordinate in the range $[a, b]$. 
	As the height of $T$ is $O(\lg \lg n)$, there are in total $O(\lg \lg n)$ nodes marked. 
	
	The points at all marked nodes within the query range $Q$ can be identified in total $O(\lg \lg n)$ time. 
	Let $[c_v, d_v]$ denote the range such that $I(v)[c_v..d_v]$ within $[c, d]$.
	Recall that $I(v)$ is an index sequence associated with (but not explicitly stored at) each node $v$.
	Clearly, the range $[c_v, d_v]$ can be retrieved by operating $\rankop$ query over the sequence $S(u)$ where $u$ is the parent of $v$, i.e., $[c_v, d_v]=[\rankop_0(S(u), c_u),\rankop_0(S(u), d_u)]$ if $v$ is the left child of $u$. Otherwise, $[c_v, d_v]=[\rankop_1(S(u), c_u),\rankop_1(S(u), d_u)]$.
	As traversing each internal node $v$ on the path from the root node to the $a$-th leaf ($b$-th leaf, respectively), we keep operating $\rankop$ queries. And if a marked node $v$ is identified, we can find the range $[c_v, d_v]$ by $\rankop$ queries over $S(u)$ where $u$ is the parent of $v$.
	
	Note that the points associated with each node of $T$ are increasingly sorted by $y$-coordinates. 
	Once the marked nodes and the index range at each of them are identified, we can use the $O(\lg \lg n)$-way merge sorting algorithm to merge all points in $N\cap Q$ in a sorted order along $y$-axis.
	However, the merging algorithm takes $O(\lg \lg n\times \occ)$ time. 
	To speed up the querying efficiency, we can apply the $Q$-heap data structure \cite{FredmanW94} which supports to find the minimum among $O(\lg \lg n)$ $y$-coordinates of points in constant time.
	By combining the $Q$-heap with the $O(\lg \lg n)$-way merge sorting algorithm, reporting target points at marked nodes in a sorted order take $O(\lg \lg n+\occ)$ time.
	Note that the $\lg \lg n$-term spends on filling elements into Q-heap.
	Overall, the orthogonal sorted range reporting can be answered in $O(\lg \lg n+\lg \lg n+\occ)=O(\lg \lg n+\occ)$ time. 
\end{proof}

\subsubsection{Orthogonal Sorted Range Reporting on a Medium Narrow Grid}
In this subsection, we solve the orthogonal sorted range reporting over a point set $N$ on a  $2^{\sqrt{\lg n}}\times n^{\prime}$ grid with fast processing data structures.
We consider two cases depending on whether $2^{\sqrt{\lg n}}\leq n^{\prime} \le 2^{2{\sqrt{\lg n}}}$ or $2^{2\sqrt{\lg n}}< n^{\prime}$.

\begin{lemma}
	\label{orthogonal_sorted_range_reporting_on_block}
	Let $N$ be a set of $n'$ points with distinct $y$-coordinates in a $2^{\sqrt{\lg n}} \times n'$ grid, where $2^{\sqrt{\lg n}}\leq n^{\prime}\le 2^{2{\sqrt{\lg n}}}$. 
	Given packed sequences $X$ and $Y$ respectively encoding the $x$- and $y$-coordinates of these points where $Y[i] = i$ for any $i \in [0, n'-1]$, a data structure of $O(n'{(\lg n)}^{(1+\epsilon)/2}+w\times 2^{\sqrt{\lg n}})$ bits can be built over $N$ in $O(n'+2^{\sqrt{\lg n}}\times\sqrt{\lg n}/\lg \lg n)$ time to answer orthogonal sorted range reporting  in $O(\lg \lg n+\occ)$ time, where $\epsilon$ is any positive constant and $\occ$ is the number of the reported points. 
\end{lemma}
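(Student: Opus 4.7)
The plan is to mirror the construction in Lemma~\ref{ors_on_block} (the range-successor analogue on the same grid size), but to replace the per-node successor/next-point sub-structures with their sorted-reporting counterparts, and to switch the ball-inheritance engine from Lemma~\ref{ball_inheritance_point_small_02} to the $O(1)$-time $\point$ version of Lemma~\ref{ball_inheritance_point}. The latter swap is essential: in sorted reporting each emitted point must be translated from wavelet-tree coordinates back to global $(x,y)$ via $\point$, so we cannot afford $\omega(1)$ time per call without blowing up the $\occ$ term.

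First I would build a $d$-ary wavelet tree $T$ over $X$ with $d=\lg^{1/4} n$ and $\sigma = 2^{\sqrt{\lg n}}$, augmented via Lemma~\ref{ball_inheritance_point}. Plugging in these parameters yields $O(n'\lg\sigma\log_d^{\epsilon}\sigma + \sigma w)=O\!\left(n'(\lg n)^{(1+\epsilon)/2}+w\cdot 2^{\sqrt{\lg n}}\right)$ bits, construction time $O(n'\lg^2\sigma/\lg n+\sigma\log_d\sigma)=O(n'+2^{\sqrt{\lg n}}\sqrt{\lg n}/\lg\lg n)$, $O(1)$-time $\point$, and $O(\lg\lg n)$-time $\noderange$, matching the statement exactly. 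At every internal node $u$ I would attach two packed sub-structures: (a) a three-sided sorted reporting structure $TS_{ds}(u)$ over $\hat{N}(u)=\{(A(u)[i],i)\}$ built with Lemma~\ref{lemma_three_sided_reporting_small}, and (b) an orthogonal sorted range reporting structure $RS_{ds}(u)$ over $\hat{S}(u)=\{(S(u)[i],i)\}$ built with Lemma~\ref{theorem_orthogonal_reporting_small}. Summing costs across the $O(\sqrt{\lg n}/\lg\lg n)$ levels of $T$, each holding $n'$ points in total, gives the claimed preprocessing time and a space contribution of $O(n'\sqrt{\lg n}\lg\lg n+w\cdot 2^{\sqrt{\lg n}})$ bits, which is subsumed by the ball-inheritance bound.

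To answer a query $Q=[a,b]\times[c,d]$, I would compute in constant time the LCA $v$ of the $a$-th and $b$-th leaves and its two children $v_{a'},v_{b'}$ on the paths to these leaves, then partition $N\cap Q$ into three pieces exactly as in the proof of Lemma~\ref{ors_on_block}: a three-sided query on $\hat{N}(v_{a'})$ with left boundary $a$ and $y$-range $\noderange(c,d,v_{a'})$; the symmetric three-sided query on $\hat{N}(v_{b'})$; and a $4$-sided sorted reporting query on $\hat{S}(v)$ with $x$-range $[a'+1,b'-1]$ and $y$-range $\noderange(c,d,v)$. Each of the three sub-queries delivers its output in increasing $y$-order of positions inside $N(v_{a'})$, $N(v_{b'})$, or $N(v)$ respectively, which, since those local $y$-orders agree with the global $y$-order of $N$, lets me use a constant-cost $3$-way merge. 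Each emitted local index is converted to a global $(x,y)$ by one $\point$ call at cost $O(1)$, so the cumulative cost of translation is $O(\occ)$. The three sub-queries incur $O(\lg\lg n)$ setup each, giving the total $O(\lg\lg n+\occ)$.

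The main obstacle I anticipate is ensuring that the three sub-streams can be consumed in an online, one-point-at-a-time fashion so that the $3$-way merge truly pays $O(1)$ amortized per reported point, rather than being forced to materialize each stream in full before merging. This requires a careful look at the query procedures behind Lemma~\ref{theorem_orthogonal_reporting_small} and Lemma~\ref{lemma_three_sided_reporting_small}: the first uses a $Q$-heap-driven $O(\lg\lg n)$-way merge along a root-to-leaf marked-node set and naturally produces points one at a time, while the second is built on the block/sample scheme of Lemma~\ref{lemma_three_sided_reporting_big} whose inner-block reports are already sorted and whose cross-block reports are governed by $DS(\hat{N})$, which can be driven incrementally. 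Once this incrementality is confirmed and the translation cost of $\point$ is charged against the $O(1)$ guarantee from Lemma~\ref{ball_inheritance_point}, the space, construction-time, and query-time bounds follow directly from the per-level accounting sketched above.
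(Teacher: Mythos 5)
Your proposal matches the paper's proof: both build a $\lg^{1/4}n$-ary wavelet tree augmented via Lemma~\ref{ball_inheritance_point} (to get $O(1)$-time $\point$), attach $TS_{ds}(u)$ by Lemma~\ref{lemma_three_sided_reporting_small} and $RS_{ds}(u)$ by Lemma~\ref{theorem_orthogonal_reporting_small} at each internal node, and answer a query by decomposing at the LCA into two three-sided sub-queries plus one four-sided sub-query and then performing a $3$-way merge. Your parameter accounting for space and construction time is correct, and you correctly identify the switch from Lemma~\ref{ball_inheritance_point_small_02} to Lemma~\ref{ball_inheritance_point} as the essential change relative to the range-successor analogue.
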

\begin{proof}
	The proof is similar to the one shown in Lemma \ref{ors_on_block}. 
	Here, we only discuss the different data structures that required to build.
	We build a $\lg^{1/4}$-ary wavelet tree $T$ upon $X[0, n'-1]$ and $Y[0, n'-1]$ with support for ball inheritance using Lemma~\ref{ball_inheritance_point}.
	We construct the data structures over the sequences associated with each internal node $u$ as follows:
	\begin{itemize}
		\item $TS_{ds}(u)$ over $\hat{N}(u)$ by Lemma \ref{lemma_three_sided_reporting_small} for three-sided sorted reporting;
		\item $RS_{ds}(u)$ over $\hat{S}(u)$ by Lemma \ref{theorem_orthogonal_reporting_small} for orthogonal sorted range reporting;
	\end{itemize}
	The overall space usage and the construction time are both dominated by the data structure for ball inheritance, which is $O(n'{(\lg n)}^{(1+\epsilon)/2}+w\times 2^{\sqrt{\lg n}})$ bits of space and $O(n'+2^{\sqrt{\lg n}}\times\sqrt{\lg n}/\lg \lg n)$ time, respectively.
	
	Let $Q$ denote the query range $[a, b]\times[c, d]$. 
	Similarly, we retrieve the lowest common ancestor $v$ of the $a$- and $b$-th leaf. 
	Let $v_s$ and $v_e$ each denote the child of $v$ on the path from $v$ to the $a$- and $b$-th leaf. 
	Note that $s$ and $e$ denote the child indexes, and $s\leq e$.
	At node $v_s$ and $v_e$, we query over $TS_{ds}(v_s)$ and $TS_{ds}(v_e)$ for reporting points in $\hat{N}(v_s)\cap [a, +\infty]\times[c_{v_s}, d_{v_s}]$ and $\hat{N}(v_e)\cap [0, b]\times[c_{v_e}, d_{v_e}]$ in a sorted order by $y$-coordinates of points, respectively, where $[c_{v_s}, d_{v_s}]=\noderange(c, d, v_s)$ and $[c_{v_e}, d_{v_e}]=\noderange(c, d, v_e)$.
	At node $v$, we query over $RS_{ds}(v)$ for reporting points in $\hat{N}(u)\cap [s+1, e-1]\times[c_v, d_v]$ in a sorted order, where $[c_{v}, d_{v}]=\noderange(c, d, v)$.
	Finally, we use 3-way merge-sorting algorithm to merge three sorted point lists reported into a single sorted list.
	As $\point(v,i)$ takes $O(1)$ time and $\noderange(c, d, v)$ takes $O(\lg \lg n)$ time, the overall query time is $O(\lg \lg n+\occ)$.
\end{proof}

Next, we provide a method when the capacity of the point set is more than $2^{2\sqrt{\lg n}}$.
Our method will use the following result:
\begin{lemma}[{\cite[Theorem 2]{nekrich2012sorted}}]
	\label{theorem_sorted_reporting}
	There exists a data structure of $O(n\lg^{1+\epsilon} n)$ bits constructed upon a set of $n$ points in 2d rank space in $O(n\lg n)$ time that supports the sorted range reporting queries in $O(\lg \lg n+occ)$ time, where $occ$ is the number of the reported points.
\end{lemma}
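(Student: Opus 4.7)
This lemma is cited verbatim from Nekrich and Navarro~\cite{nekrich2012sorted}, so no proof is required in the present paper. Nonetheless, the plan below sketches how I would independently establish a result of this flavour, in the spirit of the other range-search structures used throughout this section.

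My starting point would be a wavelet tree $T$ over the $x$-coordinates whose fanout is chosen so that the canonical decomposition of any $x$-range picks up only $O(\lg\lg n)$ nodes; concretely I would use something like $2^{\sqrt{\lg n}}$-ary or multiple levels of granularity in the style of ball inheritance, so that $T$ has $O(\lg\lg n)$ ``canonical-capable'' levels while each point pays only an $O(\lg^{\epsilon} n)$ factor in space. At every canonical-capable node $v$ I would store, in addition to the usual wavelet-tree machinery, the sorted list of $y$-coordinates of the points whose $x$-coordinates fall under $v$, augmented with a predecessor/successor structure (for instance a $y$-fast trie) that locates the endpoints of the vertical slab $[c,d]$ in $O(\lg\lg n)$ time. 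Standard construction of the wavelet tree and sorting of $y$-coordinates level-by-level gives $O(n\lg n)$ preprocessing and $O(n\lg^{1+\epsilon} n)$ bits of space, because the sorted $y$-list information is replicated across the $O(\lg^{\epsilon} n)$ levels.

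For a query $[a,b]\times[c,d]$ I would first decompose $[a,b]$ over $T$ into $O(\lg\lg n)$ canonical nodes $v_1,\ldots,v_k$ whose sorted $y$-lists together form the answer set; at each $v_i$ one predecessor/successor query yields the contiguous range of positions inside $[c,d]$, for a total of $O(\lg\lg n)$ setup time. To emit the $\occ$ points in sorted $y$-order I would initialise an atomic heap~\cite{FredmanW94} with the smallest current $y$-coordinate in each of the $k=O(\lg\lg n)$ slabs, and then repeatedly extract-min and advance the corresponding pointer. Since $k=O(\polylog n)$ an atomic heap supports all operations in $O(1)$, so the reporting phase spends $O(1)$ per point and the total query cost is $O(\lg\lg n+\occ)$, which can be terminated at any time to obtain the ``online'' behaviour emphasised in the introduction.

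The main obstacle is not any single step but the joint balancing of the three parameters. The fanout must be large enough that the canonical decomposition has only $O(\lg\lg n)$ nodes (otherwise the setup cost dominates), yet small enough that storing sorted $y$-lists at every canonical-capable level stays within $O(n\lg^{1+\epsilon} n)$ bits, and the predecessor structures at each level must still support $O(\lg\lg n)$-time queries within that budget. The cleanest resolution I know of this tension is exactly the multi-granularity scheme of Nekrich and Navarro, paying the $\lg^{\epsilon} n$ factor in space rather than in query time; absorbing this overhead cleanly would be where most of the technical work lives.
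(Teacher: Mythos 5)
You are right that Lemma~\ref{theorem_sorted_reporting} is imported verbatim from Nekrich and Navarro and the paper provides no proof of it, so no proof attempt was required here.

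Your independent sketch is in the right spirit, but as written it would not meet the stated query bound. You propose locating the slab endpoints with a separate $O(\lg\lg n)$-time predecessor query at each of the $O(\lg\lg n)$ canonical nodes, giving $O((\lg\lg n)^2)$ setup time rather than $O(\lg\lg n)$. Nekrich and Navarro sidestep this exactly the way this paper's Lemmas~\ref{range_reporting_1} and~\ref{theorem_range_reporting_general} do: the $y$-range is located \emph{once}, and the matching index range at a child is then obtained in $O(1)$ via $\rankop$ on the $S(\cdot)$ sequences (i.e., $\noderange$ in the ball-inheritance framework costs $O(\lg\lg n)$ total, not per node). A second point to flag is the shape of the decomposition: rather than a generic canonical decomposition into $O(\lg\lg n)$ arbitrary nodes, the query is split at the lowest common ancestor into two $3$-sided sorted queries (handled by a three-sided structure at the two boundary children, cf.\ Lemma~\ref{lemma_three_sided_reporting_big}) plus one sorted range-reporting query over the block of intermediate siblings; the $k$-way merge with an atomic heap is then applied only to a constant number of sorted streams. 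With those two corrections your outline matches the cited construction.
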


\begin{lemma}
	\label{lemma_range_sorted_reporting_core}
	Let $N$ be a set of $n'$ points with distinct $y$-coordinates in a $2^{\sqrt{\lg n}} \times n'$ grid,  where $2^{2\sqrt{\lg n}}< n'$. 
	Given packed sequences $X$ and $Y$ respectively encoding the $x$- and $y$-coordinates of these points where $Y[i] = i$ for any $i \in [0, n'-1]$, a data structure of $O(n'{(\lg n)}^{(1+\epsilon)/2}+w\times n'/2^{\sqrt{\lg n}})$ bits can be built over $N$ in $O(n^{\prime})$ time to answer orthogonal sorted range reporting in $O(\lg \lg n+\occ)$ time, where $\epsilon$ is any positive constant and $\occ$ is the number of reported points.
\end{lemma}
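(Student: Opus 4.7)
The plan is to mirror the block decomposition of Lemma \ref{lemma_range_successor_core_summary} (and of Lemma \ref{fat_rect}), but with a sampling rate calibrated for sorted reporting rather than single-point retrieval. Let $b = 2^{2\sqrt{\lg n}}$ and, assuming for simplicity $b \mid n'$, partition $N$ by $y$-coordinate into blocks $N_0, \ldots, N_{n'/b-1}$, converting each block's $y$-coordinates to local ones. On every $N_i$ (which lies in a $2^{\sqrt{\lg n}}\times b$ grid) install the sorted reporting structure of Lemma \ref{orthogonal_sorted_range_reporting_on_block}. Summing its per-block bounds over $n'/b$ blocks gives $O(n'(\lg n)^{(1+\epsilon)/2}+n'w/2^{\sqrt{\lg n}})$ bits of space and $O(n')$ construction time, exactly matching the two summands in the lemma target.

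Next, construct a sampled set $\hat{N}$: for every block $N_i$ and every distinct $x$-coordinate $j$ appearing in $N_i$, include the (at most) $\lg\lg n$ points of $N_i$ with $x$-coordinate $j$ and smallest $y$-coordinates. Then $|\hat{N}| \le (n'/b)\cdot 2^{\sqrt{\lg n}}\cdot \lg\lg n = n'\lg\lg n/2^{\sqrt{\lg n}}$. Build the sorted range reporting structure of Lemma \ref{theorem_sorted_reporting} on $\hat{N}$; its $O(|\hat{N}|\lg^{1+\epsilon} n)$-bit space and $O(|\hat{N}|\lg n)$-time preprocessing are both $o(n')$ since $2^{\sqrt{\lg n}}$ dominates every polylogarithm in $n$. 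Generating $\hat{N}$ itself costs $O(n')$ time by scanning each block once with a suitable bitwise-selection subroutine.

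For a query $Q = [x_1, x_2]\times[y_1, y_2]$, set $s = \lfloor y_1/b \rfloor$, $e = \lfloor y_2/b \rfloor$. If $s=e$ the answer lies entirely within $N_s$ and one call to Lemma \ref{orthogonal_sorted_range_reporting_on_block} finishes in $O(\lg\lg n+\occ)$ time. Otherwise, report in three phases: first $N_s\cap Q$, then the middle union $\bigcup_{i=s+1}^{e-1} N_i\cap Q$, then $N_e\cap Q$; since $y$-coordinates of block $N_i$ strictly precede those of $N_{i+1}$, concatenating the three sorted outputs preserves global $y$-sorted order. For the middle phase, stream the sorted output of the $\hat{N}$-structure over $[x_1,x_2]\times[(s+1)b, eb-1]$, grouping reports by block index. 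For each middle block $j$ encountered, buffer its incoming samples; if the buffer ever fills to $\lg\lg n$ while still within block $j$, declare $j$ \emph{heavy}, discard the buffer, and invoke Lemma \ref{orthogonal_sorted_range_reporting_on_block} on $N_j$ to report all of $N_j\cap Q$ in sorted order; otherwise flush the buffer (in its already-sorted order) when $\hat{N}$ advances to a new block. Correctness is ensured by the key invariant: if some $(j,x)$ pair has $\ge\lg\lg n$ matching points then $\hat{N}$ supplies at least $\lg\lg n$ samples for $j$ (triggering heavy handling), and otherwise $\hat{N}$ contains every point of $N_j\cap Q$.

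Charging the $\lg\lg n$ trigger on each heavy block to its $\ge \lg\lg n$ output points and noting that light blocks report each sample they consume, the $\hat{N}$ traversal costs $O(\lg\lg n+\occ)$ and the sum of direct block queries is $O(\occ)$, giving overall query time $O(\lg\lg n + \occ)$. The main obstacle will be justifying that this heavy/light switching emits a globally $y$-sorted stream without running a general multiway merge; this is resolved precisely because distinct blocks own disjoint $y$-intervals, so grouping by block already induces the correct global order and the per-block buffer never exceeds $\lg\lg n$ points of delay.
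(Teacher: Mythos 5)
Your proposal follows the same block-decomposition-with-sampling strategy as the paper's proof: same block size $b = 2^{2\sqrt{\lg n}}$, same use of Lemma~\ref{orthogonal_sorted_range_reporting_on_block} on each block, the same sampled set $\hat{N}$ keeping the $\lg\lg n$ lowest points per $(\text{block}, x\text{-coordinate})$ pair with Lemma~\ref{theorem_sorted_reporting} on top, and the same heavy/light threshold of $\lg\lg n$ for switching from the $\hat{N}$-stream to a direct block query. Your description of the buffering mechanics and the observation that block-disjoint $y$-ranges make concatenation preserve global order spell out details that the paper leaves implicit, but the argument and all bounds coincide with the paper's proof.
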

\begin{proof}
	The proof is similar to the one shown in Lemma \ref{lemma_range_successor_core_summary}.
	Let $b$ denote $2^{2\sqrt{\lg n}}$.
	We divide $N$ into $n'/b$ subsets, and for each $i \in [0, n'/b-1]$, the $i$-th subset, $N_i$, contains points in $N$ whose $y$-coordinates are in $[i b, (i+1)b-1]$.
	Let $p$ be a point in $N_i$. We call its coordinates $(p.x, p.y)$ {\em global coordinates}, while $(p.x', p.y') = (p.x, p.y \bmod b)$ its {\em local coordinates} in $N_i$; the conversion between global and local coordinates can be done in constant time.
	Hence the points in $N_i$ with their local coordinates can be viewed as a point set in a $2^{\sqrt{\lg n}}\times 2^{2\sqrt{\lg n}}$ grid, and we apply Lemma~\ref{orthogonal_sorted_range_reporting_on_block} to construct an orthogonal sorted range reporting structure $RS(N_i)$ over $N_i$.
	We also define a point set $\hat{N}$ in a $2^{\sqrt{\lg n}} \times n'$ grid. 
	For each set $N_i$ where $i \in [0, n'/b-1]$ and each $j \in [0, 2^{\sqrt{\lg n}}-1]$, if there exists $\geq \lceil \lg \lg n\rceil$ points in $N_i$ whose $x$-coordinate is $j$, we store $\lceil \lg \lg n\rceil$ points among them with smallest $y$-coordinates into $\hat{N}$.
	Otherwise, we store all points in $N_i$ whose $x$-coordinate is $j$ into $\hat{N}$.
	Thus the number of points in $\hat{N}$ is at most $\lceil \lg \lg n\rceil \times 2^{\sqrt{\lg n}} \times n' /b = \lceil \lg \lg n\rceil n'/2^{\sqrt{\lg n}}$. 
	Finally, we build the data structure $\hat{RS}_{ds}(\hat{N})$ for orthogonal sorted range reporting over $\hat{N}$ by Lemma~\ref{theorem_sorted_reporting}.
	
	Given a query range $Q=[x_1, x_2]\times[y_1, y_2]$, we first check if $\lfloor y_1/b \rfloor$ is equal to $\lfloor y_2/b\rfloor$. If it is, then the points in the answer to the query reside in the same subset $N_{\lfloor y_1/b\rfloor}$, and we can report points in $N_{\lfloor y_1/ b \rfloor}\cap Q$ in a sorted order by querying over $RS(N_{\lfloor y_1/ b \rfloor})$, which requires $O(\lg\lg n+\occ)$ time  by Lemma~\ref{orthogonal_sorted_range_reporting_on_block}.
	Otherwise, let $N_s,\dots, N_e$ denote the blocks interacting $[y_1, y_2]$, where $s=\lfloor y_1/ b\rfloor$ and $e=\lfloor y_2/ b\rfloor$. 
	We sequentially look for points in $A_1=N_s \cap [x_1, x_2]\times[y_1 \mod b, +\infty]$, $A_2=(N_{s+1}\cup \cdots \cup N_{e-1})\cap [x_1, x_2]\times [-\infty, +\infty]$, and $A_3=N_e \cap [x_1, x_2]\times[0, y_2 \mod b]$.
	Both the cases $A_1$ and $A_3$ can be answered in $O(\lg \lg n+\occ_1)$ and $O(\lg \lg n+\occ_3)$ time by Lemma~\ref{orthogonal_sorted_range_reporting_on_block}, where $\occ_1=|A_1|$ and $\occ_2=|A_2|$, respectively.
	It remains to show how to report points in $A_2$.
	We query over $\hat{RS}_{ds}(\hat{N})$ to report points in $Q\cap \hat{N}$ by Lemma~\ref{theorem_sorted_reporting}.
	If consecutive $\lceil \lg \lg n \rceil$ points reported by querying over $\hat{RS}_{ds}(\hat{N})$ are from a same bock $N_i$, then we immediately query over $RS_{ds}(N_i)$ for the remaining points in $N_i$, where $s<i< e$.
	It requires $O(\lg \lg n+\occ_i)=O(\occ_i)$ time, as $\lceil \lg \lg n \rceil\leq \occ_i$, where $\occ_i$ is the number of points in $N_i\cap Q$.
	Otherwise, if some block $N_i$ contains less $\lceil \lg \lg n \rceil$ points in $N_i \cap Q$, then all points in $N_i \cap Q$ has been reported by querying over $\hat{RS}_{ds}(\hat{N})$.
	Overall, the query procedure requires $O(\lg \lg n+\occ)$ time.
	
	To bound the storage costs, by Lemma~\ref{orthogonal_sorted_range_reporting_on_block}, the orthogonal range reporting structure over each $N_i$ uses $O(2^{2\sqrt{\lg n}}\lg^{1/2+\epsilon} n+w\cdot 2^{\sqrt{\lg n}})$ bits.
	Thus, the range reporting structures over $N_0, N_1, \ldots, N_{n/b-1}$ occupy $O((n'/b)\times(2^{2\sqrt{\lg n}}\lg^{1/2+\epsilon} n+w\cdot 2^{\sqrt{\lg n}})) = O(n'\lg^{1/2+\epsilon} n + n'w/2^{\sqrt{\lg n}})$.
	As there are at most $\lceil \lg \lg n\rceil n'/2^{\sqrt{\lg n}}$ points in $\hat{N}$, by Lemma~\ref{theorem_sorted_reporting}, the sorted reporting structure for $\hat{N}$ occupies $O((\lg \lg n) n'\lg^{1+\epsilon} n/2^{\sqrt{\lg n}}) = o(n')$ bits.
	Thus the  space costs of all structures add up to $O(n^{\prime}\lg^{1/2+\epsilon} n+n'w/2^{\sqrt{\lg n}})$ bits.
	
	Regarding construction time, observe that the point sets $N_0, N_1, \ldots, N_{n'/b-1}$ and $\hat{N}$ can be computed in $O(n')$ time.
	By Lemma~\ref{theorem_sorted_reporting}, the sorted range reporting structure for $\hat{N}$ can be built in $O(n'/b \times \lg n) = o(n')$ time.
	Finally, the total construction time of the sorted range reporting structures for $N_0, N_1, \ldots, N_{n/b-1}$ is $O({n^{\prime}}/{2^{2\sqrt{\lg n}}}\times(2^{2\sqrt{\lg n}}+\sqrt{\lg n}\times 2^{\sqrt{\lg n}}/\lg \lg n))=O(n^{\prime})$, which dominates the total preprocessing time of all our data structures.
\end{proof}


Combining Lemma \ref{lemma_range_sorted_reporting_core} and Lemma \ref{orthogonal_sorted_range_reporting_on_block}, we have the following result on the orthogonal sorted range reporting:
\begin{lemma}
	\label{lemma_range_sorted_reporting_core_summary}
	Let $N$ be a set of $n'$ points with distinct $y$-coordinates in a $2^{\sqrt{\lg n}} \times n'$ grid,  where $2^{\sqrt{\lg n}}\leq n'\leq n$. 
	Given packed sequences $X$ and $Y$ respectively encoding the $x$- and $y$-coordinates of these points where $Y[i] = i$ for any $i \in [0, n'-1]$, a data structure of $O(n'{(\lg n)}^{(1+\epsilon)/2}+w\times (n'/2^{\sqrt{\lg n}}+2^{\sqrt{\lg n}}))$ bits can be built over $N$ in $O(n^{\prime}+2^{\sqrt{\lg n}}\times\sqrt{\lg n}/\lg \lg n)$ time to answer orthogonal sorted range reporting in $O(\lg \lg n+\occ)$ time, where $\epsilon$ is any positive constant and $\occ$ is the number of reported points.
\end{lemma}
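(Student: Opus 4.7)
The plan is to prove Lemma~\ref{lemma_range_sorted_reporting_core_summary} by a direct case analysis on the value of $n'$, choosing between the two component lemmas we have already established. Observe that the only difference between the statement here and those of Lemmas~\ref{orthogonal_sorted_range_reporting_on_block} and \ref{lemma_range_sorted_reporting_core} is the permitted range of $n'$: the former handles $2^{\sqrt{\lg n}}\le n'\le 2^{2\sqrt{\lg n}}$, while the latter handles $2^{2\sqrt{\lg n}}<n'\le n$. Together they cover the full range of $n'$ stipulated in the summary lemma, so the proof reduces to verifying that both sets of bounds are dominated by the bounds claimed here.

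Concretely, I would first check the input size $n'$ in constant time and build either the structure of Lemma~\ref{orthogonal_sorted_range_reporting_on_block} (if $n'\le 2^{2\sqrt{\lg n}}$) or that of Lemma~\ref{lemma_range_sorted_reporting_core} (otherwise). The query algorithm is the one provided by whichever lemma was applied; each already reports points along the $y$-axis in sorted order in $O(\lg\lg n+\occ)$ time, which matches the claim. No new components are introduced.

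Next I would verify the space bound. In the first case, the cost is $O(n'(\lg n)^{(1+\epsilon)/2}+w\cdot 2^{\sqrt{\lg n}})$, which is at most the bound claimed since we can simply add the nonnegative term $w\cdot n'/2^{\sqrt{\lg n}}$. In the second case, the cost is $O(n'(\lg n)^{(1+\epsilon)/2}+w\cdot n'/2^{\sqrt{\lg n}})$, which is again dominated by the stated bound. An analogous check handles the construction time: $O(n'+2^{\sqrt{\lg n}}\sqrt{\lg n}/\lg\lg n)$ upper-bounds both $O(n'+2^{\sqrt{\lg n}}\sqrt{\lg n}/\lg\lg n)$ (trivially) and $O(n')$ (since the added term is nonnegative).

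Since both lemmas are stated in the same indexing/data model and with the same input conventions (packed $X$ and identity $Y$), no interface adjustment is necessary. The only potential subtlety, and thus the mildest ``obstacle,'' is making sure the case split at $n'=2^{2\sqrt{\lg n}}$ is handled cleanly so that a single unified bound is reported regardless of which branch is taken; this is why I would state the space and time bounds as the termwise maximum of the two component bounds and then observe this is exactly the expression in the lemma statement.
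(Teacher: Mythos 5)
Your proposal is correct and takes essentially the same approach as the paper: the paper itself states this lemma as an immediate combination of Lemmas~\ref{orthogonal_sorted_range_reporting_on_block} and \ref{lemma_range_sorted_reporting_core}, applying one or the other according to whether $n'\le 2^{2\sqrt{\lg n}}$, with the stated space and construction bounds being termwise upper bounds on both cases.
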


\subsubsection{Orthogonal Sorted Range Reporting  on a $n\times n$ Grid}
Finally, we show the data structure with the fast construction algorithm for $n$ points in 2d rank space. 
\begin{theorem}
	Let $N$ denote a set of $n$ points in $2$d rank space. 
	A data structure of $O(n\lg^{1+\epsilon} n)$ words of space can be built over $N$ in $O(n\sqrt{\lg n})$ time to answer orthogonal sorted range reporting in $O(\lg \lg n+\occ)$ time, where $\occ$ is the number of reported points and $\epsilon$ is any small positive constant.
\end{theorem}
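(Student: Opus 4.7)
The plan is to mirror the strategy used for optimal orthogonal range successor (Theorem~\ref{theorem_range_successor_new}), combined with the ideas already used for the medium-narrow grid case in Lemma~\ref{lemma_range_sorted_reporting_core_summary}. Let the sequence $X[0,n-1]$ encode the $x$-coordinates of $N$, so that $N=\{(X[i],i)\mid 0\le i\le n-1\}$. I would build a $2^{\sqrt{\lg n}}$-ary wavelet tree $T$ over $X$ equipped with ball inheritance using part~(b) of Lemma~\ref{ball_inheritance_large_d}; this gives $O(1)$-time $\point$ and $O(\lg\lg n)$-time $\noderange$, and it is constructible in $O(n\lg\sigma/\sqrt{\lg n})=O(n\sqrt{\lg n})$ time within $O(n\lg^{1+\epsilon}\sigma+n\lg n)$ bits. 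At every internal node $u$ of $T$, the implicit list $N(u)$ is ordered by $y$-coordinate, the value array $A(u)$ holds the $x$-coordinates and $S(u)\in[2^{\sqrt{\lg n}}]^{|N(u)|}$ records which child of $u$ each point descends into.

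The key augmentations are, at each internal node $u$: (i) a three-sided sorted reporting structure $TS_{ds}(u)$ over the indexing-model point set $\hat N(u)=\{(A(u)[i],i)\}$, built by Lemma~\ref{lemma_three_sided_reporting_big}; and (ii) an orthogonal sorted range reporting structure $RS_{ds}(u)$ for the medium-narrow grid point set $\hat S(u)=\{(S(u)[i],i)\}$, built by Lemma~\ref{lemma_range_sorted_reporting_core_summary}. Since the coordinates of $\hat N(u)$ and $\hat S(u)$ are never stored verbatim, queries inside these structures rely on $\point$ on $T$, which costs $O(1)$ by our choice of ball inheritance.

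Given a query $Q=[a,b]\times[c,d]$, I compute $v=\lca(l_a,l_b)$ and identify the two children $v_{a'},v_{b'}$ of $v$ on the paths to $l_a,l_b$; the remaining children of $v$ form a contiguous ``middle'' range of child-indices $[a'+1,b'-1]$. With two calls to $\noderange$ I get the $y$-ranges $[c_{v_{a'}},d_{v_{a'}}]$, $[c_{v_{b'}},d_{v_{b'}}]$, $[c_v,d_v]$. I then issue one three-sided sorted reporting query on $TS_{ds}(v_{a'})$ for $[a,\infty)\times[c_{v_{a'}},d_{v_{a'}}]$, one on $TS_{ds}(v_{b'})$ for $[0,b]\times[c_{v_{b'}},d_{v_{b'}}]$, and one four-sided sorted reporting query on $RS_{ds}(v)$ for $[a'+1,b'-1]\times[c_v,d_v]$; each returns results in $y$-sorted order, and lifting through $\point$ preserves this order since $\point(v,i).y$ is monotone in $i$. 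Finally I $3$-way merge the three streams on the fly (using a constant-size heap over the current heads) so the combined output is produced in sorted order and the procedure can be aborted at any moment. The total query time is $O(\lg\lg n+\occ)$, as each of the three substructures contributes $O(\lg\lg n+\occ_j)$.

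For the accounting, summing $|S(u)|$ over the $O(\sqrt{\lg n})$ levels of $T$ (really $O(\log_{2^{\sqrt{\lg n}}} n)$ non-leaf levels) gives $O(n\sqrt{\lg n})$ total entries across internal nodes. The space bound follows since $RS_{ds}(u)$ from Lemma~\ref{lemma_range_sorted_reporting_core_summary} costs $O(|S(u)|\lg^{(1+\epsilon)/2} n+w(|S(u)|/2^{\sqrt{\lg n}}+2^{\sqrt{\lg n}}))$ bits and dominates $TS_{ds}(u)$; this sums to $O(n\lg^{1+\epsilon} n)$ bits plus a lower-order term, matching the claimed $O(n\lg^{1+\epsilon} n)$-word bound after absorbing constants. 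The construction time per node is $O(|S(u)|+2^{\sqrt{\lg n}}\sqrt{\lg n}/\lg\lg n)$ by Lemma~\ref{lemma_range_sorted_reporting_core_summary} and $O(|S(u)|)$ by Lemma~\ref{lemma_three_sided_reporting_big}, so summed over $O(n/2^{\sqrt{\lg n}})$ internal nodes we get $O(n\sqrt{\lg n})$, which is dominated by the cost of building $T$ itself. The main technical obstacle is making sure that reporting from $RS_{ds}(v)$ can translate each returned sampled point of $\hat S(v)$ back to a genuine point of $N$ without losing the streaming/online sorted guarantee; this is handled by following $\point$ at $v$ on each element emitted, exactly as in the proof of Lemma~\ref{orthogonal_sorted_range_reporting_on_block}, where the monotonicity of $\point(v,\cdot).y$ in the second argument is what keeps the output sorted.
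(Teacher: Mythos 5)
Your proposal is correct and mirrors the paper's own proof essentially verbatim: same $2^{\sqrt{\lg n}}$-ary wavelet tree with part~(b) of Lemma~\ref{ball_inheritance_large_d}, same per-node augmentations via Lemmas~\ref{lemma_three_sided_reporting_big} and~\ref{lemma_range_sorted_reporting_core_summary}, same LCA-based three-part decomposition with a $3$-way merge, and the same space and construction-time accounting. (A trivial slip: you need three $\noderange$ calls, not two, to obtain the three $y$-ranges, but this does not affect the $O(\lg\lg n+\occ)$ bound.)
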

\begin{proof}
	Let the sequence $X[0, n-1]$ denote the point set $N$ such that $N = \{(X[i], i)| 0\le i \le n-1\}$.
	We build a $2^{\sqrt{\lg n}}$-ary wavelet tree $T$ upon $X[0, n-1]$ with support for ball inheritance using part (b) of Lemma~\ref{ball_inheritance_large_d}.
	Recall that $A(v)$ stores the $x$-coordinates of the ordered list, $N(v)$, of points from $N$ whose $x$-coordinates are within the range represented by $v$, and these points are ordered by $y$-coordinate. 
	Furthermore, $v$ is associated with another sequence $S(v)$ drawn from alphabet $[2^{\sqrt{\lg n}}]$, in which $S(v)[i]$ encodes the rank of the child of $v$ that contains $N(v)[i]$ in its ordered list.
	We regard $A(u)$ at each internal $u$ as a point set $\hat{N}(u)= \{(A(u)[i], i)| 0\le i \le |A(u)|-1\}$ and construct the data structure $TS_{ds}(u)$ over $\hat{N}(u)$ for three-sided sorted reporting by Lemma \ref{lemma_three_sided_reporting_big}. 
	We regard $S(u)$ at each internal $u$ as a set $\hat{S}(u))= \{(S(u)[i], i)| 0\le i \le |S(u)|-1\}$ and construct the data structure $RS_{ds}(u)$ over $\hat{S}(u)$ for orthogonal sorted range reporting by Lemma \ref{lemma_range_sorted_reporting_core_summary}.
	
	Given a query range $Q=[a, b]\times[c, d]$,  we first locate the lowest common ancestor $u$ of $l_{a}$ and $l_{b}$ in constant time, where $l_{a}$ and $l_{b}$ denote the $a$-th and $b$-th leftmost leaves of $T$, respectively.
	Let $u_i$ denote the $i$-th child of $u$, for any $i \in [0, 2^{\sqrt{\lg n}}-1]$.
	We first locate two children, $u_{a'}$ and $u_{b'}$, of $u$ that are ancestors of $l_a$ and $l_b$, respectively.
	They can be found in constant time by simple arithmetic as each child of $u$ represents a range of equal size.
	Then the answer, $Q\cap N$, to the query can be partitioned into three point sets $A_1=Q\cap N(u_{a'})$, $A_2=Q\cap (N(u_{a'+1})\cup N(u_{a'+2})\cup\ldots N(u_{b'-1}))$ and $A_3=Q\cap N(u_{b'})$.
	At node $u_{a'}$, we query over $TS_{ds}(u_{a'})$ for reporting all points in $[a, +\infty]\times[c_{u_{a'}}, d_{u_{a'}}]$ in a sorted order by $y$-coordinates in $O(\lg \lg n+\occ_0)$ time, where $\occ_0$ is the number of the reported points and $[c_{u_{a'}}, d_{u_{a'}}]=\noderange(c,d, {u_{a'}})$.
	At node $u_{b'}$, we query over $TS_{ds}(u_{b'})$ for reporting all points in $[0, b]\times[c_{u_{b'}}, d_{u_{b'}}]$ in a sorted order by $y$-coordinates in $O(\lg \lg n+\occ_1)$ time, where $\occ_1$ is the number of the reported points and $[c_{u_{b'}}, d_{u_{b'}}]=\noderange(c,d, {u_{b'}})$.
	At node $u$, we query over $RS_{ds}(u)$ for reporting all points in $[a'+1, b'-1]\times[c_{u}, d_{u}]$  in a sorted order by $y$-coordinates in $O(\lg \lg n+\occ_2)$ time, where $\occ_2$ is the number of the reported points and $[c_{u}, d_{u}]=\noderange(c,d, {u})$.
	With constant-time support for $\point$, we can retrieve the original $x$- and $y$-coordinates of each reported point in contant time.
	Finally, we use a 3-way merge-sorting algorithm to merge three sorted list into one sorted list in $O(\occ_0+\occ_1+\occ_2)=O(\occ)$ time.
	Therefore, the overall query time for orthogonal sorted range reporting is $O(\lg \lg n+\occ)$ time.
	
	Now we analyze the space costs.
	$T$ with support for ball inheritance uses $O(n\lg^{1+\epsilon'}n+n\lg n)=O(n\lg^{1+\epsilon'}n)$ bits for any positive $\epsilon'$.
	For each internal node $v$, since $w = \Theta(\lg n)$, the data structure $RS_{ds}(u)$ for orthogonal sorted range reporting over $\hat{S}(v)$ uses $O(|S(u)|\lg^{1/2+\epsilon''} n+2^{\sqrt{\lg n}}\lg n+|S(u)|\lg n/2^{\sqrt{\lg n}})$ bits for any positive $\epsilon''$. 
	This subsumes the cost of the data structure $TS_{ds}(v)$ for three-sided sorted reporting over $\hat{N}(v)$ which is $O(|A(v)|\lg \lg n)$ bits. 
	As $T$ has $O(n/2^{\sqrt{\lg  n}})$ internal nodes,
	the total cost of storing these structures at all internal nodes is $\sum_u O(|S(u)|\lg^{1/2+\epsilon''} n+2^{\sqrt{\lg n}}\lg n+|S(u)|\lg n/2^{\sqrt{\lg n}}) = O(n\lg n/\sqrt{\lg n} \times \lg^{1/2+\epsilon''}n + n \lg n) = O(n\lg n\lg^{\epsilon''} n)$.
	Setting $\epsilon = max(\epsilon', \epsilon'')$, the total space bound turns to be $O(n\lg^{1+\epsilon}  n)$ bits. 
	Overall, the data structures occupy $O(n\lg^{1+\epsilon} n)$ bits.
	
	Finally, we analyze the construction time.
	As shown in Lemma~\ref{ball_inheritance_large_d}, $T$ with support for ball inheritance can be constructed in $O(n \sqrt{\lg n})$ time.
	For each internal node $v$ of $T$, constructing $TS_{ds}(v)$ over $\hat{N}(v)$ and the orthogonal sorted range reporting structure $RS_{ds}(v)$ over $\hat{S}(v)$ requires $O(|A(u)|+|S(u)|+\sqrt{\lg n}\cdot 2^{\sqrt{\lg n}}/\lg \lg n)=O(|S(u)|+\sqrt{\lg n}\cdot 2^{\sqrt{\lg n}}/\lg \lg n)$ time.
	As $T$ has $O(n/2^{\sqrt{\lg  n}})$ internal nodes, these structures over all internal nodes can be built in $\sum_u O(|S(u)|+\sqrt{\lg n}\times2^{\sqrt{\lg n}}) = O(n\lg n/\sqrt{\lg n}+n \sqrt{\lg n}/\lg \lg n) = O(n \sqrt{\lg n})$ time.
	Therefore, the preprocessing time of all data structures is hence $O(n\sqrt{\lg n})$.
\end{proof}

\end{document}